\documentclass[11pt]{article}

% Packages.
\usepackage{algorithm}
\usepackage[noend]{algpseudocode}
\usepackage{amsmath,amssymb,amsthm}
\usepackage{caption}
\usepackage{cite}
\usepackage[margin=0.95in]{geometry}
\usepackage{graphicx}
\usepackage{hyperref}
\usepackage{subcaption} \captionsetup{compatibility=false}
\usepackage{tikz}
\usepackage{xcolor}

% Convenience symbols.
\newcommand{\Gtri}{\ensuremath{G_{\Delta}}}  % The triangular lattice.
  % Triangular lattice, old notation.
\newcommand{\M}{\mathcal{M}}
\newcommand{\A}{\mathcal{A}}
\newcommand{\cbound}{2+\sqrt{2}}  % Lower bound on compression.
  % Upper bound on expansion.
\newcommand{\Os}{\Omega_{\mathcal{P}}}
\newcommand{\Zs}{Z_{\mathcal{P}}}
\newcommand{\pis}{\pi_\mathcal{P}}
\newcommand{\Ghex}{G_{hex}}
\newcommand{\mH}{\mathcal{V}}
\newcommand{\mV}{\mathcal{V}}
\newcommand{\Fout}{F_{ext}}
\newcommand{\mL}{\mathcal{L}}
\newcommand{\mC}{\mathcal{C}}
\newcommand{\mS}{\mathcal{S}}

\renewcommand{\P}{\mathcal{P}}
\newcommand{\Q}{\psi}
\newcommand{\Osc}{\Os^{c_1}}

% Helpful constants.
\newcommand{\sthree}{1.732050808}  % Square root of 3.
\newcommand{\gsep}{4^{5/4}}
\newcommand{\csep}{0.0001}
\newcommand{\threecsep}{0.0003}
\newcommand{\fourcsep}{\threecsep}
\newcommand{\lgsep}{6.83}
\newcommand{\zint}{0.0125}
\newcommand{\aint}{10^{-5}}
\newcommand{\threeaint}{0.00003}
\newcommand{\gintlower}{79/81}
\newcommand{\gintlowerf}{\frac{79}{81}}
\newcommand{\gintupper}{81/79}
\newcommand{\lgint}{6.83}
\newcommand{\threea}{0.00003}
\newcommand{\ethreea}{1.00003}

\newcommand{\lint}{1}
\newcommand{\gintl}{79/81}

\newcommand{\gintu}{81/79}

% Theorem environments.
\newtheorem{thm}{Theorem}[section]
\newtheorem{lem}[thm]{Lemma}
\newtheorem{cor}[thm]{Corollary}

\newtheorem{defn}[thm]{Definition}

\newtheorem{property}{Property}
\newtheorem{claim}[thm]{Claim}

% Inline comments commands.
\newif\ifcomment
\commenttrue  % Uncomment for inline comments.
%\commentfalse  % Uncomment to preview document without comments (useful for checking paper length).

% Document information.
\title{A Local Stochastic Algorithm for Separation in Heterogeneous Self-Organizing Particle Systems}
\author{Sarah Cannon\thanks{This material is based upon work supported by the National Science Foundation under Award No. DMS-1803325.} \and
Joshua J.\ Daymude\thanks{Daymude and Richa are supported in part by NSF awards CCF-1422603, CCF-1637393, and CCF-1733680.} \and
Cem Gokmen\thanks{Gokmen is supported in part by NSF award CCF-1733812. Randall is supported in part by NSF awards CCF-1526900, CCF-1637031, and CCF-1733812.} \and
Dana Randall\footnotemark[3] \and
Andr\'ea W.\ Richa\footnotemark[2]}
\date{\small{\footnotemark[1] University of California, Berkeley, \texttt{sarah.cannon@berkeley.edu}\\
\footnotemark[2] Computer Science, CIDSE, Arizona State University, \texttt{\{jdaymude,aricha\}@asu.edu}\\
\footnotemark[3] Georgia Institute of Technology, \texttt{cgokmen@gatech.edu}, \texttt{randall@cc.gatech.edu}}
\vspace{-0.38in}}

\begin{document}

\maketitle
\setcounter{footnote}{0}  % Resets footnote counter after the author/institute/funding info.

\begin{abstract}
We present and rigorously analyze the behavior of a distributed, stochastic algorithm for \emph{separation} and \emph{integration} in \emph{self-organizing particle systems}, an abstraction of programmable matter.
Such systems are composed of individual computational \emph{particles} with limited memory, strictly local communication abilities, and modest computational power.
We consider {\it heterogeneous} particle systems of two different colors and prove that these systems can collectively {\it separate} into different color classes or {\it integrate}, indifferent to color.
We accomplish both behaviors with the same fully distributed, local, stochastic algorithm.
Achieving separation or integration depends only on a single global parameter determining whether particles prefer to be next to other particles of the same color or not; this parameter is meant to represent external, environmental influences on the particle system.
%\bluecomment{via a fully distributed, asynchronous, stochastic algorithm /} \purplecomment{DR: ``, depending on the setting of a global parameter that can be an environmental influence that can be desirable or undesirable.''}
% This distributed algorithm is based on a Markov chain and takes two input parameters $\lambda$ and $\gamma$; we show there exists ranges of $\gamma$ and $\lambda$ for
%, converges to a certain probability distribution over particle configurations known as the {\it stationary distribution}.
%We prove configurations drawn from this stationary distribution are separated with high probability when $\lambda$ and $\gamma$ are large enough and integrated with high probability when $\lambda$ is large enough and $\gamma$ is close to one.
The algorithm is a generalization of a previous distributed, stochastic algorithm for {\it compression} (PODC '16), %~\cite{Cannon2016},
which can be viewed as a special case of separation where all particles have the same color.
It is significantly more challenging to prove that the desired behavior is achieved in the heterogeneous setting, however, even in the bichromatic case we focus on.
This requires combining several new techniques, including the {\it cluster expansion} from statistical physics, a new variant of the {\it bridging} argument of Miracle, Pascoe and Randall (RANDOM~'11), %~\cite{Miracle2011},
the {\it high-temperature expansion} of the Ising model, and careful probabilistic arguments.
%\bluecomment{The algorithm we give is a generalization of a previous distributed, stochastic algorithm for the {\it compression} problem~\cite{Cannon2016}, but several new techniques were needed to prove it exhibits the desired separation and integration behaviors.
%These include the {\it cluster expansion} from statistical physics, a more sophisticated version of the {\it bridging} argument of Miracle, Pascoe and Randall~\cite{Miracle2011}, the {\it high-temperature expansion} of the Ising model, and careful probabilistic arguments.}
%\purplecomment{DR: The algorithm is a generalization of a previous distributed, stochastic algorithm for {\it compression}~\cite{Cannon2016}, the homogeneous special case when all particles are the same color.
%The significantly challenging heterogeneous (bichromatic) setting studied here requires several new techniques, including the {\it cluster expansion} from statistical physics, a new variant of the {\it bridging} argument of Miracle, Pascoe and Randall~\cite{Miracle2011}, the {\it high-temperature expansion} of the Ising model, and careful probabilistic arguments.}
\end{abstract}

\section{Introduction} \label{sec:intro}

Across many disciplines spanning computational, physical, and social sciences, heterogeneous systems self-organize into both separated (or segregated) and integrated states.
Examples include molecules exhibiting attractive and repulsive forces, distinct types of bacteria competing for resources while collaborating towards common goals (e.g.,~\cite{Stewart2008,Wei2015}), social insects tolerating or aggressing towards those from other colonies (e.g.,~\cite{Roulston2003,Johnson2011}), and inherent human biases that influence how we form and maintain social groups (e.g.,~\cite{Turner1981,Hogg1985}).
In each of these, individuals are of different ``types'': integration occurs when the ensemble gathers together without much preference about the type of their neighbors, while separation occurs when individuals cluster with others of the same type.
Here, we investigate these fundamental behaviors of separation or integration as they apply to \emph{programmable matter}, a material that can alter its physical properties based on user input or stimuli from its environment.
Instead of studying a particular instantiation of programmable matter, of which there are many~\cite{Adleman1994,Correa2015,Thakker2014,Rubenstein2014}, we abstractly envision these systems as collections of simple, active computational \emph{particles} that individually execute local distributed algorithms to collectively achieve some emergent behavior.
We consider \emph{heterogeneous} particle systems in which particles have immutable \emph{colors}.
We seek local, distributed algorithms that, when run by each particle independently and concurrently, result in emergent, self-organizing \emph{separation} or \emph{integration} of the color classes.

This work uses the \emph{stochastic approach to self-organizing particle systems} first used for \emph{compression}, where (monochromatic) particles self-organize to gather together as tightly as possible~\cite{Cannon2016}.
Using this stochastic approach, one first defines an energy function where desired configurations have the lowest energy values.
One then designs a Markov chain whose long run behavior favors these low energy configurations.
This Markov chain is carefully designed so that all its transition probabilities can be computed locally, allowing it to be translated to a fully local distributed algorithm each particle can run independently.
The resulting collective, emergent behavior of this distributed algorithm is thus described by the long run behavior of the Markov chain.
Using this stochastic approach, we previously extended our compression algorithm~\cite{Cannon2016} to an algorithm for \emph{shortcut bridging}~\cite{AndresArroyo2018} --- or maintaining bridge structures that balance the tradeoff between bridge efficiency and cost --- and developed the theoretical basis for an experimental study in swarm robotics~\cite{Savoie2018}.
While the process of designing distributed algorithms for self-organizing particle systems via this stochastic approach is fairly well-understood, proving that such algorithms achieve their desired objectives remains quite challenging.
In particular, it is not enough to know the desired configurations have the highest long-run probability; there may be so many other, lower probability configurations that they collectively outweigh the desirable ones.
This energy/entropy trade-off has been studied in various Markov chains for the purposes of proving slow mixing, but we analyze it directly to show our algorithms achieve the desired objectives with high probability.

Here, we focus on separation and integration in heterogeneous systems.
Our inspiration comes from the classical Ising model in statistical physics~\cite{Ising1925,Vinkovic2006}, where the vertices of a graph are assigned positive and negative ``spins'' and there are rules governing the probability that adjacent vertices have the same spin.
Connected to the Ising model is classical work from stochastic processes on the Schelling model of segregation~\cite{Schelling1969,Schelling1971}, which explores how individuals' micro-motives can induce macro-level phenomena like racial segregation in residential neighborhoods.
Recent variants of this model from computer science have investigated the degree of individual bias required to induce such segregation~\cite{Bhakta2014,Immorlica2017}, and a related distributed algorithm has been developed~\cite{Omidvar2017}.
Our work differs from those on the Ising and Schelling models because of natural physical constraints on dynamic systems of self-organizing, active particles like ours.
For example, if we consider particles of one color to be vertices with positive spin and particles of another color to be vertices with negative spin, this is an Ising model but on a graph that is constantly changing as particles move.
Despite these obstacles, we are still able to apply ideas developed for rigorously analyzing the Ising and similar models to prove our distributed algorithm for separation and integration accomplishes the desired goals.
%We now present our results and describe these proof techniques in more detail.

While we are interested in distributed algorithms, it is worth noting that efficient stochastic algorithms for separation can be challenging even when we have a centralized Markov chain.
Separation of a region into a constant number of equitably sized, compact districts has been widely explored lately in the context of gerrymandering, where the aim is to sample colorings of a weighted graph from an appropriately defined stationary distribution~\cite{Duchin2018,Herschlag2018}.
Heuristics for random districting have been discussed in the media, but there are still no known rigorous, efficient algorithms.

\subsection{Results} \label{subsec:results}

We present a distributed algorithm for self-organizing separation and integration that takes as input two bias parameters, $\lambda$ and $\gamma$.
Setting $\lambda > 1$ corresponds to particles favoring having more neighbors; this is known to cause compression in homogeneous systems when $\lambda$ is large enough~\cite{Cannon2016}.
For separation in the heterogeneous setting, we introduce a second parameter $\gamma$, where $\gamma > 1$ corresponds to particles favoring having more neighbors \emph{of their own color}.
We then investigate for what values of $\lambda$ and $\gamma$ our algorithm yields compression and separation.
Informally, a particle system is separated if there is a subset of particles such that $(i)$ the boundary between this subset and the rest of the system is small, $(ii)$ a large majority of particles in this subset are of the same color, say $c$, and $(iii)$ very few particles with color $c$ exist outside of this subset.
This notion of separation (defined formally in Definition~\ref{defn:sepinformal}) captures what it means for a system to have large monochromatic regions of particles.

We prove that for any $\lambda > 1$ and $\gamma > 4^{5/4} \sim 5.66$ such that $\lambda\gamma > 2(\cbound)e^{\fourcsep} \sim \lgsep$, our algorithm accomplishes separation with high probability.\footnote{We say an event $A$ occurs with high probability (w.h.p.) if $\Pr[A] \geq 1 - c^{n^\delta}$, where $0 < c < 1$ and $\delta > 0$ are constants and $n$ is the number of particles. Our w.h.p.\ results all have $\delta \in \{1/2, 1/2 - \varepsilon\}$, for arbitrarily small $\varepsilon > 0$.}
However, we prove the opposite for some values of $\gamma$ close to one; counterintuitively, this even includes some values of $\gamma > 1$, the regime where particles favor having like-colored neighbors.
Formally, we prove that for any $\lambda > \lint$ and $\gamma \in (\gintl,\gintu)$
%(\gintlapprox,\gintuapprox)$
such that $\lambda(\gamma+1) > 2(2+\sqrt{2})e^{\threea} \sim \lgint$, our algorithm fails to achieve separation (i.e., it achieves integration) with high probability.

\subsection{Proof Techniques} \label{subsec:techniques}

Because our distributed algorithm is based on a Markov chain, we can use standard tools such as detailed balance to understand its long-term behavior and prove its convergence to a unique probability distribution $\pi$ over particle system configurations.
This stationary distribution $\pi$ depends on the input parameters $\lambda$ and $\gamma$.
Our main contribution is analyzing $\pi$ for various ranges of $\lambda$ and $\gamma$, showing that a configuration drawn from distribution $\pi$ is either very likely (for large $\gamma$) or very unlikely (for $\gamma$ close to one) to be separated.

To show separation occurs when $\lambda$ and $\gamma$ are both large, we modify the proof technique of {\it bridging} introduced by Miracle, Pascoe, and Randall~\cite{Miracle2011}.
In this approach, we define a map from non-separated configurations to separated configurations and show this map has an exponential gain in weight, implying non-separated configurations are exponentially unlikely compared to separated ones.
Several new innovations are needed to adapt the ideas of~\cite{Miracle2011} to our more challenging setting due to the irregular shapes of particle system configurations, the interchangeability of color classes, the non-self-duality of the triangular lattice, and other irregularities related to boundary conditions.

To show separation does not occur when $\lambda$ is large and $\gamma$ is small (close to one), we use a probabilistic argument, a Chernoff-type bound, and a decomposition of configurations into different regions.
The key to this proof was finding a set of at most a polynomial number of events such that if separation occurs, so does one of these events.
We then show each event is exponentially unlikely, which --- by a union bound over all events in the set --- shows separation is also exponentially unlikely.

These arguments --- both for large and small $\gamma$ --- require that the particle system is compressed; i.e., that the system has perimeter $\Theta(\sqrt{n})$.
However, the arguments from~\cite{Cannon2016} showing compression occurs for homogeneous systems when $\lambda$ is large do not extend to the heterogeneous setting.
Instead, to show our separation algorithm indeed achieves compression for large enough $\gamma$, we turn to the {\it cluster expansion} from statistical physics.
The cluster expansion was first introduced in 1937 by Mayer~\cite{Mayer1937}, though a more modern treatment can be found in the textbook~\cite{Friedli2018} where it is used to derive several properties of statistical physics models including the Ising and hard-core models.
%A closely related approach known as {\it Pirogov-Sinai Theory} was previously used to prove slow mixing of certain Markov chains~\cite{Borgs1999,Borgs2012}.
In the past year, the cluster expansion has received renewed attention in the computer science community due to the recent work of Helmuth, Perkins, and Regts which uses the cluster expansion to develop approximate counting and sampling algorithms for low-temperature statistical physics models on lattices including the Potts and hard-core models~\cite{Helmuth2019}.
Subsequent work has considered similar techniques on expander graphs~\cite{Jenssen2019} and random regular bipartite graphs~\cite{Liao2019}.
Inspired by the interpolation method of Barvinok~\cite{Barvinok2016article,Barvinok2016book}, these works give algorithms for estimating partition functions that explicitly calculate the first $\log n$ coefficients of the cluster expansion. We use the cluster expansion differently, to separate the volume and surface contributions to a partition function.

The cluster expansion is a power series representation of $\ln Z$ where $Z$ is a {\it polymer partition function}.
We relate each of our quantities of interest to a particular polymer partition function, and then use a version of the Koteck\'{y}-Preiss condition~\cite{Kotecky1986} to show that the power series in the cluster expansion is convergent for the ranges of parameters we are interested in.
We then use this convergent cluster expansion to split our polymer partition functions into a {\it volume term}, depending only on the size of the region of interest, and a {\it surface term}, depending only on its perimeter.
This separation into volume and surface terms turns out to be the key to our compression argument, both for large $\gamma$ and for $\gamma$ close to one.
While splitting partition functions into volume and surface terms is not a new idea in the statistical physics literature (for example, Section 5.7.1 of~\cite{Friedli2018} uses it to derive an explicit expression for the infinite volume pressure of the Ising model on $\mathbb{Z}^d$ with large magnetic field), we are the first to bring this approach into the computer science literature.  We are hopeful it will be useful beyond its specific applications in this paper.

\section{Background} \label{sec:background}

We begin by defining our amoebot model for programmable matter and stating a few key results.
We then extend the amoebot model to heterogeneous particle systems and formally define what it means for a system to be separated or integrated.
We conclude with the necessary terminology and results on Markov chains.

\subsection{The Amoebot Model} \label{subsec:model}

In the \emph{amoebot model}, introduced in~\cite{Derakhshandeh2014} and fully described in~\cite{Daymude2019}, programmable matter consists of individual, homogeneous computational elements called \emph{particles}.
In its geometric variant, particles are assumed to occupy nodes of the triangular lattice $\Gtri = (V,E)$ and can move along its edges (see Figure~\ref{fig:amoebotgrid}).
Each node in $V$ can be occupied by at most one particle at a time.

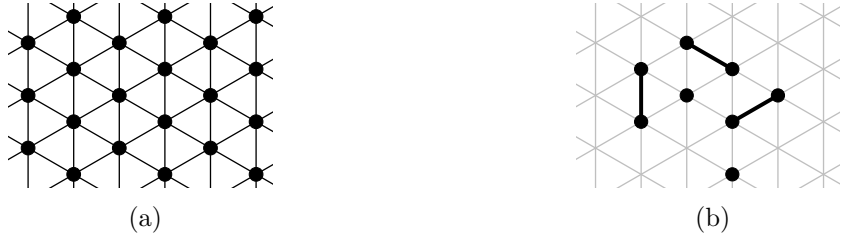
\begin{figure}
\centering
\begin{subfigure}{.45\textwidth}
	\centering
	\begin{tikzpicture}[scale=0.7]
    \clip (0.5,-0.25) rectangle (5.5,3.25);
    % draw lattice
    \foreach \i in {0,...,10} \draw[black,line width=.5pt] (\i*\sthree / 2,-5)--(\i*\sthree / 2,5);
    \foreach \i in {-10,...,10}
    {
    \draw[black,line width=.5pt] (0,\i)--(5*\sthree,\i + 5);
    \draw[black,line width=.5pt] (0,\i)--(5*\sthree,\i - 5);
    }
    % draw nodes
    \foreach \i in {0,2,...,10}
    \foreach \j in {-5,...,5}
    \draw[fill] (\i*\sthree / 2,\j) circle (0.13);
    \foreach \i in {1,3,...,10}
    \foreach \j in {-4.5,...,4.5}
    \draw[fill] (\i*\sthree / 2,\j) circle (0.13);
\end{tikzpicture}
	\caption{}
	\label{fig:amoebotgrid}
\end{subfigure}%
\begin{subfigure}{.45\textwidth}
	\centering
	\begin{tikzpicture}[scale=0.7]
    \clip (0.5,-0.25) rectangle (5.5,3.25);
    % draw lattice
    \foreach \i in {0,...,10}
    \draw[lightgray,line width=.5pt] (\i*\sthree / 2,-5)--(\i*\sthree / 2,5);
    \foreach \i in {-10,...,10}
    {
    \draw[lightgray,line width=.5pt] (0,\i)--(5*\sthree,\i + 5);
    \draw[lightgray,line width=.5pt] (0,\i)--(5*\sthree,\i - 5);
    }
    % draw nodes
    \draw[fill] (1*\sthree,1) circle (0.125);
    \draw[black,line width=1.5pt](1*\sthree,1)--(1*\sthree,2);
    \draw[fill] (1*\sthree,2) circle (0.125);
    \draw[fill] (2*\sthree,1) circle (0.125);
    \draw[black,line width=1.5pt](2*\sthree,1)--(2.5*\sthree,1.5);
    \draw[fill] (2.5*\sthree,1.5) circle (0.125);
    \draw[fill] (1.5*\sthree,2.5) circle (0.125);
    \draw[black,line width=1.5pt](1.5*\sthree,2.5)--(2*\sthree,2);
    \draw[fill] (2*\sthree,2) circle (0.125);
    \draw[fill] (1.5*\sthree,1.5) circle (0.125);
    \draw[fill] (2*\sthree,0) circle (0.125);
\end{tikzpicture}
	\caption{}
	\label{fig:amoebotparticles}
\end{subfigure}
\caption{(a) A section of the triangular lattice $\Gtri$. (b) Expanded and contracted particles (black dots) on $\Gtri$ (gray lattice). Particles with a black line between their nodes are expanded.}
\label{fig:model}
\end{figure}

Each particle occupies either a single node in $V$ (i.e., it is \emph{contracted}) or a pair of adjacent nodes in $V$ (i.e., it is \emph{expanded}), as in Figure~\ref{fig:amoebotparticles}.
Particles move via a series of \emph{expansions} and \emph{contractions}: a contracted particle can expand into an unoccupied adjacent node to become expanded, and completes its movement by contracting to once again occupy a single node.
%An expanded particle's \emph{head} is the node it last expanded into and the other node it occupies is its \emph{tail}; a contracted particle's head and tail are both the single node it occupies.

Two particles occupying adjacent nodes are said to be \emph{neighbors}.
Each particle is \emph{anonymous}, lacking a unique identifier, but can locally identify each of its neighboring locations and can determine which of these are occupied by particles.
Each particle has a constant-size local memory that it can write to and its neighbors can read from for communication.
In particular, a particle stores whether it is contracted or expanded in its memory.
Particles do not have any access to global information such as a shared compass, coordinate system, or estimate of the size of the system.

The system progresses through \emph{atomic actions} according to the standard asynchronous model of computation from distributed computing (see, e.g.,~\cite{Lynch1996}).
A classical result under this model states that for any concurrent asynchronous execution of atomic actions, there exists a sequential ordering of actions producing the same end result, provided conflicts that arise in the concurrent execution are resolved.
In the amoebot model, an atomic action corresponds to the activation of a single particle.
Once activated, a particle can $(i)$ perform an arbitrary, bounded amount of computation involving information it reads from its local memory and its neighbors' memories, $(ii)$ write to its local memory, and $(iii)$ perform at most one expansion or contraction.
Conflicts involving simultaneous particle expansions into the same unoccupied node are assumed to be resolved arbitrarily such that at most one particle moves to some unoccupied node at any given time.
Thus, while in reality many particles may be active concurrently, it suffices when analyzing algorithms under the amoebot model to consider a sequence of activations where only one particle is active at a time.

\subsection{Terminology and Results for Homogeneous Particle Systems} \label{subsec:homosystems}

We now recall terminology and notation from our previous work on compression~\cite{Cannon2016} that we need for this work.
A particle system \emph{arrangement} is the set of vertices of the triangular lattice $\Gtri$ occupied by particles.
Two arrangements are equivalent if they are translations of each other; we define a particle system \emph{configuration} to be an equivalence class of arrangements.
An \emph{edge} of a configuration is an edge of $\Gtri$ where both endpoints are occupied by particles.
A configuration is \emph{connected} if for any two particles in the system, there is a path of such edges between them.
A configuration has a {\it hole} if there is a maximal, finite, connected component of unoccupied vertices in $\Gtri$.

As we justify with Lemma~\ref{lem:connectednoholes}, our analysis will focus on connected, hole-free configurations.
The \emph{boundary} of such a configuration $\sigma$ is the closed walk $\mathcal{P}$ on edges of $\sigma$ that encloses all particles of $\sigma$ and no unoccupied vertices of $\Gtri$.
The \emph{perimeter} $p(\sigma)$ of configuration $\sigma$ is the length of this walk, also denoted $|\P|$.
The following bounds the number of configurations with a given perimeter.

\begin{lem}[$\!\!\!$\cite{Cannon2016}, Lemma 4.3] \label{lem:nu}
For any $\nu > \cbound$, there is an integer $n_1(\nu)$ such that for all $n \geq n_1(\nu)$, the number of connected, hole-free particle system configurations with $n$ particles and perimeter $k$ is at most $\nu^k$.
\end{lem}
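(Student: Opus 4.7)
The plan is to reduce the counting of configurations to the counting of possible boundary walks, since a connected, hole-free configuration is determined up to translation by its boundary walk $\mathcal{P}$: the walk encloses exactly the particles of the configuration, and hole-freeness ensures there is no additional structure inside. So it suffices to bound, for each $k$, the number of distinct closed boundary walks of length $k$ that can arise, modulo translation of $\Gtri$.

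First I would fix a canonical starting point and direction along the boundary in order to kill the translation equivalence and the $O(k)$-many starting positions for traversing a closed walk of length $k$: for example, pick the lexicographically smallest lattice vertex visited by $\mathcal{P}$ together with a canonical outgoing boundary edge there. This introduces at most a polynomial-in-$k$ overcount, which will be absorbed at the end. After this canonicalization, bounding the number of configurations with perimeter $k$ reduces to bounding the number of length-$k$ boundary walks on $\Gtri$ starting from a fixed edge.

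Next I would count these boundary walks by viewing them through the dual hexagonal lattice. Each particle (vertex of $\Gtri$) corresponds to a hexagonal cell, and the boundary of the union of cells of a connected, hole-free configuration is a self-avoiding polygon on the hex lattice; moreover a perimeter-$k$ boundary walk on $\Gtri$ corresponds to a hex-lattice polygon of length $\Theta(k)$ (each boundary edge of $\Gtri$ accounts for a bounded number of hex edges around the endpoint cells). Invoking the known value of the connective constant of the hexagonal lattice, $\mu_{\text{hex}}=\sqrt{2+\sqrt{2}}$, the number of self-avoiding polygons of hex-length $c k$ is at most $C_1\cdot k^{a}\cdot (2+\sqrt{2})^{k}$ for constants $C_1,a,c$ that do not depend on $k$. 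Combined with the polynomial factor from the canonicalization step, this gives a total count of at most $C_2\cdot k^{b}\cdot(2+\sqrt{2})^k$ for constants $C_2,b$.

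Finally, since $\nu > 2+\sqrt{2}$, the ratio $(\nu/(2+\sqrt{2}))^k$ grows exponentially in $k$ and therefore dominates the polynomial prefactor $C_2\cdot k^{b}$ once $k$ is large enough. A connected configuration on $n$ particles has perimeter $k\ge c'\sqrt{n}$ for some constant $c'>0$, so choosing $n_1(\nu)$ large enough guarantees that for all $n\ge n_1(\nu)$ and all admissible $k$ the bound $C_2\cdot k^{b}\cdot(2+\sqrt{2})^k \le \nu^k$ holds, completing the proof. The main obstacle I anticipate is making the correspondence between $\Gtri$-boundary walks and hex-lattice self-avoiding polygons genuinely injective: boundaries of hole-free configurations can revisit vertices of $\Gtri$ at cut vertices, and one must show that each such $\Gtri$-boundary still encodes, perhaps after a local case analysis at cut vertices, at most a bounded number of hex-lattice polygons of the claimed length. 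An alternative, more hands-on route would bypass hex-lattice polygons entirely and argue directly that at each step of the boundary walk there are at most $2+\sqrt{2}+o(1)$ admissible next edges on average (e.g.\ via a transfer-matrix-style count), but the hex-lattice reduction seems cleanest.
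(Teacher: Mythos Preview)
This lemma is quoted from~\cite{Cannon2016} without proof in the present paper, and your outline is essentially the argument given there: pass to the hex-lattice boundary contour and invoke the Duminil-Copin--Smirnov connective constant $\mu_{\mathrm{hex}}=\sqrt{2+\sqrt{2}}$.

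Two points to tighten. First, your concern about cut vertices is unfounded. Each vertex of $\Ghex$ corresponds to a triangular face of $\Gtri$ and is incident to exactly three hex edges; a hex edge is a boundary edge iff exactly one endpoint of its dual $\Gtri$-edge is occupied, and a parity check on the three vertices of the face shows that either $0$ or $2$ of these hex edges are boundary edges. So the hex boundary has degree $0$ or $2$ at every vertex and is a disjoint union of simple cycles; connectivity and hole-freeness force a single cycle. No local surgery is required, and the map from configurations to hex self-avoiding polygons is already injective. Second, the correspondence constant is not ``$\Theta(k)$'' but exactly $|\mathcal P_{\mathrm{hex}}|=2k+6$ (stated later in Section~\ref{sec:dual}); this is what makes the base come out to $(2+\sqrt 2)=\mu_{\mathrm{hex}}^{2}$ rather than some other power.

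One small overreach: the connective constant alone gives only $c_m\le \mu_{\mathrm{hex}}^{m(1+o(1))}$, not a polynomial prefactor. That weaker bound is already enough---for $\nu>2+\sqrt 2$ choose $\nu'$ with $2+\sqrt 2<\nu'<\nu$, get $c_{2k+6}\le C(\nu')^{k}$ for large $k$, and absorb the constant and the $(\nu'/\nu)^k$ slack using $k\ge p_{\min}(n)\to\infty$.
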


Let $p_{min}(n)$ be the minimum possible perimeter for a configuration of $n$ particles; it is easy to see that $p_{min}(n) = \Theta(\sqrt{n})$.
Given any $\alpha > 1$, a configuration of $n$ particles is said to be \emph{$\alpha$-compressed} if $p(\sigma) \leq \alpha \cdot p_{min}(n)$.
The following lemma establishes a concrete upper bound on $p_{min}(n)$.

\begin{lem} \label{lem:pmin}
For any $n \geq 1$, there is a connected, hole-free particle system configuration of $n$ particles with perimeter at most $2\sqrt{3}\sqrt{n}$.
That is, $p_{min}(n) \leq 2\sqrt{3}\sqrt{n}$.
\end{lem}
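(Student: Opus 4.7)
The plan is to exhibit, for each $n \geq 1$, an explicit connected hole-free configuration of $n$ particles whose perimeter meets the bound. My natural candidates on the triangular lattice are the regular hexagonal arrangements, which are asymptotically isoperimetric. Let $T_k = 3k^2 - 3k + 1$ denote the $k$-th centered hexagonal number, i.e., the number of particles in the hexagon $H_k$ of side $k$, defined in hex coordinates as $\{(a,b) : \max(|a|,|b|,|a+b|) \leq k - 1\}$. When $n = T_k$, I would take $H_k$ itself: it is connected and hole-free, and its boundary is a simple hexagonal walk of length $6(k-1)$. Squaring, the required inequality $6(k-1) \leq 2\sqrt{3 T_k}$ reduces to $k \geq 2/3$, which holds for every integer $k \geq 1$.

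For general $n$, I would let $k$ be the unique positive integer with $T_k \leq n < T_{k+1}$ and set $r = n - T_k \in \{0, 1, \ldots, 6k - 1\}$. I would build the configuration by starting from $H_k$ and adding the remaining $r$ particles one at a time into positions on the \emph{outer ring}, the $6k$ sites whose addition to $H_k$ would produce $H_{k+1}$. My addition order traverses the ring side by side, placing within each side the $k-1$ non-corner sites before the corner(s). A direct case analysis will show that every added particle has $2$, $3$, or $4$ occupied neighbors in the current configuration, and the perimeter changes by $+1$ when it has two mutually adjacent neighbors (the first particle of a fresh side, or a corner placed after its side's non-corners), by $0$ when its three neighbors form a V-shaped concavity (a subsequent non-corner), and by $-1$ when its four neighbors fill the last remaining concavity of the ring (the final non-corner of side $6$). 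Summing along the complete traversal should recover the perimeter $6k$ of $H_{k+1}$, and the maximum perimeter at any intermediate stage will be $6k + 1$.

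Finally, I would verify $p(\sigma) \leq 2\sqrt{3n}$ at every intermediate stage. The tight case is $p = 6k+1$ at $r = 5k+1$, where $n = 3k^2 + 2k + 2$; squaring, $(6k+1)^2 = 36k^2 + 12k + 1 \leq 36k^2 + 24k + 24 = 12n$ reduces to $12k + 23 \geq 0$, which is immediate. Other intermediate stages have either smaller $p$ or larger $n$, so their inequalities will be easier; the small cases (in particular $k = 1$, where the outer ring degenerates into six corners and the first added particle has only one neighbor, contributing $+2$ to the perimeter) can be checked by direct enumeration. The main obstacle will be the bookkeeping in the case analysis: for each added particle I must identify the number of occupied neighbors, verify that they are arranged as a convex pair, a V-shaped concavity, or the deepest-concavity pattern, and confirm that the growing configuration remains connected and hole-free throughout. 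The remaining steps are routine algebra.
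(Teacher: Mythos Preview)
Your approach is essentially the same as the paper's: exhibit the centered hexagon $H_k$, add the remaining $r$ particles to its outer ring, and verify the perimeter bound by elementary algebra. The paper uses $\ell = k-1$ as its side-length parameter and simply adds the outer-ring particles side by side (each side together with its trailing corner), which gives a maximum intermediate perimeter of $6\ell+6 = 6k$ rather than your $6k+1$; the final check then reduces to $(6\ell+i)^2 \leq 12(3\ell^2+3\ell+1+k)$ for $i=1,\dots,6$, handled in one line.

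One remark on presentation: your ordering is slightly ambiguous as written. The phrase ``non-corners before the corner(s)'' does not by itself force corner $6\!-\!1$ to be placed with side~$1$, yet that early placement is exactly what makes the last non-corner of side~$6$ acquire four neighbors and contribute~$-1$. If instead each side carries only its trailing corner, the last particle placed is corner $6\!-\!1$ with three neighbors and $\Delta p=0$, the maximum intermediate perimeter is $6k$, and your ``tight case'' at $r=5k+1$ with $p=6k+1$ never occurs. Either ordering works, but you should state explicitly which one you mean so that the case analysis matches the claimed $\Delta p$ values. With that clarified, the argument is correct.
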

\begin{proof}
This lemma follows easily from noting that hexagonal configurations of $n$ particles have perimeter on the order of $2\sqrt{3}\sqrt{n}$; a proof can be found in Appendix~\ref{sec:pf-pmin}.
\end{proof}

\subsection{Heterogeneous Particle Systems} \label{subsec:heterosystems}

Generalizing previous work on the amoebot model in which all particles are homogeneous and indistinguishable, we assume that each particle $P$ has a fixed \emph{color} $c(P) \in \{c_1, \ldots, c_k\}$ that is visible to itself and its neighbors, where $k \ll n$ is a constant.
These colors can represent anything from different colonies of ants to varying equipment sets in a multi-robot team to different social demographic classes.
We extend the definition of \emph{configuration} given in Section~\ref{subsec:homosystems} to include both the vertices of $\Gtri$ occupied by particles as well as the colors of those particles.
An edge of configuration~$\sigma$ with endpoints occupied by particles $P$ and $Q$ is \emph{homogeneous} if $c(P) = c(Q)$ and \emph{heterogeneous}~otherwise.

We further extend the original model by allowing neighboring particles to exchange their positions in a \emph{swap move}.
Swap moves have no meaning in homogeneous systems as all particles are indistinguishable, but they grant heterogeneous systems flexibility in allowing particles trapped in the interior of the system to move freely.\footnote{In domains where physical swap moves are unrealistic, colors could be treated as in-memory attributes that could be exchanged by neighboring particles to simulate a swap move.}
Although these swap moves are not necessary for the correctness of our algorithm or our rigorous analysis, they enable faster convergence in practice.

In this paper, we study heterogeneous systems with $k = 2$ color classes.
As discussed in Section~\ref{sec:conclude}, our algorithm performs well in practice for larger values of $k$ and we expect our proof techniques would generalize without needing significant new ideas.
However, this generalization would be cumbersome; thus, for simplicity, we restrict our attention to systems with colors $\{c_1, c_2\}$.
For $2$-heterogeneous systems, we can formally define separation with respect to having large monochromatic regions.

\begin{defn} \label{defn:sepinformal}
For $\beta > 0$ and $\delta \in (0, 1/2)$, a $2$-heterogeneous particle system configuration $\sigma$ is said to be \emph{$(\beta, \delta)$-separated} if there is a subset of particles $R$ such that:
\begin{enumerate}
\item There are at most $\beta\sqrt{n}$ edges of $\sigma$ with exactly one endpoint in $R$;
\item The density of particles of color $c_1$ in $R$ is at least $1 - \delta$; and
\item The density of particles of color $c_1$ not in $R$ is at most $\delta$.
\end{enumerate}
\end{defn}

\noindent Unpacking this definition, $\beta$ controls how small a boundary there is between the monochromatic region $R$ and the rest of the system, with smaller $\beta$ requiring smaller boundaries.
The $\delta$ parameter expresses the tolerance for having particles of the wrong color within the monochromatic region $R$: small values of $\delta$ require stricter separation of the color classes, while larger values of $\delta$ allow for more integrated configurations.
Notably, $R$ does not need to be connected.

\subsection{Markov Chains} \label{subsec:markov}

A thorough treatment of Markov chains can be found in the standard textbook~\cite{Levin2009}. %; here, we present the necessary terminology relevant to our results.
A {\it Markov chain} is a memoryless random process on a state space $\Omega$; for our purposes, $\Omega$ is finite and discrete.
%In particular, a Markov chain randomly transitions between the states of $\Omega$ in a time-independent, or {\it stochastic}, fashion.
%The probability with which the chain transitions to its next state depends only on its current state.
We focus on discrete time Markov chains, where one transition occurs per {\it iteration} (or {\it step}) of the Markov chain.
Because of its stochasticity, we can completely describe a Markov chain by its transition matrix $M$, which is an $|\Omega| \times |\Omega|$ matrix indexed by the states of $\Omega$, where for $x,y \in \Omega$, $M(x,y)$ is the probability, if in state $x$, of transitioning to state $y$ in one step.
The $t$-step transition probability $M^t(x,y)$ is the probability of transitioning from $x$ to $y$ in exactly $t$ steps.

A Markov chain is \emph{irreducible} if for all $x,y \in \Omega$ there is a $t$ such that $M^t(x,y) > 0$.
A Markov chain is \emph{aperiodic} if for all $x \in \Omega$, $\gcd\{t : M^t(x,x) > 0\} = 1$.
A Markov chain is \emph{ergodic} if it is both irreducible and aperiodic.
A {\it stationary distribution} of a Markov chain is a probability distribution $\pi$ over $\Omega$ such that $\pi M = \pi$.
Any finite, ergodic Markov chain converges to a unique stationary distribution given by $\pi(y) = \lim_{t \to \infty} M^t(x,y)$ for any $x,y \in \Omega$; importantly, for such chains this distribution is independent of starting state $x$.
To verify $\pi'$ is the unique stationary distribution of a finite ergodic Markov chain, it suffices to check that $\pi'(x)M(x,y) = \pi'(y)M(y,x)$ for all $x,y \in \Omega$ (the \emph{detailed balance condition}; see, e.g.,~\cite{Feller1968}).

Given a state space $\Omega$, a set of allowable transitions between states, and a desired stationary distribution $\pi$ on $\Omega$, the Metropolis-Hastings algorithm~\cite{Hastings1970} gives a Markov chain on $\Omega$ that uses only allowable transitions and has stationary distribution $\pi$.
To accomplish separation, we consider the state space of particle configurations and allow transitions between states that correspond to one particle move; we pick a stationary distribution $\pi$ over configurations that favors well-separated configurations; and we calculate transition probabilities according to the Metropolis-Hasting algorithm (using a {\it Metropolis filter}).  Importantly, we choose $\pi$ so that these transition probabilities can be calculated by an individual particle using only information in its local neighborhood.

%This is accomplished by carefully setting the probabilities of the state transitions as follows.
%The neighbors $N(x)$ of a state $x \in \Omega$ are the states it can transition to; its {\it degree} is $|N(x)|$.
%Starting at $x \in \Omega$, the Metropolis-Hastings algorithm picks $y \in N(x)$  uniformly with probability $1/(2\Delta)$, where $\Delta$ is the maximum degree of any state, and moves to $y$ with probability $\min\{1, \pi(y)/\pi(x)\}$; with all the remaining probability, it stays at $x$ and repeats.
%Using this probability calculation to decide whether or not to make a transition is known as a \emph{Metropolis filter}.
%If the allowable transitions connect $\Omega$ (i.e., if the chain is irreducible), then $\pi$ must be the stationary distribution by detailed balance.
%While calculating $\pi(y)/\pi(x)$ seems to require global knowledge, this ratio can often be calculated easily using only local information when many terms cancel out, as will be the case for us.
%The states of the Markov chain $\M$ we consider are particle system configurations, and each of its transitions correspond to a movement of one particle.

\section{The Separation Algorithm} \label{sec:alg}

We now present our stochastic, local, distributed algorithm for separation.
Our algorithm achieves separation by biasing particles towards moves that both gain them more neighbors overall and more like-colored neighbors.
We use two bias parameters to control this preference: $\lambda > 1$ corresponds to particles favoring having more neighbors, and $\gamma > 1$ corresponds to particles favoring having more neighbors of their own color.

In order to leverage powerful techniques from Markov chain analysis and statistical physics to prove the correctness of our algorithm, we design our algorithm to follow certain invariants.
First, assuming the initial particle system configuration is connected, our algorithm ensures it remains connected; this is necessary because particles have strictly local communication abilities so a disconnected particle is unable to communicate with or even find the rest of the particles.
Second, our algorithm eventually eliminates all holes in the configuration, and no new holes are ever formed.
Third, once all holes have been eliminated, all moves allowed by our algorithm are \emph{reversible}: if a particle moves from node $u$ to an adjacent node $v$ in one step, there is a nonzero probability that it moves back to $u$ in the next step.
Finally, the moves allowed by our algorithm suffice to transform any connected, hole-free configuration into any other connected, hole-free configuration.

Our algorithm uses two locally-checkable properties that ensure particles do not disconnect the system or form a hole when moving (our first two invariants).
We use the following notation.
For a location $\ell$ --- i.e., a node of the triangular lattice $\Gtri$ --- let $N_i(\ell)$ denote the particles of color $c_i$ occupying locations adjacent to $\ell$.
For neighboring locations $\ell$ and $\ell'$, let $N_i(\ell \cup \ell')$ denote the set $N_i(\ell) \cup N_i(\ell')$, excluding particles occupying $\ell$ and $\ell'$.
When ignoring color, let $N(\ell) = \bigcup_i N_i(\ell)$; define $N(\ell \cup \ell')$ analogously.
Let $\mathbb{S} = N(\ell) \cap N(\ell')$ denote the set of particles adjacent to both locations.
A particle can move from location $\ell$ to $\ell'$ if one of the following are satisfied:

\begin{property} \label{prop:1}
$|\mathbb{S}| \in \{1,2\}$ and every particle in $N(\ell \cup \ell')$ is connected to exactly one particle in $\mathbb{S}$ by a path through $N(\ell \cup \ell')$.
\end{property}

\begin{property} \label{prop:2}
$|\mathbb{S}| = 0$, and both $N(\ell) \setminus \{\ell'\}$ and $N(\ell') \setminus \{\ell\}$ are nonempty and connected.
\end{property}

\noindent Note that these properties do not need to be verified for swap moves, since swap moves do not change the set of occupied locations and thus cannot disconnect the system or create a hole.

We now define the Markov chain $\M$ for separation.
The state space $\Omega$ of $\M$ is the set of all connected heterogeneous particle system configurations of $n$ contracted particles, and Algorithm~\ref{alg:markovchainm} defines its transition probabilities.
We note that $\M$, a centralized Markov chain, can be directly translated to a fully distributed, local, asynchronous algorithm $\A$ that can be run by each particle independently and concurrently to achieve the same system behavior.
This translation is much the same as for previous algorithms developed using the stochastic approach to self-organizing particle systems~\cite{Cannon2016,AndresArroyo2018}; we refer the interested reader to those papers for details.
%This translation is much the same as for previous algorithms developed using the stochastic approach to self-organizing particle systems~\cite{Cannon2016,AndresArroyo2018}, involving, among other things, a local synchronization mechanism for decoupling the Markov chain's combined expansion and contraction move into separate particle activations as required by the amoebot model (Section~\ref{subsec:model}).
%We refer the interested reader to those papers for details.
Importantly, this translation is only possible because all probability calculations and property checks in $\M$ use strictly local information available to the particles involved.
Simulations of $\M$ can be found in Section~\ref{subsec:simulations}.

\begin{algorithm}
\caption{Markov Chain $\M$ for Separation and Integration}
\label{alg:markovchainm}
\begin{algorithmic}[1]
\Statex \textbf{Beginning at any connected configuration $\sigma_0$ of $n$ particles, repeat:}
\State Choose a particle $P$ uniformly at random; let $c_i$ be its color and $\ell$ its location. \label{alg:step:mstart}
\State Choose a neighboring location $\ell'$ and $q \in (0,1)$ each uniformly at random.\label{alg:step:mdirprob}
\If {$\ell'$ is unoccupied}
    \State $P$ expands to occupy both $\ell$ and $\ell'$.
    \State Let $e = |N(\ell)|$ (resp., $e_i = |N_i(\ell)|$) be the number of neighbors (resp., of color $c_i$) $P$ had when contracted at location $\ell$, and define $e' = |N(\ell')|$ and $e_i' = |N_i(\ell')|$ analogously.
    \If {$(i)$ $e \neq 5$, $(ii)$ $\ell$ and $\ell'$ satisfy Property~\ref{prop:1} or~\ref{prop:2}, and $(iii)$ $q < \lambda^{e' - e} \cdot \gamma^{e_i' - e_i}$}
        \State $P$ contracts to $\ell'$.\label{alg:step:mcontract}
    \Else {} $P$ contracts back to $\ell$.
    \EndIf
\ElsIf {$\ell'$ is occupied by particle $Q$ of color $c_j$}
    \If {$q < \gamma^{|N_i(\ell') \setminus \{P\}| - |N_i(\ell)| + |N_j(\ell) \setminus \{Q\}| - |N_j(\ell')|}$} $P$ and $Q$ perform a swap move.\label{alg:step:mswap}
    \EndIf
\EndIf
\end{algorithmic}
\end{algorithm}

\subsection{\texorpdfstring{The Stationary Distribution of Markov Chain $\M$}{The Stationary Distribution of Markov Chain M}} \label{subsec:statdist}

In this section, we prove that Markov chain $\M$ maintains the four invariants described previously and then characterize its stationary distribution.

\begin{lem} \label{lem:connectednoholes}
If the particle system is initially connected, it remains connected throughout the execution of $\M$.
Moreover, $\M$ eventually eliminates any holes in the initial configuration, after which no holes are ever introduced again.
\end{lem}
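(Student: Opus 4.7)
The plan is to prove the three assertions in the lemma --- preservation of connectivity, eventual elimination of holes, and preservation of hole-freeness once holes are gone --- in sequence, leveraging the analogous result for the homogeneous compression Markov chain of~\cite{Cannon2016} wherever possible and supplying a new argument for the hole-elimination step.

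For the two preservation claims, I would first observe that a swap move does not change the set of occupied vertices, so it trivially preserves both connectivity and the hole structure. For a non-swap (expansion-contraction) move, the legality conditions enforced by $\M$ --- namely that $\ell$ and $\ell'$ satisfy Property~\ref{prop:1} or~\ref{prop:2} and that $e \neq 5$ --- are identical to those used in the homogeneous compression Markov chain of~\cite{Cannon2016}. The new color-dependent factor $\gamma^{e_i' - e_i}$ only reweights the Metropolis acceptance probability; it does not change whether a move is legal. Consequently, the arguments from~\cite{Cannon2016} apply verbatim: Properties~\ref{prop:1} and~\ref{prop:2} ensure that the local neighborhood $N(\ell \cup \ell')$ stays connected after the move, so global connectivity is maintained, and the forbidden case $e = 5$ is precisely the one that would turn $\ell$ into a size-one hole when $P$ vacates it. The same reasoning then rules out the creation of new holes during a non-swap move starting from a hole-free configuration.

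The remaining, novel claim is that $\M$ eventually eliminates any initial holes. To establish this I would exhibit, from any connected configuration $\sigma$ with at least one hole $H$, a positive-probability sequence of legal moves under $\M$ whose net effect is to strictly decrease the total number of hole vertices. The key step is to locate a particle $P$ on the boundary of $H$ that can be moved into an unoccupied vertex inside $H$ while satisfying Property~\ref{prop:1} or~\ref{prop:2} with $e \neq 5$. Since $H$ is finite and lies in the interior of the particle system, one can always find a boundary particle with at least two unoccupied neighbors in $H$ (so $e \leq 4$) and whose removal preserves local connectivity (so Property~\ref{prop:1} or~\ref{prop:2} holds); filling such a vertex reduces the hole area by one. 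Iterating drives the total hole area to zero in finitely many steps, and since each such move has strictly positive probability, $\M$ reaches a hole-free configuration almost surely in finite time.

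The main obstacle is the geometric case analysis needed for the hole-elimination step: verifying that the desired boundary particle exists for every possible hole geometry, including pathological cases such as size-one holes surrounded by six particles each having $e = 5$, or long, narrow holes flanked by densely packed particles. In such cases one may first need to perform auxiliary moves --- for instance, shifting a boundary particle outward to weaken a neighboring particle's $e$-value, or using swap moves to rearrange colors so that a different particle can perform the filling --- before a direct hole-filling move is allowed. Working through these cases carefully, in the spirit of the local-structure analyses in~\cite{Cannon2016}, is where most of the effort in this lemma's proof will lie.
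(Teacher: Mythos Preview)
Your treatment of the two preservation claims is essentially the paper's own argument: since swap moves do not alter the occupied vertex set and the legality conditions for non-swap moves coincide with those in~\cite{Cannon2016}, connectivity and hole-freeness are preserved by citation.

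However, you misjudge the hole-elimination claim as ``novel.'' The same reduction applies: the set of legal non-swap moves under $\M$ is \emph{identical} to the set of legal moves in the compression chain of~\cite{Cannon2016}, and every such move still has strictly positive acceptance probability here (the $\gamma^{e_i'-e_i}$ factor merely reweights, never zeroes out). Hence any finite sequence of moves that eliminates holes in the homogeneous chain is also a positive-probability sequence under $\M$, and the hole-elimination result from~\cite{Cannon2016} transfers directly. The paper's proof is exactly this three-line observation; there is no need for a fresh geometric argument.

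Your proposed direct argument also has some rough edges. Moving a boundary particle $P$ from $\ell$ into a hole vertex $\ell'$ does not obviously reduce total hole area: the vacated site $\ell$ may itself become a hole vertex, and the conditions $e\neq 5$ and Property~\ref{prop:1}/\ref{prop:2} constrain which $\ell$ you may vacate, not which $\ell'$ you may fill. The suggestion to ``use swap moves to rearrange colors so that a different particle can perform the filling'' is a red herring, since the legality of a non-swap move is color-independent. If you wanted a self-contained argument you would need to reproduce the (nontrivial) hole-elimination analysis of~\cite{Cannon2016}, but that is unnecessary given the reduction above.
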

\begin{proof}
This follows directly from the analogous results for the compression algorithm~\cite{Cannon2016}.
Although the separation and compression algorithms assign different probabilities to particle moves, the set of allowed movements is exactly the same, excluding swap moves.
But swap moves do not change the set of occupied nodes of $\Gtri$, so they cannot disconnect the system or introduce a hole.
%Thus, because the Markov chain for compression keeps the system connected, eventually eliminates all holes, and prohibits new holes from forming, so does $\M$.
\end{proof}

\begin{lem}\label{lem:reversible}
	Once all holes have been eliminated, every possible particle move is reversible; that is, if there is a positive probability of moving from configuration $\sigma$ to configuration $\tau$, then there is a positive probability of moving from $\tau$ to $\sigma$.
\end{lem}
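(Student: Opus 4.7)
The plan is to handle the two kinds of allowed moves---swap moves and non-swap (expansion-contraction) moves---separately, and in each case verify that from the post-move configuration the reverse transition satisfies all of its preconditions with positive probability.

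Swap moves (Step~\ref{alg:step:mswap}) are the easier case. Suppose $P$ of color $c_i$ at $\ell$ and $Q$ of color $c_j$ at $\ell'$ successfully swap. No connectivity or hole checks are needed for a swap, since the set of occupied sites is unchanged, so I only need to verify that the Metropolis filter on $\gamma$ has positive probability in the reverse direction. I would compute the exponent of $\gamma$ for the reverse swap (from the new configuration, with $Q$ at $\ell$ and $P$ at $\ell'$, sending them back) and show by a direct symbolic manipulation that it is exactly the negation of the forward exponent; this is immediate because the only local change when $P$ and $Q$ swap is that $P$ and $Q$ exchange roles in the relevant set-difference expressions. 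Since $\gamma^x > 0$ for all $x$, the reverse swap has positive probability.

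For non-swap moves, suppose $P$ has just moved from $\ell$ to the adjacent (formerly unoccupied) site $\ell'$ via an expansion followed by a successful contraction at Step~\ref{alg:step:mcontract}. The reverse move is $P$ expanding from $\ell'$ into $\ell$ and contracting to $\ell$, which requires conditions (i)--(iii) of Step~\ref{alg:step:mcontract} to hold with the roles of $\ell$ and $\ell'$ swapped. Condition~(iii) is immediate, since the reverse Metropolis filter $\lambda^{e-e'}\gamma^{e_i - e_i'}$ is strictly positive for any exponents. Condition~(ii) transfers from forward to reverse because Properties~\ref{prop:1} and~\ref{prop:2} are each stated symmetrically in the two locations involved. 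The uniform selections in Steps~\ref{alg:step:mstart}--\ref{alg:step:mdirprob} pick the particle $P$ and direction $\ell$ with probability $1/(6n) > 0$, so the reverse transition is actually sampled with positive probability.

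The main obstacle is condition~(i) of the reverse step, which requires that $P$ at $\ell'$ in the post-move configuration does not have exactly $5$ neighbors. This neighbor count equals the quantity $e'$ computed during the forward move---the occupied neighbors of $\ell'$ other than the (now-empty) $\ell$. If $e'$ were equal to $5$, then in the pre-move configuration all six neighbors of $\ell'$ would have been occupied (the five counted by $e'$ plus $P$ at the adjacent $\ell$) while $\ell'$ itself was unoccupied, making $\ell'$ a hole. The lemma's hypothesis that we are past the initial hole-elimination phase, combined with Lemma~\ref{lem:connectednoholes} which guarantees no new holes are ever introduced, rules this out. Hence $e' \ne 5$, condition~(i) holds for the reverse, and the lemma follows. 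This is the unique step where the hole-free invariant plays an essential role; everything else is symmetry or positivity.
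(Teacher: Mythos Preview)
Your argument is correct and follows the same overall approach as the paper: verify, for each of the two move types, that the reverse transition satisfies all of its preconditions with positive probability, using the symmetry of Properties~\ref{prop:1} and~\ref{prop:2} for condition~(ii) and the strict positivity of the Metropolis filter for condition~(iii).

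In fact your treatment is more careful than the paper's on one point. The paper's proof checks conditions~(ii) and~(iii) for the reverse non-swap move but does not explicitly address condition~(i), namely that the particle at $\ell'$ must not have exactly five neighbors. You correctly identify this as the place where the hole-free hypothesis is actually used: if $P$ at $\ell'$ had five neighbors after the move, then before the move $\ell'$ was an unoccupied site all six of whose neighbors were occupied, i.e., a hole, contradicting Lemma~\ref{lem:connectednoholes}. This is the right argument, and it is the only step that genuinely requires the ``once all holes have been eliminated'' clause in the statement.
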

\begin{proof}
Suppose, for example, that a particle $P$ moves from location $\ell$ to $\ell'$.
In the next time step, it is possible for $P$ to be chosen again (Step~\ref{alg:step:mstart}) and for $\ell$ to be chosen as the position to explore (Step~\ref{alg:step:mdirprob}).
Because Properties~\ref{prop:1} and~\ref{prop:2} are symmetric with respect to $\ell$ and $\ell'$, whichever was satisfied in the forward move will also be satisfied in this reverse move.
Finally, the probability checked in Condition $(iii)$ of Step~\ref{alg:step:mcontract} is always nonzero, so all together there is a nonzero probability that $P$ moves back to $\ell$ in this reverse move.
Swap moves can be shown to be reversible in a similar~way.
\end{proof}

\begin{lem} \label{lem:ergodic}
Markov chain $\M$ is ergodic on the state space of connected, hole-free configurations.
\end{lem}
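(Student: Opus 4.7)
The plan is to verify the two defining properties of ergodicity separately: aperiodicity and irreducibility, on the state space of connected, hole-free configurations (which is forward-invariant by Lemma~\ref{lem:connectednoholes}).

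Aperiodicity is the easy direction. I would show that $M(\sigma,\sigma)>0$ for every state $\sigma$ by exhibiting some activation that fails with positive probability. For instance, pick any boundary particle $P$ (one always exists on a finite configuration) and any unoccupied neighboring location $\ell'$; the Metropolis filter in Step~\ref{alg:step:mcontract} rejects whenever the uniform random variable $q$ exceeds $\lambda^{e'-e}\gamma^{e_i'-e_i}$, which is a finite threshold and so fails with positive probability. Hence there is a self-loop at every state, and aperiodicity follows.

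For irreducibility, the plan is to show that any connected, hole-free configuration $\sigma$ can reach a fixed canonical target $\sigma_0$ (say, a specific hexagonal arrangement with a prescribed coloring), using Lemma~\ref{lem:reversible} to then conclude that any two states communicate. I would factor the argument into a shape step and a coloring step. For the shape step, the set of non-swap moves permitted by Properties~\ref{prop:1} and~\ref{prop:2} is exactly the set of moves allowed in the compression Markov chain of~\cite{Cannon2016}, where irreducibility on connected, hole-free arrangements was already established; applying that result (ignoring colors) yields a sequence of non-swap moves carrying the underlying arrangement of $\sigma$ to that of $\sigma_0$, with particles (and their colors) transported along the way. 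For the coloring step, fix the canonical shape and observe that swap moves apply to every pair of adjacent particles (they are never blocked by Properties~\ref{prop:1} and~\ref{prop:2}, since swap moves do not change occupancy), and their probability $\gamma^{\cdots}$ is strictly positive. Since the shape is connected, adjacent transpositions on its particle graph generate the full symmetric group on the particles, so any color permutation on the canonical shape is achievable by a finite sequence of swap moves with positive probability. Concatenating the two phases shows $\sigma$ reaches $\sigma_0$ in finitely many steps with positive probability.

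The main subtlety I expect to confront is in the shape step: after rearranging the arrangement via non-swap moves, the colors get permuted in an uncontrolled way relative to their positions, so the coloring step must be prepared to fix an arbitrary color permutation on the canonical shape. This is precisely why the clean separation into shape-then-color is useful, and why citing the fact that adjacent transpositions on a connected graph generate the symmetric group is the right tool. A minor care point is that some particular non-swap move used in~\cite{Cannon2016} may traverse an intermediate state with a hole; since Lemma~\ref{lem:connectednoholes} shows holes are never created once eliminated, the sequence produced by the compression chain's irreducibility argument can be assumed hole-free throughout, which keeps us inside the claimed state space.
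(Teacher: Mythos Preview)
Your approach is essentially the same as the paper's: both establish irreducibility by a shape-then-color decomposition (the paper reconfigures to a straight line via the moves of~\cite{Cannon2016} and then sorts the line by color; you reconfigure to a canonical shape via the same citation and then fix colors with swap moves), and both derive aperiodicity from self-loops.

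One small gap in your aperiodicity argument: you claim the Metropolis filter rejects with positive probability because the threshold $\lambda^{e'-e}\gamma^{e_i'-e_i}$ is ``finite,'' but if that threshold is at least $1$ the filter never rejects, since $q\in(0,1)$. The fix is easy---either pick an extreme particle (e.g., the topmost-rightmost one) and an unoccupied direction in which Properties~\ref{prop:1} and~\ref{prop:2} both fail so the move is deterministically rejected, or observe that a swap between two same-colored neighbors is always a no-op.
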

\begin{proof}[Proof Sketch]
One can show that $\M$ is irreducible (i.e., the moves of $\M$ suffice to transform any configuration to any other configuration) similarly to the proof of the same fact for compression~\cite{Cannon2016}: it is first shown that any configuration can be reconfigured into a straight line; then, the line can be sorted by the color of the particles; finally, by reversibility (Lemma~\ref{lem:reversible}), the line can be reconfigured into any configuration.
Additionally, it is easy to see that $\M$ is aperiodic: at each iteration of $\M$, there is a nonzero probability that the configuration does not change.
Thus, because $\M$ is irreducible and aperiodic, we conclude it is ergodic.
\end{proof}

Because $\M$ is finite and ergodic, it converges to a unique stationary distribution $\pi$ which we now characterize.
For a configuration $\sigma$, let $h(\sigma)$ be the number of heterogeneous edges in $\sigma$.

\begin{lem} \label{lem:statdist} For $Z = \sum_{\sigma}(\lambda\gamma)^{-p(\sigma)}\cdot \gamma^{-h(\sigma)}$,
the stationary distribution of $\M$ is:
\[\pi(\sigma) = \left\{ \begin{array}{ll}
 (\lambda\gamma)^{-p(\sigma)}\cdot \gamma^{-h(\sigma)} / Z & \text{if $\sigma$ is connected and hole-free}; \\
0 & \text{otherwise.}
\end{array} \right.\]
\end{lem}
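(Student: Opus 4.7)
The plan is to verify the detailed balance condition $\pi(\sigma)M(\sigma,\tau) = \pi(\tau)M(\tau,\sigma)$ for all $\sigma,\tau \in \Omega$. By Lemmas~\ref{lem:connectednoholes} and~\ref{lem:ergodic}, $\M$ is ergodic on the set $\Omega^*$ of connected, hole-free configurations, and $\Omega \setminus \Omega^*$ is transient, so the unique stationary distribution is supported on $\Omega^*$; configurations with holes receive probability $0$, in agreement with the claimed $\pi$. Since no transition of $\M$ ever creates a hole (Lemma~\ref{lem:connectednoholes}), detailed balance is trivial whenever $\tau \notin \Omega^*$. It therefore suffices to verify detailed balance for $\sigma,\tau \in \Omega^*$ linked by either a single-particle move or a swap.

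For a single-particle move of a color-$c_i$ particle $P$ from $\ell$ to $\ell'$, Algorithm~\ref{alg:markovchainm} yields $M(\sigma,\tau) = \tfrac{1}{6n}\min(1,\lambda^{e'-e}\gamma^{e_i'-e_i})$. By symmetry of Properties~\ref{prop:1}--\ref{prop:2} in $\ell$ and $\ell'$ (as used in Lemma~\ref{lem:reversible}), $M(\tau,\sigma) = \tfrac{1}{6n}\min(1,\lambda^{e-e'}\gamma^{e_i-e_i'})$, giving Metropolis ratio $M(\sigma,\tau)/M(\tau,\sigma) = \lambda^{e'-e}\gamma^{e_i'-e_i}$. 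I would next compute $\pi(\tau)/\pi(\sigma)$. Since a hole-free configuration is a simply-connected triangulated region, Euler's formula on $n$ vertices, $E$ edges, and $T$ triangular faces gives $T = E - n + 1$ and $3T = 2E - p$, so $p = 3n - E - 3$. Thus $p(\sigma) - p(\tau) = e' - e$. Counting edges incident to $P$, there are $e - e_i$ heterogeneous edges at $\ell$ in $\sigma$ and $e' - e_i'$ at $\ell'$ in $\tau$, so $h(\sigma) - h(\tau) = (e_i'-e_i) - (e'-e)$. Substituting,
\[\frac{\pi(\tau)}{\pi(\sigma)} = (\lambda\gamma)^{e'-e}\,\gamma^{(e_i'-e_i)-(e'-e)} = \lambda^{e'-e}\gamma^{e_i'-e_i},\]
matching the Metropolis ratio.

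For a swap between $P$ (color $c_i$) at $\ell$ and $Q$ (color $c_j$) at $\ell'$ with $i \neq j$ (the case $i = j$ leaves $\sigma = \tau$ and is trivial), set $a = |N_i(\ell)|$, $b = |N_j(\ell) \setminus \{Q\}|$, $a' = |N_i(\ell') \setminus \{P\}|$, and $b' = |N_j(\ell')|$. Then $M(\sigma,\tau) = \tfrac{1}{6n}\min(1,\gamma^{a'-a+b-b'})$, and applying the swap rule from $\tau$ with $P$ and $Q$ interchanged yields $M(\tau,\sigma) = \tfrac{1}{6n}\min(1,\gamma^{-(a'-a+b-b')})$, hence Metropolis ratio $\gamma^{a'-a+b-b'}$. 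The occupied vertex set is unchanged so $p(\tau) = p(\sigma)$. Enumerating the edges incident to $\ell$ or $\ell'$ in $\sigma$ --- one heterogeneous edge $\ell\ell'$, $b$ heterogeneous from $P$ to its other neighbors, and $a'$ heterogeneous from $Q$ to its other neighbors --- and in $\tau$ --- the same $\ell\ell'$ edge, $a$ heterogeneous from $Q$ at $\ell$, and $b'$ heterogeneous from $P$ at $\ell'$ --- yields $h(\tau) - h(\sigma) = (a + b') - (b + a') = -(a'-a+b-b')$, so $\pi(\tau)/\pi(\sigma) = \gamma^{a'-a+b-b'}$, exactly as required.

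The main obstacle is the Euler-formula identity $p = 3n - E - 3$, which relies essentially on the hole-free hypothesis to ensure the occupied region is simply connected; the rest is careful but routine edge bookkeeping, with the swap-move count requiring attention to avoid double-counting the $\ell\ell'$ edge, whose heterogeneity is unaffected by the swap.
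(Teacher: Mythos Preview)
Your proof is correct and follows essentially the same detailed-balance route as the paper (which verifies detailed balance for the equivalent form $\pi(\sigma)\propto\lambda^{e(\sigma)}\gamma^{a(\sigma)}$ and then converts using $e(\sigma)=3n-p(\sigma)-3$ and $a(\sigma)=e(\sigma)-h(\sigma)$). One small slip: the swap transition probability should carry a prefactor $\tfrac{1}{3n}$ rather than $\tfrac{1}{6n}$, since either $P$ or $Q$ can be the particle selected in Step~\ref{alg:step:mstart}, but this factor cancels in the detailed-balance ratio and does not affect your argument.
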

\begin{proof}[Proof Sketch]
By Lemma~\ref{lem:connectednoholes}, when $\M$ starts at a connected configuration it eventually reaches and remains in the set of configurations that are connected and hole-free.
Thus, disconnected configurations and configurations with holes have zero weight at stationarity.
In Appendix~\ref{app:detailedbalance}, we show using detailed balance that the unique stationary distribution of $\M$ can be written, for $\sigma$ connected and hole-free, as  $\pi(\sigma) = \lambda^{e(\sigma)}\cdot \gamma^{a(\sigma)} / Z_e$ where $e(\sigma)$ is the number of edges and $a(\sigma)$ is the number of homogeneous edges of $\sigma$ and  $Z_e = \sum_{\sigma}\lambda^{e(\sigma)}\cdot \gamma^{a(\sigma)}$.
This can be rewritten as in the lemma using two facts: $(i)$ since every edge is either homogeneous or heterogeneous, $e(\sigma) = a(\sigma) + h(\sigma)$; and $(ii)$ for any connected, hole-free configuration~$\sigma$, $e(\sigma) = 3n - p(\sigma) - 3$, a result shown in~\cite{Cannon2016}.
\end{proof}

\noindent The remainder of this paper will be spent analyzing this stationary distribution.

\section{Technical Overview} \label{sec:techoverview}

We consider large $\lambda$ and want to know for which values of $\gamma$ separation occurs.
Our proof techniques only apply to compressed configurations, so we must first show that Markov chain $\M$ achieves compression for the values of $\lambda$ and $\gamma$ we are interested in.
%Because we defined separation in terms of the existence of mostly monochromatic regions with small boundary, we cannot hope to see this behavior unless the overall boundary is also small, that is, unless the particle system is compressed.
Previous proofs of compression in homogeneous particle systems break down for heterogeneous systems, so we introduce the {\it cluster expansion} to overcome this obstacle.
% Because of this, our current tools for investigating separation are only applicable to compressed configurations, and our use of the cluster expansion shows that non-compressed configurations - where we can't say anything about separation - are exponentially unlikely.
In Section~\ref{sec:cluster}, we give background on the cluster expansion, which comes from statistical physics and allows us to rewrite quantities of interest in more useful forms.
For the cluster expansion to be useful, we need to know the formal power series it involves is convergent.
We give a sufficient condition for the convergence of the cluster expansion (Theorem~\ref{thm:cluster-converge}) and derive the consequence of its convergence that we will use (Theorem~\ref{thm:bdry-volume}).
%
% give a  and derive consequences of its existence  which we will use to show compression occurs for our parameter ranges of interest. The cluster expansion, when it converges, allows us to and get
%
%. In particular, we give sufficient conditions for the cluster expansion to converge and show, in the settings we consider, how a convergent cluster expansion allows us to split a partition function into a boundary term and a surface term. This decomposition is the key tool we need to show compression occurs.

%In Section~\ref{sec:duality}, we give preliminaries about lattice duality and lattice paths that will enable cleaner mathematical formulations of our claims.

We first consider when $\lambda$ and $\gamma$ are large.
To show compression occurs in this case, in Section~\ref{sec:compression-large-gamma} we look at the {\it partition function} $Z_\P$ for different fixed boundaries $\P$, where $Z_\P$ is the sum over all configurations $\sigma$ with boundary $\P$ of their weights $(\lambda\gamma)^{-|\P|} \cdot \gamma^{-h(\sigma)}$.
We cannot analyze $Z_\P$ directly using a cluster expansion, so we instead relate $Z_\P$ to a \emph{polymer partition function} $\Xi_\P^\mL$ (defined in Equation~\ref{eqn:xi-loop}) which does have a cluster expansion.  %, which we can analyze with the cluster expansion.
%While we want to analyze $Z_\P$ with the cluster expansion we cannot do so directly, so we carefully relate $Z_\P$ to a different partition function, the {\it polymer partition function} $\Xi^\mL_\P$ (defined in Equation~\ref{eqn:xi-loop}), which is more amenable to analysis with the cluster expansion.
%As a cluster expansion is a formal infinite power series, in order to use this cluster expansion for $\Xi_\P^\mL$, we nee dot know that this formal power series converges;
Using the sufficient condition given in Section~\ref{sec:cluster}, we show that the cluster expansion for $\Xi_\P^\mL$ --- which is a formal power series for $\ln \Xi_\P^\mL$ --- is convergent when $\gamma > \gsep$.
We then use this expression of $\ln \Xi_\P^\mL$ as a convergent power series to bound $\Xi_\P^{\mathcal{L}}$ in terms of a {\it volume term} depending only on the number of particles $n$ and a {\it surface term} depending only on $|\P|$, the length of boundary $\P$.
Lemma~\ref{lem:xi-bound} states that, provided $\gamma > \gsep$, for constants $c$ and $\psi$ we specify later,
\[e^{(3n-3)\Q - 3c|\P|} \leq \Xi_\P^\mL \leq e^{(3n-3)\Q + 3c|\P|}.\]
This lets us bound the ratio of $\Xi_\P^\mL$ and $\Xi_{\P'}^\mL$ for different boundaries $\P$ and $\P'$ of the same number of particles $n$ by a term that depends only on their lengths $|\P|$ and $|\P'|$, with no dependence on $n$.
We use this to apply a Peierls argument similar to the one used to show compression in~\cite{Cannon2016}.
This argument relates the total weight of undesirable configurations --- those with boundaries longer than $\alpha \cdot p_{min}$ for some constant $\alpha > 1$ --- to the weight of configurations with minimum perimeter, $p_{min}$. %the shortest boundary, which has length $p_{min}$.
We show that the former has exponentially small weight in the stationary distribution $\pi$ provided $\lambda$, $\gamma$, and $\alpha$ satisfy the condition given in Theorem~\ref{thm:comp-sep}.
One aspect of this technical condition is summarized in Corollary~\ref{cor:comp-sep-lambdagamma}, which says that for any $\lambda > 1$ and $\gamma > \gsep$ such that $\lambda\gamma > 2(2+\sqrt{2})e^{\fourcsep} \sim \lgsep$, there exists a constant $\alpha$ such that a configuration drawn from the stationary distribution $\pi$ of $\M$ is $\alpha$-compressed with high probability.
(Recall, we say an event $A$ occurs with high probability, or w.h.p., if $\Pr[A] \geq 1 - c^{n^\delta}$, where $0 < c < 1$ and $\delta > 0$ are constants. Unless we explicitly state otherwise, it will always be the case that $\delta = 1/2$.)
Conversely, for any $\alpha > 1$, there exist $\lambda$ and $\gamma$ such that $\M$ with these parameter values achieves $\alpha$-compression at stationarity w.h.p.\ (Corollary~\ref{cor:comp-sep-alpha}).

We next show in Section~\ref{sec:separation}, again when $\lambda$ and $\gamma$ are large enough, that separation provably occurs.
By the work of Section~\ref{sec:compression-large-gamma}, it suffices to show this among compressed configurations.
We use a technique known as {\it bridging} that was developed to analyze molecular mixtures called \emph{colloids}~\cite{Miracle2011}.
Adapting the bridging approach to our setting required several new innovations to overcome obstacles such as the irregular shapes of particle system configurations, the non-self-duality of the triangular lattice, the interchangeability between color classes, and other technicalities related to interfaces between particles of different colors. % that boundaries between particles of different colors need to be treated differently depending on if they reach the boundary of the particle configuration or not.
The main result of this section is Theorem~\ref{thm:sep}, which states that if $\P$ is a boundary of an $\alpha$-compressed configuration and if $\alpha$, $\beta$, and $\delta$ satisfy a technical condition, configurations with boundary $\P$ and weights proportional to $\pi$ are $(\beta,\delta)$-separated w.h.p.
Combining this with the results of Section~\ref{sec:compression-large-gamma}, for any $\lambda > 1$ and $\gamma > 4^{5/4} \sim 5.66$ such that $\lambda\gamma > 2(2+\sqrt{2})e^{\fourcsep} \sim \lgsep$, there exists $\beta$ and $\delta$ such that $\M$ achieves $(\beta,\delta)$-separation at stationarity w.h.p.\ (Corollary~\ref{cor:comp+sep-lambdagamma}).
Furthermore, for any $\beta > 2\sqrt{3}$ and any $\delta < 1/2$, there exists $\lambda$ and $\gamma$ large enough so that $\M$ achieves $(\beta,\delta)$-separation at stationarity w.h.p.\ (Corollary~\ref{cor:comp+sep-betadelta}).

In Sections~\ref{sec:compression-small-gamma} and~\ref{sec:integration}, we show that there are some values of $\gamma$ close to one for which separation does not occur.
This counterintuitively includes some $\gamma > 1$, where particles have a preference for being next to particles of the same color.
%The general approach is the same as above, first showing that compression occurs and then showing that among compressed configurations separation is unlikely to occur.
%As above, the proof of non-separation only applies to compressed configurations.
As we did for large values of $\gamma$,
in Section~\ref{sec:compression-small-gamma} we first show that when $\lambda$ is large and $\gamma$ is close to one, compression provably  occurs.
The polymer partition function $\Xi_\P^\mL$ from above does not have a convergent cluster expansion when $\gamma$ is close to one, so we cannot use it to show compression.
Instead, we carefully relate $Z_\P$ to a different polymer partition function $\Xi_\P^{HT}$ by considering the {\it high temperature expansion}, which rewrites a sum over configurations with a fixed boundary as a sum over even edge sets within that boundary.
We show $\Xi_\P^{HT}$ has a convergent cluster expansion when $\gamma$ is close to one.
We then use the cluster expansion for this high temperature representation, much the same as in Section~\ref{sec:compression-large-gamma}, to show compression provably occurs.
Theorem~\ref{thm:comp-int} gives the condition $\lambda$, $\gamma$, and $\alpha$ must satisfy for $\alpha$-compression to occur.
This theorem implies that for any $\lambda > 1$ and $\gamma \in (\gintlower,\gintupper)$ such that $\lambda(\gamma+1) > 2(2+\sqrt{2}) e^{\threeaint}\sim \lgint$, there exists a constant $\alpha$ such that a configuration drawn from the stationary distribution $\pi$ of $\M$ is $\alpha$-compressed w.h.p.\ (Corollary~\ref{cor:comp-int-lambdagamma}).
Conversely, for any $\alpha > 1$ and any $\gamma \in (\gintlower, \gintupper)$, for large enough $\lambda$ algorithm $\M$ with parameters $\lambda$ and $\gamma$ achieves $\alpha$-compression at stationarity w.h.p.\ (Corollary~\ref{cor:comp-int-alpha}).

Once we have shown that compression occurs for large $\lambda$ and $\gamma$ near one,
in Section~\ref{sec:integration} we show that among these compressed configurations a large amount of separation between color classes is very unlikely.
We prove this with a probabilistic argument in which we find a set of polynomially many events such that if separation occurs, then at least one of these events occurs.
We then show that each event occurs with probability at most $\zeta^{n^{1/2-\varepsilon}}$ for some $\zeta < 1$ and arbitrarily small $\varepsilon > 0$, which via a union bound over the polynomial number of events implies separation is very unlikely.
The main result of this section is Theorem~\ref{thm:int}, which states that if $\P$ is a boundary of an $\alpha$-compressed configuration, then if $\gamma$ and $\delta$ satisfy some technical conditions, configurations with boundary $\P$ and weight proportional to $\pi$ are $(\beta, \delta)$-separated with probability at most $\zeta ^{n^{1/2-\varepsilon}}$ where $\varepsilon > 0$ can be arbitrarily close to zero and $\zeta < 1$.
Combining this with the results of Section~\ref{sec:compression-small-gamma}, we see that for $\lambda> 1$ and $\gamma \in (\gintl,\gintu)$
%(numerically, $\gamma \in (\gintlapprox,\gintuapprox)$)
such that $\lambda(\gamma+1) > 2(2+\sqrt{2}) e^{\threea} \sim \lgint$, there are constants $\beta$ and $\delta$ such that the probability $\M$ with parameters $\lambda$ and $\gamma$ achieves $(\beta, \delta)$-separation at stationarity is at most $\zeta^{n^{1/2-\varepsilon}}$  where $\varepsilon > 0$ and $\zeta < 1$ (Corollary~\ref{cor:comp+int-lambdagamma}).
Conversely, for any $\beta > 0 $ and any $\delta < 1/4$, there exists $\lambda$ and $\gamma$ such that $\M$ with these parameters achieves $(\beta,\delta)$-separation at stationarity with probability at most $\zeta^{n^{1/2-\varepsilon}}$ for $\varepsilon > 0$ and $\zeta < 1$ (Corollary~\ref{cor:comp+int-betadelta}).

%showing that if separation occurs, then in a partition of the triangular lattice into regular regions in at least one region most particles have the same color.
%A particular one of these regular regions having most particles the same color is very unlikely when $\gamma$ is close to one, including for some values of $\gamma$ above one, and taking a union bound completes the proof.

All of the results described here require $n$, the number of particles, to be sufficiently large. We do not expect any of the constant values in the bounds we present to be tight.

\subsection{Simulations} \label{subsec:simulations}

We supplement our rigorous results with simulations that show separation occurs for even better values of $\lambda$ and $\gamma$ than our proofs guarantee, indicating that our proven bounds are likely not tight.
We simulated $\M$ on heterogeneous particle systems with two colors, using 50 particles of each color.
Figure~\ref{fig:progress} shows the progression of $\M$ over time with bias parameters $\lambda = 4$ and $\gamma = 4$, the regime in which particles prefer to have more neighbors, especially those of their own color.
Although the simulation ran for nearly 70 million iterations, much of the progress towards a compressed and separated system occurs in the first million iterations.
Separation still occurs even when swap moves are disallowed, but takes much longer to achieve.

\begin{figure}
\centering
\includegraphics[scale=0.14]{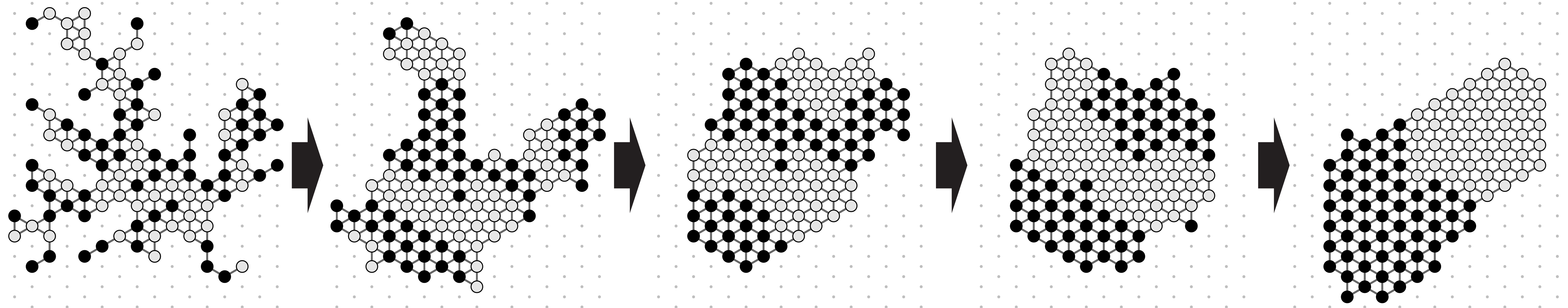}
\caption{A 2-heterogeneous particle system of 100 particles starting from an arbitrary initial configuration after (from left to right) 0; 50,000; 1,050,000; 17,050,000; and 68,250,000; iterations of $\M$ with $\lambda = 4$ and $\gamma = 4$.}
\label{fig:progress}
\end{figure}

Figure~\ref{fig:phases} compares the resulting system configurations after running $\M$ from the same initial configuration for the same number of iterations, varying only the values of $\lambda$ and $\gamma$.
We observe four distinct phases: compressed-separated, compressed-integrated, expanded-separated, and expanded-integrated.
We rigorously verify the compressed-separated behavior (i.e., when $\lambda$ and $\gamma$ are large) in Sections~\ref{sec:compression-large-gamma} and~\ref{sec:separation}, and do the same for the compressed-integrated behavior (i.e., when $\lambda$ is large and $\gamma$ is small) in Sections~\ref{sec:compression-small-gamma} and~\ref{sec:integration}.
We do not give proofs for expanded configurations; in fact, our current definition of separation may not accurately capture what occurs in expanded configurations.

\begin{figure}
\centering
\begin{tabular}{|c|c|c|}
	\hline
	& \small{$\gamma = 5.20$ (Separation)} & \small{$\gamma = 0.58$ (Integration)} \\
	\hline
	\includegraphics[scale = 0.13]{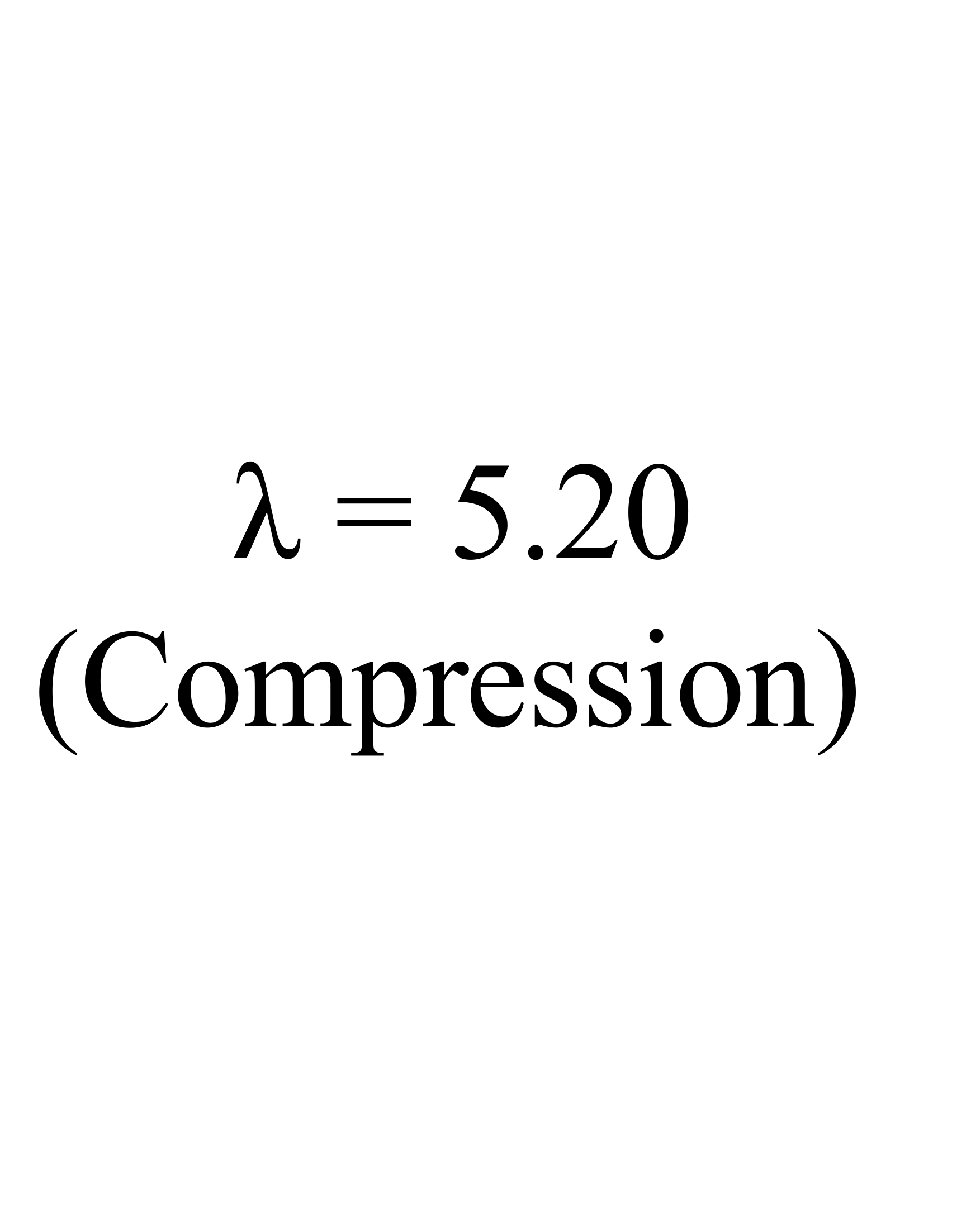} & \includegraphics[scale = 0.11]{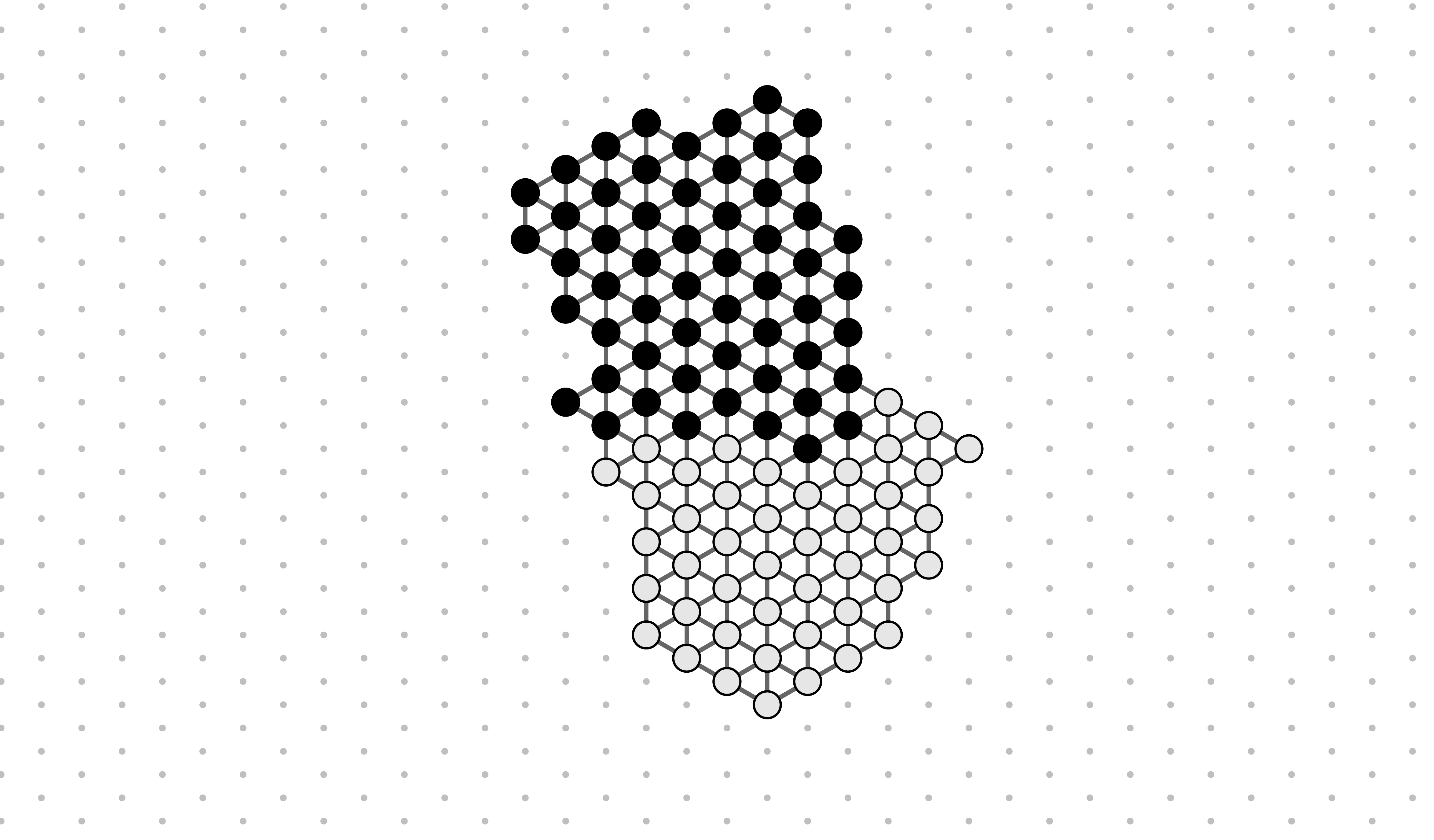} & \includegraphics[scale = 0.11]{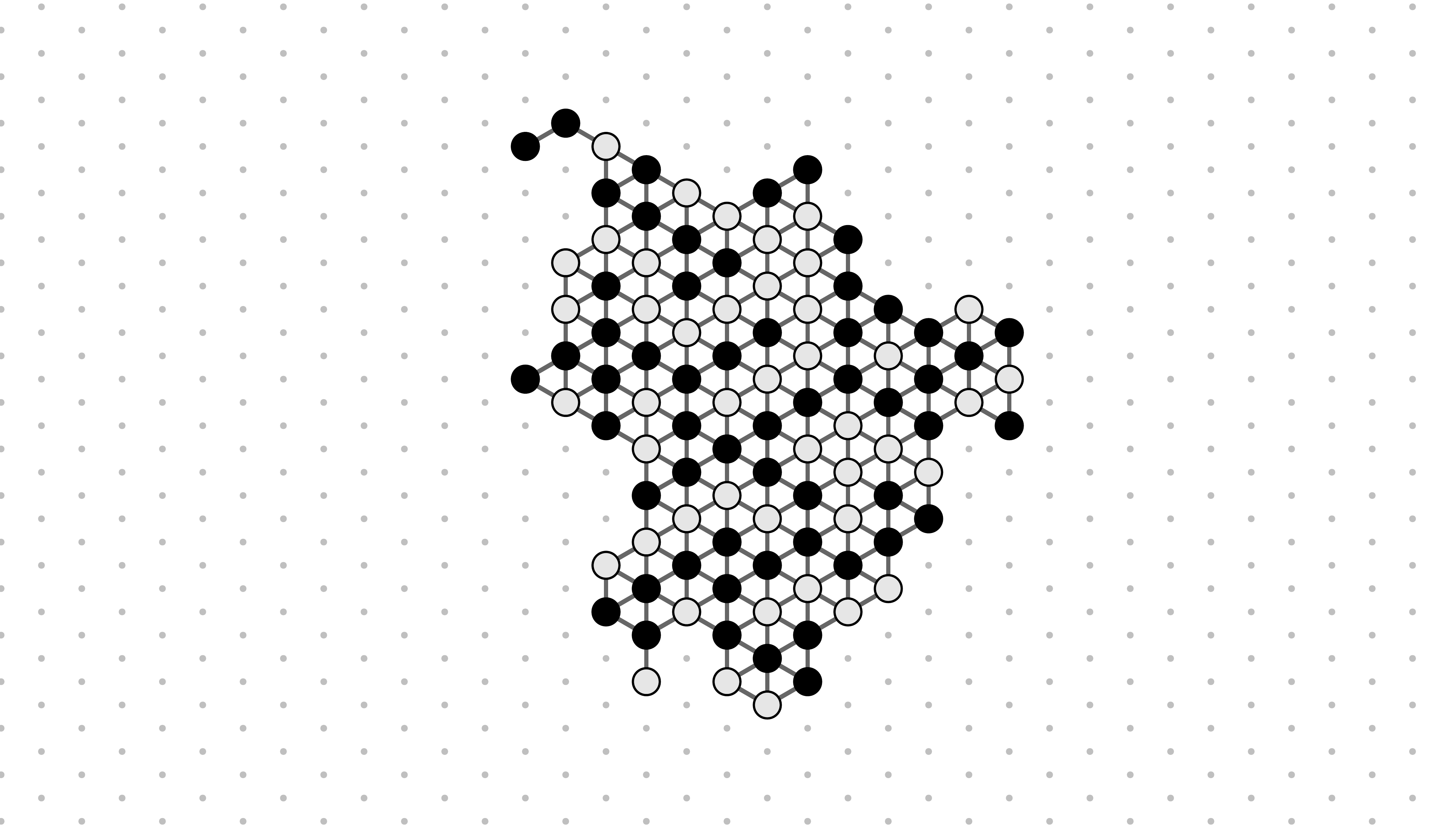} \\
	\hline
	\includegraphics[scale = 0.13]{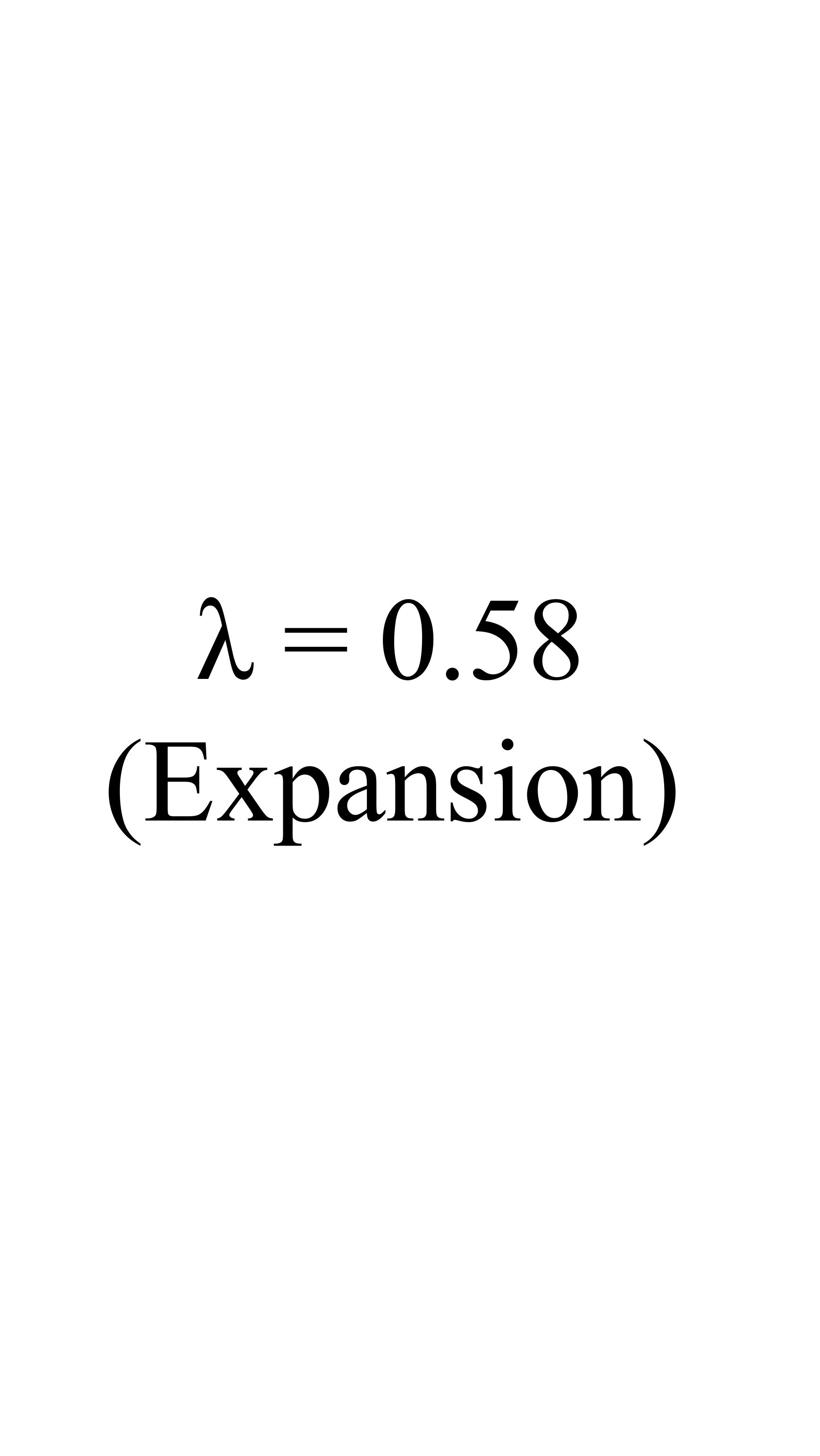} & \includegraphics[scale = 0.11]{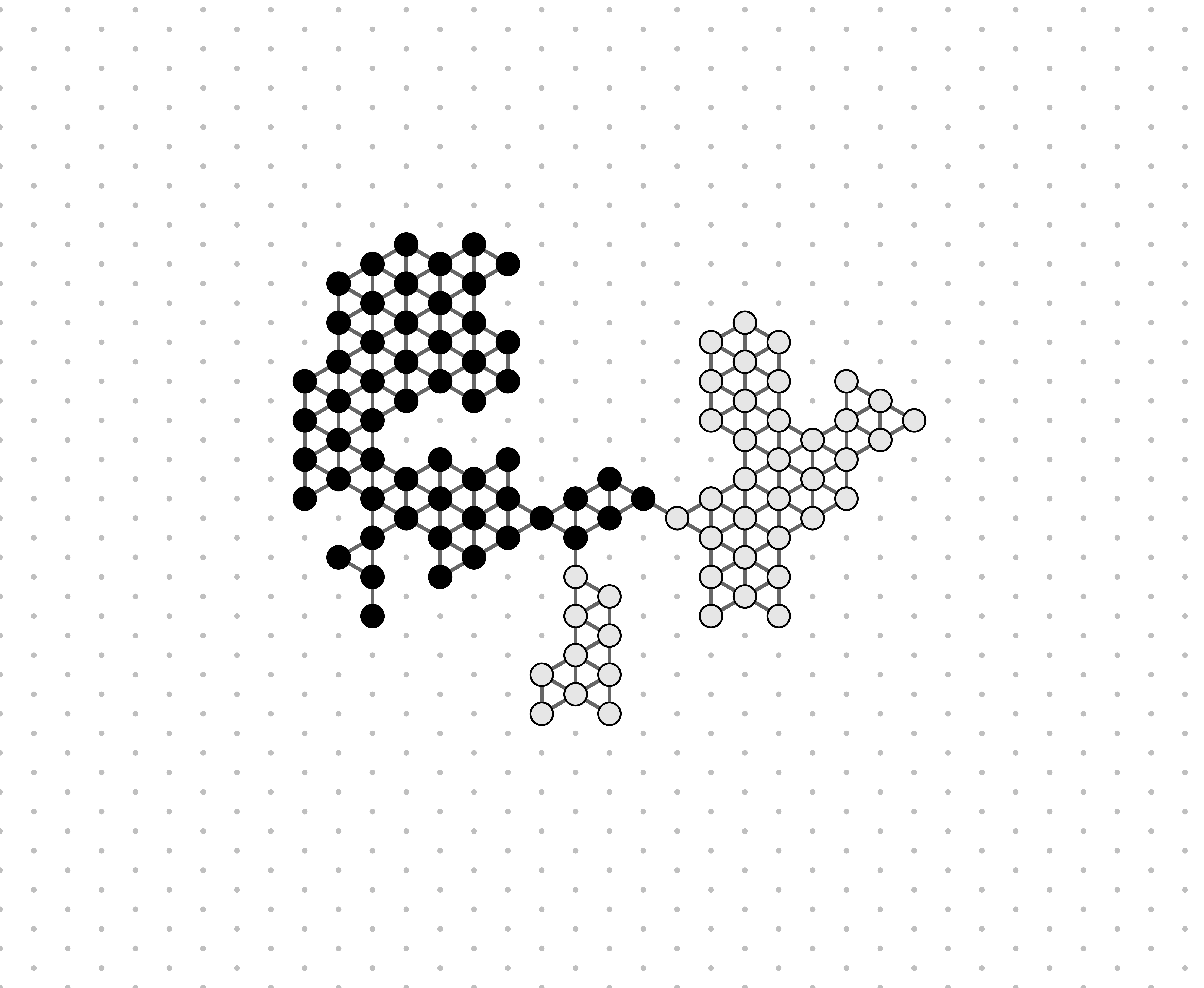} & \includegraphics[scale = 0.11]{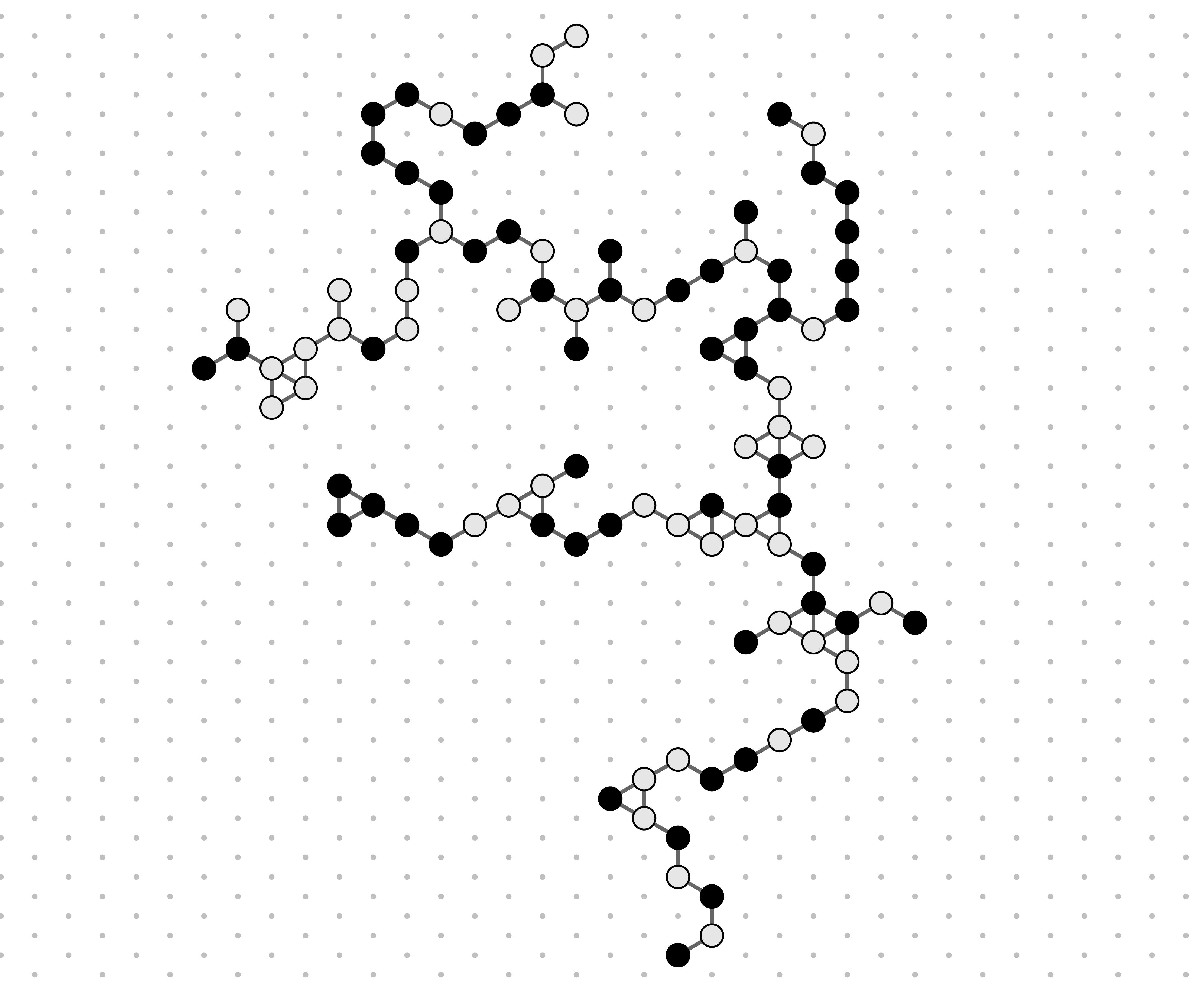} \\
	\hline
\end{tabular}
\caption{A 2-heterogeneous particle system of 100 particles starting in the leftmost configuration of Figure~\ref{fig:progress} after 50,000,000 iterations of $\M$ for various values of the parameters $\lambda$ and $\gamma$.}
\label{fig:phases}
\end{figure}

\section{The Cluster Expansion} \label{sec:cluster}

We highly recommend Chapter 5 of~\cite{Friedli2018} to learn more about the cluster expansion. Here we present only the definitions and results from this chapter that we will need. At a high level, the cluster expansion allows us to write a sum over collections of disjoint objects in terms of a sum over collections of overlapping objects.  This latter sum is often much easier to work with.

In a {\it polymer model}, we consider a finite set $\Gamma$, the elements of which are called {\it polymers}. We will consider polymers that are collections of edges of $\Gtri$ having certain properties; in Section~\ref{sec:compression-large-gamma}, our polymers are minimal cut sets which we call {\it loops}, and in Section~\ref{sec:compression-small-gamma} our polymers are connected edge sets with an even number of edges incident on each vertex. Formally, polymers only need to satisfy:
\begin{itemize}
	\item Each polymer $\xi \in \Gamma$ has a real {\it weight} $w(\xi)$ \footnote{In general $w(\xi)$ can be  complex, but for our purposes it will always be a (positive or negative) real number.}
	\item There is a notion of pairwise {\it compatibility} for polymers.
\end{itemize}
Polymers are typically compatible when they are well-separated in some sense. Our loop polymers will be compatible when they share no edges, and our even polymers will be compatible when they are not incident on any of the same vertices.
We say a collection of polymers $\Gamma' \subseteq \Gamma$ is {\it compatible} if all polymers in $\Gamma'$ are pairwise compatible.

The {\it polymer partition function} is defined as:
\[ \Xi =
\sum_{\substack{\Gamma' \subseteq \Gamma \\ compatible}} \prod_{\xi \in \Gamma'} w(\xi).
%=\sum_{\Gamma' \subseteq \Gamma} \left(\prod_{\xi \in \Gamma'} w(\xi)\right) \left( \prod_{\{\xi, \xi'\} \in \Gamma'} \delta(\xi, \xi')\right)
\]
%The first expression of $\Xi$ is the one we will use in our analysis, while the second - more technical - one is the one we use to develop the cluster expansion.
%Note that the only non-zero contributions to this sum in the second expression above are from sets $\Gamma' \subseteq \Gamma$ where all pairs of elements in $\Gamma'$ satisfy $\delta(\xi, \xi') = 1$, that is, the polymers of $\Gamma'$ are pairwise disjoint.
Many partition functions of spin systems, such as the Ising model or the hard-core lattice gas model, can be written in this form as polymer partition functions.
Such an abstract sum can sometimes be hard to analyze, but the {\it cluster expansion} gives a way of rewriting this expression in terms of a sum over subsets $\Gamma' \subseteq \Gamma$ where many polymers are incompatible; because incompatible polymers `touch,' we can enumerate such collections more easily and thus such sums are often easier to work with

Formally, consider an ordered multiset $X = \{\xi_1,\xi_2,...,\xi_m\} \subseteq \Gamma$.
Let $H_{X}$ be the {\it incompatibility graph} on vertex set $\{ 1,2,...,m\}$ where $i \sim j$ whenever $\xi_i$ and $\xi_j$ are incompatible.
We say that the ordered multiset $X$ is a {\it cluster} if $H_{X}$ is connected.\footnote{Many sources define clusters to be unordered multisets, necessitating additional combinatorial terms in the cluster expansion; for simplicity, we assume clusters are ordered.}
Let $|X| = m$ denote the number of polymers in cluster $X$ (polymers appearing multiple times are counted with the appropriate multiplicity).

The {\it cluster expansion} is the formal power series for $\ln \Xi$ given in Equation~\ref{eqn:cluster-X}. Often this power series does not converge, but the {\it Kotecky-Preiss condition} guarantees convergence and is often easy to verify~\cite{Kotecky1986}. The following theorem states the Kotecky-Preiss condition (Equation~\ref{eqn:suff-cond}) and the cluster expansion of~$\Xi$.

\begin{thm}[$\!\!$\cite{Friedli2018}, Chapter 5]\label{thm:cluster-converge}
	Let $\Gamma$ be a finite set of polymers $\xi$ with real weights $w(\xi)$ and a notion of pairwise compatibility. If there exists a function $a : \Gamma \rightarrow \mathbb{R}_{>0}$ such that for all $\xi^* \in \Gamma$,
	\begin{align}\label{eqn:suff-cond}
	\sum_{\substack{\xi \in \Gamma: \\ \xi, \xi^* \ incompatible}} |w(\xi)| e^{a(\xi)}  \leq a(\xi^*),
	\end{align}
	then the polymer partition function  $\Xi$ satisfies
%	\begin{align}
%	\label{eqn:cluster-m}
%	\log \Xi =  \sum_{m \geq 1} \sum_{\xi_1 \in \Gamma} \ldots \sum_{\xi_m \in \Gamma} \frac{1}{m!}\left( \sum_{\substack{G \subseteq K_m \\ connected \\ spanning }} \prod_{(i,j) \in E(G)} \zeta(\xi_i, \xi_j)\right)  \prod_{i = 1}^{m} w(\xi_i).
%	\end{align}
	\begin{align}\label{eqn:cluster-X}
	\ln \Xi = \sum_{\substack{ X  : \,{\text cluster}}} \frac{1}{|X|!} \left( \sum_{\substack{G \subseteq H_X:\\ connected, \\ spanning}} (-1)^{|E(G)|} \right) \left( \prod_{\xi \in X} w(\xi) \right),
	\end{align}
	where $G \subseteq H_X$ means $G$ is a subgraph of $H_X$.
%	The sum in the middle parentheses is over all connected spanning subgraphs $G$ of $K_m$, where we assume the vertices of $K_m$ are labeled as $1,2,\ldots,m$.
\end{thm}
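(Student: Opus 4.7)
The statement bundles a formal identity for $\ln\Xi$ with a sufficient condition for its absolute convergence; I would establish these in turn.

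First, I would derive (\ref{eqn:cluster-X}) purely combinatorially, with no use of the Kotecky-Preiss hypothesis. Writing the compatibility constraint in $\Xi$ as $\prod_{\{i,j\}}(1+f_{ij})$, where $f_{ij}=-1$ when $\xi_i,\xi_j$ are incompatible and $f_{ij}=0$ otherwise, and expanding the product, reorganizes $\Xi$ as an unrestricted sum over ordered tuples of polymers, each weighted by $\sum_{G}(-1)^{|E(G)|}$ over graphs $G$ on the index set. The exponential formula (equivalently, Mobius inversion on the lattice of set partitions) then identifies $\ln\Xi$ with the subsum in which only connected spanning $G$ contribute, which is precisely the series in~(\ref{eqn:cluster-X}).

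The core of the theorem is proving that this formal series converges absolutely under (\ref{eqn:suff-cond}). The workhorse is the tree-graph (Penrose) inequality, which for every cluster $X$ bounds the absolute value of its inner alternating sum by $\tau(H_X)$, the number of spanning trees of the incompatibility graph $H_X$. Using this, I would prove by induction on $m\geq 1$ that for every fixed $\xi^*\in\Gamma$,
\[
\sum_{\substack{X\text{ cluster},\,|X|=m\\ \xi_1=\xi^*}}\frac{\tau(H_X)}{(m-1)!}\prod_{\xi\in X}|w(\xi)|\,e^{a(\xi)} \;\leq\; a(\xi^*).
\]
The inductive step roots each spanning tree of $H_X$ at $\xi^*$, strips off the subtrees hanging from $\xi^*$, and applies the hypothesis to each subtree separately; the sum over the possible neighbors of $\xi^*$ in the tree produces a factor that is controlled by the Kotecky-Preiss bound $\sum_{\xi\text{ incompat.\ with }\xi^*}|w(\xi)|e^{a(\xi)}\leq a(\xi^*)$. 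Summing the above inequality over $m$ yields $\sum_{X\ni \xi^*}\frac{\tau(H_X)}{|X|!}\prod_{\xi\in X}|w(\xi)|\leq a(\xi^*)$, which implies both absolute convergence of the cluster expansion and $\Xi\neq 0$, so that the formal identity from the first step becomes a genuine numerical equality.

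The principal obstacle is this inductive convergence argument: one must simultaneously keep track of the combinatorics of rooted trees on clusters, the polymer weights, and the exponential ``slack'' $e^{a(\xi)}$ that the Kotecky-Preiss condition supplies, arranging the induction so that at each branching vertex the slack exactly pays for the branching factor. The tree-graph inequality is indispensable, as it is what replaces the cancellation-heavy alternating sum over connected spanning subgraphs with a non-negative quantity that can be estimated on a per-polymer basis; without it, bounding the absolute cluster series seems hopeless.
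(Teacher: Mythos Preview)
Your proposal is correct: the Mayer expansion plus the exponential formula gives the formal identity, and the Penrose tree-graph inequality followed by a rooted-tree induction proves absolute convergence under the Koteck\'y--Preiss condition. This is, however, a genuinely different route from what the paper invokes. The paper does not prove Theorem~\ref{thm:cluster-converge} itself but cites Chapter~5 of \cite{Friedli2018}; in Appendix~A.3 it reproduces that argument (for the infinite-polymer variant) and the method there is \emph{not} via the tree-graph bound. Instead, one works directly with the Ursell function $\sum_{G\subseteq H_X}(-1)^{|E(G)|}$ and proves by induction on the cluster size $N$ that the partial sums are bounded by $e^{a(\xi_1)}$. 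The inductive step removes the root $\xi_1$, partitions the remaining polymers into the connected components of $H_X\setminus\{1\}$, and bounds the possible attachment patterns of $\xi_1$ to each component by the elementary inequality $\bigl|\prod_{a}(1+\zeta(\xi_1,\xi_a))-1\bigr|\le\sum_a|\zeta(\xi_1,\xi_a)|$; the hypothesis~(\ref{eqn:suff-cond}) then closes the induction. Your Penrose approach is more modular---it cleanly separates a purely combinatorial inequality on $H_X$ from the analytic tree-counting step---whereas the paper's approach handles the alternating sum and the weights simultaneously and avoids importing the tree-graph lemma as a separate tool. Both yield the same conclusion and the same quantitative bound $\sum_{X\ni\xi^*}|\Psi(X)|\le a(\xi^*)$.
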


\noindent The cluster expansion is derived and this theorem is proved in Chapter 5 of~\cite{Friedli2018}, for a slightly different (but equivalent) definition of a cluster.
 For simplicity, for any multiset $X$ we define
  \begin{align}\label{eqn:Psi}
  \Psi(X) =  \frac{1}{|X|!} \left( \sum_{\substack{G \subseteq H_X:\\ connected, \\ spanning}} (-1)^{|E(G)|} \right) \left( \prod_{\xi \in X} w(\xi) \right).
  \end{align}
We note that $\Psi(X) = 0$ if $X$ is not a cluster, as in this case $H_X$ has no connected spanning subgraphs.

As previously discussed, we will apply the cluster expansion twice, with two different notions of polymers and compatibility.
In both cases, our polymers will be connected edge sets $\xi \subseteq E(\Gtri)$, and we use that to state a general result here.
 Let $\Gamma$ be an infinite set of such polymers that is invariant under translation and rotation of polymers. Two polymers in $\Gamma $ will be compatible if they are well-separated in the model-dependent sense described above. Polymers are incompatible when they are `too close;'
for a polymer $\xi \in \Gamma$, let $[ \xi ] \subseteq E(\Gtri)$  be the the minimal edge set  such that if $\xi'$ is not compatible with $\xi$, then $\xi'$ must contain an edge of $[\xi]$. We use brackets, consistent with the notaiton of~\cite{Friedli2018} because this is a type of {\it closure} of a polymer.
For our loop polymers, which are compatible if they share no edges, $[\xi] = \xi$.  For our even polymers, which are compatible if they are not incident on any of the same vertices, $[\xi]$ is all edges that share an endpoint with an edges of $\xi$.
 We denote the size of this edge set as $|[\xi]|$.

We will be interested in some finite region $\Lambda \subseteq E(\Gtri)$, and we say $\Gamma_\Lambda \subseteq \Gamma $ is all polymers of $\Gamma$ whose edges are contained in $\Lambda$. Let $\partial \Lambda$ be an edge set such that a cluster containing an edge in $\Lambda$ and an edge not in $\Lambda$ must contain an edge of $\partial \Lambda$.  We will consider loop polymers with edges from $E^{int}_\P$, the set of edges with at least one endpoint strictly inside boundary $\P$, so in this case we will use $\Lambda = E^{int}_\P$ and $\partial\Lambda$ the edges in $\P$. For even polymers, we use $\Lambda = E_\P$, all edges on or inside $\P$, and $\partial \Lambda$ is all edges with one endpoint on boundary $\P$ and the other endpoint outside $\P$.

The following states the key fact about the cluster expansion that we will need. Namely, when a certain mild condition is satisfied, we can use the cluster expansion to give upper and lower bounds on the polymer partition function for $\Lambda$ in terms of a volume term, depending only on $|\Lambda|$, and a surface term, depending only on $|\partial \Lambda|$.

\begin{thm}\label{thm:bdry-volume}
	Let $\Gamma$ be an infinite set of polymers $\xi \subseteq E(\Gtri)$ that is closed under translation and rotation, and let $\Lambda \subseteq E(\Gtri)$ be finite.
	 If there is a constant $c$ such that for any edge $e \in E(\Gtri)$,
	 \begin{align}\label{eqn:suff-cond-edge}
	 \sum_{\substack{\xi \in \Gamma : \\ e \in \xi}} |w(\xi)| e^{c |[\xi]|} \leq c,
	 \end{align}
	 then for any $\Lambda$ the partition function
	\[ \Xi_\Lambda := \sum_{\substack{\Gamma'\subseteq \Gamma_\Lambda \\ compatible}} \prod_{\xi \in \Gamma'} w(\xi) \]
	satisfies 	\[ e^{\psi|\Lambda| - c|\partial \Lambda|} \leq  \Xi_\Lambda \leq e^{\psi|\Lambda| + c|\partial \Lambda|} ,
	\]
	for some constant $\psi\in [-c,c]$ that is independent of $\Lambda$.

%	has a convergent cluster expansion \redcomment{actually need sufficient condition, so can have conclusion of [SMLS, 5.4]? Cite equation (1) above? Is (1) enough, because that gives you everything else?} and, for $\Psi(X)$ the summand in the cluster expansion corresponding to cluster $X$ and any edge $e \in E(\Gtri)$,
%	\[
%	\sum_{\substack{X \in \mathcal{X} \\ e \in \overline{X}}} |\Psi(X)| \leq 1
%	\]
%	then there exists a constant $\psi$ such that
%	\[ e^{\psi|\Lambda| - |\partial \Lambda|} \log(\Xi_\Gamma) \leq e^{\psi|\Lambda| + |\partial \Lambda|} ,
%	\]
%	where $\partial \Lambda \subseteq E(\Gtri)$ is an edge set such that a polymer containing an edge in $\Lambda$ and an edge not in $\Lambda$ must contain an edge of $\partial \Lambda$.
	\end{thm}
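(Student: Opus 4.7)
The plan is to invoke Theorem~\ref{thm:cluster-converge} with the choice $a(\xi) := c|[\xi]|$, use the translation invariance of $\Gamma$ to extract a lattice-density constant $\psi$, and bound the residual by the surface $|\partial\Lambda|$.

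First I would verify that the edge hypothesis~(\ref{eqn:suff-cond-edge}) implies the polymer Kotecky--Preiss condition~(\ref{eqn:suff-cond}) with $a(\xi) = c|[\xi]|$. Since any $\xi$ incompatible with $\xi^*$ contains some edge of $[\xi^*]$, the sum in~(\ref{eqn:suff-cond}) is at most $\sum_{e \in [\xi^*]} \sum_{\xi \ni e} |w(\xi)| e^{c|[\xi]|} \le c\,|[\xi^*]| = a(\xi^*)$. Theorem~\ref{thm:cluster-converge} applied to $\Gamma_\Lambda$ then yields $\ln \Xi_\Lambda = \sum_{X \text{ cluster in } \Gamma_\Lambda} \Psi(X)$. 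A standard consequence of the K--P condition (see Chapter~5 of~\cite{Friedli2018}) gives $\sum_{X \ni \xi^*} |\Psi(X)| \le |w(\xi^*)|\,e^{a(\xi^*)}$, and combining this with~(\ref{eqn:suff-cond-edge}) produces the key per-edge bound
\begin{equation*}
\sum_{X\,:\, e \in E(X)} |\Psi(X)| \;\le\; \sum_{\xi \ni e} |w(\xi)|\,e^{c|[\xi]|} \;\le\; c \qquad \text{for every } e \in E(\Gtri),
\end{equation*}
where $E(X)$ denotes the multiset of edges of all polymers of $X$ counted with multiplicity.

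Next I would exploit the translation invariance of $\Gamma$ to define the volume density: fix any $e_0 \in E(\Gtri)$ and set $\psi := \sum_{X : e_0 \in E(X)} \Psi(X)/|E(X)|$, where the sum runs over clusters in the infinite set $\Gamma$. This series is absolutely convergent by the per-edge bound (since $|E(X)| \ge 1$), invariance of $\Gamma$ under lattice translations makes $\psi$ independent of $e_0$, and $|\psi| \le c$. To relate $\psi$ to $\Xi_\Lambda$, I would rewrite each cluster contribution as $\Psi(X) = \sum_{e \in E(X)} \Psi(X)/|E(X)|$. Using that $X \subseteq \Lambda$ forces $E(X) \subseteq \Lambda$, this gives
\begin{align*}
\ln \Xi_\Lambda \;=\; \sum_{e \in \Lambda} \sum_{\substack{X \ni e\\ X \subseteq \Lambda}} \frac{\Psi(X)}{|E(X)|} \;=\; \psi\,|\Lambda| \;-\; \sum_{e \in \Lambda} \sum_{\substack{X \ni e\\ X \not\subseteq \Lambda}} \frac{\Psi(X)}{|E(X)|}.
\end{align*}

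Every cluster $X$ in the residual has an edge in $\Lambda$ and an edge outside $\Lambda$, so by the defining property of $\partial\Lambda$ it must contain some edge $e' \in \partial\Lambda$. Interchanging the summations and using $|E(X) \cap \Lambda|/|E(X)| \le 1$, followed by the per-edge bound, yields $\bigl|\ln \Xi_\Lambda - \psi\,|\Lambda|\bigr| \le \sum_{e' \in \partial\Lambda} \sum_{X \ni e'} |\Psi(X)| \le c\,|\partial\Lambda|$, and exponentiating produces the claimed double inequality. The step I expect to be the main obstacle is the legitimacy of defining $\psi$ as a series over clusters in the \emph{infinite} polymer set $\Gamma$ and treating it as a genuine $\Lambda$-independent constant; this is resolved precisely by the uniform per-edge bound (controlling absolute convergence of the defining series) together with the translation invariance of $\Gamma$ (guaranteeing that the limit is the same for every $e_0$).
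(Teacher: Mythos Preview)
Your proposal is correct and follows the same route as the paper: verify Kotecky--Preiss from~(\ref{eqn:suff-cond-edge}) with $a(\xi)=c|[\xi]|$, derive the per-edge bound $\sum_{X\ni e}|\Psi(X)|\le c$, and split $\ln\Xi_\Lambda$ into the translation-invariant density $\psi$ times $|\Lambda|$ plus a residual that is bounded by $c|\partial\Lambda|$ via the defining property of $\partial\Lambda$. One minor slip: with your multiset definition of $E(X)$, the interchange of sums in your display for $\ln\Xi_\Lambda$ drops a multiplicity factor $m_X(e)$; the paper avoids this by working with the set-valued support $\overline X=\bigcup_{\xi\in X}\xi$ and $|\overline X|$ in place of your $E(X)$ and $|E(X)|$, and making that same substitution throughout your argument fixes it without any change to the structure.
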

\begin{proof}
	We follow the same outline as the proof of the same fact for the Ising model in Section 5.7.1 of \cite{Friedli2018}; some details have been omitted here but are carefully considered in Appendix~\ref{app:bdry-volume}.

	Let $\mathcal{X}$ be all clusters comprised of polymers from $\Gamma$, and let $\mathcal{X}_\Lambda$ be all clusters of polymers in $\Gamma_\Lambda$.
	Note that Equation~\ref{eqn:suff-cond-edge} implies the hypothesis of Theorem~\ref{thm:cluster-converge} (Equation~\ref{eqn:suff-cond}) is satisfied, with function $a: \Gamma \rightarrow \mathbb{R}$ given by $a(\xi) = c | [\xi]|$:
	\[
	\sum_{\substack{\xi \in \Gamma: \\ \xi, \xi^* \ incompatible}} |w(\xi)| e^{a(\xi)}
	\leq \sum_{e \in [\xi^*]} \sum_{\substack{\xi \in \Gamma :\\ e \in \xi}} |w(\xi)| e^{c|[\xi]|} \leq c| [\xi^*]|.
	\]
	Because this hypothesis is satisfied for all $\xi^* \in \Gamma$, it certainly holds when we restrict our attention to polymers in $\Gamma_\Lambda$.
	 By Theorem~\ref{thm:cluster-converge}, because $\Gamma_\Lambda$ is a finite set, this means the cluster expansion for $\Xi_\Lambda$ converges:
	\begin{align*}
	\ln \Xi_\Lambda = \sum_{X\in \mathcal{X}_\Lambda} \Psi(X)
	%\frac{1}{|X|!} \left( \sum_{\substack{G \subseteq H_X: \\ connected, \\ spanning}} (-1)^{|E(G)|} \right) \left( \prod_{\xi \in X} w(\xi) \right)
	\end{align*}
 Let $\overline{X} = \cup_{\xi \in X} \xi$ be the {\it support} of cluster $X$ and $|\overline{X}|$ the size of this support.
Using Equation~\ref{eqn:suff-cond-edge} and standard techniques (see~\cite{Friedli2018}, the proof of Theorem 5.4 and Equation (5.29); rewritten for our setting and proved as Lemma~\ref{lem:less-than-c}), the translation and rotation invariance of $\Gamma$ imply that for any edge $e \in E(\Gtri)$,
	\begin{align}\label{eqn:Psi<1} \sum_{\substack{X \in \mathcal{X}: \\ e \in \overline{X}}} |\Psi(X)| \leq c.  \end{align}
	%We prove this fact in the appendix as Lemma~\ref{lem:less-than-c}.
	The proof of this fact is the reason we need a slightly stronger hypothesis (Equation~\ref{eqn:suff-cond-edge}) than is needed to guarantee the cluster expansion converges (Equation~\ref{eqn:suff-cond}).

	For any cluster $X\in \mathcal{X}_\Lambda$, it trivially holds that $1 = (\sum_{e \in \Lambda} \mathbf{1}_{e \in \overline{X}}) / \overline{X}$. We can use this fact to rewrite the cluster expansion for $\Xi_\Lambda$:
	\begin{align}
	\ln \Xi_\Lambda &= \sum_{X \in \mathcal{X}_\Lambda} \Psi(X)
	 = \sum_{\substack{X \in \mathcal{X}: \\ \overline{X} \subseteq \Lambda}} \Psi(X)
	 = \sum_{e \in \Lambda} \sum_{\substack{ X \in \mathcal{X}: \\ e \in \overline{X}, \\  \overline{X} \subseteq \Lambda}} \frac{1}{|\overline{X}|} \Psi(X)
	\nonumber\\& = \sum_{e \in \Lambda} \left( \sum_{\substack{X \in \mathcal{X}: \\ e \in \overline{X}}} \frac{1}{|\overline{X}|} \Psi(X) -  \sum_{\substack{X \in \mathcal{X}: \\ e \in \overline{X}, \\ \overline{X} \not\subseteq \Lambda}} \frac{1}{|\overline{X}|} \Psi(X) \right)
	\nonumber\\&= \left(\sum_{e \in\Lambda}  \sum_{\substack{X \in \mathcal{X} :\\ e \in \overline{X}}} \frac{1}{|\overline{X}|} \Psi(X) \right) - \left( \sum_{e \in\Lambda} \sum_{\substack{X \in \mathcal{X}: \\ e \in \overline{X}, \\ \overline{X} \not\subseteq \Lambda}}
	\frac{1}{|\overline{X}|} \Psi(X) \right). \label{eqn:differenceofseries}
	\end{align}
	The two infinite sums in parentheses above are absolutely convergent by Equation~\ref{eqn:Psi<1}, so this difference is well-defined.

	To analyze the first term of Equation~\ref{eqn:differenceofseries}, we note that by the translation and rotation invariance of $\Gamma$, the sum
	\begin{align*}
	\psi:= \sum_{\substack{X \in \mathcal{X}: \\ e \in \overline{X}}} \frac{1}{|\overline{X}|} \Psi(X)
	\end{align*}
	is independent of $e$ and of $\Lambda$ and only depends on our particular polymer model; this is the value $\psi$ that appears in the statement of the theorem, and by Equation~\ref{eqn:Psi<1},  $|\psi| \leq c$. We conclude the first term of Equation~\ref{eqn:differenceofseries} is $\psi|\Lambda|$.

	To analyze the second term of Equation~\ref{eqn:differenceofseries}, recall if cluster $X$ satisfies both $e \in \overline{X}$ for some $e \in \Lambda$ and $\overline{X} \not\subseteq \Lambda$, then $\overline{X}$ must contain some edge $f \in \partial \Lambda$. We rewrite the absolute value of this second sum as
	\begin{align*}
\left| 	\sum_{e \in\Lambda} \sum_{\substack{X \in \mathcal{X}: \\ e \in \overline{X}, \\ \overline{X} \not\subseteq \Lambda}} \frac{1}{|\overline{X}|}\Psi(X) \right|
&\leq 	\sum_{e \in\Lambda} \sum_{\substack{X \in \mathcal{X}: \\ e \in \overline{X}, \\ \overline{X} \not\subseteq \Lambda}}
	\frac{1}{|\overline{X}|} \left|\Psi(X)\right|
	\\& \leq \sum_{f \in \partial\Lambda} \sum_{\substack{X \in \mathcal{X}:\\ f \in \overline{X}}} |\overline{X} \cap  \Lambda| \frac{1}{|\overline{X}|} \left|\Psi(X)\right|
	\\& \leq  \sum_{f \in \partial\Lambda} \sum_{\substack{X \in \mathcal{X}:\\ f \in \overline{X}}} \left|\Psi(X)\right| \leq  c \left|\partial \Lambda\right|.
	\end{align*}
	The last inequality above follows from Equation~\ref{eqn:Psi<1} and the translation and rotation invariance of~$\Lambda$.
%	\begin{align*}
%\left| 	\sum_{e \in\Lambda} \sum_{\substack{X \in \mathcal{X}: \\ e \in \overline{X}, \\ \overline{X} \not\subseteq \Lambda}}\Psi(X) \right|
%\leq \sum_{f \in \partial\Lambda} \sum_{\substack{X \in \mathcal{X}: \\ f \in \overline{X}}} \left|\Psi(X) \right|
%\leq \left|\partial \Lambda\right|
% 	\end{align*}

We conclude that Equation~\ref{eqn:differenceofseries} implies
\begin{align*}
\psi |\Lambda| - c |\partial\Lambda | \leq \ln \Xi_\Lambda \leq \psi |\Lambda| + c |\partial\Lambda |.
\end{align*}
Exponentiation proves the theorem. %\redcomment{can get a stricter bound by incorporating the a from the KP cond?}
\end{proof}

%Explain how we'll talk about contours separating regions of different colors and also the perimeter contour; relate length of perimeter to length of perimeter contour; relate \number of heterogeneous edges to length of interior contours.

%{\it hexagonal contour} marks a boundary between particles, often (but not always) between particles of different colors.
%{\it hexagonal boundary} is the SAW around the perimeter.

%Better word than contours? Because use contours to mean something more specific in proof?

\section{\texorpdfstring{Proof of Compression for large $\gamma$, $\lambda$}{Proof of Compression for large gamma, lambda}}
\label{sec:compression-large-gamma}

In this section we show that when $\lambda$ and $\gamma$ are large enough, compression occurs with high probability in $\M$'s stationary distribution.  We begin by looking at a fixed boundary $\P$ and examining the total weight of all configurations with this boundary.

Let $\mathcal{P}$ be the boundary of some connected hole-free configuration $\sigma$ with $n$ total particles. Recall $|\mathcal{P}|$ is the total length of walk $\P$, which is equal to the perimeter $p(\sigma)$ of $\sigma$.
%\footnote{Note we define $|\mathcal{P}|$ not as the number of edges of $\Gtri$ that it contains but rather as the length of $\mathcal{P}$ as a walk.}
Let $\Os \subseteq \Omega$ be the set of all valid particle configurations in $\Omega$ with no holes, boundary $\mathcal{P}$, and the correct number of particles of each color. All configurations in $\Omega_\P$ have the same locations occupied by particles but the colors of these particles vary. We denote by $E_\mathcal{P}$ all edges of $\Gtri$ that have both endpoints on or inside $\P$ (all edges with both endpoints occupied by particles in any $\sigma \in \Omega_\P$). Let $E_\P^{int} = E_\P \setminus \P$ be all edges of $\Gtri$ with at least one endpoint inside $\P$.

 We are interested in bounding
\[ w(\Os)  := \sum_{\sigma \in \Os} (\lambda\gamma)^{-p(\sigma)}\cdot \gamma^{-h(\sigma)} = (\lambda\gamma)^{-|\P|}\sum_{\sigma \in \Os} \gamma^{-h(\sigma)} \]
We will use the cluster expansion examine this last sum.

\subsection{Obtaining Constant Boundary Conditions}

The cluster expansion is most easily applied in settings with constant boundary conditions; in our case, that means all particles on the boundary $\P$ should have the same color. We can relate the weight of configurations with arbitrary colors assigned to particles of $\P$ to configurations with all partices on $\P$ the same color as follows.

We define a map from $\Os$ to a set $\Osc$, consisting of all particle configurations with boundary $\P$ where all particles on this boundary have color $c_1$.  Note that $\Osc$ contains configurations with various numbers of particles of each color, so $\Osc \not\subseteq \Omega_\mathcal{P}$.

We say a {\it face} of a particle configuration $\sigma$ is a maximal simply connected subset $F$ of particles where all particles in $F$ adjacent to a location not in $F$ have the same color, which we call the color of $F$.  An {\it outer face} is a face containing at least one particle of $\P$. See Figure~\ref{fig:facesa}, where there are six faces, three of which are outer.
Note that any edge between two outer faces must be heterogeneous. The heterogeneous edges between two outer faces are linked by the triangular faces of $\Gtri$.  Because any triangular faces of $\Gtri$ can have at most two of its three edges be heterogeneous because of parity, we can trace out an interface of heterogeneous edges between two outer faces by, for a given heterogeneous edge and an incident triangular face, choosing exactly one of the other two edges incident on that fact to be heterogeneous. We will use the fact that heterogeneous interfaces between outer faces always continue in one of two way in the following proof.

\begin{figure}
			\centering
	\begin{subfigure}{0.32\textwidth}
		\centering

		\includegraphics[scale = 0.4, page = 1]{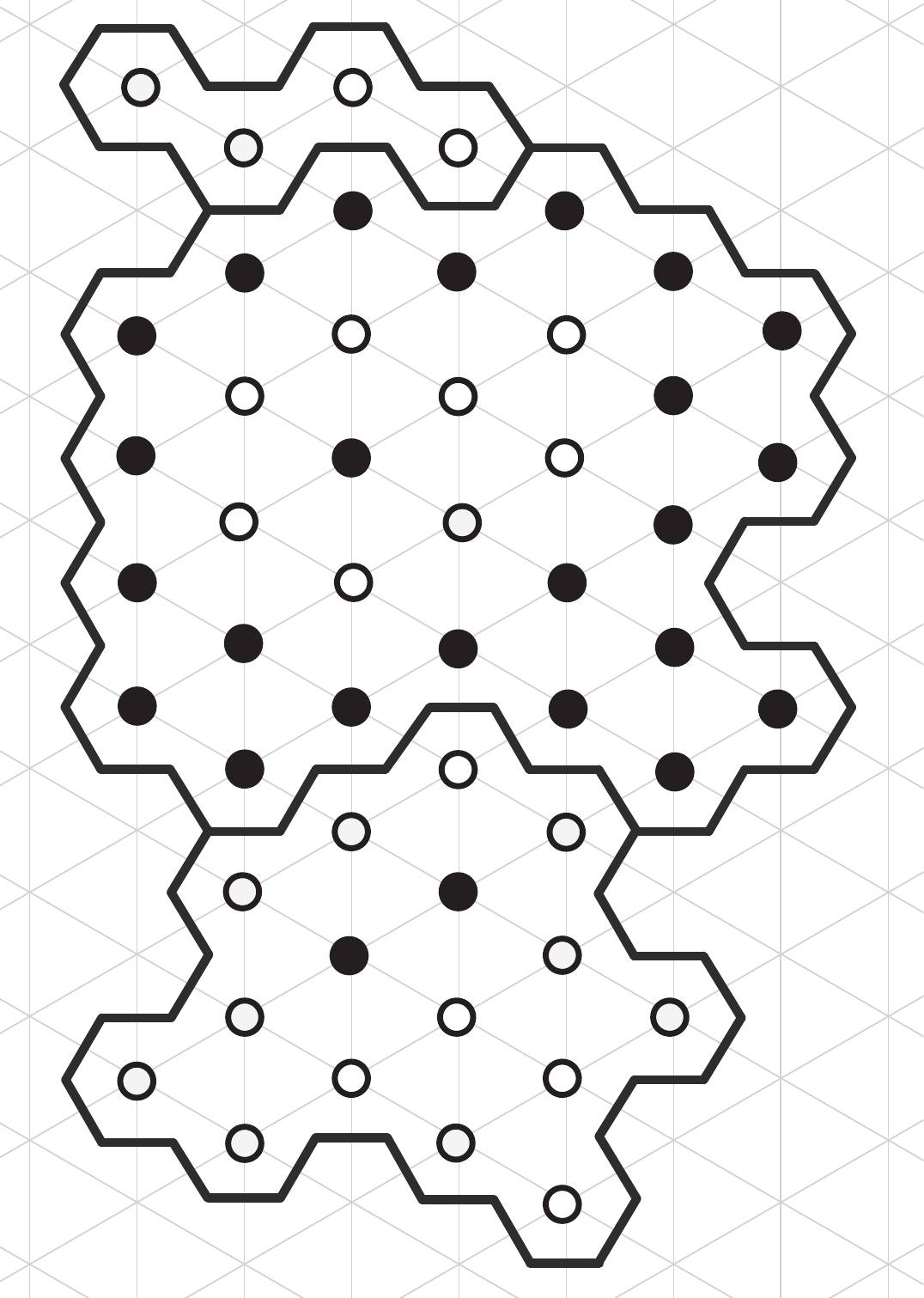}
		\caption{\centering}
		\label{fig:facesa}
	\end{subfigure}%
	\begin{subfigure}{0.32\textwidth}
		\centering
		\includegraphics[scale = 0.4, page = 2]{faces.pdf}
		\caption{\centering}
		\label{fig:facesb}
	\end{subfigure}%
	\begin{subfigure}{0.32\textwidth}
	\centering
		\includegraphics[scale = 0.4, page = 3]{faces.pdf}
	\caption{\centering}
	\label{fig:loops}
\end{subfigure}%
	\caption{(a) A particle configuration $\sigma$ with six faces, three of which are outer faces; the three outer faces are outlined in black. (b) Configuration $f(\sigma)$ obtained in the proof of Lemma~\ref{lem:elim-crossing}, where face $F_2$ has been complemented. (c) The three loops present in $f(\sigma)$ are shown with thick blacks edges, thin black edges, and dashed edges, respectively. }\label{fig:faces}.
	\end{figure}

\begin{lem}\label{lem:elim-crossing} %\redcomment{The $|\mathcal{P}|$ here can be replaced by the number of perimeter particles, if we want}
	 When $\gamma > 2$,
\[w(\Os) \leq w(\Osc) 2^{|\P|} \frac{2}{\gamma-2}	\]
\end{lem}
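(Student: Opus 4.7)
The plan is to define a many-to-one map $f : \Os \to \Osc$ that, given $\sigma \in \Os$, produces $f(\sigma)$ by complementing (flipping the colors of) every outer face whose face-color is $c_2$. Since outer faces are defined by the colors of their boundary particles, and every particle on $\P$ sits in some outer face, $f(\sigma)$ has all particles of $\P$ colored $c_1$, so $f(\sigma) \in \Osc$. This is the natural heterogeneous analogue of ``cluster flip'' transformations from statistical physics, and it is designed so that the weight change between $\sigma$ and $f(\sigma)$ is governed only by the interfacial structure between outer faces.

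Next I would track how $h(\sigma)$ transforms under $f$. Complementing a single face leaves unchanged the homogeneity status of every edge both of whose endpoints lie inside that face, since both endpoints are flipped. The only edges whose homogeneity can change are those lying between two distinct outer faces, i.e.\ on the interfaces described before the lemma. Each such edge is heterogeneous in $\sigma$ (because between outer faces of different colors) and becomes homogeneous in $f(\sigma)$ (because both sides now carry $c_1$), and no new heterogeneous edges are created. Letting $I(\sigma)$ denote the total length of interfaces between outer faces of $\sigma$, we get $h(f(\sigma)) = h(\sigma) - I(\sigma)$, and since $f$ preserves $\P$,
\[
w(\sigma) \;=\; (\lambda\gamma)^{-|\P|}\,\gamma^{-h(\sigma)} \;=\; w(f(\sigma)) \cdot \gamma^{-I(\sigma)}.
\]

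The heart of the argument is to bound, for every fixed $\tau \in \Osc$, the preimage sum $\sum_{\sigma\,:\,f(\sigma)=\tau} \gamma^{-I(\sigma)}$ by $2^{|\P|} \cdot 2/(\gamma-2)$. Every preimage $\sigma$ of $\tau$ is specified by choosing a valid family of interface chains inside $\tau$ and complementing one side of each. By the parity observation preceding the lemma, each interface is a chain of edges that, at each triangular face it crosses, has exactly two choices for how to continue, and it either terminates on $\P$ or is a closed loop. I would first introduce a factor of two per edge of $\P$ to encode, for each boundary edge independently, whether an interface reaches that edge; this yields the $2^{|\P|}$ prefactor and subsumes all choices concerning how interfaces meet $\P$. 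Conditional on this boundary data, each interior continuation of a chain of length $\ell$ contributes at most $2^\ell$ possibilities with weight $\gamma^{-\ell}$ each, so the interior sums form a geometric series $\sum_{\ell \geq 1}(2/\gamma)^\ell = 2/(\gamma-2)$, which converges precisely under the hypothesis $\gamma > 2$. Summing over $\tau$ then gives the claimed bound $w(\Os) \leq w(\Osc) \cdot 2^{|\P|} \cdot 2/(\gamma-2)$.

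The main obstacle I anticipate is the book-keeping for closed-loop interfaces that do not touch $\P$ and the need to make the ``choose interfaces, then flip one side'' procedure well-defined and injective (or, failing that, to absorb any modest overcount into the existing constants). I would need to verify carefully that complementing an outer face never splits, merges, or otherwise reorganizes neighboring faces in a way that invalidates the identity $h(f(\sigma)) = h(\sigma) - I(\sigma)$; this should ultimately reduce to the parity-of-heterogeneous-edges-around-a-triangle fact stated just before the lemma, together with the observation that the ``interior'' triangles of a single face stay monochromatic. Once these issues are handled, the bound follows by direct substitution and the convergence condition $\gamma > 2$ appears exactly where expected.
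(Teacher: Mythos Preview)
Your proposal is correct and follows essentially the same approach as the paper: the same map $f$ (flip all $c_2$-colored outer faces), the same weight identity $w(\sigma)=\gamma^{-x}w(f(\sigma))$ with $x$ the total number of heterogeneous edges between outer faces, and the same preimage count $2^{|\P|}\cdot 2^x$ leading to the geometric series $\sum_x (2/\gamma)^x$. Two minor clarifications: interfaces between \emph{outer} faces always begin and end on $\P$, so your worry about closed-loop interfaces is moot; and the geometric series should be indexed by the \emph{total} interface length $x$ (giving $2^x$ choices and weight $\gamma^{-x}$ all at once), rather than per-chain, which is exactly how the paper organizes the count.
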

\begin{proof}
	We define a map $f: \Os \rightarrow \Osc$ as follows.  For $\sigma \in \Os$, $f(\sigma)$ is obtained by complementing all colors within all outer faces of $\sigma$ that are of color $c_2$; see Figure~\ref{fig:faces}, where color $c_1$ is white and color $c_2$ is black. This guarantees that $f(\sigma) \in \Osc$. We use the notation $w(\sigma) = (\lambda \gamma)^{-p(\sigma)} \gamma^{-h(\sigma)}$ and note that if this process eliminated $x(\sigma)$ heterogeneous edges between outer faces, then $w(\sigma) = \gamma^{-x(\sigma)} w(f(\sigma))$. We also bound, for each $f(\sigma) \in \Osc$, the number of preimages $\sigma \in \Os$ with $x(\sigma) = x$.  There are at most $2^{|\P|}$ ways the particles on $|\P|$ can be colored in any such preimage, and these colors specify starting locations for heterogeneous interfaces between outer faces. From a given starting edge, there are at most $2^t$ choices for what a heterogeneous interface between outer faces of length $t$ looks like. Since a preimage of any particular $f(\sigma) \in \Osc$ can be completely specified by the colors of its boundary particles and the structure of its heterogeneous interfaces between outer faces, $f(\sigma)$ has at most $2^{|\mathcal{P}|} \cdot 2^x$ preimages with $x$ heterogeneous edges between outer faces.

	We conclude
	\begin{align*} w(\Os) = \sum_{\sigma \in \Os} w(\sigma)  &\leq \sum_{\sigma \in \Os}  \gamma^{-x(\sigma)} w(f(\sigma))
	\\&\leq \sum_{\tau \in \Osc} w(\tau) \sum_{x = 1}^{|E_\P|} \gamma^{-x} 2^x 2^{|\P|}
	\leq w(\Osc) 2^{|\P|} \frac{2}{\gamma - 2}.
	\end{align*}
	The last inequality holds because $\gamma > 2$ so the sum over $x$ is a geometric series with ratio less than one.
\end{proof}
 We now focus on bounding $w(\Osc)$, which will be easier to analyze with a cluster expansion because it only includes particle configurations with a monochromatic boundary.

\subsection{A Polymer Model and Convergence of the Cluster Expansion}

For a configuration $\sigma \in \Osc$, let $w_\P(\sigma) = \gamma^{-h(\sigma)}$ be the relative weight of $\sigma$ once boundary $\P$ is fixed (unlike in other sections, where $w(\sigma)$ also includes a term for the perimeter of $\sigma$). That is,
\[
w_\P(\sigma) = \gamma^{-h(\sigma)} = \frac{w(\sigma)}{(\lambda\gamma)^{-|\P|}}.
\]
We want to bound the total weight of all configurations in $\Osc$:
\[
w(\Osc) = (\lambda \gamma)^{-|\P|}  \sum_{\sigma \in \Osc}  \gamma^{-h(\sigma)} %= (\lambda \gamma)^{-|\P|}  \sum_{\sigma \in \Osc} w_\P(\sigma)
\]
It is the last sum in this expression that we will analyze using a polymer model.

Note that any configuration in $\Osc$ can be completely characterized by the locations of its heterogeneous edges. Because configurations in $\Osc$ have all boundary particles the same color, all heterogeneous edges must have at least one endpoint inside $\P$; recall the set of all such edges with at least one endpoint inside $\P$ is $E^{int}_\P$. Due to the lattice structure present, all heterogeneous edges form {\it loops}, where a {\it loop} is a minimal cut comprised of edges in $E^{int}_\P$. In other words, a {\it loop} is a set of edges whose removal separates $\Gtri$ into two connected components, and no subset of its edges has this property.  See Figure~\ref{fig:loops}, which depicts a particle configuration with three loops. The word loop comes from looking at the duals of these edge sets in $\Gtri$, which are simple cycles in the hexagon lattice (for more on lattice duality, see Section~\ref{sec:dual}).

Let $\mL$ be the set of all loops in $\Gtri$ of any size, and let $\mL_\P$ be all loops $E$ such that $E \subseteq E^{int}_\P$.
We say two loops are {\it compatible} if they do not share any of the same edges of $\Gtri$ (loops can still be compatible if some of their edges have common endpoints).  We say a collection of loops $\mL'  \subseteq \mL$ is {\it compatible} if all loops $L \in \mL'$ are pairwise compatible.
\begin{lem}\label{lem:bijection-loops}
	There is a bijection between configurations in $\Osc$ and compatible collections of loops $\mL' \subseteq \mL_\P$.
\end{lem}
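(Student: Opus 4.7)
The plan is to construct the bijection explicitly in both directions via a parity argument in the hexagonal dual lattice. First I would define the forward map $\phi: \Osc \to 2^{\mL_\P}$ by sending $\sigma$ to its set $H(\sigma)$ of heterogeneous edges. Since all boundary particles have color $c_1$, every heterogeneous edge has at least one endpoint strictly inside $\P$, so $H(\sigma) \subseteq E_\P^{int}$.

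To show $H(\sigma)$ decomposes into a compatible collection of loops, I would pass to the hexagonal dual and consider $H(\sigma)^*$. Each triangular face of $\Gtri$ corresponds to a vertex of $\Ghex$, and the key observation is that the number of $H(\sigma)$ edges on any face's boundary is $0$ or $2$: when all three vertices of the face are particles this follows from the two-color parity constraint (a triangle with vertices in $\{c_1,c_2\}$ contains $0$ or $2$ heterogeneous edges); when some vertex is not a particle, the face contributes no $E_\P^{int}$ edges; and the fact that boundary particles are monochromatically $c_1$ ensures that the non-$E_\P^{int}$ edges lying along $\P$ never break the count. Hence every vertex of $H(\sigma)^*$ has degree $0$ or $2$, so $H(\sigma)^*$ is a disjoint union of simple cycles in $\Ghex$. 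Since $\Ghex$ is $3$-regular, these cycles are automatically vertex-disjoint (two edge-disjoint cycles sharing a vertex would need $4$ edges there), and each corresponds to a minimal cut in $\mL_\P$. The resulting loop collection is edge-disjoint, hence compatible.

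For the inverse map $\psi$, given a compatible $\mL' \subseteq \mL_\P$, I would color each particle $v$ with $c_1$ if $v$ lies in an even number of loop ``insides'' and $c_2$ otherwise. This requires first establishing that each loop $L \in \mL_\P$ has a well-defined inside disjoint from $\P$: since $L \subseteq E_\P^{int}$ is disjoint from $\P$, removing $L$'s edges neither disconnects $\P$ nor detaches any vertex outside $\P$ (which is joined to $\P$ only by non-$E_\P$ edges). Thus the two components of $\Gtri \setminus L$ consist of one containing $\P$ and everything exterior to it (the ``outside'') and one of strictly interior particles (the ``inside''). This gives a well-defined enclosure parity for every particle; boundary particles lie in zero insides and so receive color $c_1$, ensuring $\psi(\mL') \in \Osc$.

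Finally I would verify $\phi$ and $\psi$ are mutual inverses. An edge $(u,v) \in E_\P^{int}$ is heterogeneous in $\psi(\mL')$ iff the enclosure parities of $u$ and $v$ differ, iff $(u,v)$ lies in an odd number---hence, by edge-disjointness, in exactly one---of the loops of $\mL'$. Thus $H(\psi(\mL')) = \bigcup_{L \in \mL'} L$, whose dual-cycle decomposition recovers $\mL'$ exactly. In the other direction, $\psi(\phi(\sigma))$ reproduces $\sigma$ because its coloring is uniquely determined by the fixed boundary color $c_1$ together with the locations of the color changes, which are exactly the heterogeneous edges. The main obstacle I anticipate is the careful topological bookkeeping to verify that every loop cleanly bounds a finite region of interior particles with no boundary vertex inside, and to handle nested loops so that the enclosure parity is consistently defined even when loops share vertices of $\Gtri$ (though, as noted, not of $\Ghex$).
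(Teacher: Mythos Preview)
Your proposal is correct and takes essentially the same approach as the paper: the forward map uses the two-color parity constraint on triangular faces (which is exactly your degree-$0$-or-$2$ condition in the hex dual), and the reverse map colors particles by a parity count. The only minor difference is that the paper defines the inverse coloring via the number of loop edges crossed on any path from a particle to~$\P$ rather than via enclosure parity; this sidesteps the topological bookkeeping you flag as an obstacle, since well-definedness of the crossing parity follows immediately from each loop being a minimal cut set.
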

\begin{proof}
Given a particle configuration $\sigma$ in $\Osc$, the heterogeneous edges form a collection of loops; we show that these loops are compatible and comprised of edges in $E^{int}_\P$. The latter follows from the fact that all particles in $\P$ have the same color but all edges in loops are heterogeneous, so no loops can contain edges along $\P$.  To see the former,
let $L$ be a loop of $\sigma$; because $L$ is minimal, every edge of $L$ must have its endpoints in different components of $\Gtri\setminus L$. If $L$ contains edge $e$ that is incident on triangular face $t$ of $\Gtri$, it must also contain another edge $e'$ incident on $t$, because if this was not the case the endpoints of $e$ would be in the same component of $\Gtri \setminus L$. However, any triangular face of $\Gtri$ can have at most two of its three incident edges heterogeneous by parity because there are only two colors. Thus if loops $L$ and $L'$ in $\sigma$ share an edge $e$ incident on triangular face $t$, then there is exactly one other heterogeneous edge incident on $t$ and that edge must be in both $L$ and $L'$.  Repeating this process, we see that if $L$ and $L'$ have a common edge $e$, then in fact $L = L'$.  We conclude if $L$ and $L'$ are distinct loops in $\sigma$, they share no edges, and thus are {\it compatible}. This implies the loops of $\sigma$ form a compatible collection of loops $\mL' \subseteq \mL_\P$.

Given a collection $\mL'$ of compatible loops, one can construct a valid configuration in $\Osc$ as follows.  For any particle $P$, find a path through occupied edges to a particle on $\P$. Let $k_P$ be the number of edges in a loop of $\mL'$ that are crossed on this path. Because all loops are minimal cut sets, for any such path the partiy of $k_P$ is the same. Assigning each particle $P$  color $c_{1 + k_P (\text{mod } 2)}$ yields a well-defined particle configuration $\sigma$ in $\Osc$.  The loops in $\mL$ correspond exactly to the heterogeneous edges in $\sigma$.

It is straightforward to see that these two maps are inverses of each other, and thus form a bijection.
\end{proof}

We can use this lemma to write $w(\Osc)$ as a polymer partition function, where our polymers are loops.
We say the polymer weight of a loop $L$ is $\gamma^{-|L|}$.  Using this and the previous lemma, it follows immediately that
\begin{align}\label{eqn:wt-xi-loop}
w(\Osc) = (\lambda\gamma)^{-|\P|} \sum_{\sigma \in \Osc} \gamma^{-h(\sigma)} = (\lambda\gamma)^{-|\P|} \sum_{ \substack{\mL' \subseteq \mL_\P \\ compatible}} \prod_{L \in \mL'} \gamma^{-|L|}
\end{align}
Thus to bound $w(\Osc)$ it suffices to examine this last sum, which is a polymer partition function for the loop-based polymer model we've defined, where our polymers are loops inside $\P$.  We define
\begin{align}\label{eqn:xi-loop}
\Xi_\P^{\mL} : =  \sum_{ \substack{\mL' \subseteq \mL_\P \\ compatible}} \prod_{L \in \mL'} \gamma^{-|L|}
\end{align}
Thus to bound $w(\Osc)$ it suffices to examine $\Xi^\mL_\P$, which we will do via its cluster expansion.
Specifically, we want to use Theorem~\ref{thm:bdry-volume} to separate $\Xi^{\mL}_{\P}$ into a surface term and a boundary term. First, we must show that the hypothesis of this theorem holds.
 Recall two loops are compatible if and only if they do not share any edges. That means that for any loop $L$, if $L'$ is incompatible with $L$ then $L'$ must share an edge with $L$; that is, $[L] = L$.

\begin{lem}\label{lem:loop-suff-cond}
For any edge $e \in \Gtri$, whenever $\gamma > \gsep \sim 5.66$ then for $c = \csep$,
\begin{align*}
\sum_{\substack{L \in \mL \\ e \in L}} \gamma^{-|L|} e^{c|L|} \leq c.
\end{align*}
\end{lem}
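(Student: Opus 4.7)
Here is the plan I'd follow.

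The combinatorial core of the lemma is an upper bound on the number $N_k(e)$ of loops of length $k$ passing through a fixed edge $e \in \Gtri$. My first step would be to invoke the duality already used in Lemma~\ref{lem:bijection-loops}: since a loop is a minimal cut of $\Gtri$, its edges are dual to those of a simple cycle of the same length in the hexagonal dual lattice $\Ghex$, passing through the dual edge $e^*$. Thus $N_k(e)$ equals the number of simple cycles of length $k$ in $\Ghex$ through a fixed edge. Because $\Ghex$ is bipartite with shortest cycle of length $6$, only even $k \geq 6$ contribute to the sum.

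My second step would be to bound these dual cycles. A clean combinatorial argument uses that $\Ghex$ is $3$-regular: a simple cycle of length $k$ through a fixed oriented edge is determined by a non-backtracking walk of length $k-1$ from one endpoint to the other, giving the crude bound $N_k(e) \leq 2^{k-1}$. A sharper bound, which I expect is actually needed here, exploits that the connective constant of self-avoiding walks on $\Ghex$ is $\sqrt{2+\sqrt{2}}$, yielding an estimate of the form $N_k(e) \leq A \cdot (2+\sqrt{2})^{k/2}$ for some absolute constant $A$ (possibly with a $1/k$ factor, since each simple cycle of length $k$ contains $k$ choices of starting edge).

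With a bound $N_k(e) \leq A \mu^k$ in hand, the proof reduces to evaluating a geometric series
\[
\sum_{\substack{L \in \mL \\ e \in L}} \gamma^{-|L|} e^{c|L|} \;\leq\; A \sum_{\substack{k \geq 6 \\ k \text{ even}}} \left(\mu\, e^c / \gamma\right)^k \;=\; A \cdot \frac{(\mu e^c / \gamma)^6}{1 - (\mu e^c / \gamma)^2}.
\]
For $\gamma > \gsep = 2^{5/2}$ and $c = \csep$, the ratio $\mu e^c / \gamma$ is well below one, and the $k \geq 6$ cutoff makes the leading term already quite small. A direct numerical computation then verifies that the sum is at most $c$, which is the claimed bound.

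The main obstacle is obtaining a sharp enough count $N_k(e)$ to accommodate the very small target $c = \csep = 10^{-4}$ at the relatively modest threshold $\gsep \approx 5.66$. The elementary bound $N_k(e) \leq 2^{k-1}$ is too crude and would force $\gamma$ to be substantially larger; one must exploit hexagonal-lattice cycle structure (the $\sqrt{2+\sqrt{2}}$ growth rate together with the parity and minimum-length restrictions $k \in \{6,8,10,\ldots\}$) more carefully. Once the counting bound is established, the remaining estimate is a routine geometric-series computation.
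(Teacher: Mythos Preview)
Your overall strategy---duality to hexagonal-lattice cycles, parity and minimum-length restrictions, then a geometric series---is exactly the paper's. You also correctly diagnose that the crude bound $N_k(e)\leq 2^{k-1}$ alone is not enough to hit the tiny target $c=\csep$ at $\gamma=\gsep$.

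The gap is in your proposed fix. The connective constant $\mu=\sqrt{2+\sqrt{2}}$ does \emph{not} give you an upper bound of the form $N_k(e)\leq A\mu^k$ with an explicit (or any) constant $A$. Subadditivity of self-avoiding walk counts yields $c_n\geq\mu^n$, i.e., a \emph{lower} bound; the Duminil-Copin--Smirnov theorem identifies $\mu$ but says nothing about uniform upper bounds with explicit prefactor. Indeed, the conjectured asymptotic $c_n\sim A\mu^n n^{\gamma-1}$ has $\gamma>1$, so $c_n/\mu^n$ is expected to diverge. Even for self-avoiding polygons, turning the asymptotics into a rigorous bound $N_k\leq A\mu^k$ with $A$ small enough (you would need $A\lesssim 0.07$) is not a routine step and would be a substantial result in its own right.

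The paper avoids this entirely by a much more elementary device: it \emph{computes} the first few values exactly ($n_6=2$, $n_8=0$, $n_{10}=10$, $n_{12}=8$, $n_{14}=56$) and uses the crude $n_k\leq 2^k$ bound only for the tail $k\geq 16$. The point is that the sum is dominated by the $k=6$ term, roughly $2\gamma^{-6}\approx 6\times 10^{-5}$ at $\gamma=\gsep$, and the tail $(2e^c/\gamma)^{16}/(1-(2e^c/\gamma)^2)$ is already of order $10^{-7}$. So the crude bound is perfectly adequate once you have separated off the small cases by hand; no sharp growth-rate input is needed.
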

\begin{proof}
Let $n_k$ be the number of loops $L$ with $|L| = k$ containing a given fixed edge.  In general, $n_k \leq 2^k$, because as we saw in the proof of Lemma~\ref{lem:bijection-loops}, for any edge $e$ in $L$ incident on a triangular face $t$ of $\Gtri$, exactly one of the two other edges incident on $t$ must also be in $L$, so there are two ways to choose this next edge of $L$.  Any loop must have at least 6 edges, and must contain an even number of edges in total (because the dual hexagon lattice is bipartite). We can explicitly compute that $n_6 = 2$, $n_8 = 0$, $n_{10} = 10$, $n_{12} = 8$, $n_{14} = 56$.  We  see that
\begin{align*} \sum_{\substack{L \in \mL \\ e \in L}} \gamma^{-|L|} e^{c|L|}
& \leq \sum_{k = 3}^{\infty} n_{2k} e^{c2k} \gamma^{-2k}
 \\&\leq 2 \left(\frac{e^{c}}{\gamma}\right)^6
 \hspace{-2mm} + 10 \left(\frac{e^{c}}{\gamma}\right)^{10}
\hspace{-2mm} + 8  \left(\frac{e^{c}}{\gamma}\right)^{12}
\hspace{-2mm} + 56  \left(\frac{e^{c}}{\gamma}\right)^{14}
\hspace{-2mm} + \sum_{k = 8}^\infty 2^{2k} e^{c2k} \gamma^{-2k}
 \end{align*}
 Assuming $\gamma > 2e^c$, this geometric series converges, and we get that
 \begin{align*}
 \sum_{\substack{L \in \mL \\ e \in L}} \gamma^{-|L|} e^{c|L|} \leq
 	 2 \left(\frac{e^{c}}{\gamma}\right)^6
 	 \hspace{-2mm}+ 10 \left(\frac{e^{c}}{\gamma}\right)^{10}
 \hspace{-2mm}	+ 8  \left(\frac{e^{c}}{\gamma}\right)^{12}
 \hspace{-2mm}	+ 8  \left(\frac{e^{c}}{\gamma}\right)^{14}
 \hspace{-2mm}	+ \left(\frac{2e^{c}}{\gamma}\right)^{16}\frac{1}{1-(2e^c/\gamma)^2}
 \end{align*}
 We want this expression to be less than $c$.
 We choose a small value of $c$, $c = \csep$, such that one can verify  the desired inequality holds when $\gamma = \gsep$ and $c = \csep$. Noting that the equation above is monotone decreasing in $\gamma$, the desired inequality holds whenever $\gamma \geq \gsep$.
 We note that, indeed, $\gamma \geq \gsep$ implies $\gamma > 2e^c$, so our assumption earlier about the convergence of the geometric series was valid.
\end{proof}
We could improve this lower bound of $\gsep$ on $\gamma$ with a more optimal choice of $c$ and more careful enumeration of self-avoiding walks in the hexagon lattice; for example, the lemma still holds for $c = 0.05$ and $\gamma > 2.71$, though this is surely not optimal. However, in the next section we will need to assume $\gamma > \gsep$ in order to show separation occurs, so we opt not to include those details. We chose $c = \csep$ to be as small as possible such that the theorem holds for all $\gamma > \gsep$ (we rounded $c$ up to the nearest power of ten for simplicity).

 We are interested in polymers whose edges are drawn from set $\Lambda = E^{int}_\P$ consisting of all edges of $\Gtri$ with both endpoints on or inside $\P$ that are not part of $\P$. We can define $\partial \Lambda$, a set of edges, one of which must be in any cluster containing both an edge in $\Lambda$ and an edge not in $\Lambda$, to be $E(\P)$, the edges in boundary $\P$.

\begin{lem}
	When $\gamma > \gsep$, for $c = \csep$,
	there exists a constant $\Q \in [-c,c]$ that depends on $\gamma$ but is independent of $\P$ such that
	\[
	|E^{int}_\mathcal{P}|  \Q
	-  c|E(\mathcal{P})| \leq
	\ln\left(\Xi^\mL_\P\right)
	\leq |E^{int}_\mathcal{P}| \Q
	+  c|E(\mathcal{P})|.
	\]
\end{lem}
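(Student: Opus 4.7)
The plan is to invoke Theorem~\ref{thm:bdry-volume} directly, applied to the loop polymer model already set up. Specifically, I would take $\Gamma = \mL$ (all loops in $\Gtri$), weights $w(L) = \gamma^{-|L|}$, and the notion of compatibility where two loops are compatible iff they share no edges of $\Gtri$. Since all loops are edge sets in $\Gtri$ and the lattice is invariant under translation and rotation, $\mL$ is closed under these symmetries, as required by the hypothesis of Theorem~\ref{thm:bdry-volume}. The region is $\Lambda = E^{int}_\P$ with boundary edge set $\partial \Lambda = E(\P)$: any cluster containing an edge of $E^{int}_\P$ and an edge outside $E^{int}_\P$ must pass through $\P$ itself, so $\partial \Lambda = E(\P)$ is a valid choice.

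Next, I would verify the key hypothesis (Equation~\ref{eqn:suff-cond-edge}). Because two loops are incompatible precisely when they share an edge, we have $[L] = L$ and hence $|[L]| = |L|$ for every loop $L \in \mL$. Therefore Equation~\ref{eqn:suff-cond-edge} specialized to this polymer model reads, for every edge $e \in E(\Gtri)$,
\[
\sum_{\substack{L \in \mL \\ e \in L}} \gamma^{-|L|}\, e^{c|L|} \leq c,
\]
which is exactly the statement of Lemma~\ref{lem:loop-suff-cond}, valid for $c = \csep$ whenever $\gamma > \gsep$. Thus the hypothesis of Theorem~\ref{thm:bdry-volume} is satisfied for our choice of $c$ and for $\gamma$ in the desired range.

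Having verified the hypothesis, I would apply Theorem~\ref{thm:bdry-volume} to conclude that there is a constant $\Q \in [-c, c]$, depending on the polymer model (equivalently, on $\gamma$) but independent of the finite region $\Lambda = E^{int}_\P$, such that
\[
e^{\Q |E^{int}_\P| - c |E(\P)|} \;\leq\; \Xi^\mL_\P \;\leq\; e^{\Q |E^{int}_\P| + c |E(\P)|}.
\]
Taking logarithms yields the stated inequality. Since the constant $\psi$ produced by Theorem~\ref{thm:bdry-volume} is defined through a sum that depends only on $\Gamma$ and the polymer weights $w(\cdot)$, and our weights depend only on $\gamma$ (not on $\P$), the constant $\Q$ we obtain is independent of $\P$, as required.

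There is essentially no obstacle here beyond bookkeeping: the real work was done in Lemma~\ref{lem:loop-suff-cond} (verifying the Koteck\'y--Preiss style edge condition) and in Theorem~\ref{thm:bdry-volume} (deriving the volume/surface split from the convergent cluster expansion). The only mild checks are that (i) the polymer model is translation- and rotation-invariant, which is immediate since $\mL$ and the weights $\gamma^{-|L|}$ depend only on the length of $L$, and (ii) the chosen $\partial \Lambda = E(\P)$ truly separates clusters supported inside $E^{int}_\P$ from those straying outside, which holds because every loop is a connected edge set of $\Gtri$ and any loop containing edges both in $E^{int}_\P$ and outside must pass through a vertex of $\P$, hence contain an edge of $E(\P)$.
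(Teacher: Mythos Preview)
Your proposal is correct and follows exactly the paper's approach: the paper's proof is the single sentence ``This follows immediately from Theorem~\ref{thm:bdry-volume}; that the hypothesis of this theorem is satisfied is Lemma~\ref{lem:loop-suff-cond},'' and you have simply spelled out the bookkeeping behind that invocation (the choice $\Lambda = E^{int}_\P$, $\partial\Lambda = E(\P)$, $[L]=L$, and translation/rotation invariance).
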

\begin{proof}
This follows immediately from Theorem~\ref{thm:bdry-volume}; that the hypothesis of this theorem is satisfied is Lemma~\ref{lem:loop-suff-cond}.
\end{proof}

These upper and lower bounds on $\ln \Xi^\mL_\P$ are what we will use to show that configurations that are not $\alpha$-compressed have exponentially small weight in the stationary distribution. Our arguments will focus on boundaries of configurations, so we now write the above bounds entirely in terms of $|\P|$ plus terms that are the same for every configuration.

%Recall that $E_\P$ is all edges on or inside $\P$ and $E^{int}_\P$ is all edges with at least one endpoint inside $\P$. Let $E(\P)$ be all edges in boundary $\P$, and note an edge may be traversed at most twice in walk $\P$, so $|\P|/2 \leq |E(\P)| \leq |\P|$.

\begin{lem} \label{lem:xi-bound}
		When $\gamma > \gsep$, for $c = \csep$,
	there exists a constant $\Q \in [-c,c]$ that depends on $\gamma$ but is independent of $\P$ such that
	\[
	(3n-3) \Q -3c|\P|
	\leq
	\ln\left(\Xi^\mL_\P\right)
	\leq (3n-3) \Q + 3c|\P|
	\]
\end{lem}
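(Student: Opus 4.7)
The plan is to derive this statement as a direct consequence of the previous lemma by substituting concrete expressions for $|E^{int}_{\P}|$ and $|E(\P)|$ in terms of $n$ and $|\P|$, then absorbing the resulting $|\P|$-dependence into the surface term by using the bound $|\Q|\leq c$.

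First, I would identify the two edge counts. Clearly $|E(\P)| = |\P|$ by definition of the length of the boundary walk. For $|E^{int}_\P|$, I would use the identity $|E_\P| = |E^{int}_\P| + |E(\P)|$ together with the formula $e(\sigma) = 3n - p(\sigma) - 3$ for any connected, hole-free configuration, which was cited from \cite{Cannon2016} in the proof sketch of Lemma~\ref{lem:statdist}. Since every configuration $\sigma \in \Omega_\P$ has $e(\sigma) = |E_\P|$ and $p(\sigma) = |\P|$, this yields
\begin{equation*}
|E^{int}_\P| = (3n - |\P| - 3) - |\P| = 3n - 2|\P| - 3.
\end{equation*}

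Substituting into the bounds from the preceding lemma, I obtain
\begin{equation*}
(3n - 2|\P| - 3)\Q - c|\P| \;\leq\; \ln \Xi^\mL_\P \;\leq\; (3n - 2|\P| - 3)\Q + c|\P|.
\end{equation*}
Now I rewrite $(3n - 2|\P| - 3)\Q = (3n-3)\Q - 2|\P|\Q$. Using $|\Q|\leq c$ (from the preceding lemma), the extra term $-2|\P|\Q$ is bounded in absolute value by $2c|\P|$, so combining with the existing $\pm c|\P|$ surface term gives the desired $\pm 3c|\P|$ bound. Explicitly,
\begin{equation*}
(3n-3)\Q - 2c|\P| - c|\P| \;\leq\; \ln \Xi^\mL_\P \;\leq\; (3n-3)\Q + 2c|\P| + c|\P|,
\end{equation*}
which is the claim.

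There is essentially no obstacle here; the lemma is a bookkeeping step that converts the bounds from an expression in terms of $|E^{int}_\P|$ and $|E(\P)|$ into one that depends only on $n$ and $|\P|$, which is the form needed for the Peierls-style argument in Section~\ref{sec:compression-large-gamma}. The only thing to be careful about is recalling the correct edge-count identity for hole-free connected configurations and tracking the sign of $\Q$, which can be either positive or negative, so that the absorption $|2|\P|\Q| \leq 2c|\P|$ is applied uniformly to both the upper and lower bound.
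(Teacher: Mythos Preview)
Your approach is the same as the paper's, but there is a gap in the step where you assert $|E(\P)| = |\P|$. Recall that $\P$ is defined as a closed \emph{walk}, and $|\P|$ is its length; edges traversed twice are counted twice in $|\P|$ but only once in the edge set $E(\P)$. Such double traversals do occur (e.g., when the configuration has a cut edge), so in general one only has $|\P|/2 \leq |E(\P)| \leq |\P|$, as the paper notes explicitly. Consequently your formula $|E^{int}_\P| = 3n - 2|\P| - 3$ is only a lower bound, and the substituted inequality $(3n-2|\P|-3)\Q \pm c|\P|$ is not justified as stated: when $\Q > 0$ the true upper bound $|E^{int}_\P|\Q + c|E(\P)|$ can exceed $(3n-2|\P|-3)\Q + c|\P|$.

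The fix is exactly what the paper does: use $|E^{int}_\P| = (3n-3) - |\P| - |E(\P)|$ together with the two-sided estimate on $|E(\P)|$, and then absorb both the $|\P|\Q$ and $|E(\P)|\Q$ terms via $|\Q|\leq c$ (the paper does this by a short case split on the sign of $\Q$). With that correction your argument goes through and yields the stated $\pm 3c|\P|$ surface term.
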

\begin{proof}
	We first note that $|E^{int}_\P| = |E_\P| - |E(\P)|$. Because an edge may be traversed at most twice in walk $\P$, we know $|\P|/2 \leq |E(\P)| \leq |\P|$. Using these facts, as well as the previously established relationship $E_\P = 3n - |\P| - 3$, we see that
	\begin{align*}|E^{int}_\mathcal{P}| = |E_\P| - |E(\P)| \geq 3n - |\P| - 3 - |\P| = 3n -3-2|\P| \\
	|E^{int}_\mathcal{P}| = |E_\P| - |E(\P)| \leq 3n - |\P| - 3 - \frac{|\P|}{2} = 3n - 3 - \frac{3}{2} |\P|.
	\end{align*}
	While we don't know if $\Q$ is positive or negative, we can still do case analysis to bound $\log \Xi^\mL_\P$ entirely in terms of $|\P|$ and a term that is identical for all particle configurations.
	If $\Q \geq 0$, then because we also know $\Q \leq c$,
	\begin{align*}
	|E^{int}_\P| \Q + c|E(\P)| &\leq \left(3n-3-\frac{3}{2}|\P|\right)\Q + c |\P| \leq (3n-3)\Q + c|\P| \\
	|E^{int}_\P| \Q -c |E(\P)| &\geq \left(3n-3-2|\P|\right) \Q - c|\P|  \geq (3n-3)\Q - 3c |\P|.\end{align*}
	If $\Q <  0$, then because we also know $\Q  \geq -c$,
	\[
	|E^{int}_\P| \Q + c|E(\P)| \leq \left(3n-3-2|\P|\right)\Q + c|\P| \leq (3n-3)\Q + 3c |\P|
	\]
	\[
	|E^{int}_\P| \Q - c|E(\P)| \geq \left(3n-3 - \frac{3}{2}\right)\Q - c|\P| \geq (3n-3)\Q - c|\P|.\]
	We conclude, by the previous lemma,  that
	\[
	(3n-3) \Q -3c|\P|
	\leq
	\ln\left(\Xi^\mL_\P\right)
	\leq (3n-3) \Q + 3c|\P|.
	\]
\end{proof}
Equivalently,
\[
e^{(3n-3) \Q -3c|\P| }
\leq
\Xi^\mL_\P \leq e^{(3n-3) \Q + 3c|\P|}.
\]

This means, in particular, that the ratios of $\Xi^\mL_\P$ and $\Xi^\mL_{\P'}$ for different boundaries $\P$ and $\P'$ that enclose the same number $n$ of particles can be bounded by an expression that is exponential in the lengths of these boundaries but independent of $n$. This is essential to our compression argument, which will focus on boundaries of various lengths.

\subsection{\texorpdfstring{Bounding the Partition Function of $\pi$}{Bounding the Partition Function of pi}}

Recall our stationary distribution $\pi$ includes a normalizing constant (partition function) $Z$ given by $Z = \sum_{\sigma \in \Omega} \left(\lambda \gamma\right)^{-p(\sigma)} \gamma^{-h(\sigma)}$. In this section we get a lower bound on $Z$. Recall that $w_\P(\sigma) = \gamma^{-h(\sigma)}$.

\begin{lem}\label{lem:complement-comp}
	For a configuration $\sigma \in \Osc$, it is possible to flip the colors of some particles to yield a configuration $f(\sigma) \in \Os$ with $n/2$ particles of each color such that $\gamma^{-|\P|} w(\sigma) \leq  w(f(\sigma))$. Furthermore, for any $\tau \in \Os$, there are at most $n$ different $\sigma \in \Osc$ such that $f(\sigma) = \tau$.
	\end{lem}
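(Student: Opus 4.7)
The plan is to define $f : \Osc \to \Os$ explicitly by flipping the colors of all particles in a carefully chosen subset $S = S(\sigma) \subseteq \sigma$. Since only colors change, $f(\sigma)$ has the same set of occupied locations as $\sigma$, hence the same boundary $\P$, the same perimeter $|\P|$, and is still hole-free. The two requirements on $S$ are that (i) $f(\sigma)$ has exactly $n/2$ particles of each color, so $f(\sigma) \in \Os$, and (ii) the number of new heterogeneous edges introduced is at most $|\P|$, yielding the claimed weight inequality. The cost analysis is the easy half: for any $S \subseteq \sigma$, flipping preserves the heterogeneous/homogeneous status of every edge whose endpoints are both inside or both outside $S$, so only edges of the cut $E(S, \bar S)$ can change status. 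Hence $h(f(\sigma)) - h(\sigma) \leq |E(S,\bar S)|$, and combined with $p(\sigma) = p(f(\sigma)) = |\P|$, whenever $|E(S,\bar S)| \leq |\P|$ we obtain $\gamma^{-|\P|} w(\sigma) \leq w(f(\sigma))$ (using $\gamma > 1$, which holds in our regime $\gamma > \gsep$).

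To construct $S$, I will fix a canonical sweep of occupied locations that depends only on $\P$, producing a nested family $\emptyset = S_0 \subsetneq S_1 \subsetneq \cdots \subsetneq S_n = \sigma$ with $|S_{k+1} \setminus S_k| = 1$. Letting $a$ be the number of $c_1$ particles in $\sigma$ and $\delta_k = |S_k \cap c_1| - |S_k \cap c_2|$, the sequence $(\delta_k)$ starts at $0$, ends at $2a - n$, and changes by exactly $\pm 1$ at each step. By a discrete intermediate value argument, there is some $k^*$ with $\delta_{k^*} = a - n/2$; flipping $S_{k^*}$ then yields a configuration with exactly $n/2$ particles of each color. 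The main obstacle is designing the sweep so that $|E(S_k, \bar{S_k})| \leq |\P|$ holds for every $k$. I plan to use a lexicographic sweep by row-then-column: for a connected, hole-free, simply connected configuration, the width at any given row is at most $|\P|/2$ (the boundary walk must cross every row at least twice), so a full-row cut is bounded by $|\P|$; partial-row cuts add at most a constant number of in-row edges to a full-row cut above it, and I expect these edge contributions can be matched one-to-one with pieces of $\P$ near the sweep frontier. Verifying this bound uniformly---especially across transitions between nonconvex pieces of the configuration---is the delicate part of the argument.

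For the preimage bound, note that the sweep, and hence the entire family $\{S_k\}$, is determined by $\P$ alone and is therefore shared by every $\sigma \in \Osc$ with boundary $\P$ (equivalently, by every preimage of a given $\tau \in \Os$). Consequently, to specify a preimage $\sigma$ of $\tau$ under $f$ it suffices to specify the index $k^* \in \{0, 1, \ldots, n\}$ and then unflip $S_{k^*}$ in $\tau$, which is an involution; this gives at most $n+1$ candidate preimages, and discarding the trivial $k^* = 0$ case (or absorbing it via $n$ nontrivial choices) yields the claimed bound of at most $n$.
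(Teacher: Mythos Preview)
Your approach matches the paper's: a canonical lexicographic sweep (the paper uses column-major order---left to right, then top to bottom within each column), an intermediate-value argument to locate $k^*$, and the sweep-index count for the preimage bound. The cut bound you flag as delicate is not: at any stage of a column-major sweep the flipped/unflipped interface is confined to edges incident on the one partially-processed column, so the number of cut edges is at most twice the number of particles in that column, hence at most $2h\le |\P|$ where $h$ is the total vertical extent of $\sigma$---and this holds regardless of convexity.
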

\begin{proof}
	If $\sigma \in \Os$, let $f(\sigma) = \sigma$ and the lemma holds.

	For $\sigma \in \Osc \setminus \Os$,
	label the particles of $\sigma$ in order from left to right and, within each column, from top to bottom.
	Flip the colors of particles in this order,
	 until there are the correct number of particles of each color.
	 If $\sigma$ has more than $n/2$ particles of color $c_i$, then after flipping the colors of all particles it has fewer than $n/2$ particles of color $c_i$. At some intermediate step, there must have been exactly $n/2$ particles of color $c_i$ and the configuration is in $\Os$, as desired. We let the first such configuration be $f(\sigma)$.

	 Because we flip all particles in one column before flipping any particles in the next column, all heterogeneous edges introduced by this process are in two adjacent columns.
	 If $h$ is the total height of $\sigma$ - the vertical difference between its lowest and highest particles - then the number of adjacencies between particles whose color was flipped and particles whose color was not flipped is at most $2h$.
	 This is an upper bound on the number of heterogeneous edges introduced by the flips. The height of a particle configuration is less than half its perimeter, so we conclude
	 the number of new heterogeneous edges is at most
	 $ 2h(\sigma) \leq p(\sigma) = |\P|$. Thus $w_\P(\sigma) \leq w_\P(f(\sigma))$.

	Given $\tau \in \Os$ and a number $k\in \{0,1,..., n-1\}$, complementing the colors of the first $k$ elements (according to the canonical ordering from above) of $\tau$ yields a configuration that maps to $\tau$ under~$f$. These $n$ configurations, which may or may not be in $\Osc$, are the only ones that could map to~$\tau$ under $f$.
\end{proof}

\begin{lem}\label{lem:Zbound}
\[ Z \geq %\frac{1}{n}  \left(\lambda \gamma \right)^{-p_{min}} \gamma^{-p_{min}/2} e^{(3n-3) \Q - 5p_{min} } =
 \frac{1}{n} e^{(3n-3) \Q} \left(e^{3c} \lambda \gamma^{2} \right)^{-p_{min}}
\]
\end{lem}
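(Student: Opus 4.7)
The plan is to lower-bound $Z$ by restricting the sum to configurations with a single minimum-perimeter boundary and then chaining together the three main ingredients already available: the map $f$ from Lemma~\ref{lem:complement-comp}, the polymer-partition-function identity in Equation~\ref{eqn:wt-xi-loop}, and the cluster-expansion lower bound from Lemma~\ref{lem:xi-bound}.

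First I would fix some boundary $\P^*$ of a hole-free, connected configuration realizing the minimum perimeter, so that $|\P^*| = p_{min}$. Since every $\sigma \in \Os[\P^*]$ is a valid configuration in $\Omega$ with $p(\sigma) = p_{min}$, dropping all other terms in the sum defining $Z$ gives
\[
Z \;\geq\; \sum_{\sigma \in \Os[\P^*]} (\lambda\gamma)^{-p_{min}} \gamma^{-h(\sigma)} \;=\; w(\Os[\P^*]).
\]
Second, I would use Lemma~\ref{lem:complement-comp}: summing the inequality $\gamma^{-|\P^*|} w(\sigma) \le w(f(\sigma))$ over $\sigma \in \Osc[\P^*]$ and using that each $\tau \in \Os[\P^*]$ has at most $n$ preimages under $f$ gives
\[
\gamma^{-p_{min}}\, w(\Osc[\P^*]) \;\leq\; \sum_{\sigma\in\Osc[\P^*]} w(f(\sigma)) \;\leq\; n\, w(\Os[\P^*]),
\]
and hence $w(\Os[\P^*]) \geq \tfrac{1}{n}\gamma^{-p_{min}} w(\Osc[\P^*])$.

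Third, I would rewrite $w(\Osc[\P^*])$ in terms of the loop polymer partition function. By Equation~\ref{eqn:wt-xi-loop},
\[
w(\Osc[\P^*]) \;=\; (\lambda\gamma)^{-p_{min}}\, \Xi^{\mL}_{\P^*}.
\]
Finally, I would apply the lower half of Lemma~\ref{lem:xi-bound} (valid since $\gamma > \gsep$ is assumed throughout this section) to get $\Xi^{\mL}_{\P^*} \geq e^{(3n-3)\Q - 3c|\P^*|} = e^{(3n-3)\Q}\, e^{-3c\, p_{min}}$. Chaining the three estimates gives
\[
Z \;\geq\; \frac{1}{n}\,\gamma^{-p_{min}} (\lambda\gamma)^{-p_{min}}\, e^{(3n-3)\Q}\, e^{-3c\, p_{min}} \;=\; \frac{1}{n}\, e^{(3n-3)\Q} \left(e^{3c}\lambda\gamma^{2}\right)^{-p_{min}},
\]
which is the claimed inequality. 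The proof is essentially bookkeeping, and the only mild subtlety is checking that the $\gamma^{-|\P^*|}$ loss in Lemma~\ref{lem:complement-comp} and the $\gamma^{-p_{min}}$ factor from the definition of $w$ combine with $\lambda^{-p_{min}}$ exactly into the advertised $(e^{3c}\lambda\gamma^2)^{-p_{min}}$ when the surface term $e^{-3c\,p_{min}}$ from the cluster expansion is absorbed; there is no significant obstacle beyond this.
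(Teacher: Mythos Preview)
Your proof is correct and follows essentially the same route as the paper: restrict $Z$ to a minimum-perimeter boundary, pass from $\Os$ to $\Osc$ via the color-flipping map of Lemma~\ref{lem:complement-comp} (picking up the factors $1/n$ and $\gamma^{-p_{min}}$), rewrite $w(\Osc)$ as $(\lambda\gamma)^{-p_{min}}\Xi^\mL_{\P_{min}}$ via Equation~\ref{eqn:wt-xi-loop}, and apply the lower bound of Lemma~\ref{lem:xi-bound}. The bookkeeping you describe matches the paper's exactly.
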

\begin{proof}
	We just sum the weights over a subset of the state space to get a lower bound on $Z$.  The subset we consider is those configurations in $\Omega_{\P_{min}}^{c_1}$, complemented so that they have the correct number of particles of each color as in Lemma~\ref{lem:complement-comp}. All such configurations have a contribution to their weight of $(\lambda \gamma)^{-p_{min}}$, and using $w(\sigma) = \gamma^{-h(\sigma)}$, we see that
	\[ Z \geq (\lambda \gamma)^{-p_{min}} \sum_{\sigma \in \Omega_{\P_{min}}^{c_1}} \frac{1}{n} w(f_3(\sigma))
	\geq \frac{(\lambda \gamma)^{-p_{min}}\gamma^{-p_{min}}}{n}  \sum_{\sigma \in \Omega_{\P_{min}}^{c_1}} w(\sigma)
	\]\[= \frac{(\lambda \gamma^{2})^{-p_{min}}}{n}  \Xi^\mL_{\P_{min}}
	\geq \frac{(\lambda \gamma^{2})^{-p_{min}}}{n}  e^{(3n-3) \Q - 3c p_{min}}.
	\]
	Rearranging terms gives the lemma.
\end{proof}

\subsection{Achieving Compression}

Fix $\alpha > 1$, and let $S_\alpha$ be all configurations that are not $\alpha$-compressed.  %We first present a result from~\cite{Cannon2016} about the number of particle configurations with perimeter of a certain length.
We now state and prove our most general condition that implies compression occurs, and then follow this theorem with corollaries explaining what this general condition means for various values of $\alpha$, $\lambda$, and $\gamma$.

\begin{thm} \label{thm:comp-sep}
	Consider algorithm $\M$ when there are $n$ total particles of two different colors.  For $c = \csep$, when constants $\alpha>1$, $\lambda>1$, and $\gamma>\gsep$ satisfy
	\begin{align} \label{eqn:comp-sep-hyp}
	 \frac{2 (2+\sqrt{2}) e^{3c}}{\lambda \gamma}\left(e^{3c} \lambda \gamma^{3/2} \right)^{1/\alpha} <1 ,\end{align}
	when $n$ is sufficiently large then at stationarity for $\M$ with parameters $\lambda $ and $\gamma$, the probability that the configuration is not $\alpha$-compressed is exponentially small:
	\[ \pi(s_\alpha) < \zeta^{\sqrt{n}}
	\]
\end{thm}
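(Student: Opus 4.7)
The plan is a Peierls-type argument that compares the total stationary weight of non-$\alpha$-compressed configurations against the partition-function lower bound from Lemma~\ref{lem:Zbound}. Partition $S_\alpha = \bigcup_{k > p_{min}\alpha} S_\alpha^k$ into configurations of perimeter exactly $k$, and bound $w(S_\alpha^k)$ for each $k$ by first enumerating boundaries of length $k$ and then summing a geometric series.

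For a fixed boundary $\P$ with $|\P|=k$, chain together Lemma~\ref{lem:elim-crossing}, the polymer identity~(\ref{eqn:wt-xi-loop}), and the cluster-expansion upper bound of Lemma~\ref{lem:xi-bound} to obtain
\[
w(\Os) \;\le\; \frac{2}{\gamma-2}\cdot 2^{|\P|}\cdot (\lambda\gamma)^{-|\P|}\cdot \Xi^{\mL}_{\P} \;\le\; \frac{2}{\gamma-2}\, e^{(3n-3)\psi}\left(\frac{2 e^{3c}}{\lambda\gamma}\right)^{k}.
\]
By Lemma~\ref{lem:nu}, for any $\nu>2+\sqrt{2}$ and $n$ sufficiently large there are at most $\nu^k$ distinct boundaries of length $k$ (since a boundary determines the occupied set), so writing $r := 2\nu e^{3c}/(\lambda\gamma)$ we get $w(S_\alpha^k)\le \tfrac{2}{\gamma-2}\,e^{(3n-3)\psi}\,r^k$. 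Hypothesis~(\ref{eqn:comp-sep-hyp}) forces $r<1$ (the second factor exceeds one whenever $\lambda,\gamma>1$, so the first factor alone must be less than one), and summing the geometric tail over $k > \alpha p_{min}$ yields
\[
w(S_\alpha) \;\le\; \frac{2}{(\gamma-2)(1-r)}\, e^{(3n-3)\psi}\, r^{\alpha p_{min}}.
\]

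Dividing by the lower bound from Lemma~\ref{lem:Zbound} cancels the bulk term $e^{(3n-3)\psi}$ and produces a bound of the shape
\[
\pi(S_\alpha) \;\le\; \mathrm{poly}(n)\cdot \bigl[\, r^{\alpha}\cdot e^{3c}\lambda\gamma^{3/2}\,\bigr]^{p_{min}},
\]
where the $\gamma^{3/2}$ reflects the refined accounting of the complementation cost tracked through Lemmas~\ref{lem:complement-comp} and~\ref{lem:Zbound}. Hypothesis~(\ref{eqn:comp-sep-hyp}) is precisely the statement that the bracketed product is strictly less than one, so taking $\nu$ arbitrarily close to $2+\sqrt{2}$ and using $p_{min}=\Theta(\sqrt{n})$ from Lemma~\ref{lem:pmin}, for $n$ large enough we obtain $\pi(S_\alpha)\le \zeta^{\sqrt{n}}$ for some $\zeta<1$ depending only on $\alpha,\lambda,\gamma$.

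The main obstacle is not conceptual but the careful bookkeeping of three correction factors that must align: (i) the $2^{|\P|}\cdot 2/(\gamma-2)$ charged by Lemma~\ref{lem:elim-crossing} to pass to monochromatic boundary conditions, which is also what forces $\gamma>\gsep>2$; (ii) the surface term $e^{3c|\P|}$ from the cluster expansion of Lemma~\ref{lem:xi-bound}, cleanly separable from the bulk term $e^{(3n-3)\psi}$ that ultimately cancels against $Z$; and (iii) the $\gamma$-factor paid through the complementation argument when lower-bounding $Z$ via Lemma~\ref{lem:Zbound}. Condition~(\ref{eqn:comp-sep-hyp}) is engineered precisely so that these factors combine into a single product strictly less than one, which is exactly what the Peierls argument needs to close.
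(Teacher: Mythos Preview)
Your approach is essentially the same Peierls argument as the paper's: chain Lemma~\ref{lem:elim-crossing}, the polymer identity~(\ref{eqn:wt-xi-loop}), and Lemma~\ref{lem:xi-bound} to bound $w(\Os)$; count boundaries via Lemma~\ref{lem:nu}; and divide by the lower bound on $Z$ from Lemma~\ref{lem:Zbound} so that the bulk term $e^{(3n-3)\psi}$ cancels. The only organizational difference is that you sum the geometric series in $k$ before dividing by $Z$, whereas the paper divides term-by-term and then uses $p_{min}\le k/\alpha$ and $n\le k^2$ to collapse each summand into $k^2(\text{base})^k$; both routes close in the same way.

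One bookkeeping point you should tighten: Lemma~\ref{lem:Zbound} as written gives $Z\ge \tfrac{1}{n}e^{(3n-3)\psi}(e^{3c}\lambda\gamma^{2})^{-p_{min}}$, with exponent $2$ on $\gamma$ coming directly from the complementation cost $\gamma^{-|\P|}$ in Lemma~\ref{lem:complement-comp} combined with the $(\lambda\gamma)^{-p_{min}}$ perimeter weight. Your claimed $\gamma^{3/2}$, attributed to ``refined accounting,'' is not supported by those lemmas as stated, and the paper's own proof in fact carries $\gamma^{2}$ throughout (the $\gamma^{3/2}$ in the theorem statement appears to be a typo). Since $\gamma>1$, the $\gamma^{3/2}$ condition is weaker than the $\gamma^{2}$ one, so this is not merely cosmetic: to make your argument close you should either use $\gamma^{2}$ in the bracketed product or actually exhibit the refinement that saves the half-power.
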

\begin{proof}
	This proof builds on the proof of compression in~\cite{Cannon2016}; Lemma~\ref{lem:xi-bound} and its corollary Lemma~\ref{lem:Zbound} are the crucial facts we need to adapt this proof to our setting with particles of multiple colors.
	Pick $\nu$ such that $\nu > 2+\sqrt{2}$ but also
	\[\frac{2 \nu e^{3c}}{\lambda \gamma}\left(e^{3c} \lambda \gamma^{2} \right)^{1/\alpha} <1. \]
	%\[2 \nu e^{3+ 5/\alpha} < \lambda^{(\alpha-1)/\alpha} \gamma^{(2 \alpha - 3)/(2\alpha)}. \]
	Picking such a $\nu$ is possible because Equation~\ref{eqn:comp-sep-hyp} holds and is a strict inequality.
	We begin by expressing $\pi(S_\alpha)$ in terms of the contributions from configurations with given perimeters.
	\[ \pi(S_\alpha)
	= \frac{w(S_\alpha)}{Z}
	= \frac{\sum_{k = \lceil \alpha p_{min} \rceil}^{p_{max}} \sum_{\mathcal{P}: |\P| = k} w(\Omega_\P)}{Z}.
	\]
	Using Lemma~\ref{lem:Zbound} and Lemma~\ref{lem:elim-crossing} bounding the values of $w(\Omega_\P)$ and $Z$, we see that
	\[
	\pi(S_\alpha) \leq \frac{\sum_{k = \lceil \alpha p_{min} \rceil}^{p_{max}} \sum_{\mathcal{P}: |\P| = k} w(\Osc) 2^k \frac{2}{\gamma-2}}{\frac{1}{n} e^{(3n-3) \Q} \left(e^{3c} \lambda \gamma^{2} \right)^{-p_{min}}}.
	\]
	Recall that we expressed $w(\Osc) = (\lambda \gamma)^{-|\P|} \  \Xi^\mL_\P$ (Equation~\ref{eqn:wt-xi-loop}). Using Lemma~\ref{lem:xi-bound} to bound this polymer partition function $\Xi_\P^\mL$ and Lemma~\ref{lem:nu} to bound the number of configurations with perimeter of a certain length,
	we see that for sufficiently large $n$,
	\begin{align*}
		\pi(S_\alpha) &\leq \frac{\sum_{k = \lceil \alpha p_{min} \rceil}^{p_{max}}  \sum_{\mathcal{P}: |\P| = k} \left(\lambda \gamma\right)^{-k}
			e^{(3n-3)\psi + 3ck}
			 2^k \frac{2}{\gamma-2}}{\frac{1}{n} e^{(3n-3) \Q} \left(e^{3c} \lambda \gamma^{2} \right)^{-p_{min}}}.
		 \\& \leq \frac{\sum_{k = \lceil \alpha p_{min} \rceil}^{p_{max}} \left(\lambda \gamma\right)^{-k} \nu^k
		 	e^{(3n-3)\psi + 3ck}
		 	2^k \frac{2}{\gamma-2}}{\frac{1}{n} e^{(3n-3) \Q} \left(e^{3c} \lambda \gamma^{2} \right)^{-p_{min}}}.
		 \\&= \frac{2}{\gamma-2} \sum_{k = \lceil \alpha p_{min} \rceil}^{p_{max}} n \left(\frac{2 \nu e^{3c}}{\lambda \gamma} \right)^ k \left( e^{3c}\lambda \gamma^{2} \right)^{p_{min}}.
	\end{align*}
	Next we note that as $k \geq \alpha p_{min}$, then $p_{min} \leq k/\alpha$. Furthermore, as $k \geq p_{min} > \sqrt{n}$, then $n < k^2$. As $\lambda, \gamma > 1$ and $c > 0$, we know $e^{3c} \lambda \gamma^{2}>1$, and so
		\begin{align*}
	\pi(S_\alpha) &\leq \frac{2}{\gamma-2} \sum_{k = \lceil \alpha p_{min} \rceil}^{p_{max}} k^2 \left(\frac{2 \nu e^{3c}}{\lambda \gamma} \right)^ k \left( e^{3c} \lambda \gamma^{2} \right)^{k/\alpha}.
	\\& = \frac{2}{\gamma-2} \sum_{k = \lceil \alpha p_{min} \rceil}^{p_{max}} k^2
	\left( \frac{2 \nu e^{3c}}{\lambda \gamma}\left(e^{3c} \lambda \gamma^{2} \right)^{1/\alpha} \right)^k
	\end{align*}
We chose $\nu$ such that the term in parentheses above is less than one.
	For sufficiently large $n$ (i.e., sufficiently large $p_{min}$), this upper bound on $\pi(s_\alpha)$ is exponentially small. That is, for $n$ sufficiently large, there exists a constant $\zeta$ such that $\pi(S_\alpha) < \zeta^{\lceil\alpha p_{min}\rceil} < \zeta^{\sqrt{n}}$.  This proves the theorem.
\end{proof}

We note for any fixed value of $\alpha$ and $\gamma$, there exists a value of $\lambda$ large enough so that the hypothesis of the thoerem is satisfied.  The larger $\alpha$ is, the smaller $\lambda$ and $\gamma$ can be and still have the theorem apply.

\begin{cor}\label{cor:comp-sep-lambdagamma}
Whenever $\lambda >1$ and $\gamma > \gsep$ such that for $c = \csep$,
$\lambda \gamma > 2(2+\sqrt{2})e^{3c} \sim \lgsep$,
 there is an $\alpha$ such that for sufficiently large $n$, the probability that $\M$ with parameters $\lambda$ and $\gamma$ is not $\alpha$-compressed at stationarity is at most $\zeta^{\sqrt{n}}$ for $\zeta < 1$.
%	Whenever $\lambda \gamma > 2 (2+\sqrt{2}) e^{3c} \sim 8.18$,\redcomment{change if chance $c = \csep$} there exists $\alpha$ such that $\alpha$-compression occurs at stationarity with high probability.
\end{cor}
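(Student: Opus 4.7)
The plan is to derive this corollary as a direct consequence of Theorem~\ref{thm:comp-sep} by choosing $\alpha$ large enough to satisfy the theorem's hypothesis, Equation~\ref{eqn:comp-sep-hyp}. So essentially the corollary just repackages the theorem in a form that is easier to interpret: rather than requiring a joint condition on $\alpha$, $\lambda$, and $\gamma$, it gives a cleaner condition on $\lambda\gamma$ alone and existentially asserts a suitable $\alpha$.

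Concretely, I would set $A = \dfrac{2(2+\sqrt{2})e^{3c}}{\lambda\gamma}$ and $B = e^{3c}\lambda\gamma^{3/2}$, so that Equation~\ref{eqn:comp-sep-hyp} becomes $A \cdot B^{1/\alpha} < 1$. The hypothesis $\lambda\gamma > 2(2+\sqrt{2})e^{3c}$ is exactly the statement $A < 1$, so $\ln(1/A) > 0$. Since $\lambda > 1$, $\gamma > \gsep > 1$, and $c > 0$, we also have $B > 1$, so $\ln B > 0$. Hence the condition $A \cdot B^{1/\alpha} < 1$ is equivalent to $\alpha > \ln B / \ln(1/A)$, and any such $\alpha > 1$ works (the lower bound is a finite positive number, and if it happens to be at most $1$ we simply take $\alpha = 2$ or any value exceeding both $1$ and $\ln B / \ln(1/A)$).

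With that choice of $\alpha$ fixed, Theorem~\ref{thm:comp-sep} applies and gives, for sufficiently large $n$, a constant $\zeta < 1$ with $\pi(S_\alpha) < \zeta^{\sqrt{n}}$, which is exactly the compression bound claimed in the corollary. No further estimates on the stationary distribution or cluster expansion are needed here, since all of the analytic work has already been carried out in Lemmas~\ref{lem:xi-bound}, \ref{lem:Zbound}, and the proof of Theorem~\ref{thm:comp-sep}.

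I do not expect a main obstacle in this corollary; it is genuinely a short existence argument that extracts a witness $\alpha$ from the theorem's hypothesis. The only thing to be mildly careful about is checking that $\alpha > 1$ can be chosen (which follows immediately since $\alpha$ may be taken arbitrarily large) and that the resulting $\zeta$ depends only on $\lambda$ and $\gamma$ through the chosen $\alpha$, not on $n$, so that the asymptotic statement ``for sufficiently large $n$'' is valid uniformly in the configuration.
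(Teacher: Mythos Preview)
Your proposal is correct and follows essentially the same approach as the paper: take logarithms in Equation~\ref{eqn:comp-sep-hyp}, observe that the hypothesis $\lambda\gamma > 2(2+\sqrt{2})e^{3c}$ makes the relevant logarithm positive, and conclude that any sufficiently large $\alpha$ satisfies the theorem's hypothesis. The paper's proof is written as $\alpha > \ln(e^{3c}\lambda\gamma^{2})/\ln\bigl(\lambda\gamma/(2(2+\sqrt{2})e^{3c})\bigr)$, which is your $\alpha > \ln B/\ln(1/A)$ up to the minor discrepancy between $\gamma^{3/2}$ and $\gamma^{2}$ already present in the paper itself.
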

\begin{proof}
Rearranging Equation~\ref{eqn:comp-sep-hyp}, we see that for a given choice of $\lambda$ and $\gamma$, the hypothesis of Theorem~\ref{thm:comp-sep} holds for any $\alpha$ satisfying
\[ \ln{\left(\frac{\lambda \gamma }{2(2+\sqrt{2})e^{3c}}\right)} \alpha >
\ln{(e^{3c}\lambda \gamma^{2})}\]
When $\lambda \gamma > 2(2+\sqrt{2})e^{3c}$, the logarithm on the left hand side above is positive, and we see that this equation holds exactly when
\[\alpha >
\frac{\ln{(e^{3c}\lambda \gamma^{2})}}{\ln{\left(\frac{\lambda \gamma }{2(2+\sqrt{2})e^{3c}}\right)}}\]
We can always find an $\alpha>1$ that is large enough to satisfy this equation. Thus, for any $\lambda$ and $\gamma$ satisfying $\lambda \gamma > 2(2+\sqrt{2})e^{3c}$, there is a constant $\alpha$ such that Equation~\ref{eqn:comp-sep-hyp} is satisfied and  $\alpha$-compression occurs with high probability.
\end{proof}

\begin{cor}\label{cor:comp-sep-alpha}
For any constant $\alpha > 1$, there exists values of $\gamma$ and $\lambda$ such that $\M$, with these parameters, exhibits $\alpha$-compression at stationarity with probability at least $1-\zeta^{\sqrt{n}}$ for $\zeta < 1$.
% In particular, this occurs when $\gamma > \gsep$, $\lambda > 1$, and $\lambda$ and $\gamma$ satisfy
%\begin{align*}
%\lambda^{\alpha -1} \gamma^{\alpha-{2}} >  e^{3c(\alpha+1)} 2^\alpha (2+\sqrt{2})^\alpha.
%\end{align*}
\end{cor}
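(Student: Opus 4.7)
The plan is to derive this corollary directly from Theorem~\ref{thm:comp-sep} by showing that, for any fixed $\alpha > 1$, the hypothesis~\eqref{eqn:comp-sep-hyp} can always be arranged by holding $\gamma$ at some value just above $\gsep$ and taking $\lambda$ sufficiently large. Unlike Corollary~\ref{cor:comp-sep-lambdagamma}, which fixes a bound on the product $\lambda\gamma$ and solves for $\alpha$, here $\alpha$ is fixed first and we have freedom to choose both parameters.

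Concretely, I would fix $\alpha > 1$, set $c = \csep$, and choose any $\gamma_0 > \gsep$ (for instance $\gamma_0 = \gsep + 1$). Viewing the left-hand side of~\eqref{eqn:comp-sep-hyp} as a function of $\lambda$ with $\gamma = \gamma_0$ held constant, it can be rewritten as
\[
\frac{2(\cbound)e^{3c}}{\gamma_0}\,\bigl(e^{3c}\gamma_0^{3/2}\bigr)^{1/\alpha}\,\lambda^{1/\alpha - 1}.
\]
Because $\alpha > 1$, the exponent $1/\alpha - 1$ is strictly negative, so this expression tends to $0$ as $\lambda \to \infty$. Hence there exists $\lambda_0 = \lambda_0(\alpha,\gamma_0,c) > 1$ such that the expression is strictly less than $1$ for all $\lambda \geq \lambda_0$, meaning~\eqref{eqn:comp-sep-hyp} is satisfied with these parameter choices.

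With such $\lambda \geq \lambda_0$ and $\gamma = \gamma_0$ in hand, Theorem~\ref{thm:comp-sep} applies and directly yields the existence of a constant $\zeta < 1$ such that for sufficiently large $n$, the probability that $\M$ (run with parameters $\lambda$ and $\gamma$) produces a configuration that is not $\alpha$-compressed at stationarity is at most $\zeta^{\sqrt{n}}$. Since the argument is purely a monotonicity/limit observation on the hypothesis of an already-proved theorem, there is no real obstacle; the only subtlety is noting that for $1 < \alpha < 3/2$ the exponent of $\ln\gamma$ in the log-transformed inequality is positive, which is precisely why we fix $\gamma$ at a concrete small value rather than also sending it to infinity, and then exploit the negative exponent of $\lambda$ to drive the expression below $1$.
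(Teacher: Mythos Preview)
Your proposal is correct and takes essentially the same approach as the paper: both show that the hypothesis~\eqref{eqn:comp-sep-hyp} of Theorem~\ref{thm:comp-sep} can be satisfied for any fixed $\alpha>1$ by choosing $\lambda$ large enough (with $\gamma>\gsep$). The paper simply rearranges~\eqref{eqn:comp-sep-hyp} into the equivalent form $\lambda^{\alpha-1}\gamma^{\alpha-2} > e^{3c(\alpha+1)}2^\alpha(2+\sqrt{2})^\alpha$ and observes that such $\lambda,\gamma$ exist, whereas you fix $\gamma_0$ and use the limit $\lambda^{1/\alpha-1}\to 0$; these are the same observation in slightly different packaging.
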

\begin{proof}
Let $\lambda$ and $\gamma$ satisfy
\begin{align*}
\lambda^{\alpha -1} \gamma^{\alpha-{2}} >  e^{3c(\alpha+1)} 2^\alpha (2+\sqrt{2})^\alpha.
\end{align*}
Rearranging terms above shows the hypotheses of Theorem~\ref{thm:comp-sep} are satisfied and so $\alpha$-compression occurs with high probability.
\end{proof}

\section{Proof of Separation} \label{sec:separation}

We prove that, among configurations with small perimeter, those that are separated are exponentially more likely than those that are not in the stationary distribution of Markov chain $\M$, which is also the stationary distribution of our distributed algorithm $\A$.

Recall $\Os \subseteq \Omega$ is all configurations with no holes and boundary  $\mathcal{P}$.
Let $\pis$ be the stationary distribution conditioned on being in $\Os$, $\pi_\P(\sigma) = \pi(\sigma) / \pi(\Omega_\P)$.  Because all configurations in $\Os$ have the same perimeter, using the definition of $\pi$ given in Lemma~\ref{lem:statdist} we see that all terms of the form $(\lambda\gamma)^{-p(\sigma)}$ cancel, yielding  $\pis(\sigma) = \gamma^{-h(\sigma)} / \Zs$, where $\Zs = \sum_{\sigma \in \Os} \gamma^{-h(\sigma)}$.

Recall a configuration $\sigma$ is \emph{$\alpha$-compressed} if its perimeter is at most $\alpha \cdot p_{min}$, where $p_{min}$ is the minimum possible perimeter for the particles in $\sigma$.
Our main result in this section is that, for all $\mathcal{P}$ that determine $\alpha$-compressed configurations, non-separated configurations have exponentially small weight according to $\pis$. Because we already saw (Section~\ref{sec:compression-large-gamma}) that non-compressed configurations are exponentially unlikely, this suffices to show the occurrence of separation with high probability in configurations drawn from $\pi$.

We formally define separation in terms of the existence of a monochromatic region $R$ as follows. If $R$ is some subset of the particles in a configuration $\sigma$, then we say that $bd_{int}(R)$ is all edges of $\Gtri$ with both endpoints occupied by particles in $\sigma$ and exactly one endpoint in $R$.
% of contours in $\Ghex$ given by the union of all edges separating particles in $R$ from particles not in $R$, which may or may not be heterogeneous.
For later use, we also define $bd_{out}(R) \subseteq E(\Gtri)$ as all edges where one point is occupied by a particle in $R$ and the other endpoint is unoccupied and $bd(R) = bd_{int}(R) \cup bd_{out}(R)$.
The following is the definition of separation we will use throughout this section; it is equivalent to Definition~\ref{defn:sepinformal} but stated in a more formal way.
%Throughout the rest of this section, we will look specifically at a subset $\Os \subseteq \Omega$ of particle configurations with no holes and a fixed boundary contour $\mathcal{P}$. This means all configurations in $\Os$ have the same locations occupied by particles, but the colors of the particles vary across different configurations.
%Additionally, we will only consider those $\Os$ where $\mathcal{P}$  defines an $\alpha$-compressed configuration for some constant $\alpha > 1$.  That is, every configuration $\sigma \in \Os$ has the same perimeter $p(\sigma) = (|\mathcal{P}| - 6) / 2 \leq \alpha p_{min}$.
%%
%In particular, we look at the stationary distribution conditioned on being in $\Os$, which we will call $\pis$. For $\sigma \in \Os$ we have that $\pis(\sigma) = \gamma^{-h(\sigma)} / \Zs$, where $\Zs = \sum_{\sigma \in \Os} \gamma^{-h(\sigma)}$. All terms of the form $\lambda^{-p(\sigma)}$ from $\pi$  get absorbed into the normalizing constant $\Zs$ because they are the same for all $\sigma \in \Os$.
\begin{defn}\label{defn:sep}
	For $\beta > 0$ and $\delta \in (0, 1/2)$, a configuration $\sigma \in \Os$ is \emph{$(\beta,\delta)$-separated} if there is a subset $R$ of particles such that:
	\begin{enumerate}
		%	\item $R$ contains at least $\left(1/2 - \delta\right) n$ particles of color $c_1$.
		\item $|bd_{int}(R)| \leq  \beta \sqrt{n}$;
		\item The density of particles of color $c_1$ in $R$ is at least $1-\delta$; and
		\item The density of particles of color $c_1$ not in $R$ is at most $\delta$.
	\end{enumerate}
\end{defn}
\noindent Here $\delta$ is a tolerance of having particles of the wrong color within the cluster $R$, and $\beta$ is a measure of how  small the boundary between $R$ and $\overline{R}$, the particles  not in $R$, must be. We note that Condition (1) is equivalent to having at most $\beta \sqrt{n}$ edges with one endpoint in $R$ and one endpoint in $\overline{R}$.
We note that this definition is symmetric with respect to the role played by $c_1$ in~$R$ and the role played by $c_2$ in $\overline{R}$.
$R$ does not need to be connected or hole-free.

We let $\mS_{\beta,\delta} \subseteq \Os$ be the configurations in $\Os$ that are $(\beta,\delta)$-separated for $\beta> 0$ and some $\delta < 1/2$. We prove (Theorem~\ref{thm:sep}) that for $\gamma$ sufficiently large, as long as $\mathcal{P}$ is $\alpha$-compressed, $\beta > 2 \alpha \sqrt{3}$, and $\delta< 1/2$, with all but exponentially small probability a sample drawn from $\pis$ is in $\mS_{\beta,\delta}$:% \redcomment{probably need some conditions on $\beta$ and $\delta$, especially $B$.}
$$\pis(\Os \setminus \mS_{\beta,\delta}) \leq \zeta^{\sqrt{n}},$$
where $\zeta$ is a constant less that one.
In the remainder of this section we prove this result.

%A simple corollary states that if $\Omega_\alpha \subseteq \Omega$ is the set of configurations that are $\alpha$-compressed, then the stationary distribution $\pi$ restricted to $\Omega_\alpha$ also exhibits this same clustering behavior with high probability; this is easy to prove by decomposing $\Omega_\alpha$ into sets $\Os$ for each $\alpha$-compressed perimeter $\mathcal{P}$.
%Combined with Theorem~\ref{thm:comp-sep}, which gives ranges for $\gamma$ and $\lambda$ for which particle configurations are $\alpha$-compressed with high probability, this implies separation occurs with high probability.

\subsection{Lattice Duality and Contours}
\label{sec:duality}\label{sec:dual}

We begin with some background on lattice duality that will simplify our proofs in the remainder of this section.
The dual to the triangular lattice $\Gtri$, obtained by creating a new vertex in every face of $\Gtri$ and connecting two of these vertices if their corresponding triangular faces have a common edge, is the hexagonal lattice $\Ghex$; see Fig.~\ref{fig:dual}.
There is a bijection between edges of $\Gtri$ and edges of $\Ghex$,
associating an edge of $\Gtri$ with the unique edge of $\Ghex$ that crosses it and vice versa.

Throughout, by a {\it contour} we will mean a walk in $\Ghex$ that never visits the same vertex twice, except possibly to start and end at the same place; these are also known as {\it self-avoiding walks} or, when starting and ending at the same place, {\it self-avoiding polygons}.

\begin{figure}
	\centering
	\begin{subfigure}{0.4\textwidth}
		\centering
		\includegraphics[scale = 0.58]{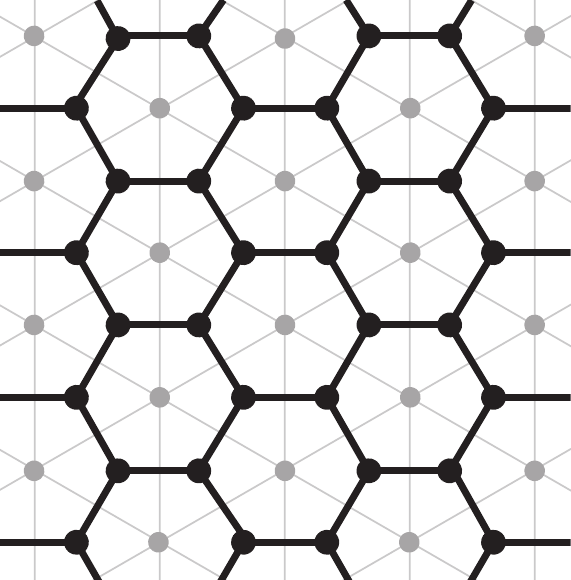}
		\caption{\centering}
		\label{fig:dual}
	\end{subfigure}%
	%	\begin{subfigure}{0.35\textwidth}
	%		\centering
	%		\includegraphics[scale = 0.45, page=1]{graphics/SAW-ex.pdf}
	%		\caption{\centering}
	%		\label{fig:SAW-ex-good}
	%	\end{subfigure}%
	%	\begin{subfigure}{0.35\textwidth}
	%		\centering
	%		\includegraphics[scale = 0.45, page=3]{graphics/SAW-ex.pdf}
	%		\caption{\centering}
	%		\label{fig:SAW-ex-bad}
	%	\end{subfigure}
	\begin{subfigure}{0.4\textwidth}
		\centering
		\includegraphics[scale = 0.4]{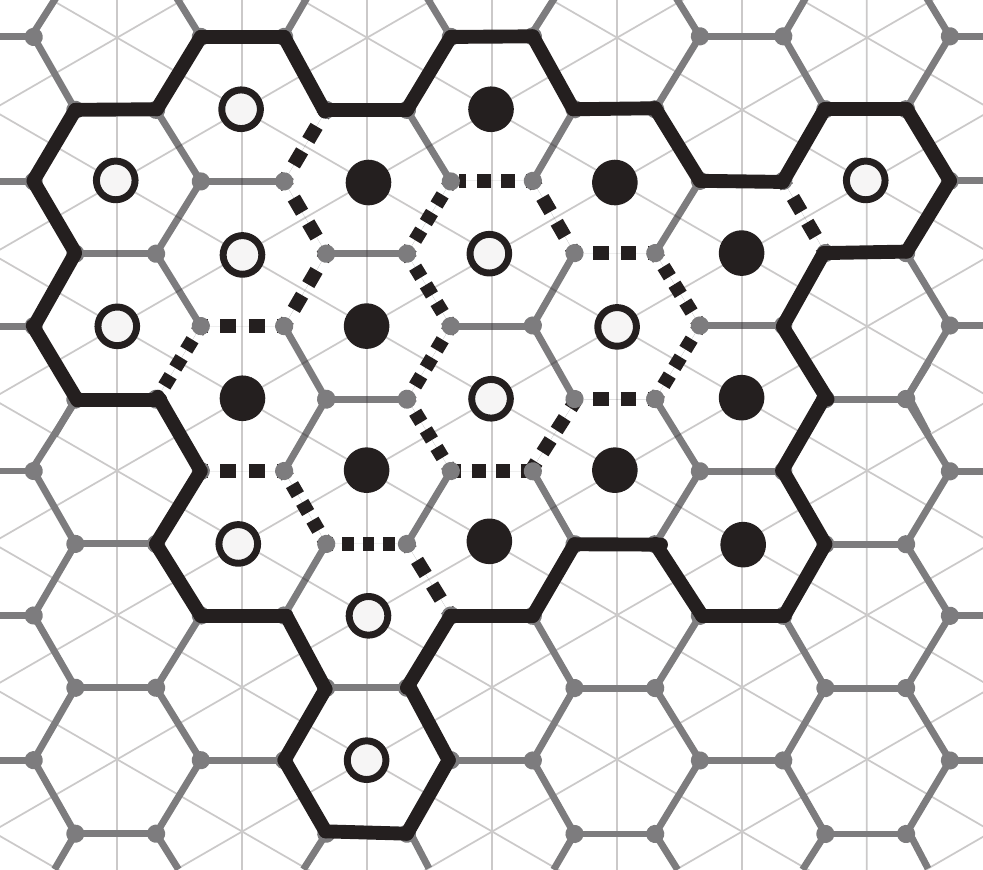}
		\caption{\centering}\label{fig:duality-edges}
	\end{subfigure}
	\caption{%(a) The hexagonal lattice.
		(a) The duality between the triangular lattice $\Gtri$ (gray) and the hexagonal lattice $\Ghex$ (black).
		(b) A particle configuration $\sigma$ with 11 black particles and 11 white particles. The boundary contour in dual lattice $\Ghex$ is thick and black, while all four heterogeneous contours of $\sigma$ are shown by dashed lines.
	%(b) A self-avoiding walk in the hexagonal lattice. (c) A walk that is not self-avoiding.
}
	\label{fig:SAW-ex}
\end{figure}

Each edge $e \in \Ghex$ crosses a unique edge $f \in \Gtri$, and we  say an $e$ {\it separates} the two locations connected by $f$. %; because of this natural bijection between edges of $\Ghex$ and edges of $\Gtri$ we will sometimes use them interchangeably.
For a configuration $\sigma$, we say an $e \in \Ghex$ is a {\it boundary edge} if it separates a particle of $\sigma$  from an unoccupied location, and $e$ is a {\it heterogeneous edge} if it separates two particles of different colors.
A contour is a {\it boundary contour} if all of its edges are boundary edges and is a {\it heterogeneous contour} if all of its edges are heterogeneous.
See Fig.~\ref{fig:duality-edges} for an example of a configuration $\sigma$ with particles of two different colors and its boundary contour (black) and heterogeneous contours (dashed).
%We say a contour is an {\it interior contour} if all of its edges separate two occupied locations, regardless of color.

%By an {\it interface} we will mean a subset of edges in $\Gtri$ whose bijection in $\Ghex$ is a contour (i.e. is self-avoiding). For example, the loop polymers considered in Section~\ref{sec:compression-large-gamma} are an example of an interface.  While contours and interfaces are essentially equivalent, it will sometimes be more convenient to discuss them as subsets of edges of $\Ghex$ (contour) and other times more convenient to discuss them as subsets of edges of $\Gtri$ (interfaces). In this section

%See Figure~\ref{fig:edge-ex} for an example of a configuration on six particles, three of each color, with boundary edges in the hexagonal lattice given by thick black lines, monochromatic edges given by thin black lines, and heterochromatic edges given by dashed lines.
%\begin{figure}
%	\centering
%	\includegraphics[scale = 1.0]{graphics/edge-ex.pdf}
%	\caption{A configuration with 6 particles in the triangular lattice (light grey), three of each color (white and black). Drawn in medium grey is the dual hexagonal lattice $\Ghex$; boundary edges in $\Ghex$ are thick black lines, monochromatic edges in $\Ghex$ are thin black lines, and heterochromatic edges in $\Ghex$ are dashed lines.  }\label{fig:edge-ex}
%\end{figure}

For a particle configuration $\sigma$ without holes, its boundary $\P$ can be completely described by taking the union of all boundary edges in $\Ghex$, which yields a boundary contour in $\Ghex$ which we will call $\P_{hex}$. %, that is, by a self-avoiding polygon in $\mathcal{P}(\sigma)$ in the hexagonal lattice that starts and ends at the same vertex.
A result of~\cite{Cannon2016} implies that if $|\P| = k$, then $|\P_{hex}| = 2k+6$.\footnote{In~\cite{Cannon2016} we showed the length of a self-avoiding walk including all but one boundary edge of $\sigma$ in $\Ghex$ had length $2p(\sigma) + 5$; here we consider (closed) boundary contours with all boundary edges, of total length $2p(\sigma) + 6$.}
% We call such a contour, which has particles on one side and empty locations on its other side, a {\it boundary contour}.
%For example, in Figure~\ref{fig:duality-edges} configuration $\sigma$ has perimeter $18$ and its boundary contour $\mathcal{P}(\sigma)$ (thick black lines) has length $42$.
%
For a particle configuration with no holes and boundary $\P$ or boundary contour $\mathcal{P}_{hex}$, we can completely describe the colors of its particles (up to swapping the colors) by giving all heterogeneous edges.  Because there are only two colors, the heterogeneous edges in $\Ghex$ form non-intersecting contours because by parity every vertex of $\Ghex$ has either zero or two incident heterogeneous edges.
%, which form heterochromatic contours.
%The interface between two components of different colors is always going to be a heterochromatic contour, and because there are only two colors, two different heterochromatic contours will never meet.
Each (maximal) heterogeneous contour either starts and ends at different places on the boundary contour or is a closed loop; we call the former a {\it crossing contour} and the latter an {\it isolated contour}.
% The interface contained in $E(\Gtri)$ corresponding to a crossing contour will be a {\it crossing interface}.
% The interface corresponding to an isolated contour will be called a {\it loop interface}, or more simply a {\it loop}, and will play a special role in our analysis.
The configuration in Fig.~\ref{fig:duality-edges} has three crossing contours and one isolated contour.
Loop contours are exactly the dual of the loop polymers discussed in Section~\ref{sec:compression-large-gamma}.

The crossing contours of a configuration $\sigma$ separate the particles into simply connected components whose boundary particles all have the same color.
 Recall from Section~\ref{sec:compression-large-gamma} that a {\it face} of a particle configuration $\sigma$ is a maximal simply connected subset $F$ where all particles in $F$ incident on an edge of $bd(F)$ have the same color, which we call the {\it color} of $F$.
For any face $F$, its maximality implies all edges in $bd_{int}(F)$ are heterogeneous in $\sigma$. A face is {\it outer} if it includes at least on particle on $\P$.

%To prove this, we adapt the bridging technique of Miracle, Pascoe, and Randall~\cite{Miracle2011}.
%Several considerations--including working on a non-self dual lattice, irregular shapes of particle configurations, and that heterogeneous edges fall into two different categories--require additional innovations in addition to the ideas of~\cite{Miracle2011}. Our work, as does that in~\cite{Miracle2011}, utilizes a {\it Peierls argument}, which has been used, for example, in statistical physics to study uniqueness of Gibbs measures and in computer science to give lower bounds on the mixing times of Markov chains (see, e.g.,~\cite{Blanca2013}). While Peierls arguments have also been used in previous work on stochastic algorithms for self-organizing particles systems~\cite{Cannon2016,AndresArroyo2017}, our Peierls argument here is of a different flavor and is significantly more involved. For instance, we lean heavily on lattice duality, which we explain and explore in the next subsection.

\subsection{Bridging Systems}

%For a particle configuration $\sigma \in \Os$, maximal heterochromatic contours come in two types: {\it isolated} contours that form closed loops and do not connect to any other boundary or heterogeneous contours, and {\it crossing} contours, which connect at each end to the boundary contour, but do not otherwise connect to the boundary or other heterogeneous contours.

Let $(B,I)$ be a collection of contours in $\Ghex$ within a face $F$, where $B$ contains {\it bridge contours} connecting each isolated contour in set $I$ (a subset of the isolated contours within $F$) to the boundary of $F$. For a given $(B,I)$, we say particle $P$ is {\it bridged} in face $F$ if there exists a path through particles of the same color as $P$ to $bd(F)$ or to a bridged isolated contour in $I$.  A particle is {\it unbridged} if such a path does not exist.  %We are particularly interested in the number of unbridged particles in a face $F$ whose color differs from that of the boundary of $F$.
We say that $(B,I)$ is a {\it $\delta$-bridge system} for face $F$ if:
\begin{enumerate}
	\item $|B| \leq |I| (1-\delta)/2\delta$, where $|B|$ is the total number of edges in all the bridge contours in $B$ and $|I|$ is the total number of edges in all the bridged isolated contours in $I$.
	%	\item Within each face of color $c_i$, the \number of unbridged particles (particles not inside a bridged isolated contour) of color not equal to $c_i$ is at most $\delta$ times the total \number of unbridged particles in the face, and
	\item The number of unbridged particles in $F$ is $\leq \delta |F|$, where $|F|$ is the number of particles in $F$.
	%For each particle $P$ in $F$, slip its color once for each bridged isolated component surrounding it; the \number of particles whose resulting color does not match the color of $F$ is at most $\delta |F|$. More formally, for all particles $P \in F$, let $x(P)$ be the \number of bridged isolated contours in $I$ surrounding $P$. Consider a new coloring $\chi$ of $F$ obtained by letting the color of particle $P$ be $c_{(i + x(P))(mod 2)}.$ The \number of particles in this new coloring that differ from the color of $F$ must be at most $\delta |F|$.
	%	\redcomment{Add something about faces within bridged isolated contours also being bridged?}
\end{enumerate}
Note the $\delta$ in this definition is the same $\delta$ as in the set $\mS_{\beta,\delta}$ that we are trying to show has exponentially small weight.
We now show how to find a $\delta$-bridge system for any face $F$.
\begin{lem}\label{lem:bridge}
	For any face $F$, there exists a $\delta$-bridge system for~$F$.
\end{lem}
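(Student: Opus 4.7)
The plan is to construct $(B, I)$ iteratively, adding one isolated contour (together with a short bridge) per step until the unbridged-particle count drops below $\delta|F|$. Initially $I = B = \emptyset$; at this point every particle of color opposite to that of $F$ (necessarily strictly inside some isolated contour) is unbridged, as are any same-color particles trapped between nested isolated contours. At each subsequent step I select an isolated contour $\gamma \notin I$ together with a bridge $b_\gamma$, defined as a shortest contour in $\Ghex$ connecting $\gamma$ either to $bd(F)$ or to any contour already in $I \cup B$. Allowing bridges to relay through previously-bridged contours is essential, since a deeply nested $\gamma$ can then be bridged cheaply via an enclosing bridged contour. The greedy rule selects the $\gamma$ maximizing the ratio of newly-bridged particles to $|b_\gamma|$, and the loop terminates as soon as at most $\delta|F|$ particles remain unbridged.

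Condition~2 in the definition of a $\delta$-bridge system then follows immediately from the stopping rule. The harder task is Condition~1, namely $|B| \leq |I|(1-\delta)/(2\delta)$. The core estimate is an amortized bound of the form $|b_\gamma| \leq |\gamma|(1-\delta)/(2\delta)$ at each step, and it rests on two ingredients: an isoperimetric-type estimate on $\Gtri$ (the interior of an isolated contour $\gamma$ contains $O(|\gamma|^2)$ particles, so to trap many unbridged particles a contour must be ``long''), together with a planarity/distance argument in $\Ghex$ showing that if many unbridged particles remain inside some $\gamma$, then either $\gamma$ itself is ``close'' to the current bridged structure, or a nested sequence of trapping contours forces a close enclosing contour. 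Summing over all iterations then yields the desired aggregate ratio.

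The principal obstacle I anticipate is calibrating the trade-off between the two conditions: adding more isolated contours to $I$ reduces the unbridged count but may force longer bridges, while adding fewer keeps bridges short but leaves too many particles unbridged. Resolving this tension requires a carefully constructed charging scheme that assigns each newly-bridged particle to a specific contour-bridge pair, so that the amortized bridge cost per unit of $|\gamma|$ is controlled by $(1-\delta)/(2\delta)$. A further subtlety, which I expect to demand the most care, is handling same-color particles trapped between nested contours: such a particle becomes bridged only when some enclosing contour is added to $I$, so its ``charge'' must be apportioned to the correct outer contour rather than any deeper contour it happens to lie inside. Once this bookkeeping is in place, the remaining steps reduce to standard applications of planar duality between $\Gtri$ and $\Ghex$ and of isoperimetric bounds for the triangular lattice.
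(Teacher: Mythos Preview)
Your proposal is a plan, not a proof, and the central step does not go through as written. The claim you need is the amortized bound $|b_\gamma| \leq |\gamma|\,(1-\delta)/(2\delta)$, and you propose to get it from (i) the isoperimetric inequality $|\text{interior}(\gamma)| = O(|\gamma|^2)$ and (ii) a planarity/distance argument. But neither ingredient controls what you actually need: the \emph{distance} from $\gamma$ to the already-bridged structure, as a function of the \emph{perimeter} $|\gamma|$. Isoperimetry relates $|\gamma|$ to enclosed area; it says nothing about where $\gamma$ sits inside $F$. A length-$6$ contour can sit at distance $d$ from $bd(F)$ for $d$ arbitrary (up to the diameter of $F$), so a shortest-bridge rule gives $|b_\gamma| \approx 2d$, unrelated to $|\gamma|$. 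Your greedy criterion maximizes (newly bridged particles)$/|b_\gamma|$, which is yet a third ratio; nothing in the sketch explains how that selection forces $|b_\gamma|/|\gamma|$ to be bounded. Most tellingly, the precise constant $(1-\delta)/(2\delta)$ has no visible source in your framework: in the paper it arises from a very specific counting argument, and a charging scheme that does not reproduce that counting will not produce that constant.

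The paper's construction is quite different and sidesteps the distance issue entirely. It sweeps $F$ with \emph{vertical} contours in $\Ghex$ and, by averaging, finds one column $\overline{\mathcal V}$ along which more than a $\delta$ fraction of the particles of $F_{ext}$ are unbridged. The bridges added in this step are exactly the portions of $\overline{\mathcal V}$ adjacent to \emph{bridged} particles, so $|B_0| \leq 2(1-\delta)\,|\overline{\mathcal V}\cap F_{ext}|$; meanwhile every unbridged particle on $\overline{\mathcal V}$ lies inside some outermost isolated contour $\mathcal I_j$ that crosses $\overline{\mathcal V}$ on both sides, contributing at least four contour edges, so $\sum_j|\mathcal I_j| \geq 4\delta\,|\overline{\mathcal V}\cap F_{ext}|$. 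Dividing these two bounds gives $|B_0| \leq \frac{1-\delta}{2\delta}\sum_j|\mathcal I_j|$ on the nose, and an induction on $|F|$ handles the interiors of the $\mathcal I_j$. The point is that bridge length and isolated-contour length are compared \emph{along the same line}, so no distance estimate is ever needed; this is the idea your plan is missing.
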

\begin{proof}
	%	We prove the more general statement that for any face of any particle configuration, there exists a $\delta$-bridge system for that face.  We assume the face is maximal, that is, its boundary consists of heterochromatic and boundary contours.
	Fig.~\ref{fig:bridge} gives one example of a face $F$ and a $\delta$-bridge constructed for $F$.

	Without loss of generality, suppose $F$ is of color $c_1$.
	%We will induct first on the number of particles in $F$, and then on the number of unbridged particles in $F$.
	If $F$ has only one particle, then $(\emptyset, \emptyset)$ is a $\delta$-bridge system for $F$. We now suppose $F$ has more than one particle and there exists a $\delta$-bridge system for all regions with a smaller number of particles than $F$.
	We will iteratively construct a $\delta$-bridge system $(B,I)$ for $F$.  To start, let $(B,I) = (\emptyset, \emptyset)$, which satisfies $|B| \leq |I| (1-\delta)/2\delta$.
	Let $u(F)$ be the unbridged particles for $(B,I)$ in $F$.
	If $|u(F)| \leq \delta |F|$, where $|F|$ is the number of particles in face $F$, then $(B,I)$ is a valid $\delta$-bridge system for $F$.
	If not, we give a procedure for adding to $(B,I)$ that reduces the number of unbridged particles in $F$ and maintains two invariants: (1) $|B| \leq |I| (1-\delta)/2\delta $ and (2) for any $\mathcal{I} \in I$ not surrounded by another contour in $I$, the face $F_{\mathcal{I}}$ consisting of all particles inside $\mathcal{I}$ contains at most $\delta |F_{\mathcal{I}}|$ unbridged particles. Both invariants are true for initial configuration $(\emptyset, \emptyset)$. Repeating this process until $u(F) \leq \delta |F|$ gives a valid $\delta$-bridge for~$F$.

	Suppose we are given a bridge system $(B,I)$ for $F$ that satisfies both invariants but leaves $u > \delta |F|$ unbridged particles.
	%By induction, we also assume there exists a $\delta$-bridge system for any face with fewer particles than $F$. % $\delta$-bridge system for any region of the same size but with fewer unbridged particles.
%
	% By induction, we know there exists a $\delta$-bridge system for regions the same size as $F$ but with fewer unbridged particles, and for regions with fewer particles than $F$ and any \number of unbridged particles.
	%	We let $u(F)$ be the set of unbridged faces within $F$. If $u(F) < \delta |F|$ then $(\emptyset, \emptyset)$ is a valid $\delta$-bridge system for $F$, and we are done, so we assume that $u(F) > \delta |F|$.
	%
	%We let $c_1(F)$ be the set of particles of color $c_1$ in $F$ and let $c_2(F)$ be the set of  particles of color $c_2$ in $F$. Again, if $c_2(F) \leq \delta |F|$, then $(\emptyset, \emptyset)$ is a valid $\delta$-bridge system for $\sigma$.
	%
	Let $\Fout$ be the particles in $F$ that are not inside any bridged isolated contours in $(B,I)$. % (if $B = \emptyset$ and $I = \emptyset$, then $\Fout = F$).
	We will consider contours $\mathcal{V}$ in $\Ghex$ that stretch vertically across $F$, from one part of its boundary to another, consisting of alternating down-left and down-right edges.  We call such contours {\it vertical} contours.
	We include in set $\mathbb{V}_F$ all (infinite) vertical contours that contain at least one edge inside $\Fout$; we will only be interested in their intersection with $\Fout$, which need not be contiguous.
	For any $\mV \in \mathbb{V}_F$, let $\mH\cap \Fout$ be all particles in $\Fout$ directly right of $\mH$ and let $\mV \cap u(\Fout)$ be the unbridged ones.
	% Importantly, these exclude any particles that are within bridged contours.
	%those particles in $\mH \cap F$ that are unbridged.
	Because $u(F) > \delta |F|$, applying Invariant (2)
	%and for all outermost $\mathcal{I} \in I$ Invariant (2) implies face $F_{\mathcal{I}}$ satisfies $u(F_{\mathcal{I}})	\leq \delta |F_{\mathcal{I}}|$,
	we conclude that $u(\Fout) > \delta |\Fout|$.
	It follows that there exists $\overline{\mH} \in \mathbb{V}_F$ such that $|\mH \cap u(\Fout)| > \delta |\mH \cap \Fout|$.
%	Any particles in $\mH \cap u(\Fout)$ must be surrounded by unbridged isolated contours, as otherwise they would have a monochromatic path to the boundary of $F$. % We consider the set of isolated contours surrounding particles in $\mH \cap u(F)$. If one contour in this set surrounds another contour in this set, we remove this inner contour from the set, leaving just the outermost isolated contours surrounding  particles in  $\mH \cap c_2(F).$

%Let $\mathcal{I}$ be the outermost isolated contour surrounding one such $P$. Contour $\mathcal{I}$ must have particles of color $c_1$ immediately outside it and particles of color $c_2$ immediately inside it, and its interior is
%	% any isolated contour that is not surrounded by other isolated contours within $F$ and encloses at least one particle of color $c_2$ that is directly above $\mH$.
%	%Thus the interior of this contour is
%	 a face $F'$ of color $c_2$. By our induction hypothesis, because $|F| > |F'|$ there exists a $\delta$-bridge system $(B', I')$ for $F'$.
%	 \redcomment{make sentence clearer}All but a $\delta$ fraction of the particles in $F'$ are connected to bridged contours within $F'$ or to $bd(F') = \mathcal{I}$ by paths of the same color. %,  bridged contours within $F'$ or to the boundary of $F'$, which is exactly $\mathcal{I}$.
%

\begin{figure}
	\centering
	\includegraphics[scale = 0.55]{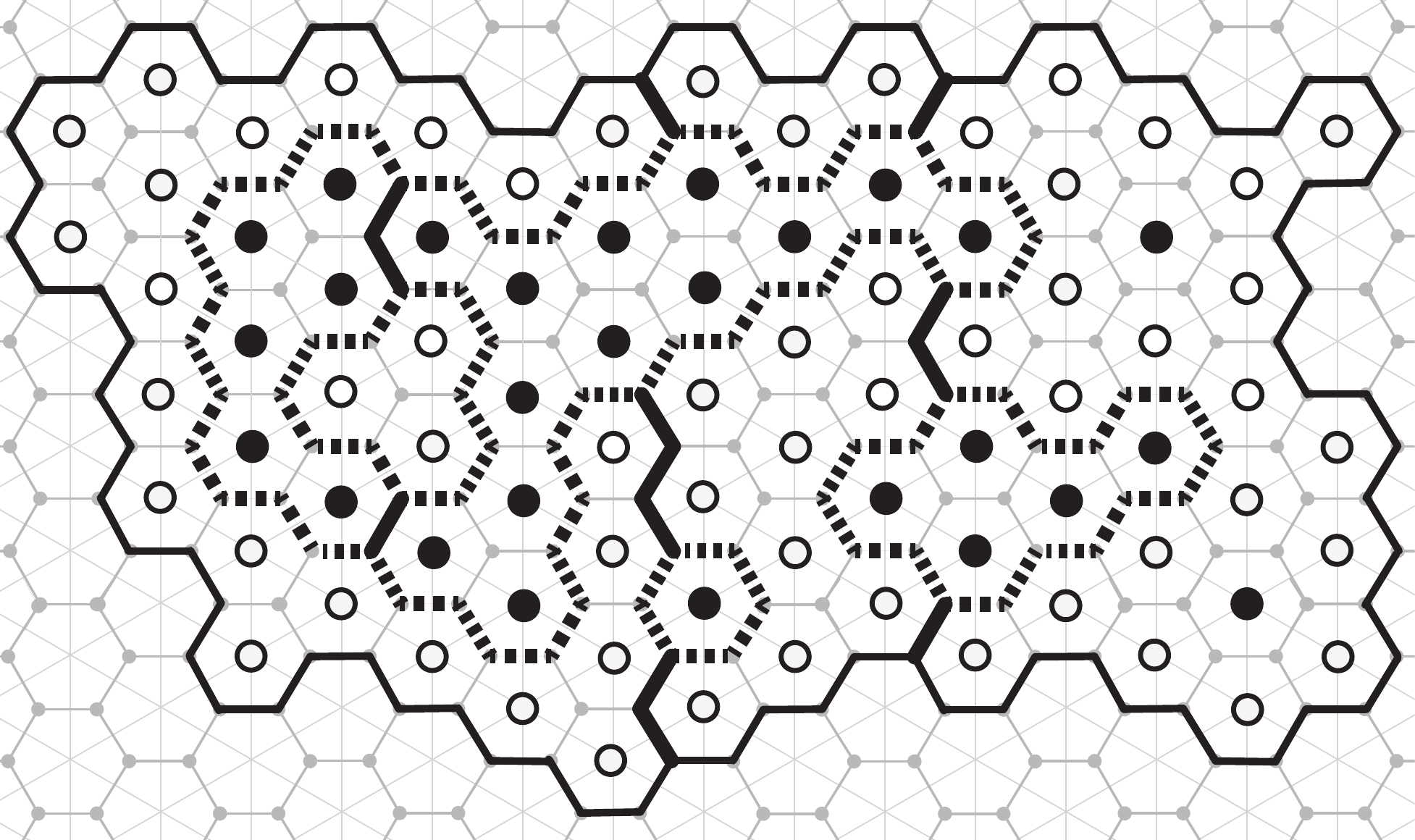}
	\caption{A face $F$ and one potential $\delta$-bridge $(B,I)$ for $F$, where $B$ consists of think black edges and $I$ consists of all dashed edges. The boundary of $F$ in $\Ghex$ is shown as thin black lines.  }\label{fig:bridge}
\end{figure}

	Any particle $P \in \overline\mH \cap u(\Fout)$  must be surrounded by an unbridged isolated contour, as otherwise it would have a monochromatic path to the boundary of $F$; if there are multiple isolated contours surrounding $P$, one must be the {\it outermost}, encircling all the others. % isolated contours around $P$.
	Enumerate all outermost isolated contours surrounding particles in $u(\Fout) \cap \overline\mH$ as $\mathcal{I}_j$ for $j = 1,..., k$. Let $F_j$ be the face surrounded by $\mathcal{I}_j$, which is of color $c_2$.
	By our induction hypothesis, because $|F| > |F_j|$ there exists a $\delta$-bridge system $(B_j, I_j)$ for $F_j$.
	 %and let $(B_j, I_j)$ be its $\delta$-bridge system that exists by induction.
	We add to bridge system $(B, I)$ for $F$ the set of bridges
	$\bigcup_{j} B_j$ and the set of bridged isolated contours $\bigcup_{j}I_j$.
	%the $\delta$-bridge system $(M_j, I_j)$ for $F_j$, and  we add contour $\mathcal{I}_j$ to $I$.
	Furthermore, we add to $B$ all the segments of $\overline\mH$ that are left of bridged particles in $\overline\mH \cap F_{ext}$, a set we call $B_0$, and we add to $I$ all $\mathcal{I}_j$.
	Because the number of particles that are newly-bridged by this construction is at least $|u(F) \cap \overline\mH|$, we have reduced the number of unbridged particles in $F$. It only remains to show that this new bridge system satisfies the necessary invariants.

	To see that $(B,I)$ satisfies Invariant 2, note that the only new contours $\mathcal{I} \in I$ not surrounded by other contours in $I$ are the $\mathcal{I}_j$.
	All particles that were bridged in any $F_{\mathcal{I}_j} = F_j$ are now bridged in $F$, since both the boundary of $F_j$ and the bridged contours in $I_j$ are now bridged contours in $I$.
	Because $(B_j, I_j)$ is a valid $\delta$-bridge system for $F_{\mathcal{I}_j} = F_j$, $F_j$ contains at most $\delta|F_j|$ unbridged particles, as desired.

%Second, because all unbridged particles in $\mV \cap u(\Fout)$ are newly-bridged by our construction, we have reduced the number of unbridged particles in $F$.

	%For each $\mathcal{I}_j$, any particles that were bridged in $F_{\mathcal{I}} = F_j$ with bridge system $(B_j, I_j)$ remain bridged in $F$ with bridge system $(B,I)$, because all bridged isolated contours in $I_j$ and the boundary $\mathcal{I}_j$ of $F_j$ have been added to $I$.
%	Thus there are at most $\delta |F_j|$ unbridged particles in $F_j$ by bridge system $(B,I)$.

%	Each of these contours $\mathcal{I}_j$ surrounds at most $\delta |F_j|$ unbridged particles, and these are the only outermost contours added to $I$ so our invariant remains satisfied.

	We now check that $(B,I)$ satisfies Invariant 1.
	Because $(B_j, I_j)$ is a $\delta$-bridge for $F_j$,  $|B_j| \leq |I_j|(1-\delta)/2\delta$ for all $j$. Next, we see that
	 $\sum_j |\mathcal{I}_j| \geq 4 \cdot |u(\Fout) \cap \mH|$, as the $\mathcal{I}_j$ collectively contain at least two contour edges left of and two contour edges right of each particle in $u(\Fout) \cap \overline\mH$.
	Because $\overline\mH$ satisfies $|\overline\mH \cap u(\Fout)| > \delta |\overline\mH \cap \Fout|$, then $\sum_j |\mathcal{I}_j| \geq 4\delta |\overline\mH \cap \Fout|$. Bridge $B_0$ added to $B$ contains two contour edges for each bridged particle in $\Fout \cap \overline\mH$ and at most a $1-\delta$ fraction of the particles in $\Fout \cap \overline\mH$ are bridged, so $|B_0|/ 2 \leq (1-\delta) | \overline\mH \cap\Fout|$. Combining the previous two equations,
	\begin{align*} \sum_j |\mathcal{I}_j| \geq 4\delta |\overline\mH \cap\Fout| \geq  4\delta \left( \frac{1}{2(1-\delta)} |B_0|\right) = \frac{2\delta}{1-\delta} |B_0|. \end{align*}
	We conclude that the additions $B_0$ and $B_j$ to $B$ and the additions $\mathcal{I}_j$ and $I_j$ to $I$ satisfy
	\begin{align*} |B_0| + \sum_{j = 1}^k |B_j| \leq \frac{1-\delta}{2\delta} \sum_{j = 1}^k |\mathcal{I}_j| + \frac{1-\delta}{2\delta} \sum_{j = 1}^k |I_j| = \frac{1-\delta}{2\delta}  \sum_{j = 1}^k \left( |\mathcal{I}_j| + |I_j| \right).
	\end{align*}
	% Do this for all outermost isolated contours $\mathcal{I}$ of $F$ that enclose a particle of color $c_2$ directly above $\mH$.
	Thus Invariant 1 is satisfied.  %, and we have reduced the number of unbridged particles in $F_{out}$ and thus the number of unbridged particles in $F$.

	We have added to $(B,I)$ while maintaining both invariants and reducing the number of unbridged particles in $F$. We can continue this process until there are at most $\delta |F|$ unbridged particles in $F$; then, Invariant 1 implies $(B,I)$ is a $\delta$-bridge system for $F$.
	%	We now use our second induction hypothesis. $(B,I)$ has been constructed
	%
	%	We now use our second induction hypothesis.  Once $(B,I)$ has been constructed,
	%	the number of unbridged particles in $F$ has been reduced. By our induction hypothesis, there exists a $\delta$-bridge system $(B^*, I^*)$ that bridges the remaining unbridged particles in $F$, leaving at most $\delta |F|$ particles unbridged. Furthermore, $|B^*| \leq |I^*|(1-\delta)/2\delta$.
	%	We conclude that $(B \cup B^*, I \cup I^*)$ is a $\delta$-bridge system for $F$, satisfying both necessary conditions.
\end{proof}

\begin{lem}\label{lem:bridge-overall}
	For each $\sigma \in \Os$ with $n$ particles, there exists a $\delta$-bridge system $(B,I)$ for $\sigma$, where $B$ contains bridge contours connecting each isolated contour in set $I$ (a subset of $\sigma$'s isolated contours) to $\sigma$'s boundary contour or to a crossing contour, such that:
	\begin{itemize}
		\item $|B| \leq |I| (1-\delta)/2\delta$, and
		\item The number of unbridged particles in  $\sigma$ is at most $\delta n$. %, where $n$ is the number of particles in~$\sigma$.
	\end{itemize}
\end{lem}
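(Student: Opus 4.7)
The plan is to reduce this lemma to the single-face result of Lemma~\ref{lem:bridge} by partitioning $\sigma$ into its monochromatic faces and applying that lemma to each face separately. Recall from Section~\ref{sec:dual} that the crossing contours of $\sigma$, together with its boundary contour $\P_{hex}$, carve $\sigma$ into faces $F_1, \ldots, F_k$, where each $F_i$ is a maximal simply connected set of particles whose boundary particles (in the sense that they are incident to an edge of $bd(F_i)$) share a common color. Each isolated contour of $\sigma$ lies strictly inside exactly one $F_i$, namely the face whose color matches the particles just outside that contour.

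First I would apply Lemma~\ref{lem:bridge} to each face $F_i$ independently to obtain a face-level $\delta$-bridge system $(B_i, I_i)$ satisfying $|B_i| \leq |I_i|(1-\delta)/(2\delta)$ and leaving at most $\delta |F_i|$ unbridged particles in $F_i$. Since the $F_i$ are pairwise disjoint and each $(B_i, I_i)$ lives strictly inside $F_i$ (the bridges and bridged contours produced in the proof of Lemma~\ref{lem:bridge} never leave the face they are built in), the unions $B = \bigcup_i B_i$ and $I = \bigcup_i I_i$ are well-defined without overlap or interference.

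Next I would verify the two conditions. Summing gives
\[
|B| = \sum_i |B_i| \leq \sum_i \frac{1-\delta}{2\delta}\,|I_i| = \frac{1-\delta}{2\delta}\,|I|,
\]
as required. For the unbridged-particle count, the crucial observation is that the face-level notion of ``bridged'' matches the one demanded here: a particle $P \in F_i$ that is bridged within $F_i$ has a monochromatic path to $bd(F_i)$ or to a bridged isolated contour in $I_i$, and $bd(F_i)$ is composed entirely of arcs of $\P_{hex}$ and of crossing contours of $\sigma$. Therefore any particle bridged at the face level is bridged in the sense of the lemma, and since the faces partition the particles of $\sigma$,
\[
\#\{\text{unbridged particles in } \sigma\} \;\leq\; \sum_i \delta|F_i| \;=\; \delta n.
\]

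The main thing to be careful about is the compatibility of the two notions of bridging, which is handled by the face-boundary observation above; beyond that, the argument is essentially bookkeeping enabled by the fact that the faces and their interior isolated contours are disjoint. No new combinatorial input beyond Lemma~\ref{lem:bridge} and the face decomposition should be needed.
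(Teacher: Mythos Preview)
Your proposal is correct and follows essentially the same approach as the paper's proof: partition $\sigma$ into faces via its crossing contours, apply Lemma~\ref{lem:bridge} to each face, and take the union. The paper's proof is a one-line sketch of exactly this, so your additional verification of the two conditions and the compatibility of the bridging notions just fills in details the paper leaves implicit.
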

\begin{proof}
	The crossing contours of $\sigma$ partition $\sigma$ into faces. Construct a $\delta$-bridge system for each of these faces and take their union. %satisfies the conditions of the lemma.
\end{proof}

\noindent We now connect the notions of $\delta$-bridges and configurations that are $(\beta,\delta)$-separated.

%We briefly discuss what it means for a particle system to not be $(\beta, \delta)$-separated in terms of $\delta$-bridge systems and heterochromatic contours.

\begin{lem}\label{lem:not-compressed}
	Let $\sigma \in \Os \setminus \mS_{\beta, \delta}$ and let $(B,I)$ be the $\delta$-bridge system  for $\sigma$ constructed in Lemma~\ref{lem:bridge-overall}. Let $x$ be the total length of crossing contours in $\sigma$ and let $y$ be the total length of bridged isolated contours in $I$. Then  $ x + y > \beta \sqrt{n}$.
\end{lem}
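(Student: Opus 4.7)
My plan is to argue by contrapositive: assuming $x + y \leq \beta\sqrt{n}$, I would construct a subset $R$ of particles witnessing that $\sigma$ is $(\beta,\delta)$-separated.

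The first step is to delete from $\sigma$ every edge that is dual (in $\Ghex$) to an edge of a crossing contour or of a bridged isolated contour in $I$, and then consider the connected components of the resulting graph. Since every deleted edge is heterogeneous, each component has a monochromatic boundary and so carries a well-defined color in $\{c_1, c_2\}$; I would take $R$ to be the union of all $c_1$-labeled components.

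Condition~1 of Definition~\ref{defn:sep} would follow almost automatically: an edge of $\sigma$ lies in $bd_{int}(R)$ precisely when it joins components of different colors, which happens precisely for the deleted edges, so $|bd_{int}(R)| \leq x + y \leq \beta\sqrt{n}$. For Conditions~2 and 3, the key structural claim is that any $c_2$ particle lying inside a $c_1$-component $C$ must sit within an unbridged isolated contour nested inside the face containing $C$ — otherwise a same-color path from it would cross a deleted edge, contradicting $C$'s connectivity — and hence is unbridged in its face. Applying Lemma~\ref{lem:bridge} recursively down the tree of nested faces inside $C$ would bound the number of wrong-color particles per component, and the $c_2$-components would be treated symmetrically for Condition~3.

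The hard part will be the density step. Lemma~\ref{lem:bridge} only gives $\delta|F|$ unbridged per face $F$, not $\delta|C|$ wrong-color per component $C$, and when a face is carved by bridged isolated contours into small pieces the per-component density could a priori exceed $\delta$. I would address this by invoking Invariant~(1) of Lemma~\ref{lem:bridge} — which bounds $|B|$ in terms of $|I|$ — so that the bridged isolated contours cannot be too numerous or too short relative to the face volume, combined with an induction on the bridge hierarchy to show each $c_1$-component inherits a fair share of its face's $\delta|F|$ slack. I would also exploit the symmetry noted after Definition~\ref{defn:sep} between the roles of $c_1$ in $R$ and $c_2$ in $\overline{R}$, swapping the two if needed to ensure $|R|$ and $|\overline{R}|$ are both $\Omega(n)$.
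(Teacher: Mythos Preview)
Your construction of $R$ --- deleting the edges dual to crossing and bridged isolated contours and taking the $c_1$-boundary components --- is exactly the paper's set $\{P : i + b \equiv 1 \pmod 2\}$, where $i$ is the color index of $P$'s outer face and $b$ is the number of bridged contours surrounding $P$; and your verification of Condition~1 matches (the paper asserts $|bd_{int}(R)| = x+y$, but your $\leq$ already suffices). So the overall strategy coincides with the paper's.

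Where you diverge is in the density step, which the paper dispatches with ``one can show.'' Your key structural observation is right: every wrong-color particle in a component is unbridged, so there are at most $\delta n$ wrong-color particles in total. But your proposed tools for splitting this into per-$R$ and per-$\overline{R}$ density bounds are off. Invariant~(1) of Lemma~\ref{lem:bridge} controls \emph{edge lengths} of bridges relative to contours and says nothing about particle counts, so it cannot yield volume or density estimates; and ``swapping the two if needed'' merely exchanges $R$ with $\overline{R}$ without altering either size. There is also no ``tree of nested faces inside $C$'' to recurse down --- by construction a component $C$ contains no bridged contours at all. The recursion you want is over the bridged contours that \emph{bound} the components, and the relevant invariant is~(2) (which bounds the unbridged count inside each $F_{\mathcal{I}}$ by $\delta|F_{\mathcal{I}}|$), not~(1).
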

\begin{proof}
	Let $\mathcal{F}$ be the set of outermost faces of $\sigma$, that is, those faces of $\sigma$ that contain a particle on $\sigma$'s perimeter.
	For each $F \in \mathcal{F}$ of color $c_i$, if particle $P \in F$ is surrounded by $b$ bridged isolated contours then put $P$ in set $R$ if and only if $i + b \equiv 1 (\text{mod } 2)$.
	Because of how we have carefully defined $R$, inspection shows  $bd_{int}(R) = x + y$. Using the properties of $\delta$-bridge system $(B,I)$, one can show the density of particles of color $c_1$ is at least $1-\delta$ in $R$ and at most $\delta$ outside of $R$. If it were true that $x + y \leq \beta \sqrt{n}$, then $\sigma$ would be $(\beta,\delta)$-separated, a contradiction as $\sigma \notin \mS_{\beta, \delta}$. Thus, it must hold that   $x + y > \beta \sqrt{n}$.
\end{proof}

\subsection{Information Theoretic Argument for Separation}

To show the set $\Os \setminus \mS_{\beta, \delta}$ of configurations with boundary contour $\mathcal{P}$ that are not $(\beta, \delta)$-separated has exponentially small weight under distribution $\pis$,
we will define a map $f = f_3 \circ f_2 \circ f_1$ from this set into $\Os$ and examine how this map changes weights of configurations.
If the number of particles of one color is less than or equal to $\delta n$, then all configurations in $\Os$ are $(\beta, \delta)$-separated with $R = \emptyset$ or $\overline{R} = \emptyset$, so we assume each color class has more than $\delta n$ particles.

For $\sigma \in \Os \setminus \mS_{\beta, \delta}$, let $(B,I)$ be the $\delta$-bridge system constructed for $\sigma$ according to Lemma~\ref{lem:bridge-overall}. Let $f_1(\sigma)$ be the (unique) particle configuration that has the same boundary contour $\mathcal{P}$ as $\sigma$ and particle $P$ that has color $c_i$ in $\sigma$ and is surrounded by $b$ bridged isolated contours in $I$ is given color $c_{(i + b) (\text{mod } 2)}$ in $f_1(\sigma)$.
% We note that $f_1 (\sigma)$ is likely not in $\Os$ because it may not have $n/2$ particles of each color.
We let $Im(f_1(\Os\setminus \mS_{\beta, \delta}))$ be the set of configurations that $f_1$ maps to.

We define $f_2$ with domain $Im(f_1(\Os\setminus \mS_{\beta, \delta}))$ to complement all faces of color $c_2$ that touch the boundary of the configuration (i.e. that include particles on $\mathcal{P}$).
 %We let $ Im((f_2 \circ f_1)(\Os\setminus \mS_{\beta, \delta}))$ denote the image of the composition $f_2 \circ f_1$.
The next lemmas explore the composition of these maps $f_1$ and $f_2$ as applied to configurations $\sigma \in \Os\setminus \mS_{\beta, \delta}$.

\begin{lem}\label{lem:f12}
	For any $\sigma \in \Os \setminus \mS_{\beta, \delta}$, $f_2(f_1(\sigma)) $ has boundary contour $\mathcal{P}$, all particles adjacent to $\mathcal{P}$ have color $c_1$, and there are at most $\delta n$ particles of color $c_2$.
\end{lem}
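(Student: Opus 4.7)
The first two claims are essentially bookkeeping. Neither $f_1$ nor $f_2$ moves any particle, so $f_2(f_1(\sigma))$ occupies the same locations as $\sigma$ and therefore has the same boundary contour $\mathcal{P}$. For the second claim, observe that every isolated contour lies strictly in the interior of the configuration, so no particle adjacent to $\mathcal{P}$ is enclosed by an isolated contour; such particles have $b=0$ in the definition of $f_1$ and retain their original color. Consequently the outer faces of $f_1(\sigma)$ coincide as regions and as colors with those of $\sigma$, and since $f_2$ complements exactly the outer faces of color $c_2$, every outer face of $f_2(f_1(\sigma))$---and in particular every particle adjacent to $\mathcal{P}$---has color $c_1$.

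For the third claim, it is convenient to reinterpret $f_1$ as the map that flips the color of every particle strictly inside each bridged isolated contour $\mathcal{I} \in I$. A short parity calculation then shows that in $f_2(f_1(\sigma))$ a particle $P$ sitting in an outer face of color $c_a$ has color $c_1$ when $u_P$, the number of unbridged isolated contours enclosing $P$, is even, and color $c_2$ when $u_P$ is odd; the complementation performed by $f_2$ absorbs the case $a=2$, so the final color depends only on the parity of $u_P$. It therefore suffices to show $|\{P : u_P \text{ odd}\}| \le \delta n$, which I plan to deduce from Lemma~\ref{lem:bridge-overall}'s bound on unbridged particles after proving the implication ``$u_P$ odd $\Rightarrow P$ is unbridged in $(B,I)$.''

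The crux, and the step I expect to require the most care, is the structural fact that in the bridge system produced by Lemma~\ref{lem:bridge}, no unbridged isolated contour encloses a bridged one. I would prove this by induction on the recursion depth at which a bridged contour $\mathcal{I}' \in I$ is introduced. At the moment of addition, $\mathcal{I}'$ is chosen outermost around some unbridged particle in the current recursion face $F_{\mathcal{J}}$ (with $\mathcal{J}$ the bridged parent contour, or the top-level face $F$ at depth $0$); any isolated contour sitting inside $F_{\mathcal{J}}$ that enclosed $\mathcal{I}'$ would also enclose that particle and violate outermost-ness, while any isolated contour enclosing $\mathcal{I}'$ from outside $F_{\mathcal{J}}$ must also enclose $\mathcal{J}$ itself (because two isolated contours are either disjoint or nested) and is therefore bridged by the inductive hypothesis. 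Granted this, whenever $P$ is bridged the innermost contour enclosing $P$ is either nonexistent (so $P$ lies in the outermost region of its face and $u_P = 0$) or some bridged $\mathcal{I}' \in I$; in the latter case every further enclosing contour also encloses $\mathcal{I}'$ and hence is bridged, again yielding $u_P = 0$. Thus $u_P$ odd forces $P$ unbridged, and the desired inequality then follows from Lemma~\ref{lem:bridge-overall}.
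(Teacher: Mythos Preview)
Your proposal is correct and follows the same approach as the paper: both argue that any particle of color $c_2$ in $f_2(f_1(\sigma))$ must have been unbridged in $(B,I)$, then invoke the $\delta n$ bound from Lemma~\ref{lem:bridge-overall}. The paper's proof is a single sentence asserting this implication without justification; you have supplied the missing details, in particular the structural fact that no unbridged isolated contour encloses a bridged one and the parity computation reducing the color of $P$ in $f_2(f_1(\sigma))$ to the parity of $u_P$.

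One small remark on your claim that ``the outer faces of $f_1(\sigma)$ coincide as regions and as colors with those of $\sigma$'': this is correct under the paper's definition of a face as a \emph{simply connected} region (so an outer face is bounded only by $\mathcal{P}$ and crossing contours, both of which are unchanged by $f_1$), but would fail under an annular interpretation. Since the paper uses the simply-connected convention, your argument goes through as written.
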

\begin{proof}
	The first two claims follow easily from the definitions of $f_1$ and $f_2$. To see that the last claim holds, we note that any particles of color $c_2$ in $f_2(f_1(\sigma))$ must have been unbridged by the bridge system $(B,I)$ for $\sigma$, and there are at most $\delta n$ such unbridged particles by the definition of a $\delta$-bridge system.
\end{proof}

%\begin{lem} \label{lem:preimage_contours}
%	Let $\tau \in Im((f_2 \circ f_1)(\Os\setminus \mS_{\beta, \delta}))$. The number of $\sigma\in \Os \setminus \mS_{\beta, \delta}$ with crossing heterochromatic contours of total length $a$ and bridged isolated contours of total length $b$ that have $f_2(f_1(\sigma)) = \tau$ is at most
%	$$4^{|\mathcal{P}| + a +  b + \frac{1-\delta}{2\delta} b }$$
%	\redcomment{put in $\alpha$-compression here? or later?}
%\end{lem}
%\begin{proof}
%	Trace out perimeter, crossing heterochromatic contours, bridges, and bridged isolated contours in a DFS way in dual hexagonal lattice.  Do perimeter first, then crossing heterochromatic contours, the bridges/bridged isolated contours. Start from the lowest leftmost particle. \redcomment{Think can improve to  $4^{|\mathcal{P}|} 2^{\frac{1-\delta}{2\delta} b}  (2\sqrt{2+\sqrt{2}})^{a + b}$.} \bluecomment{Make pictures, fill in details}
%\end{proof}

\begin{lem}\label{lem:preimage_contours}
	Let $\tau \in Im((f_2 \circ f_1)(\Os\setminus \mS_{\beta, \delta}))$. The number of $\sigma\in \Os \setminus \mS_{\beta, \delta}$ with crossing contours of total length $x$ and bridged isolated contours (bridged by a bridging system $(B,I)$ from Lemma~\ref{lem:bridge-overall}) of total length $y$ that have $f_2(f_1(\sigma)) = \tau$ is, for $p = |\mathcal{P}|$ the perimeter of any configuration in $\Os$, at most $3^p 4^{(x+y)\left(\frac{1+3\delta}{4\delta}\right)}$.
	%\begin{align*} 3^{p} 4^{a +  b + \frac{1-\delta}{4\delta} b }\leq 3^p %4^{(a+b)\left(\frac{1+3\delta}{4\delta}\right)}.\end{align*}
	%where $p = (|\mathcal{P}|-6)/2$ is the perimeter of any configuration in $\Os$.
	%\redcomment{This is an improved version of the previous lemma, will be more work to prove but gives much better bounds}
\end{lem}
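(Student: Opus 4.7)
The plan is to show that every preimage $\sigma$ of $\tau$ is uniquely reconstructible from $\tau$ together with a limited amount of side information, and then to count the possibilities for that information. I claim $\sigma$ is determined by (a) the crossing contours of $\sigma$ (which coincide with those of $f_1(\sigma)$, because $f_1$ only flips colors of particles encircled by bridged isolated contours and these contours are disjoint from any edge of a crossing contour, so crossing contours survive $f_1$ unchanged); (b) the bridging system $(B,I)$ from Lemma~\ref{lem:bridge-overall}; and (c) for each outer face of $f_1(\sigma)$, its color in $f_1(\sigma)$. From this data one recovers $f_1(\sigma)$ from $\tau$ by re-complementing each outer face labelled $c_2$ in (c) (undoing $f_2$), and then recovers $\sigma$ from $f_1(\sigma)$ by re-complementing each particle surrounded by an odd number of contours of $I$ (undoing $f_1$).

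The $3^p$ factor will encode all boundary-level information: the endpoints of crossing contours and of bridges lying on $\P_{hex}$, together with the colors of outer faces in $f_1(\sigma)$. Assigning to each of the $p$ edges of $\P$ one of a constant number of labels suffices; because outer faces are separated by the crossings specified in (a) and each outer face borders $\P$, the outer-face colors can be folded into this per-edge labelling. Summing up the combinatorics in the standard manner from bridging arguments~\cite{Miracle2011} gives at most $3^p$ boundary configurations.

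The shapes of the crossing contours, bridges, and bridged isolated contours will be encoded as self-avoiding walks in the $3$-regular dual lattice $\Ghex$. A SAW in $\Ghex$ of length $\ell$ starting from a fixed oriented edge has at most $2^\ell$ continuations. Summed over all such walks, whose total length is $x + y + |B|$, this contributes at most $2^{x+y+|B|}$. Lemma~\ref{lem:bridge-overall} gives $|B| \leq y(1-\delta)/(2\delta)$, so
\[
 x + y + |B| \;\leq\; x + y\cdot\frac{1+\delta}{2\delta} \;\leq\; (x+y)\cdot\frac{1+3\delta}{2\delta},
\]
where the last inequality holds because
$(x+y)\tfrac{1+3\delta}{2\delta} - x - y\tfrac{1+\delta}{2\delta} = x\cdot\tfrac{1+\delta}{2\delta} + y \geq 0$.
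Hence $2^{x+y+|B|} \leq 4^{(x+y)(1+3\delta)/(4\delta)}$, and multiplying by the boundary factor yields the claimed bound $3^p \cdot 4^{(x+y)(1+3\delta)/(4\delta)}$.

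The main obstacle will be making the SAW encoding unambiguous: bridges can terminate on bridged isolated contours or on other bridged structures, not only on $\P_{hex}$, so I need to fix a canonical traversal order --- for instance, a depth-first exploration rooted at each boundary attachment that processes each crossing contour, then each outgoing bridge together with the isolated contour it reaches, recursing into any nested bridged regions --- so that every SAW has a well-defined starting (oriented) edge and no edge of $(B,I)$ is encoded twice. Once this bookkeeping is fixed the $2^\ell$-per-SAW bound applies cleanly, and the remaining verification that the above reconstruction is well-defined on every $\tau$ in the image of $f_2 \circ f_1$ is routine.
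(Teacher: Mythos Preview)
Your overall strategy matches the paper's: encode each preimage $\sigma$ by boundary markings plus a depth-first traversal of the crossing contours, bridges, and bridged isolated contours. The reconstruction claim in (a)--(c) is correct, and the $3^p$ boundary factor is essentially the paper's (the outer-face colours alternate along $\P$ at each crossing-contour endpoint, so folding them in costs only one global bit, which is harmless).

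The real gap is the count $2^{x+y+|B|}$. Encoding each walk as a SAW with $2^\ell$ continuations presupposes that its starting oriented edge is already known. For crossing contours and for bridges attached directly to $\P$ this is handled by the $3^p$ marking, but bridges in the construction of Lemma~\ref{lem:bridge-overall} also attach to crossing contours and (recursively) to bridged isolated contours. In your DFS, when you are tracing a crossing or isolated contour you must additionally record, at each vertex, whether a bridge branches off there; otherwise the starting edges of those bridges --- and hence of the isolated contours they reach --- are not determined. That is one extra bit per contour vertex, so the correct count is $4^{x}\cdot 2^{|B|}\cdot 4^{y}=2^{\,2x+2y+|B|}$, exactly as the paper obtains, not $2^{\,x+y+|B|}$.

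Fortunately the slack you already built into the final inequality absorbs this: since $\delta<1$,
\[
2x+2y+|B|\ \le\ 2x+y\cdot\frac{1+3\delta}{2\delta}\ \le\ (x+y)\cdot\frac{1+3\delta}{2\delta},
\]
so $2^{\,2x+2y+|B|}\le 4^{(x+y)(1+3\delta)/(4\delta)}$ and the stated bound survives. But as written the intermediate claim is an undercount and the argument does not go through; you need to add the branch-or-not bit along the crossing and isolated contours.
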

\begin{proof}
	Any configuration $\sigma \in \Os \setminus \mS_{\beta, \delta}$ has boundary $\P$ or length $p$ and boundary contour $\mathcal{P}_{hex}$ of length $2p + 6$. One can verify from first principles that $\mathcal{P}_{hex}$ %, which is a closed self-avoiding contour in $\Ghex$ of length $2p + 6$,
	makes $p$ left turns and $p + 6$ right turns when traversed clockwise. Any bridges or crossing contours that meet $\mathcal{P}$ do so at distinct left turns of $\mathcal{P}$. %'s left turns, and at most one of these can meet $\mathcal{P}$ at each left turn.
	 We can mark each left turn of $\mathcal{P}$ as the start of a bridge, the start of a crossing contour, or neither; the number of ways to do so is $3^p$.

	Next, we can trace out all crossing contours of $\sigma$, beginning at the starting points marked along~$\mathcal{P}$. In tracing these contours, which do not intersect, at each vertex in $\Ghex$ we make either a left turn or a right turn.  Additionally, each vertex along these contours can either be the beginning of a bridge in $B$, branching in the opposite direction from the contour, or not. Because $x$ is the total length of $\sigma$'s crossing contours, the number of valid ways to do this is at most $2^x \times 2^x = 4^x$.

	Finally, we trace out the bridges and isolated contours of each face of $\sigma$ in a depth-first way, beginning at the starting points marked along $\mathcal{P}$ and the crossing contours. Bridges as constructed in Lemma~\ref{lem:bridge-overall} always move in the vertical direction, so the direction of the next edge of a bridge, if it exists, is known; at each step we only need to know if the bridge continues or if a bridged isolated contour begins.
	When tracing out isolated contours, just like with heterogeneous crossing contours, there are four choices for the next step: the direction in which the contour continues (two choices) and whether or not a bridge branches off (two choices).  Isolated contours end when they reach an already-constructed bridge, and bridges end when they reach a crossing contour, an already-constructed isolated contour, or $\mathcal{P}$.
	The number of possibilities for this depth-first traversal of the bridges and isolated contours of $\sigma$ is at most $2^{|B|} 4^{|I|} \leq 2^{\frac{1-\delta}{2\delta} y} 4^{y}$.

	Altogether, any configuration $\sigma \in \Os \setminus \mS_{\beta, \delta}$ with crossing contours of total length $x$ and bridged isolated contours of total length $y$ that have $f_2(f_1(\sigma)) = \tau$ can be uniquely identified by marking~$\mathcal{P}$, tracing crossing contours, and tracing bridges and bridged isolated contours. The number of valid ways to do this is at most
	\begin{align*} 3^{p} 4^{x} 2^{\frac{1-\delta}{2\delta} y} 4^{y} = 3^{p} 4^{x +  y + \frac{1-\delta}{4\delta} y } \leq 3^p 4^{(x+y)\left(\frac{1+3\delta}{4\delta}\right)}. \end{align*}
	%This is an upper bound on the number of preimages of $\tau$ under the map $f_2 \circ f_1$ that have the correct values of $a$ and $b$.
	%Trace out perimeter, crossing heterochromatic contours, bridges, and bridged isolated contours in a DFS way in dual hexagonal lattice.  Do perimeter first, then crossing heterochromatic contours, the bridges/bridged isolated contours. Start from the lowest leftmost particle. \redcomment{Think can improve to  $4^{|\mathcal{P}|} 2^{\frac{1-\delta}{2\delta} b}  (2\sqrt{2+\sqrt{2}})^{a + b}$.} \bluecomment{Make pictures, fill in details}
	This is an upper bound on the number of preimages of $\tau$ under $f_2 \circ f_1$ with  correct $x$~and~$y$.
\end{proof}

Any $\tau \in Im((f_2 \circ f_1)(\Os \setminus \mS_{\beta, \delta})$ will not be in $\Os$ because it has too few particles of color $c_2$. We will define $f_3$ such that $f_3(\tau)$ is similar to $\tau$ and has the correct number of particles of each color, but we first need the following lemma.

% we now wish to reach a configuration that is similar to $\tau$ but has $n/2$ particles of each color - that is, a configuration that is in $\Os$.

\begin{lem}\label{lem:complement}\label{lem:preimage-nu}
	For a configuration $\tau\in Im((f_2 \circ f_1)(\Os \setminus \mS_{\beta, \delta}))$,
	it is possible to flip the colors of some particles such to yield a configuration $f_3(\tau)$ with the correct number of particles of each color such that at most $|\P|$ additional heterogeneous edges are introduced.  Furthermore, for any $\nu \in \Omega_\P$, there are at most $n$ different $\tau  \in  Im((f_2 \circ f_1)(\Os \setminus \mS_{\beta, \delta}))$ such that $f_3(\tau) = \nu$.
	% there exists a subset of particles such that swapping their colors produces a configuration with and increases the number of heterogeneous edges by at most
%	$  4 \alpha \sqrt{n}. $
\end{lem}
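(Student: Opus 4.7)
The plan is to mimic directly the strategy used in Lemma~\ref{lem:complement-comp}, since the situation is essentially the same: we are given a configuration with incorrect color counts and must re-balance the color classes while controlling the number of newly introduced heterogeneous edges. By Lemma~\ref{lem:f12}, any $\tau \in Im((f_2\circ f_1)(\Os\setminus \mS_{\beta,\delta}))$ has at most $\delta n$ particles of color $c_2$, and since $\delta < 1/2$ and we have assumed each color class in configurations of $\Os$ has more than $\delta n$ particles, $\tau$ has too few particles of color $c_2$. In particular, $\tau$ has more than $n/2$ particles of color $c_1$ and flipping all particles would reverse this imbalance.

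First, I would fix a canonical ordering on the occupied positions of $\Gtri$ inside $\P$: order them left to right by column, breaking ties top to bottom within each column. Then define $f_3(\tau)$ by flipping particle colors in this order (from $c_1$ to $c_2$ or $c_2$ to $c_1$ as needed) until the configuration has exactly $n/2$ particles of each color. By the intermediate value argument used in Lemma~\ref{lem:complement-comp}, such a stopping point exists, and the resulting configuration $f_3(\tau)$ lies in $\Omega_\P$ because $\tau$ and $f_3(\tau)$ occupy the same set of positions with the same boundary~$\P$.

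Next, I would bound the number of new heterogeneous edges introduced. Because flips happen according to the column-sweep ordering, all particles in fully flipped columns are complemented while the column where flipping stops is partially flipped; flips only affect particles in at most two adjacent columns of the configuration. Hence any newly heterogeneous edge must cross between two adjacent columns and involve one flipped and one unflipped endpoint; the number of such edges is at most $2h(\tau)$, where $h(\tau)$ is the vertical height of $\tau$. Since $2h(\tau) \leq p(\tau) = |\P|$, at most $|\P|$ new heterogeneous edges are introduced.

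Finally, for the preimage bound: given $\nu \in \Omega_\P$, any $\tau$ with $f_3(\tau)=\nu$ is obtained from $\nu$ by complementing the colors of exactly the first $k$ particles in the canonical ordering for some $k \in \{0,1,\dots,n-1\}$, because $f_3$ deterministically flips a contiguous prefix of the canonical order. This gives at most $n$ candidate preimages (most of which may not actually lie in $Im((f_2\circ f_1)(\Os\setminus\mS_{\beta,\delta}))$, but this is an upper bound). The main subtlety is simply verifying the columnwise bound $2h(\tau) \leq |\P|$ in our triangular-lattice setting, which is straightforward since the perimeter closed walk must traverse at least twice the vertical extent of the configuration.
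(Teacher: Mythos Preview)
Your proposal is correct and takes essentially the same approach as the paper: the paper's proof simply observes that $Im((f_2\circ f_1)(\Os\setminus\mS_{\beta,\delta})) \subseteq \Osc$ and invokes Lemma~\ref{lem:complement-comp} directly, which is exactly the column-sweep flipping argument you reproduce. One minor wording issue: you write that ``flips only affect particles in at most two adjacent columns,'' but of course the flips may span many columns; what you mean (and use) is that the \emph{interface} between flipped and unflipped particles lies in two adjacent columns, which is what yields the $2h(\tau)\leq |\P|$ bound.
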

\begin{proof}
	Note that any configuration in $Im((f_2 \circ f_1)(\Os \setminus \mS_{\beta, \delta}))$ has all particles on $\P$ of color $c_1$. Recall that in Section~\ref{sec:compression-large-gamma}, we called the set of all such configurations $\Osc$. Lemma~\ref{lem:complement-comp} for $\Osc$, a superset of $Im((f_2 \circ f_1)(\Os \setminus \mS_{\beta, \delta}))$, directly implies this result.
\end{proof}

Let $f = f_3 \circ f_2 \circ f_1$ be a map from $\Os \setminus \mS_{\beta, \delta}$ to $\Os$.
For $\sigma \in \Os \setminus \mS_{\beta, \delta}$, let $x$ be the total length of crossing heterogeneous contours in $\sigma$ and $y$ be the total length of all isolated contours in $\sigma$ that are bridged when constructing a $\delta$-bridge system according to the process of Lemma~\ref{lem:bridge}.

\begin{lem}\label{lem:gain}
	For $\sigma \in \Os \setminus \mS_{\beta, \delta}$ where $\P$ is $\alpha$-compressed,  $h(\sigma) - h(f(\sigma)) \geq (x + y) \left( 1 - \frac{2\alpha\sqrt{3} }{\beta}\right).$
\end{lem}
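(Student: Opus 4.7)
The plan is to analyze how the number of heterogeneous edges changes under each of the three maps $f_1$, $f_2$, $f_3$ in turn, telescope the differences, and then apply the $\alpha$-compression and non-separation hypotheses to convert a bound involving $|\mathcal{P}|$ into the stated factor $1 - 2\alpha\sqrt{3}/\beta$.

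I would first analyze $f_1$. For a particle $P$, let $b(P)$ denote the number of bridged isolated contours in $I$ that surround $P$; by definition $f_1$ flips the color of $P$ exactly when $b(P)$ is odd. Two adjacent particles $P,Q$ have $b(P) = b(Q)$ except when the $\Gtri$-edge $PQ$ is dual to an edge of some bridged isolated contour in $I$, in which case the $b$-values differ by exactly one. Hence the edges whose heterogeneous/homogeneous status is toggled by $f_1$ are precisely the $y$ edges dual to bridged isolated contours, and each such edge is heterogeneous in $\sigma$ (by definition of a heterogeneous contour) and homogeneous in $f_1(\sigma)$. Importantly, the bridges in $B$ do \emph{not} contribute: their two sides lie in the same face with the same value of $b$. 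Thus $h(\sigma) - h(f_1(\sigma)) = y$. A parallel argument for $f_2$: $f_2$ flips exactly the particles in outer $c_2$ faces of $f_1(\sigma)$, and the outer face structure of $f_1(\sigma)$ matches that of $\sigma$ since $f_1$ leaves the boundary particles unchanged. Every crossing contour separates a $c_1$ outer face from a $c_2$ outer face, so flipping the $c_2$ side makes all $x$ crossing-contour edges homogeneous, while no new heterogeneous edges arise along them. Hence $h(f_1(\sigma)) - h(f_2(f_1(\sigma))) \geq x$.

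For $f_3$, Lemma~\ref{lem:complement} gives that at most $|\mathcal{P}|$ new heterogeneous edges are introduced. Telescoping the three differences yields
\[
h(\sigma) - h(f(\sigma)) \;\geq\; x + y - |\mathcal{P}|.
\]
Since $\mathcal{P}$ is $\alpha$-compressed, Lemma~\ref{lem:pmin} gives $|\mathcal{P}| \leq \alpha \cdot p_{min}(n) \leq 2\alpha\sqrt{3}\sqrt{n}$, and since $\sigma \in \Os \setminus \mS_{\beta,\delta}$, Lemma~\ref{lem:not-compressed} gives $x + y > \beta\sqrt{n}$, so $\sqrt{n} < (x+y)/\beta$. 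Substituting yields $|\mathcal{P}| < 2\alpha\sqrt{3}(x+y)/\beta$, and therefore
\[
h(\sigma) - h(f(\sigma)) \;>\; (x+y)\left(1 - \tfrac{2\alpha\sqrt{3}}{\beta}\right),
\]
as required.

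The delicate step is the $f_1$ computation: it is essential that both sides of a bridge edge share the same $b$-parity, so that the bridges themselves introduce no new heterogeneous edges (only the bridged isolated contours contribute to the change in $h$). The analogous subtlety for $f_2$ is handled automatically because crossing contours, by construction, separate outer faces of opposite colors. Any additional toggling effects (e.g.\ unbridged isolated contours sitting inside an outer $c_2$ face also becoming homogeneous under $f_2$) only strengthen the inequality and can be safely discarded.
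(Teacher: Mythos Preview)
Your proposal is correct and follows essentially the same approach as the paper: analyze the change in $h$ under $f_1$, $f_2$, $f_3$ separately, telescope to get $h(\sigma)-h(f(\sigma))\ge x+y-|\mathcal{P}|$, and then combine $|\mathcal{P}|\le 2\alpha\sqrt{3}\sqrt{n}$ with $x+y>\beta\sqrt{n}$ from Lemma~\ref{lem:not-compressed}. Your write-up in fact supplies more justification than the paper does for the $f_1$ step (the observation that the two sides of a bridge edge share the same $b$-parity, so bridges contribute no new heterogeneous edges), and your cautious ``$\ge x$'' for $f_2$ is harmless---the paper asserts equality there, but either version suffices.
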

\begin{proof}
	Configuration $f_1(\sigma)$ has $y$ fewer heterogeneous edges than $\sigma$, and configuration $f_2(f_1(\sigma))$ has $x$ fewer heterogeneous edges than $f_1(\sigma)$.  When going from $f_2(f_1(\sigma))$ to $f(\sigma) = f_3(f_2(f_1(\sigma)))$, at most $2\alpha\sqrt{3} \sqrt{n}$ heterogeneous edges are added (Lemma~\ref{lem:complement}). Using Lemma~\ref{lem:not-compressed}, we conclude that
	\begin{equation*}
	h(\sigma) - h(f(\sigma)) \geq x + y - |\P| \geq x + y - 2\alpha\sqrt{3} \sqrt{n}
	\geq  x + y - 2 \alpha \sqrt{3}  \left( \frac{x + y}{\beta}\right)
	\geq (x + y) \left( 1 - \frac{2 \alpha\sqrt{3} }{\beta}\right).
	\end{equation*}
\end{proof}

We are now ready to prove our main result. Recall that for a fixed boundary $\P$, the probability distribution $\pi_\P$ is over colored particle configurations with this boundary where $\pi_\P(\sigma)$ is proportional to $\gamma^{-h(\sigma)}$.

\begin{thm}\label{thm:sep}
	Let $\P$ be the boundary of $n$ particles with $|\P| \leq \alpha p_{min}$. For any $\beta > 2\sqrt{3}\alpha$ and any $\delta < 1/2$, if $\gamma$ is large enough that \[3^{\frac{2\alpha\sqrt{3}}{\beta}}4^{ \frac{1+3\delta}{4\delta}} \gamma^{-1 + \frac{2\alpha\sqrt{3} }{\beta} } < 1\]
	%\[\gamma > \gamma^* =  3^{\frac{2\alpha\sqrt{3}}{\beta - 2\sqrt{3}\alpha}} 4^{\frac{\beta(1+3\delta)}{4\delta(\beta - 2\sqrt{3}\alpha)}}\]
	then for sufficiently large $n$ the probability that a configuration drawn from $\pi_\P$ is not $(\beta,\delta)$-separated is exponentially small:
	\[ \pi_{\P}(\Omega_\P\setminus \mS_{\beta,\delta})< \zeta^{\sqrt{n}}\]
	where $\zeta < 1$.
%	For any $\alpha > 1$, $\beta > 4\alpha$, and $\delta < 1/2$,
%	there exists $\gamma^*$ and $n_0$ (which depend on $\alpha$, $\beta$, and $\delta$) such that for all $\gamma > \gamma^*$ and $n > n_0$, for any $\alpha$-compressed boundary contour $\mathcal{P}$, the probability that a particle configuration drawn at random from $\pis$ is not $(\beta, \delta)$-separated is at most $\xi^{\sqrt{n}}$ for some constant $\xi < 1$ ($\xi$ depends on $\alpha$, $\beta$, and $\delta$).
\end{thm}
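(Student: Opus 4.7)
The plan is a Peierls-style argument using the map $f = f_3 \circ f_2 \circ f_1$ already defined above, together with the three ingredients established just before this theorem. Writing $\eta := 1 - 2\alpha\sqrt{3}/\beta$, which is strictly positive by the hypothesis $\beta > 2\sqrt{3}\alpha$, I will bound the total $\gamma^{-h(\sigma)}$-weight of non-separated configurations and divide by the partition function $Z_\P$.

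First I would group each $\sigma \in \Omega_\P \setminus \mS_{\beta,\delta}$ by the pair $(x,y)$, where $x$ is the total length of $\sigma$'s crossing contours and $y$ is the total length of the bridged isolated contours produced by the $\delta$-bridge system of Lemma~\ref{lem:bridge-overall}. Lemma~\ref{lem:not-compressed} gives $x+y > \beta\sqrt{n}$ for every such $\sigma$, and Lemma~\ref{lem:gain} gives $h(\sigma) \geq h(f(\sigma)) + (x+y)\eta$, hence $\gamma^{-h(\sigma)} \leq \gamma^{-h(f(\sigma))}\gamma^{-(x+y)\eta}$. Combining Lemma~\ref{lem:preimage_contours} (at most $3^{|\P|}\,4^{(x+y)(1+3\delta)/(4\delta)}$ preimages of a given $\tau$ under $f_2\circ f_1$ with these $x,y$) with Lemma~\ref{lem:preimage-nu} (each configuration in $\Omega_\P$ has at most $n$ preimages under $f_3$), any fixed $\tau \in \Omega_\P$ has at most $n\cdot 3^{|\P|}\cdot 4^{(x+y)(1+3\delta)/(4\delta)}$ preimages under $f$ with these values. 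Summing over preimages of each $\tau$, then over $\tau \in \Omega_\P$, and using that at most $s+1$ pairs $(x,y)$ satisfy $x+y = s$, I obtain
\[
\sum_{\sigma \in \Omega_\P \setminus \mS_{\beta,\delta}} \gamma^{-h(\sigma)} \;\leq\; Z_\P \cdot n \cdot 3^{|\P|} \sum_{s > \beta\sqrt{n}} (s+1)\left(4^{(1+3\delta)/(4\delta)}\,\gamma^{-\eta}\right)^s.
\]

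Dividing by $Z_\P$ and absorbing the $3^{|\P|}$ factor into the sum is the key step: by Lemma~\ref{lem:pmin} and the hypothesis $|\P| \leq \alpha\, p_{min}$ we have $|\P| \leq 2\alpha\sqrt{3}\sqrt{n}$, and since every surviving $s$ satisfies $s > \beta\sqrt{n}$ this yields $|\P| \leq (2\alpha\sqrt{3}/\beta)\,s$, so $3^{|\P|} \leq \bigl(3^{2\alpha\sqrt{3}/\beta}\bigr)^{s}$. Thus
\[
\pi_\P(\Omega_\P \setminus \mS_{\beta,\delta}) \;\leq\; n \sum_{s > \beta\sqrt{n}} (s+1)\, r^{s}, \qquad r := 3^{2\alpha\sqrt{3}/\beta}\,4^{(1+3\delta)/(4\delta)}\,\gamma^{-\eta},
\]
and the theorem's hypothesis is exactly $r < 1$. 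The tail is therefore dominated by its first term, bounded by $O(n^{3/2})\cdot r^{\beta\sqrt{n}}$, which for sufficiently large $n$ is at most $\zeta^{\sqrt{n}}$ for some $\zeta \in (r^{\beta},1)$, completing the proof.

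The main obstacle, and the only nontrivial step beyond mechanically invoking the preceding lemmas, is handling the $3^{|\P|}$ preimage factor: on its own it is exponential in $\sqrt{n}$ and would swamp the decay, so it has to be traded against the lower bound $s > \beta\sqrt{n}$ via the compression hypothesis. This trade is precisely what forces the exponent $3^{2\alpha\sqrt{3}/\beta}$ into the hypothesis on $\gamma$ and pins down why both $\alpha$-compression and $\beta > 2\alpha\sqrt{3}$ are needed.
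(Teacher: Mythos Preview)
Your proposal is correct and follows essentially the same approach as the paper's proof: group by $(x,y)$, invoke Lemmas~\ref{lem:not-compressed}, \ref{lem:preimage_contours}, \ref{lem:preimage-nu}, and \ref{lem:gain}, absorb the $3^{|\P|}$ factor into the sum using $|\P|\leq (2\alpha\sqrt{3}/\beta)s$, and conclude via the geometric-type tail with ratio $r$. The only cosmetic difference is that the paper bounds $\sum_{\sigma\in f^{-1}(\nu)}\pi_\P(\sigma)/\pi_\P(\nu)$ for each fixed $\nu$ and then sums over $\nu$, whereas you equivalently sum the unnormalized weights first to produce the factor $Z_\P$ and divide at the end.
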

\begin{proof}
	For any $\nu \in \Os$, we count the number of configurations in $\Os \setminus \mS_{\beta, \delta}$ such that ${f(\sigma) = \nu}$. By Lemma~\ref{lem:preimage_contours},
	the number of such preimages with crossing contours of total length $x$ and bridged isolated contours of total length $y$ is at most
	$ { n 3^p 4^{(x+y)(1+3\delta)/4\delta}}, $
	where $p = |\P|$.
As ${p < \alpha p_{min} < 2\alpha\sqrt{3}\sqrt{n}}$, by
 Lemma~\ref{lem:not-compressed}
${p < 2 \alpha \sqrt{3}(x+y)/\beta} $.
	We can rewrite the number of preimages in $f^{-1}(\nu)$ with given values of $x$ and $y$ as
	\begin{align*} n   3^p 4^{(x+y)\left(\frac{1+3\delta}{4\delta}\right)}
	< n 3^{2 \alpha\sqrt{3} \left(\frac{x+y}{\beta}\right)}4^{(x+y)\left(\frac{1+3\delta}{4\delta}\right)}
	= n \left( 3^{\frac{2\alpha\sqrt{3}}{\beta}}4^{ \frac{1+3\delta}{4\delta}}\right)^{x+y} . \end{align*}

	We now sum over all possible values of $x+y$. For each possible value of $x+y$, there are at most $x+y+1$ ways in which each of $x$ and $y$ could have contributed to this sum.   By Lemma~\ref{lem:not-compressed}, $x+y > \beta \sqrt{n}$, and because the edges counted in $x+y$ are a subset of all edges in the configuration, $x + y < 3n$.
	We conclude, for $z = x+y$,
	\begin{align*}|f^{-1}(\nu)|  \leq  n  \sum_{z = \lceil\beta\sqrt{n}\rceil}^{3n} (z + 1) \left( 3^{\frac{2\alpha\sqrt{3}}{\beta}}4^{ \frac{1+3\delta}{4\delta}}\right)^z .\end{align*}
	%\begin{align*}
	%\log_4(3) p + a + b + \frac{1-\delta}{4\delta} b
	%&\leq 2\sqrt{3}\log_4(3)\alpha\sqrt{n}+ a + b + \frac{1-\delta}{4\delta} b
	%\\ & \leq 2\sqrt{3}\log_4(3) \alpha \frac{a+b}{\beta} + \left(1 + \frac{1-\delta}{4\delta}\right) (a + b)
	%\\  & \leq (a+b) \left( \frac{2\sqrt{3}\log_4(3) \alpha }{\beta} + 1 + \frac{1-\delta}{4\delta} \right)
	%\end{align*}
	%Substituting $z = a + b$, we see that for $\nu \in \Os$,
	%$$|f^{-1}(\nu)| \leq n \sum_{z : \beta\sqrt{n}\leq z \leq 6n} 4^{z\left( \frac{2\sqrt{3}\log_4(3) \alpha }{\beta} + 1 + \frac{1-\delta}{4\delta} \right) }.$$
	Finally, we see that for any $\nu \in \Os$, using Lemma~\ref{lem:gain},
	\begin{align*}
	\frac{\sum_{\sigma \in f^{-1}(\nu)} \pis (\sigma)}{\pi_\P(\nu) }
	&= \sum_{\sigma \in f^{-1}(\nu)} \left(\frac{1}{\gamma}\right)^{h(\sigma) - h(f(\sigma))}
	\\&\leq n \sum_{z = \lceil\beta\sqrt{n}\rceil}^{3n} (z+1) \left(\frac{1}{\gamma}\right)^{z \left( 1 - \frac{2\alpha\sqrt{3} }{\beta}\right)}
	\\&\leq n \sum_{z = \lceil\beta\sqrt{n}\rceil}^{3n} (z+1) \left( 3^{\frac{2\alpha\sqrt{3}}{\beta}}4^{ \frac{1+3\delta}{4\delta}} \gamma^{-1 + \frac{2\alpha\sqrt{3} }{\beta} } \right)^z
	\end{align*}
	This sum is exponentially small whenever the number of particles $n$ is sufficiently large and the base of the exponent satisfies  $3^{\frac{2\alpha\sqrt{3}}{\beta}}4^{ \frac{1+3\delta}{4\delta}} \gamma^{-1 + \frac{2\alpha\sqrt{3} }{\beta} } < 1$.
	Whenever $\beta > 2\alpha\sqrt{3}$, $\delta < 1/2$, and
	$\gamma$ is large enough this is true, so we can find a constant $\zeta < 1$ such that for sufficiently large $n$, \[\frac{\sum_{\sigma \in f^{-1}(\nu)} \pis (\sigma)}{\pi_\P(\nu) } < \zeta^{\lceil \beta \sqrt{n}\rceil} < \zeta^{\sqrt{n}}.\]
	Because each $\sigma \in \Omega_\P \setminus \mS_{\beta,\delta}$ has some image $f(\sigma) \in \Omega_\P$, we use this fact to see that
	\[
	\pi_\P(\Omega_\P \setminus \mS_{\beta,\delta})	 =  \sum_{\sigma \in \Omega_\P \setminus \mS_{\beta,\delta}} \pis(\sigma) \leq \sum_{\nu \in \Omega_\P} \ \  \sum_{\sigma \in f^{-1}(\nu)} \pi_\P(\sigma) \leq \sum_{\nu \in \Omega_\P} \pi_\P(\nu) \zeta^{\sqrt{n}} = \zeta^{\sqrt{n}}.
	\]
	We conclude that when $n$ is sufficiently large, $\beta > 2\alpha\sqrt{3}$, $\delta < 1/2$, and
	$\gamma$ is large enough, the probability a particle configuration drawn from $\pi_\P$ is not $(\beta,\delta)$-separated is exponentially small.
	% This is true if $\beta > 4\alpha$, $\delta < 1/2$, and $\gamma > \gamma^* = 4^{\frac{4 \alpha \log_4(3)}{\beta - 4\alpha} + \frac{\beta + 3\beta\delta}{4\delta(\beta - 4\alpha)}}$.
\end{proof}

We now extend this result about the occurrence of separation when fixing an  $\alpha$-compressed boundary $\P$ to a statement about the occurrence of separation among all $\alpha$-compressed boundaries.
Let $\pi_\alpha$ be the probability distribution over all configurations that are $\alpha$-compressed obtained by restricting $\pi$ to this set, so that $\pi_\alpha(\sigma)$ is proportional to $(\lambda\gamma)^{-p(\sigma)} \gamma^{-h(\sigma)}$.
We obtain the following result.
\begin{thm}\label{thm:sep-alpha}
For any $\alpha > 1$, $\beta > 2\sqrt{3}\alpha$, and $\delta < 1/2$, if $\gamma$ is large enough that \[3^{\frac{2\alpha\sqrt{3}}{\beta}}4^{ \frac{1+3\delta}{4\delta}} \gamma^{-1 + \frac{2\alpha\sqrt{3} }{\beta} } < 1\] then for $n$ sufficiently large the probability that a configuration drawn from $\pi_\alpha$ is not $(\beta,\delta)$-separated is exponentially small:
\[ \pi_{\alpha}(\Omega_\alpha\setminus \mS_{\beta,\delta})< \zeta^{\sqrt{n}}\]
where $\zeta < 1$.
\end{thm}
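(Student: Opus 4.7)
The plan is to reduce Theorem~\ref{thm:sep-alpha} to Theorem~\ref{thm:sep} by conditioning on the boundary. Since every configuration $\sigma \in \Omega_\alpha$ has a unique boundary $\P$ with $|\P| \le \alpha p_{min}$, we can partition $\Omega_\alpha = \bigsqcup_{\P:\,|\P|\le \alpha p_{min}} \Omega_\P$. Configurations sharing a boundary $\P$ all have the same perimeter, so in their relative weights under $\pi_\alpha$ the factor $(\lambda\gamma)^{-p(\sigma)}$ is constant and cancels inside each block. This means $\pi_\alpha$ restricted to $\Omega_\P$ is exactly $\pi_\P$ up to normalization. Formally, writing $q_\P := \pi_\alpha(\Omega_\P)$, we have, for any $\sigma \in \Omega_\P$, $\pi_\alpha(\sigma) = q_\P \cdot \pi_\P(\sigma)$, and $\sum_\P q_\P = 1$.

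Next, I would apply Theorem~\ref{thm:sep} boundary-by-boundary. The hypotheses of Theorem~\ref{thm:sep-alpha} are precisely the hypotheses of Theorem~\ref{thm:sep} for any $\P$ with $|\P| \le \alpha p_{min}$, and the constant $\zeta < 1$ produced there depends only on $\alpha$, $\beta$, $\delta$, and $\gamma$, not on the particular $\P$. Therefore, for $n$ sufficiently large and uniformly over all such $\P$,
\[
\pi_\P(\Omega_\P \setminus \mS_{\beta,\delta}) \le \zeta^{\sqrt{n}}.
\]

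Combining these two ingredients by the law of total probability gives
\[
\pi_\alpha(\Omega_\alpha \setminus \mS_{\beta,\delta}) = \sum_{\P:\,|\P|\le \alpha p_{min}} q_\P \cdot \pi_\P(\Omega_\P \setminus \mS_{\beta,\delta}) \le \zeta^{\sqrt{n}} \sum_{\P} q_\P = \zeta^{\sqrt{n}},
\]
which is exactly the desired bound.

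There is essentially no obstacle here: the theorem is a straightforward averaging corollary of Theorem~\ref{thm:sep}. The only minor point to be careful about is verifying that the constant $\zeta$ in Theorem~\ref{thm:sep} is indeed uniform in $\P$ (it is, since its derivation only uses $|\P| \le \alpha p_{min} \le 2\sqrt{3}\alpha\sqrt{n}$ and the relation $x+y > \beta\sqrt{n}$, both of which hold for every $\P$ in the range considered), and that the threshold on $n$ required for Theorem~\ref{thm:sep} can be chosen independently of $\P$ among $\alpha$-compressed boundaries. Both are immediate from the proof of Theorem~\ref{thm:sep}.
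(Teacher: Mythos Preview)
Your proposal is correct and matches the paper's own proof essentially line for line: partition $\Omega_\alpha$ by boundary $\P$, observe that $\pi_\alpha$ conditioned on $\Omega_\P$ equals $\pi_\P$, apply Theorem~\ref{thm:sep} with a uniform $\zeta$, and average. If anything, your remark about the uniformity of $\zeta$ and of the threshold on $n$ is more explicit than what the paper writes.
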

\begin{proof}
This result follows from the previous theorem. Let $\zeta < 1$ be a constant such that for any $\P$ with $|\P| < \alpha p_{min}$, $\pi_\P(\Omega_\P\setminus \mS_{\beta,\delta}) < \zeta^{\sqrt{n}}$. We then see that
\begin{align*}
\pi_\alpha(\Omega_\alpha \setminus \mS_{\beta,\delta} ) = \sum_{\P: |\P| < \alpha p_{min}}  \pi_\alpha(\Omega_\P \setminus \mS_{\beta,\delta})
&= \sum_{\P: |\P| < \alpha p_{min}} \pi_\alpha(\Omega_\P) \pis(\Omega_\P \setminus \mS_{\beta,\delta} )
\\&\leq \sum_{\P: |\P| < \alpha p_{min}} \pi_\alpha(\Omega_\P) \zeta^{\sqrt{n}}
= \zeta^{\sqrt{n}}.
\end{align*}
\end{proof}

\begin{cor}\label{cor:comp+sep-lambdagamma}
For Markov chain $\M$ with parameters $\lambda$ and $\gamma$ satisfying
$\lambda > 1$, $\gamma > 4^{5/4} \sim 5.66 $, and $\lambda \gamma > 2(2+\sqrt{2})e^{\threecsep} \sim \lgsep$, there exist constants $\beta$ and $\delta$ such that for large enough $n$, $\M$ provably accomplishes $(\beta,\delta)$-separation at stationarity with high probability.
\end{cor}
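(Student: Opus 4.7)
The plan is to combine the compression guarantee of Corollary~\ref{cor:comp-sep-lambdagamma} with the conditional separation guarantee of Theorem~\ref{thm:sep-alpha}, and then glue them together by a union bound on the stationary distribution $\pi$.

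First, I would invoke Corollary~\ref{cor:comp-sep-lambdagamma}. Since we are assuming $\lambda > 1$, $\gamma > 4^{5/4}$ (so in particular $\gamma > \gsep$ because $4^{5/4} > \gsep$), and $\lambda\gamma > 2(2+\sqrt{2})e^{\threecsep}$, the corollary furnishes a constant $\alpha > 1$ such that, for sufficiently large $n$,
\[ \pi(\Omega \setminus \Omega_\alpha) \leq \zeta_1^{\sqrt{n}} \]
for some $\zeta_1 < 1$, where $\Omega_\alpha$ denotes the set of $\alpha$-compressed configurations. This $\alpha$ depends only on $\lambda$ and $\gamma$ and is now fixed.

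Next, with this $\alpha$ in hand, I would choose $\beta$ and $\delta$ so that the hypothesis of Theorem~\ref{thm:sep-alpha} holds. The required inequality is
\[ 3^{\frac{2\alpha\sqrt{3}}{\beta}}\, 4^{\frac{1+3\delta}{4\delta}}\, \gamma^{-1 + \frac{2\alpha\sqrt{3}}{\beta}} < 1. \]
The key observation, which is the heart of the argument, is that by letting $\beta \to \infty$ the factor $3^{2\alpha\sqrt{3}/\beta}$ tends to $1$ and the exponent on $\gamma$ tends to $-1$, while letting $\delta \uparrow 1/2$ makes $(1+3\delta)/(4\delta)$ tend to $5/4$. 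Hence the left-hand side tends to $4^{5/4}/\gamma$, which is strictly less than $1$ precisely when $\gamma > 4^{5/4}$. Our hypothesis $\gamma > 4^{5/4}$ therefore lets me pick $\delta < 1/2$ close enough to $1/2$ and $\beta > 2\alpha\sqrt{3}$ large enough that the inequality is satisfied. By Theorem~\ref{thm:sep-alpha}, this gives a constant $\zeta_2 < 1$ such that
\[ \pi_\alpha(\Omega_\alpha \setminus \mS_{\beta,\delta}) \leq \zeta_2^{\sqrt{n}} \]
for $n$ large enough.

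Finally, I would stitch the two bounds together. Writing $\pi(\Omega_\alpha \setminus \mS_{\beta,\delta}) = \pi(\Omega_\alpha)\, \pi_\alpha(\Omega_\alpha \setminus \mS_{\beta,\delta}) \leq \zeta_2^{\sqrt{n}}$, a trivial union bound gives
\[ \pi(\Omega \setminus \mS_{\beta,\delta}) \leq \pi(\Omega \setminus \Omega_\alpha) + \pi(\Omega_\alpha \setminus \mS_{\beta,\delta}) \leq \zeta_1^{\sqrt{n}} + \zeta_2^{\sqrt{n}} \leq \zeta^{\sqrt{n}} \]
for any constant $\zeta \in (\max(\zeta_1,\zeta_2), 1)$ and all sufficiently large $n$, proving the corollary. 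The one place that requires care, rather than being purely mechanical, is the limit argument that pins down the threshold $\gamma > 4^{5/4}$: I would want to verify that the choice of $\beta$ and $\delta$ can be made \emph{after} $\alpha$ is fixed by the compression step, which is fine since the inequality above only depends on $\alpha$ through the ratio $2\alpha\sqrt{3}/\beta$ that can be driven to zero by enlarging $\beta$.
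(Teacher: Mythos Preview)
Your proposal is correct and follows essentially the same approach as the paper: invoke Corollary~\ref{cor:comp-sep-lambdagamma} to obtain $\alpha$, then choose $\delta$ close to $1/2$ so that $\gamma > 4^{(1+3\delta)/4\delta}$ and $\beta$ large enough that $2\alpha\sqrt{3}/\beta$ is near zero, making the hypothesis of Theorem~\ref{thm:sep-alpha} hold, and finish with a union bound. One trivial slip: you write ``$4^{5/4} > \gsep$'', but in fact $\gsep = 4^{5/4}$, so the implication $\gamma > 4^{5/4} \Rightarrow \gamma > \gsep$ is immediate by equality of thresholds rather than strict inequality.
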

\begin{proof}
	We show there exist constants $\beta$ and $\delta$, depending on $\lambda$ and $\gamma$, such that a configuration drawn from the stationary distribution of $\M$ is $(\beta,\delta)$-separated with probability $1 - \zeta^{\sqrt{n}}$ for some constant $\zeta < 1$.

	Given $\lambda$ and $\gamma$ satisfying the conditions of the theorem, by	Corollary~\ref{cor:comp-sep-lambdagamma} there is a constant $\alpha>1$ and a $\zeta_1 < 1$ such that the stationary probability that the particles are not $\alpha$-compressed is at most $\zeta_1^{\sqrt{n}}$.  If the particles are $\alpha$-compressed, by
	Theorem~\ref{thm:sep-alpha} if $\beta$, $\delta$, and $\gamma$ satisfy
	\[3^{\frac{2\alpha\sqrt{3}}{\beta}}4^{ \frac{1+3\delta}{4\delta}} \gamma^{-1 + \frac{2\alpha\sqrt{3} }{\beta} } < 1\]
	then the probability that the particles are not $(\beta,\delta)$-separated is at most $\zeta_2^{\sqrt{n}}$ for a constant $\zeta_2 < 1$. For $\gamma > 4^{5/4}$, one can always find a $\delta < 1/2$ such that $\gamma > 4^{(1+3\delta)/4\delta}$. For the value $\alpha$ determined by $\lambda$ and $\gamma$ via Corollay~\ref{cor:comp-sep-lambdagamma} above, one can always find a  constant $\beta > 2\sqrt{3} \alpha$ such that the exponent $2\alpha\sqrt{3}/\beta$ is sufficiently close to zero that the above expression is less than one, as desired.  We conclude the probability that $(\beta,\delta)$-separation occurs at stationarity is at least \[\pi(\mS_{\beta,\delta}) \geq 1 - \zeta_1^{\sqrt{n}} - \zeta_2^{\sqrt{n}}  \geq 1 - \zeta^{\sqrt{n}}\]
	for some $\zeta < 1$ provided $n$ is sufficiently large.
	\end{proof}

\begin{cor}\label{cor:comp+sep-betadelta}
	For any $\beta > 2\sqrt{3}$ and $\delta< 1/2$, there are values of $\lambda$ and $\gamma$ such that for large enough $n$, $\M$ provably accomplishes $(\beta,\delta)$-separation with high probability.
	\end{cor}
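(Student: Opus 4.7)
The plan is to stitch together Corollary~\ref{cor:comp-sep-alpha} (which gives $\alpha$-compression at stationarity for any prescribed $\alpha > 1$, by choosing $\lambda$ and $\gamma$ large enough) with Theorem~\ref{thm:sep-alpha} (which gives $(\beta,\delta)$-separation conditional on $\alpha$-compression, provided the parameters $\alpha,\beta,\delta,\gamma$ satisfy a technical inequality). The only real choice to be made is how to select $\alpha$ given the external data $\beta$ and $\delta$, and then how to pick $\lambda$ and $\gamma$ to satisfy both sets of hypotheses simultaneously.

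First, since $\beta > 2\sqrt{3}$, the quantity $\beta/(2\sqrt{3})$ is strictly greater than $1$, so I can fix any constant $\alpha$ with
\[
1 < \alpha < \frac{\beta}{2\sqrt{3}},
\]
which in particular guarantees $\beta > 2\sqrt{3}\,\alpha$ and makes the exponent $-1 + 2\alpha\sqrt{3}/\beta$ in the hypothesis of Theorem~\ref{thm:sep-alpha} strictly negative. With $\alpha$, $\beta$, $\delta$ now fixed, the coefficient $3^{2\alpha\sqrt{3}/\beta} \cdot 4^{(1+3\delta)/(4\delta)}$ is a fixed positive constant, and the expression
\[
3^{\frac{2\alpha\sqrt{3}}{\beta}}\, 4^{\frac{1+3\delta}{4\delta}}\, \gamma^{-1 + \frac{2\alpha\sqrt{3}}{\beta}}
\]
tends to $0$ as $\gamma \to \infty$. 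Hence I can choose $\gamma$ large enough that this expression is less than $1$; by increasing $\gamma$ further if necessary, I can also guarantee $\gamma > \gsep$, so that the cluster-expansion prerequisites from Section~\ref{sec:compression-large-gamma} are available.

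With $\alpha$ and $\gamma$ now fixed, I invoke Corollary~\ref{cor:comp-sep-alpha}: it guarantees that by taking $\lambda$ sufficiently large (so that $\lambda^{\alpha - 1} \gamma^{\alpha - 2} > e^{3c(\alpha+1)} 2^\alpha (2+\sqrt{2})^\alpha$), the particle system drawn from $\pi$ is $\alpha$-compressed with probability at least $1 - \zeta_1^{\sqrt{n}}$ for some $\zeta_1 < 1$ and all sufficiently large $n$. Conditioned on $\alpha$-compression, Theorem~\ref{thm:sep-alpha} applies with our chosen $\beta$, $\delta$, and $\gamma$, yielding that a configuration drawn from $\pi_\alpha$ fails to be $(\beta,\delta)$-separated with probability at most $\zeta_2^{\sqrt{n}}$ for some $\zeta_2 < 1$. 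A union bound gives
\[
\pi(\Omega \setminus \mS_{\beta,\delta}) \leq \zeta_1^{\sqrt{n}} + \zeta_2^{\sqrt{n}} \leq \zeta^{\sqrt{n}}
\]
for some $\zeta < 1$ and all $n$ large enough, completing the proof.

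I do not expect any real obstacle here: Theorem~\ref{thm:sep-alpha} already does the hard work of converting a per-boundary separation bound into a statement over all $\alpha$-compressed configurations, so the task reduces to verifying that the region of parameter space satisfying both hypotheses is nonempty, which is immediate from the ordering in which $\alpha$, then $\gamma$, then $\lambda$ are selected.
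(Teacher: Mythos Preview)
Your proposal is correct and follows essentially the same approach as the paper: pick $\alpha$ strictly between $1$ and $\beta/(2\sqrt{3})$, use Theorem~\ref{thm:sep-alpha} to determine how large $\gamma$ must be, and use Corollary~\ref{cor:comp-sep-alpha} to determine a suitable $\lambda$, then combine via a union bound. If anything, your explicit ordering of the parameter choices ($\alpha$, then $\gamma$, then $\lambda$) makes the argument slightly cleaner than the paper's presentation.
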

\begin{proof}
	Because $\beta > 2\sqrt{3}$ and this inequality is strict, we can always find an $\alpha > 1$ such that $\beta > 2\alpha\sqrt{3}$. For this choice of $\alpha$, by Corollary~\ref{cor:comp-sep-alpha} there exists $\lambda$ and $\gamma$ such that $\M$ with these parameters achieves $\alpha$-compression at stationarity with probability $1-\zeta_1^{\sqrt{n}}$ for some $\zeta_1 < 1$. It also holds that if $\gamma$ is large enough then
	\[3^{\frac{2\alpha\sqrt{3}}{\beta}}4^{ \frac{1+3\delta}{4\delta}} \gamma^{-1 + \frac{2\alpha\sqrt{3} }{\beta} } < 1\]
	and by Theorem~\ref{thm:sep-alpha}, if the particles are $\alpha$-compressed then they are $(\beta,\delta)$-separated with probability $1 - \zeta_2^{\sqrt{n}}$.  We conclude that $(\beta,\delta)$-separation occurs in $\M$'s stationary distribution with probability at least $1 - \zeta_1^{\sqrt{n}} - \zeta_2^{\sqrt{n}}  \geq 1 - \zeta^{\sqrt{n}}$ for some $\zeta < 1$ provided $n$ is sufficiently large.
	\end{proof}
This concludes our proofs that $\M$ accomplishes separation.

\section{\texorpdfstring{Proof of Compression when $\gamma$ is close to one}{Proof of Compression when gamma is close to one}}
\label{sec:compression-small-gamma}

Having shown $\M$ provably achieves separation when $\lambda$ and $\gamma$ are large, we now begin to consider the case when $\gamma$ is close to one; we will ultimately show separation {\it does not} occur.

As above, where we showed compression occurs when $\gamma$ and the product $\lambda\gamma$ are large enough, here we show compression occurs when $\gamma$ is close to one and the product $\lambda (\gamma+1)$ is large enough. When $\lambda > \cbound$ and $\gamma = 1$, our algorithm is exactly the compression algorithm of~\cite{Cannon2016}, and so compression provably occurs. Here we extend that result to $\gamma$ in a neighborhood about one, showing compression happens whenever $\gamma \in (\gintlower, \gintupper)$ and  $\lambda(\gamma+1) > 2(2+\sqrt{2}) e^{3a}$ where $a = \aint$, a bound that nearly recovers the compression result when $\gamma = 1$. This will be a crucial step towards proving that for small enough $\gamma$ we do not see separation, as our techniques can only be applied to compressed configurations.

To prove compression for $\gamma$ near one, we again will use the cluster expansion.  We can't look at polymers that are contours separating regions of different colored particles, as we did in Section~\ref{sec:compression-large-gamma}, because that cluster expansion doesn't converge for $\gamma$ close to one.
Instead, we will use a different notion of polymers without a physical interpretation. We reach this new notion of a polymer by considering the {\it high-temperature expansion}, which is well-studied for the Ising model (see, e.g., ~\cite{Friedli2018}, Section 3.7.3) and which we introduce and modify as necessary for our setting here.

As before, let $\P$ be the boundary of a connected hole-free configuration $\sigma$ with $n$ total particles. Recall that $|\mathcal{P}|$ is the length of this walk $\P$  surrounding $\sigma$, so in particular edges that are traversed twice are counted twice in $|\P|$.  Let $\Omega_\P \subseteq \Omega$ be the set of valid all particle configurations in $\Omega$ with no holes and boundary $\P$ that have the correct number of particles of each color.
In this section we will also consider particle configurations with an arbitrary number of particles of each color; let $\overline{\Omega_\P}$ denote all hole-free particle configurations with boundary $\P$ and any number of particles of each color. Note that $\Omega_\P \subsetneq \overline{\Omega_\P}$.
 Recall $E_\P$ is all edges of $\Gtri$ with both endpoints on or inside $\P$ (all edges with both endpoints occupied in any $\sigma \in \overline{\Omega_\P}$).

We first describe particle configurations as an Ising model. For a particle configuration $\sigma \in \overline{\Omega_\P}$ and a vertex $i$ of $\Gtri$ that is occupied in $\sigma$, define $\sigma_i \in \{+1, -1\}$ to be $\sigma_i = +1$ if the particle at $i$ in $\sigma$ is of color $c_1$, and $\sigma_i = -1$ if the particle at $i$ in $\sigma$ is of color $c_2$. For $a(\sigma)$ the number of monochromatic edges of $\sigma$ and $h(\sigma)$ the number of heterogeneous edges of $\sigma$, we have $a(\sigma) + h(\sigma) = e(\sigma) = |E_\P|$, and we note that
\[ \sum_{(i,j) \in E_\P} \sigma_i \sigma_j = a(\sigma) - h(\sigma).
\]
In particular, this means
\begin{align*} h(\sigma) & = \frac{h(\sigma)}{2} + \left(\frac{e(\sigma)}{2}- \frac{a(\sigma)}{2}\right)\\& =\frac{e(\sigma)}{2} +\frac{h(\sigma) - a(\sigma)}{2} \\&= \frac{ |E_\P|}{2} - \frac{1}{2} \sum_{(i,j) \in E_\P} \sigma_i \sigma_j. \end{align*}
Using this, we can rewrite the total weight of configurations in $\overline{\Omega_\P}$ as
\[
w(\overline{\Omega_\P}) :=\sum_{\sigma \in \overline{\Omega_\P}} w(\sigma) =  \sum_{\sigma \in \overline{\Omega_\P}} \left(\lambda \gamma\right)^{-p(\sigma)} \gamma^{-h(\sigma)}
= \left(\lambda \gamma\right)^{-|\P|} \gamma^{-|E_\P|/2} \sum_{\sigma \in \overline{\Omega_\P}} \prod_{(i,j) \in E_\P} \gamma^{\sigma_i\sigma_j/2}.
\]
We set $\beta = \ln\sqrt{\gamma}$; for those familiar with statistical physics, this $\beta$ can be seen as playing the role of inverse temperature. We note that $\gamma$ being in a neighborhood about 1 implies $\beta$ is in a neighborhood about 0, and we can write
\[
w(\overline{\Omega_\P}) =
\left(\lambda \gamma\right)^{-|\P|} \gamma^{-|E_\P|/2} \sum_{\sigma \in \overline{\Omega_\P}} \prod_{(i,j) \in E_\P} e^{\beta \sigma_i\sigma_j}.
\]
We now use the high temperature expansion for the last term above. We say a subset $E \subseteq E_\P$ is {\it even} if $E$ contains an even number of edges incident on each particle. As we show in Appendix~\ref{app:ht}, for $\cosh$ the hyperbolic cosine and $\tanh$ the hyperbolic tangent,
\[
\sum_{\sigma \in \overline{\Omega_\P}} \prod_{(i,j) \in E_\P} e^{\beta \sigma_i\sigma_j} = \cosh(\beta)^{|E_\P|} 2^n \sum_{\substack{E \subseteq E_\mathcal{P}:\\even}} \left(\tanh \beta\right)^{|E|}.
\]
This sum is no longer over objects in one-to-one correspondence with particle configurations. Note that because $\beta = \ln\sqrt{\gamma}$,
\begin{align*}
\cosh \beta &= \frac{e^{2\beta} + 1}{2e^{\beta}} = \frac{\gamma+1}{2\sqrt{\gamma}}
\\\tanh\beta &=  \frac{e^{2\beta} - 1}{e^{2\beta} +1} = \frac{\gamma - 1}{\gamma+1}.
\end{align*}
We conclude that, using the above and $|E_\P| = 3n - 3 - |\P|$,
\begin{align}
w(\overline{\Omega_\P})
& = \left(\lambda \gamma\right)^{-|\P|} \gamma^{-|E_\P|/2} \cosh(\beta)^{|E_\P|} 2^n \sum_{\substack{E \subseteq E_\mathcal{P}:\\even}} \left(\tanh \beta\right)^{|E|}\nonumber
\\&= \left(\lambda \gamma\right)^{-|\P|} \left(\frac{\gamma+ 1}{2\gamma}\right)^{|E_\P|} 2^n \sum_{\substack{E \subseteq E_\mathcal{P}:\\even}} \left(\frac{\gamma-1}{\gamma+1}\right)^{|E|} \nonumber
\\& = \left(\frac{\lambda (\gamma+1)}{2}\right)^{-|\P|} \left(\frac{\gamma+1}{2\gamma}\right)^{3n-3} 2^n \sum_{\substack{E \subseteq E_\mathcal{P}:\\even}} \left(\frac{\gamma-1}{\gamma+1}\right)^{|E|}. \label{eqn:high-temp-rep}
\end{align}
We define the {\it high-temperature partition function} to be
\begin{align}\label{eqn:HT-e}
\Xi_\P^{HT} = \sum_{\substack{E\subseteq E_\P\\ even}} \left(\frac{\gamma-1}{\gamma+1}\right)^{|E|}
\end{align}
This is what we will write as a polymer model and analyze using the cluster expansion. For simplicity, we let $z = \frac{\gamma-1}{\gamma+1}$, and note that because we are interested in $\gamma$ in some neighborhood about 1 then we wish to analyze $z$ in some neighborhood of 0. In particular, we assume $|z| < 1$, true for all $\gamma >0$.

We note that any even subgraph $E \subseteq E_\P$ can be divided into its connected components.
Define a {\it high temperature polymer} in this model to be a subset of edges $C \subseteq E_\P$ that is both connected and even. Define the polymer weight of such a connected even edge set as  $w(C) = z^{|C|} $. We say two connected even edge sets (high temperature polymers) are {\it compatible} if $C_1 \cup C_2$ is not connected. Note this is not the same notion of compatibility that was used for loop polymers earlier; here, if $C_1$ and $C_2$ are disjoint but an edge from $C_1$ shares an endpoint with an edge from $C_2$, then $C_1$ and $C_2$ are considered incompatible.

There is a one-to-one correspondence between even $E \subseteq E_\P$ and pairwise compatible collections of connected even edge sets $C_1, C_2,...,C_m$, where $E$ is associated to its maximal connected components. Note that the weight given to $E$ in $\Xi_\P^{HT}$ is $z^{|E|}  = \prod_{i = 1}^{m} z^{|C_i|}$, where the $C_i$ are the maximal connected components of $C$.
In this way we can write $\Xi_\P^{HT}$ as a polymer partition function.  Let ${\mathcal{C}_\P}$ be a set whose elements are the connected even edges sets $C \subseteq E_\P$. We say $\mathcal{C}' \subseteq \mathcal{C}_\P$ is {\it compatible} if all even edge sets in $\mathcal{C}'$ are pairwise compatible. We can then write
\begin{align*}
\Xi_\P^{HT} = \sum_{\substack{\mathcal{C}' \subseteq \mathcal{C}_\P\\ compatible}} \prod_{C \in \overline{C}'} z^{|C|}.
\end{align*}
Now that $\Xi^{HT}_\P$ has been written as a polymer partition function, we can use the cluster expansion to analyze it, similar to our work on the loop polymer partition function above.

First, we will check that the cluster expansion converges by verifying the hypothesis of Theorem~\ref{thm:bdry-volume}, which implies the hypothesis of Theorem~\ref{thm:cluster-converge} also holds.
The following lemma will be necessary.
\begin{lem}\label{lem:num-even-edge-sets}
	For any vertex $v \in V(\Gtri)$, the number of connected even edge sets with $k$ edges that contain edge $e$ is at most $5^{k-1}$.
\end{lem}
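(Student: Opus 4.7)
The plan is to use an Eulerian traversal encoding. Since every vertex of a connected even edge set $C$ has even degree and $C$ is connected, classical Euler's theorem guarantees $C$ admits an Eulerian circuit that traverses each of its $k$ edges exactly once. I will use these circuits as encodings of the edge sets; since many circuits may encode the same set, counting circuits gives an upper bound on the number of distinct edge sets, which is all we need.

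Concretely, fix the distinguished edge $e = \{u,v\} \in E(\Gtri)$ and let $C$ be any connected even edge set of size $k$ containing $e$. Choose an Eulerian circuit of $C$ that starts at vertex $u$ and whose first edge is $e$, traversed from $u$ to $v$; such a circuit exists because $e \in C$ and $C$ is an Eulerian subgraph, so we can fix the first step and then extend using Euler's algorithm. I then encode $C$ by the ordered sequence of directed edges $(e_1, e_2, \ldots, e_k)$ produced by this circuit.

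To bound the number of possible such sequences, I observe that $e_1 = e$ is fixed and that for each $i \geq 2$, after traversing $e_1, \ldots, e_{i-1}$ we sit at some vertex $w \in V(\Gtri)$, and the next edge $e_i$ must be incident to $w$, lie in $C$, and be previously unused. Since $\Gtri$ is $6$-regular and at least one of the six edges at $w$ (the one on which we just arrived) has already been used, there are at most $5$ candidates for $e_i$ in $\Gtri$ itself, hence at most $5$ candidates in $C$. Therefore the number of encodings is at most $5^{k-1}$, and so the number of connected even edge sets of size $k$ containing $e$ is at most $5^{k-1}$.

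The argument is almost entirely a bookkeeping exercise; the only potential obstacle is verifying that an Eulerian circuit starting along the prescribed directed edge really does exist, but this is immediate from the Eulerian property of $C$ (one may, for instance, apply Hierholzer's algorithm starting at $u$ and forcing the first step along $e$). No delicate analysis of which of the $5$ candidate edges actually belong to $C$ is needed, since overcounting is harmless for an upper bound.
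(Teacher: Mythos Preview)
Your argument is correct and is essentially the same as the paper's: encode each connected even edge set by an Eulerian circuit starting along the fixed edge $e$, then use the $6$-regularity of $\Gtri$ together with the fact that the arriving edge is already used to bound each subsequent step by $5$ choices. The paper's version is just a terser statement of exactly this reasoning.
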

\begin{proof}
Any connected graph where every vertex has even degree has an Eulerian cycle. Such a cycle is a walk in $\Gtri$ starting from a given edge $e$ where there are at most five edges to choose from at each step (because an Eulerian walk visits an edge at most once and $\Gtri$ has degree six). Thus the number of such walks  with $k$ total edges is at most $5^{k-1}$.
\end{proof}

Now, we will use this lemma to verify the hypothesis of Theorem~\ref{thm:bdry-volume}. Let $\mathcal{C}$ be a set whose elements are all connected even edge sets in $\Gtri$; they are not restricted to lie within $\P$.
Note that for a given $C \in \mC$, if $C'$ is incompatible with $C$ then $C'$ must contain an edge that shares an endpoint with an edge of $C$. In fact, $C'$ must contain two distinct edges incident on the common endpoint with $C$ because $C'$ is even. As the number of vertices of $C$ is at most the number of edges of $C$, and picking any five edges incident on each vertex of $C$ to form $[C]$ guarantees $C'$ incompatible with $C$ must contain an edge of $C'$, we see that $|[C]| \leq 5|C|$.

\begin{lem}\label{lem:suff-cond-ht}
	For $z$ satisfying $|z|<1/80 = \zint$ and $a = \aint$, for any edge $e \in E_\P$,
	\begin{align*}
	\sum_{\substack{C \in \mathcal{C} \\ e \in C}} |z|^{|C|} e^{a|[C]|}
	\leq a
	\end{align*}
\end{lem}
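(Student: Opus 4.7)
The plan is to combine the edge-count bound $|[C]| \le 5|C|$ (which was derived just before the lemma from $|[C]| \le 5|V(C)|$ and $|V(C)| \le |C|$ for even graphs) with Lemma~\ref{lem:num-even-edge-sets}, after handling the smallest polymer size by hand. First I would substitute the inequality on $|[C]|$ to obtain
\[
\sum_{\substack{C \in \mathcal{C} \\ e \in C}} |z|^{|C|} e^{a|[C]|}
\;\le\; \sum_{\substack{C \in \mathcal{C} \\ e \in C}} \bigl(|z|\,e^{5a}\bigr)^{|C|}
\;=\; \sum_{k\ge 3} n_k\bigl(|z|\,e^{5a}\bigr)^k,
\]
where $n_k$ is the number of connected even edge sets containing $e$ with exactly $k$ edges. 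Summation starts at $k=3$ because every vertex has degree $\ge 2$, so a connected even subgraph has at least three edges.

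Next I would observe that the only connected even edge set with $k=3$ must be a triangle (three edges forming a 3-cycle on its three vertices), and in $\Gtri$ the edge $e$ lies in exactly the two triangular faces incident to it, so $n_3 = 2$. For every $k \ge 4$, I would invoke Lemma~\ref{lem:num-even-edge-sets} to get $n_k \le 5^{k-1}$, and then sum the resulting geometric tail:
\[
\sum_{k \ge 4} 5^{k-1}\bigl(|z|\,e^{5a}\bigr)^k \;=\; \frac{\bigl(5|z|\,e^{5a}\bigr)^4}{5\bigl(1-5|z|\,e^{5a}\bigr)},
\]
which converges whenever $5|z|\,e^{5a} < 1$. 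For $|z| < 1/80$ and $a = 10^{-5}$ this ratio is bounded by $\tfrac{1}{16}e^{5\cdot 10^{-5}} < 1/15$, so convergence is safe.

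All that remains is to verify the numerical inequality
\[
2|z|^3 e^{15a} \;+\; \frac{(5|z|e^{5a})^4}{5(1-5|z|e^{5a})} \;\le\; a.
\]
Plugging in $|z| < 1/80$ and $a = 10^{-5}$ gives approximately $3.9 \times 10^{-6} + 3.3 \times 10^{-6} \approx 7.2 \times 10^{-6} < 10^{-5} = a$, as needed.

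The main obstacle is that the generic bound $n_k \le 5^{k-1}$ applied uniformly from $k=3$ yields a total of order $5 \times 10^{-5}$, which \emph{exceeds} the target $a = 10^{-5}$. The success of the lemma therefore hinges on the tight count $n_3 = 2$ (as opposed to the loose bound $5^{2} = 25$), which cuts the $k=3$ contribution by more than an order of magnitude and allows the geometric tail from $k \ge 4$ to fit comfortably beneath $a$.
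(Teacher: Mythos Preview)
Your proof is correct and follows essentially the same approach as the paper: bound $|[C]|\le 5|C|$, group by $|C|=k$, treat the smallest sizes by an explicit count, and bound the tail using $n_k\le 5^{k-1}$ as a geometric series. The only difference is that the paper also computes $n_4=4$ and $n_5=10$ explicitly before switching to the generic bound at $k\ge 6$, yielding a slightly sharper numerical estimate; your coarser version (explicit only at $k=3$) already lands below $a$, so nothing is lost.
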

\begin{proof}
	Let $n_k$ be the number of connected even edge sets containing a fixed given edge. In general, $n_k \leq 5^{k-1}$ by the previous lemma, but we can explicitly calculate $n_k$ for small $k$ to improve our bounds.
Noting that any even edge set must have at least three edges, we see that $n_1 = n_2 = 0$, $n_3 = 2$, $n_4 = 4$, and $n_5 = 10$. It follows that
\begin{align*}
\sum_{\substack{C \in \mathcal{C} \\e \in C}} |z|^{|C|} e^{a|[C]|}
&\leq  \sum_{\substack{C \in \mathcal{C} \\e \in C}} |z|^{|C|} e^{5a|C|}
\\&\leq \sum_{k = 3}^{\infty} n_k |z|^{k} e^{5ak}
\\& \leq 2\left(|z|e^{5a}\right)^3 + 4\left(|z|e^{5a}\right)^4 + 10\left(|z|e^{5a}\right)^5 + \frac{1}{5} \cdot \left( \frac{ (5|z|e^{5a})^6}{1-5|z|e^{5a}} \right)
\end{align*}
The last inequality holds whenever $5|z| e^{5a} < 1$. We want this expression to be less than $a$.
We choose $a = \aint$ and calculate that for $|z| = \zint$ and $a = \aint$, the necessary condition is satisfied.  As the expression above decreases as $|z|$ approaches 0, the necessary condition is satisfied whenever $|z| < \zint$. We note that, indeed, $|z| < \zint$ implies $5|z|e^{5a} <1$, so our assumption earlier about the convergence of the geometric series was valid.
\end{proof}

We chose to only consider $|z| < \zint$ in this lemma because, as we will see in the next section, we are only able to show integration when $|z| < 0.01265 \ldots \sim \zint$.  We then chose the smallest possible~$a$ --- rounded up to a power of ten --- for which the lemma holds.  This lemma can in fact be shown to hold for a much larger range of $z$: when $|z| < 0.1$, it holds for $a = 0.02$.

Note $|z| < 1/80 = \zint$ occurs whenever $\gamma \in (\gintlower, \gintupper)$. %\redcomment{can get $|z| < 0.1$ by pulling out first three terms, gives $\gamma \in [0.8182,1.2222]$. Easy to pull out a few more terms too. Rewrite for this. }
We can now use Theorem~\ref{thm:bdry-volume} to rewrite the logarithm of the high-temperature partition function in terms of boundary and surface terms. Note that any connected even edge set that contains both an edge in $E_\P$ and an edge not in $E_\P$ must contain an edge with one endpoint in $\P$ and one endpoint outside $\P$.  Invoking counting results from \cite{Cannon2016}, the number of such edges is $2|\P| + 6$.

\begin{thm}\label{thm:bdry-volume-ht}
	Let $a = \aint$ and $\gamma \in (\gintlower, \gintupper)$.  There is a constant $\psi \in [-a,a]$ such that the high temperature polymer partition function satisfies
	\begin{align*}
	(3n-3) \psi - 3a|\P| -6a \leq \ln \Xi_\P^{HT} \leq (3n-3) \psi + 3a|\P| +6a.
	\end{align*}
\end{thm}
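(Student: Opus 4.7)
The plan is to invoke Theorem~\ref{thm:bdry-volume} directly with the high-temperature polymer model and then reshape its output using the identities $|E_\P| = 3n - 3 - |\P|$ and $|\partial E_\P| = 2|\P| + 6$. The polymer family $\mathcal{C}$ consists of connected even edge sets in $\Gtri$, with weights $w(C) = z^{|C|}$ and the incompatibility/closure $[C]$ made up of all edges sharing an endpoint with an edge of $C$ (so $|[C]| \leq 5|C|$). Lemma~\ref{lem:suff-cond-ht} already verifies Equation~\ref{eqn:suff-cond-edge} for this model with $c = a = \aint$ whenever $|z| < \zint$; since $\gamma \in (\gintlower, \gintupper)$ is precisely the range where $|z| = |(\gamma-1)/(\gamma+1)| < \zint$, the hypothesis is met. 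The set $\mathcal{C}$ is clearly invariant under translations and rotations of the triangular lattice, so Theorem~\ref{thm:bdry-volume} applies with $\Lambda = E_\P$ and $\partial \Lambda$ the set of edges with exactly one endpoint on $\P$; by the count cited from \cite{Cannon2016}, $|\partial \Lambda| = 2|\P| + 6$.

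Theorem~\ref{thm:bdry-volume} thus yields a constant $\psi \in [-a, a]$ (independent of $\P$ and $n$, depending only on $\gamma$) such that
\begin{equation*}
\psi\,|E_\P| - a\,(2|\P| + 6) \;\leq\; \ln \Xi_\P^{HT} \;\leq\; \psi\,|E_\P| + a\,(2|\P| + 6).
\end{equation*}
Substituting $|E_\P| = 3n - 3 - |\P|$ gives
\begin{equation*}
(3n-3)\psi - \psi|\P| - 2a|\P| - 6a \;\leq\; \ln \Xi_\P^{HT} \;\leq\; (3n-3)\psi - \psi|\P| + 2a|\P| + 6a.
\end{equation*}

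The remaining step is to absorb the $-\psi|\P|$ term into the surface budget, which requires only the sign of $\psi$ to determine how. If $\psi \geq 0$, then $-\psi|\P| \leq 0$ makes the upper bound immediate, while for the lower bound we use $\psi \leq a$ to estimate $-\psi|\P| \geq -a|\P|$, combining with $-2a|\P|$ to produce $-3a|\P|$. If $\psi < 0$, the two cases swap: the lower bound is immediate because $-\psi|\P| \geq 0$, and for the upper bound we use $\psi \geq -a$ so that $-\psi|\P| \leq a|\P|$, again producing $+3a|\P|$ after combining with $+2a|\P|$. In either case we obtain
\begin{equation*}
(3n-3)\psi - 3a|\P| - 6a \;\leq\; \ln \Xi_\P^{HT} \;\leq\; (3n-3)\psi + 3a|\P| + 6a,
\end{equation*}
as desired.

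There is no real obstacle beyond invoking Theorem~\ref{thm:bdry-volume}; the one point that deserves care is the exact value of $|\partial \Lambda|$, since we need the $6$-boundary-edge correction (from the non-closed nature of $\P$ as an edge-multiset enclosing $\sigma$) to correctly produce the $+6a$ summand, and the case split on the sign of $\psi$ is needed to convert the mixed $|\P|$ coefficient $|\,\pm \psi \pm 2a\,|$ into a clean $3a|\P|$. This mirrors the bookkeeping already done in Lemma~\ref{lem:xi-bound} for the loop polymer model, the only differences being the $+6$ term in $|\partial \Lambda|$ and the use of the equality $|E_\P| = 3n - 3 - |\P|$ rather than the interval bound on $|E^{int}_\P|$.
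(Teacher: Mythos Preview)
Your proof is correct and follows essentially the same route as the paper: invoke Theorem~\ref{thm:bdry-volume} via Lemma~\ref{lem:suff-cond-ht}, substitute $|E_\P| = 3n-3-|\P|$ and $|\partial\Lambda| = 2|\P|+6$, and absorb the $-\psi|\P|$ term using $|\psi|\leq a$. The paper condenses the last step to the single phrase ``using $\psi\in[-a,a]$ we get the desired conclusion,'' whereas you spell out the sign-of-$\psi$ case split explicitly; one small wording point is that $\partial\Lambda$ should be described as the edges with one endpoint on $\P$ and the other endpoint \emph{outside} $\P$ (not merely ``exactly one endpoint on $\P$''), but you use the correct count $2|\P|+6$ so this does not affect the argument.
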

\begin{proof}
	Recall $\mC$ is an infinite set of polymers that is closed under translation and rotation. For finite set $E_\P \subseteq E(\Gtri)$, $\Xi_\P^{HT}$ is a polymer model over polymers with edges from $E_\P$. Because the hypothesis of Theorem~\ref{thm:bdry-volume} is satisfied (Lemma~\ref{lem:suff-cond-ht}), for $\partial E_\P$ all edges with exactly one endpoint on $\P$, of which there are exactly $2|\P| + 6$, we see that
		\begin{align*}
	\psi|E_\P| - a (2|\P| +6) \leq \ln \Xi_\P^{HT} \leq \psi|E_\P| + a (2|\P| +6).
	\end{align*}
	Substituting $|E_\P| = 3n - 3 - |P|$ and using $\psi \in [-a,a]$ we get the desired conclusion.
\end{proof}

Equivalently,
\[
e^{(3n-3)\psi -3a|\P| -6a}
\leq
\Xi^{HT}_\P \leq e^{(3n-3) \psi + 3a|\P|+6a}.
\]

This means, in particular, that the difference in $\Xi^{HT}_\P$ between configurations with different boundary contours of different lengths can be bounded by an expression that is exponential in the lengths of these contours, with no terms corresponding to the number of particles or edges of the configurations. This is essential to our argument that compression occurs for $\gamma$ in a neighborhood about 1 (for $z$ in a neighborhood about zero).

\begin{cor}\label{cor:bdry-volume-ht}
	Let $a = \aint$ and $\gamma \in (\gintlower, \gintupper)$.  There is a constant $\psi \in [-a,a]$ such that
	\begin{align*}
%	&\left( 2^n e^{(3n-3)\psi}\left(\frac{\gamma+1}{2\gamma}\right)^{3n-3} e^{-6}\right) \left(\frac{2e^{-3a}}{\lambda(\gamma+1)}\right)^{|\P|}
%	\\& \leq
w(\overline{\Omega_\P}) &\geq 2^n e^{(3n-3)\psi}\left(\frac{\gamma+1}{2\gamma}\right)^{3n-3} e^{-6a} \left(\frac{2e^{-3a}}{\lambda(\gamma+1)}\right)^{|\P|}	\\ w(\overline{\Omega_\P}) &\leq  2^n e^{(3n-3)\psi}\left(\frac{\gamma+1}{2\gamma}\right)^{3n-3} e^{6a} \left(\frac{2e^{3a}}{\lambda(\gamma+1)}\right)^{|\P|}
	\end{align*}
	\end{cor}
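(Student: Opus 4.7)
The plan is to observe that this corollary is essentially a direct restatement of Theorem~\ref{thm:bdry-volume-ht} once we substitute the high-temperature representation of $w(\overline{\Omega_\P})$ derived in Equation~\ref{eqn:high-temp-rep}. So the work is almost entirely bookkeeping: there is no new combinatorial argument to make, only a careful rearrangement of the exponents.

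First I would recall that Equation~\ref{eqn:high-temp-rep} gives
\[
w(\overline{\Omega_\P}) = \left(\frac{\lambda(\gamma+1)}{2}\right)^{-|\P|} \left(\frac{\gamma+1}{2\gamma}\right)^{3n-3} 2^n \, \Xi_\P^{HT},
\]
where $\Xi_\P^{HT}$ is the high-temperature polymer partition function defined in Equation~\ref{eqn:HT-e}. Next, I would invoke Theorem~\ref{thm:bdry-volume-ht}, which (using the Kotecky-Preiss check of Lemma~\ref{lem:suff-cond-ht} and the general volume/surface split of Theorem~\ref{thm:bdry-volume}) produces a constant $\psi \in [-a,a]$ independent of $\P$ such that
\[
e^{(3n-3)\psi - 3a|\P| - 6a} \;\leq\; \Xi_\P^{HT} \;\leq\; e^{(3n-3)\psi + 3a|\P| + 6a}.
\]

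Substituting the upper bound on $\Xi_\P^{HT}$ into the expression for $w(\overline{\Omega_\P})$ gives
\[
w(\overline{\Omega_\P}) \;\leq\; 2^n e^{(3n-3)\psi}\left(\frac{\gamma+1}{2\gamma}\right)^{3n-3} e^{6a} \cdot \left(\frac{2}{\lambda(\gamma+1)}\right)^{|\P|} e^{3a|\P|},
\]
and the key manipulation is simply to absorb $e^{3a|\P|} = (e^{3a})^{|\P|}$ into the $|\P|$-th power, yielding the factor $\bigl(2e^{3a}/(\lambda(\gamma+1))\bigr)^{|\P|}$ claimed. The lower bound follows identically from the lower bound on $\Xi_\P^{HT}$, with $e^{-3a}$ in place of $e^{3a}$ and $e^{-6a}$ in place of $e^{6a}$.

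There really is no hard step here; the only thing to be careful about is that the hypotheses of Theorem~\ref{thm:bdry-volume-ht} require $\gamma \in (\gintlower,\gintupper)$ and $a = \aint$, which we already assume, and that the exponent decomposition $e^{\pm 3a|\P|}$ is bundled into the per-perimeter base rather than left as a global prefactor. The corollary is stated in this rearranged form precisely so that subsequent compression and integration arguments can treat $2e^{\pm 3a}/(\lambda(\gamma+1))$ as a single ``per-perimeter weight'' and all other factors as identical across configurations with the same $n$.
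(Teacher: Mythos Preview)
Your proposal is correct and follows exactly the approach the paper takes: the paper's own proof is a single sentence stating that the corollary follows from Theorem~\ref{thm:bdry-volume-ht} and Equation~\ref{eqn:high-temp-rep}, and you have simply written out that substitution in detail.
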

\begin{proof}
	This follows from Theorem~\ref{thm:bdry-volume-ht} and Equation~\ref{eqn:high-temp-rep} relating $w(\overline{\Omega_\P})$ to the high temperature expansion $\Xi^{HT}_\P$.
\end{proof}

\subsection{\texorpdfstring{Bounding the Partition Function of $\pi$}{Bounding the Partition Function of pi}}

We now use the lower bound of Theorem~\ref{thm:bdry-volume-ht} to get a lower bound on $Z$, the partition function of stationary distribution $\pi$.

\begin{lem}\label{lem:Z-ht} 	Let $a = \aint$ and $\gamma \in (\gintlower, \gintupper)$.  There is a constant $\psi \in [-a,a]$ such that the partition function $Z$ satisfies
	\begin{align*}
	Z \geq \frac{1}{n} e^{-6a} 2^n e^{(3n-3)\psi}\left(\frac{\gamma+1}{2\gamma}\right)^{3n-3}  \left(\frac{2e^{-3a}\left(\gintlowerf\right)}{\lambda(\gamma+1)}\right)^{p_{min}}
	\end{align*}
	\end{lem}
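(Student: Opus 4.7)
The plan is to mimic Lemma~\ref{lem:Zbound} in this new setting by restricting the sum defining $Z$ to a single minimum-perimeter boundary. Fix any $\P_{min}$ with $|\P_{min}| = p_{min}$; then trivially $Z \geq w(\Omega_{\P_{min}})$. Since Corollary~\ref{cor:bdry-volume-ht} only controls $w(\overline{\Omega_{\P_{min}}})$, the main work is to bound the ratio between $w(\Omega_{\P_{min}})$ and $w(\overline{\Omega_{\P_{min}}})$, that is, to show that insisting on the correct color counts $(n/2, n/2)$ costs only a polynomial-in-$n$ factor times $(\gintlowerf)^{p_{min}}$.

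To do so, I would reuse the map $f : \overline{\Omega_{\P_{min}}} \to \Omega_{\P_{min}}$ from Lemma~\ref{lem:complement-comp}: enumerate particles of $\sigma$ in the canonical left-to-right, top-to-bottom order and flip their colors one at a time until both classes have exactly $n/2$ particles. Let $S$ be the flipped set. Only edges in the cut $(S, \overline{S})$ change type (heterogeneous $\leftrightarrow$ monochromatic), and since $S$ is a prefix in the canonical ordering, this cut is concentrated in at most two adjacent columns, so its size is at most $2h(\sigma) \leq p(\sigma) = |\P_{min}|$. Writing $c$ for the number of edges newly heterogeneous after the flip and $d$ for those newly monochromatic, we have $c+d \leq |\P_{min}|$ and hence
\[
|h(f(\sigma)) - h(\sigma)| = |c - d| \leq c + d \leq |\P_{min}|.
\]
Because $f$ preserves $\P_{min}$, this two-sided bound gives
\[
w(f(\sigma)) \geq \min(\gamma, \gamma^{-1})^{|\P_{min}|}\, w(\sigma) \geq \left(\gintlowerf\right)^{|\P_{min}|} w(\sigma)
\]
for any $\gamma \in (\gintlower, \gintupper)$. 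The preimage bound from Lemma~\ref{lem:complement-comp} transfers verbatim: each $\tau \in \Omega_{\P_{min}}$ has at most $n$ preimages under $f$, one per prefix length in the canonical ordering.

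Combining the distortion and preimage bounds,
\[
w(\overline{\Omega_{\P_{min}}}) = \sum_{\sigma} w(\sigma) \leq \left(\tfrac{81}{79}\right)^{p_{min}} \sum_\sigma w(f(\sigma)) \leq n \left(\tfrac{81}{79}\right)^{p_{min}} w(\Omega_{\P_{min}}),
\]
so $w(\Omega_{\P_{min}}) \geq \tfrac{1}{n}(\gintlowerf)^{p_{min}} w(\overline{\Omega_{\P_{min}}})$; substituting the lower bound from Corollary~\ref{cor:bdry-volume-ht} and absorbing the $(\gintlowerf)^{p_{min}}$ factor into the perimeter-dependent base yields the claimed inequality. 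The one new subtlety relative to Lemma~\ref{lem:Zbound}/Lemma~\ref{lem:complement-comp} is that $\gamma$ may now lie on either side of $1$, so the one-sided estimate $h(f(\sigma)) \leq h(\sigma) + |\P|$ used before is no longer sufficient; the key observation is that the same column-prefix argument actually bounds the total number of toggled edges (not just those newly made heterogeneous), giving the symmetric bound on $h(f(\sigma)) - h(\sigma)$ needed to handle $\gamma < 1$.
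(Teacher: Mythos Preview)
Your proposal is correct and follows essentially the same approach as the paper: restrict $Z$ to $w(\Omega_{\P_{min}})$, relate this to $w(\overline{\Omega_{\P_{min}}})$ via the canonical prefix-flipping map (with at most $n$ preimages and at most $p_{min}$ toggled edges), and then apply Corollary~\ref{cor:bdry-volume-ht}. You have also correctly identified the one new wrinkle, namely that $\gamma$ may lie on either side of $1$ so a two-sided bound $|h(f(\sigma))-h(\sigma)|\le p_{min}$ is needed; the paper handles this by the same case split $\gamma>1$ versus $\gamma<1$ that your $\min(\gamma,\gamma^{-1})^{p_{min}}$ encodes.
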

\begin{proof}
	Our proof is similar to Lemma~\ref{lem:Zbound}. We use that $Z \geq w(\Omega_{\P_{min}})$. We note there is a map from $\overline{\Omega_{\P_{min}}}$ to $\Omega_{\P_{min}}$ that complements the color of some number of particles (in a fixed left-to-right, top-to-bottom order) until there are the correct number of particles of each color. Each $\sigma \in \Omega_{\P_{min}}$ is the image of at most $n$ configurations in $\overline{\Omega_{\P_{min}}}$.  This map changes the number of heterogeneous edges in a configuration by at  most $p_{min}$. If $\gamma$ is greater than one, the worst case occurs when $p_{min}$ heterogeneous edges are lost in the map, in which case the weight of a configuration decreases by $\gamma^{-p_{min}} \geq (\gintupper)^{-p_{min}} = (\gintlower)^{p_{min}}$. If $\gamma < 1$, the worst case is when $p_{min}$ heterogeneous edges are added in the map, and the weight of a configuration decreases by $\gamma^{p_{min}} \geq (\gintlower)^{p_{min}}$. We see that
	\begin{align*} Z = w(\Omega_{\P_{min}}) \geq \frac{1}{n} w(\overline{\Omega_{\P_{min}}}) (\gintlower)^{p_{min}}.
	\end{align*}
	Using the lower bound in Corollary~\ref{cor:bdry-volume-ht} for $\mathcal{P} = \P_{min}$, we get the claimed bound.
\end{proof}

\subsection{Achieving Compression}

We now show compression occurs whenever $\gamma \in (\gintlower, \gintupper)$ and $\lambda$ is large enough.  In particular, whenever $\lambda (\gamma+1) > 2(2+\sqrt{2})e^{3a}$ for $a = \aint$, there exists a constant $\alpha$ such that $\alpha$-compression occurs.

We now state and prove our most general condition that implies compression occurs, and then follow this theorem with corollaries explaining what this general condition means for various values of $\alpha$, $\lambda$, and $\gamma$.

\begin{thm} \label{thm:comp-int}
	Consider algorithm $\M$ when there are $n$ total particles of two different colors.  For $a = \aint$, when constants $\alpha>1$, $\lambda>1$, and $\gamma \in (\gintlower, \gintupper)$ satisfy
	\begin{align} \label{eqn:comp-int-hyp}
\frac{2(2+\sqrt{2})  e^{3a} }{\lambda (\gamma+1)} \left(\frac{\lambda (\gamma+1)}{2e^{-3a} \left(\gintlowerf\right)}\right) ^{1/\alpha}< 1
\end{align}
	when $n$ is sufficiently large then at stationarity for $\M$ with parameters $\lambda $ and $\gamma$, the probability that the configuration is not $\alpha$-compressed is exponentially small:
	\[ \pi(S_\alpha) < \zeta^{\sqrt{n}}
	\]
\end{thm}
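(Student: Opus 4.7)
The proof will follow the same Peierls-argument template used for Theorem~\ref{thm:comp-sep}, but with the high-temperature cluster expansion estimates (Corollary~\ref{cor:bdry-volume-ht} and Lemma~\ref{lem:Z-ht}) replacing the loop-based ones. The key simplification is that the numerator bound I need is for $w(\Omega_\P)$, not $w(\overline{\Omega_\P})$, but since $\Omega_\P \subseteq \overline{\Omega_\P}$, the upper bound from Corollary~\ref{cor:bdry-volume-ht} applies directly.

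First I would write
\[
\pi(S_\alpha) \;=\; \frac{w(S_\alpha)}{Z} \;=\; \frac{\sum_{k=\lceil \alpha p_{min}\rceil}^{p_{max}} \sum_{\mathcal{P}:|\mathcal{P}|=k} w(\Omega_\mathcal{P})}{Z},
\]
bound $w(\Omega_\mathcal{P}) \leq w(\overline{\Omega_\mathcal{P}})$ using the upper estimate from Corollary~\ref{cor:bdry-volume-ht}, and bound $Z$ below using Lemma~\ref{lem:Z-ht}. The common volume factors $2^n e^{(3n-3)\psi}((\gamma+1)/(2\gamma))^{3n-3}$ cancel from numerator and denominator --- this is precisely the payoff from having split the polymer partition function into a volume term and a surface term. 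What remains is a ratio involving only $e^{\pm 6a}$, powers of $2e^{\pm 3a}/(\lambda(\gamma+1))$, and the factor $(79/81)^{p_{min}}$ from Lemma~\ref{lem:Z-ht}, together with the combinatorial count $\nu^k$ of boundaries of length $k$ from Lemma~\ref{lem:nu}.

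Pick $\nu > 2+\sqrt{2}$ small enough that the strict inequality in the hypothesis Equation~\ref{eqn:comp-int-hyp} continues to hold with the constant $2+\sqrt{2}$ replaced by $\nu$; this is possible because Equation~\ref{eqn:comp-int-hyp} is strict. Substituting everything in, for $n$ sufficiently large the Peierls sum becomes
\[
\pi(S_\alpha) \;\leq\; n\, e^{12a}\sum_{k=\lceil \alpha p_{min}\rceil}^{p_{max}} \left(\frac{2\nu e^{3a}}{\lambda(\gamma+1)}\right)^{k} \left(\frac{\lambda(\gamma+1)}{2 e^{-3a}(\gintlowerf)}\right)^{p_{min}}.
\]
Using $p_{min} \leq k/\alpha$ (valid because the second parenthesized factor exceeds one, which follows from the hypothesis $\lambda(\gamma+1) > 2(2+\sqrt{2})e^{3a}$ and $\gintlowerf < 1$), we absorb the $p_{min}$ exponent into the $k$-summand, obtaining a geometric series in $k$ whose common ratio is exactly the left-hand side of Equation~\ref{eqn:comp-int-hyp} with $2+\sqrt{2}$ replaced by $\nu$, hence strictly less than one. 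Absorbing the polynomial prefactor $n$ (which is $O(k^2)$ for $k \geq p_{min}$) and summing the geometric tail starting at $k = \lceil \alpha p_{min}\rceil = \Theta(\sqrt{n})$ yields $\pi(S_\alpha) \leq \zeta^{\sqrt{n}}$ for some $\zeta < 1$.

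The main obstacle --- or at least the main place one must be careful --- is bookkeeping around the high-temperature representation: the Corollary~\ref{cor:bdry-volume-ht} bounds are for $w(\overline{\Omega_\mathcal{P}})$ (arbitrary color counts), whereas both $Z$ and the numerator $w(\Omega_\mathcal{P})$ are restricted to the correct color counts. For the numerator this restriction only helps (inclusion), and for the denominator Lemma~\ref{lem:Z-ht} already pays the price, giving up a factor of $n$ and $(\gintlowerf)^{p_{min}}$ to swap between $\overline{\Omega_{\mathcal{P}_{min}}}$ and $\Omega_{\mathcal{P}_{min}}$; this is exactly the $(\gintlowerf)^{p_{min}}$ that appears in the denominator term of the hypothesis. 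Once this accounting is done the argument is structurally identical to the one for Theorem~\ref{thm:comp-sep}, and the strict inequality in Equation~\ref{eqn:comp-int-hyp} provides the slack needed to absorb polynomial factors.
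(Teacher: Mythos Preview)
Your proposal is correct and follows essentially the same approach as the paper: bound $w(\Omega_\P)\le w(\overline{\Omega_\P})$ via Corollary~\ref{cor:bdry-volume-ht}, bound $Z$ below via Lemma~\ref{lem:Z-ht}, cancel the identical volume factors, count boundaries via Lemma~\ref{lem:nu} with $\nu$ chosen just above $2+\sqrt2$, and then use $p_{min}\le k/\alpha$ and $n\le k^2$ to collapse everything into a geometric series whose ratio is the left side of Equation~\ref{eqn:comp-int-hyp} with $\nu$ in place of $2+\sqrt2$. The only cosmetic difference is that the paper verifies the base $\lambda(\gamma+1)/(2e^{-3a}(79/81))>1$ directly from $\lambda>1$ and $\gamma>79/81$ rather than invoking the stronger (though implied) bound $\lambda(\gamma+1)>2(2+\sqrt2)e^{3a}$.
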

\begin{proof}
	This proof builds on the proof of compression in~\cite{Cannon2016} and is very similar to the proof of Theorem~\ref{thm:comp-sep}.
	Pick $\nu$ such that $\nu > 2+\sqrt{2}$ but also Equation~\ref{eqn:comp-int-hyp} is satisfied with $\nu$ replacing $2+\sqrt{2}$:
		\begin{align} \label{eqn:comp-int-hyp-nu}
		\frac{2 \nu e^{3a} }{\lambda (\gamma+1)} \left(\frac{\lambda (\gamma+1)}{2e^{-3a} \left(\gintlowerf\right)}\right) ^{1/\alpha}< 1
	\end{align}
	Picking such a $\nu$ is possible because Equation~\ref{eqn:comp-int-hyp} holds and is a strict inequality.
	We begin by expressing $\pi(S_\alpha)$ in terms of the contributions from configurations with given perimeters, where $S_\alpha$ is all particle configurations with perimeter larger than $\alpha p_{min}$.
	\[ \pi(S_\alpha)
	= \frac{w(S_\alpha)}{Z}
	= \frac{\sum_{k = \lceil \alpha p_{min} \rceil}^{p_{max}} \sum_{\mathcal{P}: |\P| = k} w(\Omega_\P)}{Z}.
	\]
	We can use Lemma~\ref{lem:Z-ht} to get a lower bound on the denominator and, noting that $w(\Omega_\P) < w(\overline{\Omega_\P})$, we use Corollary~\ref{cor:bdry-volume-ht} to get an upper bound on the numerator:
	\[
	\pi(S_\alpha) \leq \frac{\sum_{k = \lceil \alpha p_{min} \rceil}^{p_{max}}  \sum_{\mathcal{P}: |\P| = k}
	 2^n e^{(3n-3)\psi}\left(\frac{\gamma+1}{2\gamma}\right)^{3n-3} e^{6a} \left(\frac{2e^{3a}}{\lambda(\gamma+1)}\right)^{k}
}{\frac{1}{n} e^{-6a} 2^n e^{(3n-3)\psi}\left(\frac{\gamma+1}{2\gamma}\right)^{3n-3}  \left(\frac{2e^{-3a}\left(\gintlowerf\right)}{\lambda(\gamma+1)}\right)^{p_{min}}}.
	\]
	Importantly, all volume terms (those involving an exponent of $n$) cancel, and we are left with the expression
	\begin{align*}
	\pi(S_\alpha) &\leq \frac{\sum_{k = \lceil \alpha p_{min} \rceil}^{p_{max}}  \sum_{\mathcal{P}: |\P| = k}
		  e^{6a} \left(\frac{2e^{3a}}{\lambda(\gamma+1)}\right)^{k}
	}{\frac{1}{n} e^{-6a} \left(\frac{2e^{-3a}\left(\gintlowerf\right)}{\lambda(\gamma+1)}\right)^{p_{min}}}
\\&= \sum_{k = \lceil \alpha p_{min} \rceil}^{p_{max}}  n e^{12a} \sum_{\mathcal{P}: |\P| = k}  \left(\frac{2e^{3a}}{\lambda(\gamma+1)}\right)^{k} \left(\frac{\lambda(\gamma+1)}{2e^{-3a}\left(\gintlowerf\right)}\right)^{p_{min}}
	\end{align*}
	Using Lemma~\ref{lem:nu} we can upper bound the number of boundaries $\P$ of length $k$ by $\nu^k$, provided $n$ is sufficiently large.
	\begin{align*}
	\pi(S_\alpha) &\leq  \sum_{k = \lceil \alpha p_{min} \rceil}^{p_{max}} n e^{12a} \nu^k  \left(\frac{2e^{3a}}{\lambda(\gamma+1)}\right)^{k} \left(\frac{\lambda(\gamma+1)}{2e^{-3a}\left(\gintlowerf\right)}\right)^{p_{min}}
	\end{align*}
Next, we note that the values of $k$ considered in the sum satisfy $k \geq \alpha p_{min} > \sqrt{n}$. In particular, this means that $p_{min} \leq k/\alpha$ and $n \leq k^2$.
 One can check that $2e^{-3a} (\gintlower) < 1.96 < \lambda (\gamma+1)$ when $\lambda > 1$ and $\gamma > \gintlower$, the term above with exponent $p_{min}$ is greater than one and we see that
 \begin{align*}
 	\pi(S_\alpha) &\leq
 	\sum_{k = \lceil \alpha p_{min} \rceil}^{p_{max}}  k^2 e^{12a}  \nu^k \left(\frac{2e^{3a}}{\lambda(\gamma+1)}\right)^{k} \left(\frac{\lambda(\gamma+1)}{2e^{-3a}\left(\gintlowerf\right)}\right)^{k/\alpha}
 	\\&\leq
 	\sum_{k = \lceil \alpha p_{min} \rceil}^{p_{max}} k^2 e^{12a}  \left( \frac{2\nu e^{3a}}{ \lambda(\gamma+1)}  \left(\frac{\lambda(\gamma+1)}{2e^{-3a}\left(\gintlowerf\right)}\right)^{1/\alpha}\right)^k.
\end{align*}

	We chose $\nu$ such that the term in parentheses above is less than one (Equation~\ref{eqn:comp-int-hyp-nu}) so the above sum is convergent.
	For sufficiently large $n$ (i.e., sufficiently large $p_{min}$), this upper bound on $\pi(S_\alpha)$ is exponentially small. That is, for $n$ sufficiently large, there exists a constant $\zeta$ such that $\pi(S_\alpha) < \zeta^{\lceil\alpha p_{min}\rceil} < \zeta^{\sqrt{n}}$.  This proves the theorem.
\end{proof}

We note for any fixed values of $\alpha$ and $\gamma$, there exists a value of $\lambda$ large enough so that the hypothesis of the theorem is satisfied.  The larger $\alpha$ is, the smaller $\lambda$ and $\gamma$ can be and still have the theorem apply.

\begin{cor}\label{cor:comp-int-lambdagamma}
	Whenever $\lambda >1$ and $\gamma \in (\gintlower, \gintupper)$ such that for $a = \aint$, if $\lambda$ and $\gamma$ satisfy
	${\lambda(\gamma+1) > 2(2+\sqrt{2})e^{3a} \sim \lgint}$,
	there is an $\alpha$ such that for sufficiently large $n$, $\alpha$-compression occurs with high probability at stationarity.
	%	Whenever $\lambda \gamma > 2 (2+\sqrt{2}) e^{3c} \sim 8.18$,\redcomment{change if chance $c = \csep$} there exists $\alpha$ such that $\alpha$-compression occurs at stationarity with high probability.
\end{cor}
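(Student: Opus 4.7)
The plan is to deduce this corollary as a direct consequence of Theorem~\ref{thm:comp-int}, essentially by showing that the hypothesis $\lambda(\gamma+1) > 2(2+\sqrt{2})e^{3a}$ together with $\gamma \in (\gintlower, \gintupper)$ guarantees the existence of some constant $\alpha > 1$ satisfying Equation~\ref{eqn:comp-int-hyp}. The approach mirrors exactly what was done in Corollary~\ref{cor:comp-sep-lambdagamma}: solve the inequality in Equation~\ref{eqn:comp-int-hyp} for $\alpha$ and verify both sides behave as needed.

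First, I would rewrite Equation~\ref{eqn:comp-int-hyp} by taking logarithms. The condition becomes
\[
\ln\!\left(\frac{\lambda(\gamma+1)}{2(2+\sqrt{2})e^{3a}}\right) > \frac{1}{\alpha}\ln\!\left(\frac{\lambda(\gamma+1)}{2e^{-3a}\left(\gintlowerf\right)}\right),
\]
so the corollary holds whenever
\[
\alpha > \frac{\ln\!\left(\lambda(\gamma+1)/\left(2e^{-3a}\left(\gintlowerf\right)\right)\right)}{\ln\!\left(\lambda(\gamma+1)/\left(2(2+\sqrt{2})e^{3a}\right)\right)}.
\]

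Next I would verify that both logarithms appearing in this ratio are strictly positive under the hypotheses, so that some finite $\alpha > 1$ satisfying the inequality exists. The denominator is positive precisely because we assume $\lambda(\gamma+1) > 2(2+\sqrt{2})e^{3a}$, which is the key quantitative hypothesis. The numerator is positive because, as already observed in the proof of Theorem~\ref{thm:comp-int}, $2e^{-3a}(\gintlower) < 1.96$ while $\lambda(\gamma+1) > 2 \cdot (79/81) \cdot e^{-3a}$ is automatic from $\lambda > 1$ and $\gamma > \gintlower$ (indeed $\lambda(\gamma+1) > 1 + \gintlower > 2 \cdot (79/81) e^{-3a}$ for $a = \aint$). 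Thus the right-hand side of the displayed inequality is a positive finite number, and we may choose any $\alpha$ strictly greater than it; in particular we may take $\alpha > 1$.

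With this $\alpha$ in hand, all hypotheses of Theorem~\ref{thm:comp-int} are met, so that theorem yields $\pi(S_\alpha) < \zeta^{\sqrt{n}}$ for some $\zeta < 1$ and all sufficiently large $n$. Since $S_\alpha$ is exactly the set of configurations that fail to be $\alpha$-compressed, this is the desired conclusion. There is no real obstacle here beyond the algebraic bookkeeping: the only mildly subtle point is confirming the positivity of the numerator logarithm when $\gamma$ may be less than one, which is why I would explicitly invoke the numerical inequality $2e^{-3a}(\gintlower) < \lambda(\gamma+1)$ using $\lambda > 1$ and $\gamma > \gintlower$.
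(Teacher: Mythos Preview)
Your proposal is correct and follows essentially the same approach as the paper: take logarithms in Equation~\ref{eqn:comp-int-hyp}, solve for $\alpha$, and verify that both logarithms are positive under the stated hypotheses so that a suitable $\alpha>1$ exists. The paper's proof is virtually identical, including the explicit check that the numerator logarithm is positive using $\lambda>1$ and $\gamma>\gintlower$.
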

\begin{proof}
	Rearranging Equation~\ref{eqn:comp-int-hyp}, we see that for a given choice of $\lambda$ and $\gamma$, the hypothesis of Theorem~\ref{thm:comp-int} holds for any $\alpha$ satisfying
	\[ \ln{\left(\frac{\lambda(\gamma+1)}{2(2+\sqrt{2})e^{3a}}\right)} \alpha >
	\ln{\left(\frac{\lambda (\gamma+1)}{2e^{-3a}\left(\gintlowerf\right)}\right)}.\]
	When $\lambda(\gamma+1) > 2(2+\sqrt{2})e^{3a}$, the logarithm on the left hand side above is positive; because $\lambda > 1$ and $\gamma > \gintlower$, the logarithm on the right hand side is also positive. We see Equation~\ref{eqn:comp-int-hyp} holds for $\alpha$ satisfying
	\[\alpha >
	\frac{\ln{\left(\frac{\lambda (\gamma+1)}{2e^{-3a}(\gintlowerf)}\right)}}{\ln{\left(\frac{\lambda(\gamma+1)}{2(2+\sqrt{2})e^{3a}}\right)}}\]
	We can always find an $\alpha>1$ satisfying this equation. Thus, for any $\lambda$ and $\gamma$ satisfying $\lambda(\gamma+1) > 2(2+\sqrt{2})e^{3a}$, there is a constant $\alpha$ such that Equation~\ref{eqn:comp-sep-hyp} is satisfied and by Theorem~\ref{thm:comp-int} $\alpha$-compression occurs with high probability.
\end{proof}

\noindent We note that this proof nearly recovers our compression result from~\cite{Cannon2016} when $\gamma = 1$; that result says compression occurs whenever $\lambda > \cbound$, while our result occurs for $\lambda > (\cbound) \cdot e^{\threea}$, where $e^{\threea} \sim \ethreea$. This constant $e^{\threea}$ can be pushed arbitrarily close to one as we restrict the range of $\gamma$ for which our proofs apply to be smaller and smaller.

\begin{cor}\label{cor:comp-int-alpha}
	For any constants $\alpha > 1$ and $\gamma \in (\gintlower, \gintupper)$, there exists $\lambda$ such that $\M$, with parameters $\gamma$ and $\lambda$, exhibits $\alpha$-compression at stationarity. In particular, this occurs when $\lambda$ is large enough so that
	\begin{align*}
	\lambda^{\alpha - 1 } > \frac{2^{\alpha - 1} (2+\sqrt{2})^\alpha e^{3a(\alpha - 1)}}{ (\gamma+1)^{\alpha-1} \left(\gintlowerf\right)}
	\end{align*}
\end{cor}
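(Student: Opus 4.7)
The plan is to derive this corollary as a direct algebraic consequence of Theorem~\ref{thm:comp-int}: once I show that the stated lower bound on $\lambda^{\alpha-1}$ implies the hypothesis (Equation~\ref{eqn:comp-int-hyp}) of that theorem, the conclusion about $\alpha$-compression at stationarity follows immediately.

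The key step is to rewrite Equation~\ref{eqn:comp-int-hyp} in the form given in the corollary statement. Starting from
\[
\frac{2(2+\sqrt{2})e^{3a}}{\lambda(\gamma+1)}\left(\frac{\lambda(\gamma+1)}{2e^{-3a}(\gintlowerf)}\right)^{1/\alpha}<1,
\]
I would first multiply both sides by $\lambda(\gamma+1)$ (which is positive), then raise both sides to the $\alpha$-th power (which preserves the inequality since both sides remain positive). This yields
\[
\bigl(2(2+\sqrt{2})e^{3a}\bigr)^{\alpha}\cdot\frac{\lambda(\gamma+1)}{2e^{-3a}(\gintlowerf)}<\bigl(\lambda(\gamma+1)\bigr)^{\alpha}.
\]
Dividing both sides by $\lambda(\gamma+1)$, collecting powers of $2$, of $e^{3a}$, and of $(\gamma+1)$, and isolating $\lambda^{\alpha-1}$ on the left produces an inequality of exactly the form displayed in the statement of the corollary. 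Thus the two conditions are equivalent (up to the collected constant factor), so the stated lower bound on $\lambda$ forces Equation~\ref{eqn:comp-int-hyp} to hold.

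Next, I would argue existence of a suitable $\lambda$. Since $\alpha>1$, the exponent $\alpha-1$ is strictly positive, and because $\gamma\in(\gintlower,\gintupper)$ is fixed, the right-hand side of the displayed inequality is a finite positive constant depending only on $\alpha$, $\gamma$, and $a=\aint$. Hence for any sufficiently large $\lambda>1$, the inequality $\lambda^{\alpha-1}>(\text{constant})$ is satisfied. Applying Theorem~\ref{thm:comp-int} to such a $\lambda$ then guarantees that at stationarity the system is $\alpha$-compressed with probability at least $1-\zeta^{\sqrt{n}}$ for some $\zeta<1$, completing the proof.

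I do not expect any serious obstacle: the argument is purely a rearrangement of the general compression criterion already established in Theorem~\ref{thm:comp-int}. The only things to be careful about are the sign of the exponent $(1-\alpha)/\alpha$ that arises when combining the $A^{-1}$ and $A^{1/\alpha}$ factors (where $A=\lambda(\gamma+1)$) when raising to the $\alpha$-th power, and verifying that both sides of each manipulation remain positive so the direction of the inequality is preserved throughout. Both checks are immediate from $\alpha>1$ and $\lambda,\gamma+1>0$.
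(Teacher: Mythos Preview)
Your approach is exactly the paper's: the paper's proof is the single sentence ``rearranging terms shows the hypothesis of Theorem~\ref{thm:comp-int} are satisfied,'' and you have spelled out precisely that rearrangement.

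One caveat worth flagging: when you actually carry out the algebra you describe, you will find that the exact rearrangement of Equation~\ref{eqn:comp-int-hyp} yields
\[
\lambda^{\alpha-1} > \frac{2^{\alpha-1}(2+\sqrt{2})^\alpha\, e^{3a(\alpha+1)}}{(\gamma+1)^{\alpha-1}\left(\gintlowerf\right)},
\]
with exponent $3a(\alpha+1)$ rather than the $3a(\alpha-1)$ printed in the corollary statement. (Compare the analogous Corollary~\ref{cor:comp-sep-alpha}, where the exponent is indeed $\alpha+1$.) This appears to be a typo in the statement; since $a=\aint$ the numerical difference is negligible, and the existence claim in the first sentence of the corollary is unaffected either way. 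Your claim that the two conditions are ``equivalent'' and that the rearrangement produces ``exactly the form displayed'' should therefore be softened accordingly.
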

\begin{proof}
	When $\lambda$ and $\gamma$ satisfy the given conditions, rearranging terms shows the hypothesis of Theorem~\ref{thm:comp-sep} are satisfied and so $\alpha$-compression occurs with high probability.
\end{proof}

\section{\texorpdfstring{Proof of Integration when $\gamma$ is close to one, $\lambda$ sufficiently large}{Proof of Integration when gamma is close to one, lambda sufficiently large}}
\label{sec:integration}

We show, for $\gamma$ close to 1 and $\lambda$ large enough such that $\alpha$-compression occurs, there exists $\delta$  and $\beta > 0$ such that when $n$ is sufficiently large $(\beta, \delta)$-separation doesn't occur.

Recall (Definition~\ref{defn:sep}) that a two-color particle configuration is  $(\beta, \delta)$-separated if there is a set $R$ of particles with small boundary ( $|bd_{int}R|< \beta\sqrt{n}$, where $bd_{int}(R)$ is all edges of the configuration with exactly one endpoint in $R$) such that the density of particles of color $c_1$ is at least $1-\delta$ in $R$ and at most $\delta$ outside of $R$. This definition makes sense for $\beta > 0$ and $\delta \in (0,1/2)$.
In this section we assume for the sake of simplicity that there are $n$ total particles with $n/2$ of each color, though we expect our results to generalize with little effort whenever there are a constant fraction of particles of each color. (If there is not a constant fraction of particles of each color, the configuration is always $(\beta,\delta)$-separated for any $\beta$ and $\delta$ with $R$ the set of all particles).

Consider any configuration $\sigma$ that is $\alpha$-compressed and  $(\beta, \delta)$-separated, and let $R$ be a set witnessing this separation: $|bd_{int}(R)| < \beta \sqrt{n}$, at most a $\delta$ fraction of particles in $R$ are color $c_2$, and at most a $\delta$ fraction of the particles not in $R$ are color $c_2$. Recall $\overline{R}$ is all particles not in $R$. We will use the following lemma bounding the size of $R$.
\begin{lem}\label{lem:Rsize}
	For a set $R$ witnessing $(\beta,\delta)$-separation in a particle configuration $\sigma$ of $n$ particles with $n/2$ of each color,
	\[ \frac{1-2\delta}{1-\delta} \cdot \frac{n}{2} < |R| \leq \frac{1}{1-\delta}\cdot \frac{n}{2} \]
	\end{lem}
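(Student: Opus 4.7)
The plan is to prove both inequalities by straightforward counting of color classes. Let $x$ denote the number of particles of color $c_1$ in $R$ and let $y$ denote the number of particles of color $c_1$ in $\overline{R}$. Since there are exactly $n/2$ particles of color $c_1$ in total, we have $x + y = n/2$. The definition of $(\beta,\delta)$-separation (Definition~\ref{defn:sep}) gives us two constraints we can translate into inequalities on $x$ and $y$: condition (2) says the density of color $c_1$ in $R$ is at least $1-\delta$, so $x \geq (1-\delta)|R|$; condition (3) says the density of color $c_1$ in $\overline{R}$ is at most $\delta$, so $y \leq \delta |\overline{R}| = \delta(n - |R|)$.

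For the upper bound on $|R|$, I would combine $x \geq (1-\delta)|R|$ with the trivial bound $x \leq n/2$ (since $x$ is at most the total number of $c_1$-particles) to obtain $(1-\delta)|R| \leq n/2$, which rearranges directly to $|R| \leq \frac{1}{1-\delta} \cdot \frac{n}{2}$.

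For the lower bound on $|R|$, I would use the identity $x = n/2 - y$ together with $y \leq \delta(n - |R|)$ to get $x \geq n/2 - \delta(n-|R|)$. Combining this with the trivial bound $x \leq |R|$ yields
\[
|R| \;\geq\; \frac{n}{2} - \delta n + \delta |R|,
\]
so $(1-\delta)|R| \geq (1-2\delta)\cdot n/2$, which rearranges to $|R| \geq \frac{1-2\delta}{1-\delta} \cdot \frac{n}{2}$. The hypothesis $\delta \in (0,1/2)$ ensures $1-\delta > 0$ and $1-2\delta > 0$ so dividing preserves the direction. The strict inequality claimed in the statement follows because $|R|$ is an integer while the lower bound is in general not, or alternatively because the case of equality would force both $x = |R|$ and $y = \delta(n-|R|)$ simultaneously, which cannot be tight unless $\delta = 0$.

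There is no real obstacle here; the lemma is a direct consequence of the defining density conditions together with the assumption of balanced color classes. The key observation is simply that knowing the density of one color in $R$ and in $\overline{R}$ pins down the range of possible sizes of $R$ once the total color counts are fixed.
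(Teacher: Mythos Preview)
Your proof is correct and takes essentially the same approach as the paper: both derive the bounds directly from the density constraints in Definition~\ref{defn:sep}. The paper proves the upper bound exactly as you do, then obtains the lower bound by invoking the symmetry between $R$ and $\overline{R}$ (applying the upper-bound argument to $\overline{R}$ and subtracting from $n$), whereas you compute the lower bound directly from $x \le |R|$ and $y \le \delta(n-|R|)$; the two routes are equivalent.

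One small note: your justification for the \emph{strict} lower inequality is not quite right. The conditions $x = |R|$ and $y = \delta(n-|R|)$ can both hold simultaneously (take all of $R$ to be color $c_1$ and exactly a $\delta$-fraction of $\overline{R}$ to be color $c_1$; this is consistent with $x+y=n/2$), so the claim that they ``cannot be tight unless $\delta = 0$'' fails. The paper's own proof also establishes only the non-strict version $|R| \ge \frac{1-2\delta}{1-\delta}\cdot\frac{n}{2}$, and strictness is not used downstream, so this is a cosmetic issue with the statement rather than a gap in your argument.
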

\begin{proof}
	Suppose $|R| > \frac{n}{2-2\delta}$. By the definition of $(\beta,\delta)$-separation, at most a $\delta$ fraction of these particles can be of color $c_2$, which means at most $\delta n /(2-2\delta)$ of these particles are of color $c_2$.  The remaining particles of $R$ must be of color $c_1$, so the number  of particles of color $c_1$ in $R$ satisfies
	\[|R| - \frac{\delta n}{2-2\delta} > \frac{n}{2-2\delta} - \frac{\delta n}{2-2\delta} = \frac{(1-\delta)n}{2(1-\delta)} = \frac{n}{2} \]
	This says $R$ has strictly more than $n/2$ particles of color $c_1$,  a contradiction as there are only $n/2$ particles of color $c_2$ in the entire configuration.  We conclude it must be that \[|R|\leq \frac{1}{1-\delta}\cdot \frac{n}{2}.\]  By the symmetry between $R$ and $\overline{R}$ in the definition of separation, $|\overline{R}| \leq \frac{1}{1-\delta}\cdot \frac{n}{2}$ and so \[|R| \geq n -\frac{1}{1-\delta}\cdot \frac{n}{2} = \frac{1-2\delta}{1-\delta}\cdot \frac{n}{2}. \]
	\end{proof}

Recall $Bd_{out}(R)$ is all edges of $\Gtri$ with one endpoint in $R$ and the other endpoint unoccupied and $bd(R) = bd_{out}(R) \cup bd_{int}(R)$.
We furthermore note that because we assume our configuration of interest is $\alpha$-compressed,  $|bd R| =|bd_{out} R| +  |bd_{int} R |  \leq \alpha p_{min} + \beta \sqrt{n} \leq (2\sqrt{3} \alpha +\beta) \sqrt{n}$, where we used that $p_{min} \leq 2\sqrt{3} \sqrt{n}$ (Lemma~\ref{lem:pmin}).

To show that $(\beta,\delta)$-separation doesn't occur, we will give polynomially many events such that if $(\beta,\delta)$-separation occurs then at least one of these events occurs; we then show each of these event has a superpolynomially small probability of occuring.
To this end,
 we consider a partition of the lattice $\Gtri$ into diamonds with side length $n^c$ for some $c < 1/4$, where $c$ is chosen so that $n^c$ is an integer; see Figure~\ref{fig:diamonds} for an example of a partition of $\Gtri$ into diamonds with side length six.
 Each diamond in this partition contains $n^{2c}$ vertices of $\Gtri$.
 The events we consider will be that one diamond in the partition is fully occupied by particles and at most $\delta' < 1/2$ of these particles are of color $c_2$.

\begin{figure}\centering
\includegraphics[scale = 0.7]{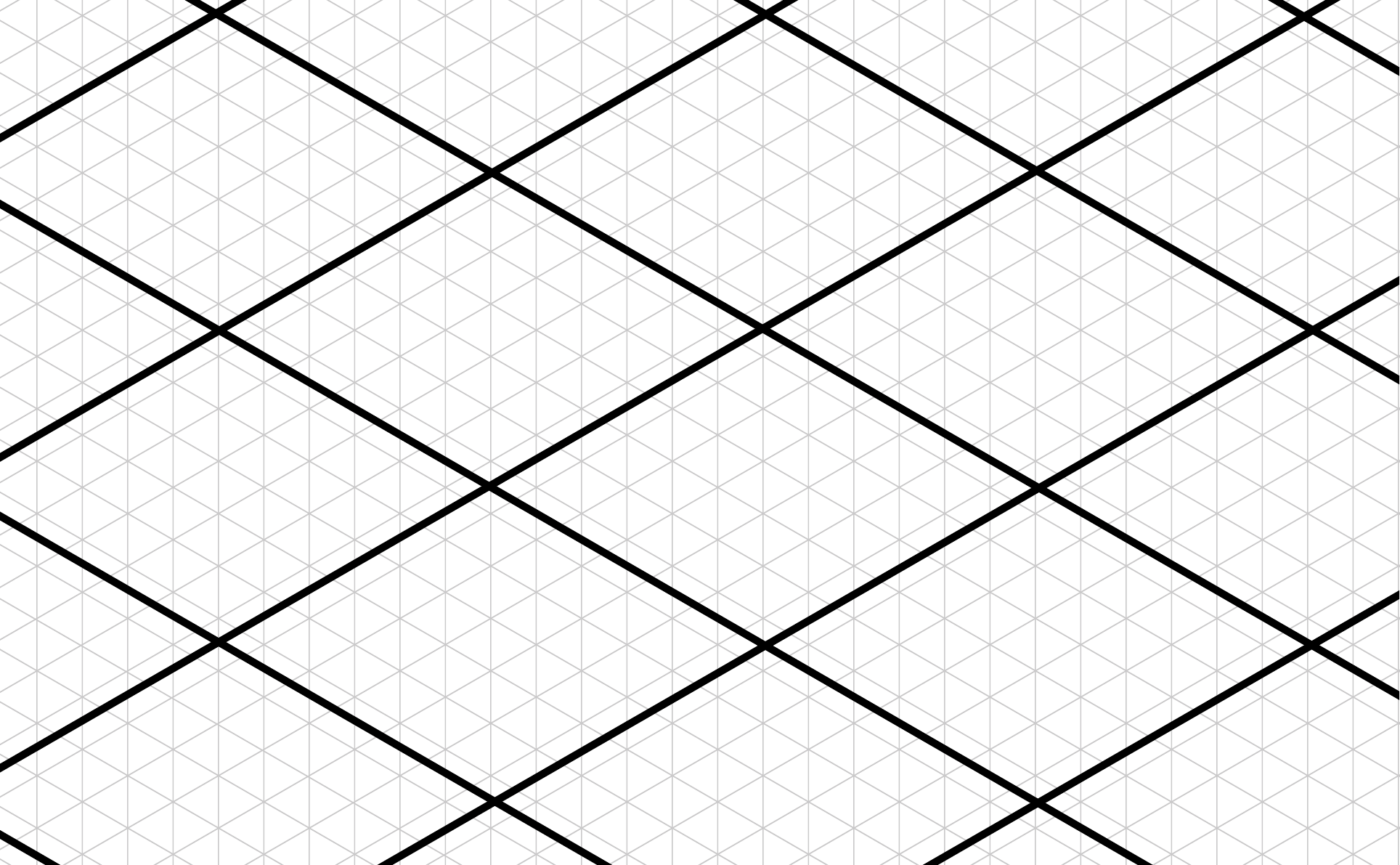}
	\caption{A partition of the lattice $\Gtri$ into diamonds with side length 6 and $6^2 = 36$ total vertices each. }\label{fig:diamonds}
	\end{figure}

\begin{lem}\label{lem:diamondexists}Let $\sigma$ be an $\alpha$-compressed particle configuration.
	If $\sigma$ is $(\beta, \delta)$-separated, then for any $\delta' > \delta/(1-2\delta)$, there exists a diamond in our partition that is fully occupied by particles and has at most $\delta' n^{2c}$ particles of color $c_2$.
	\end{lem}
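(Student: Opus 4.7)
The plan is to argue by contradiction: assume every fully-occupied diamond contains more than $\delta' n^{2c}$ particles of color $c_2$, and derive a contradiction from $(\beta,\delta)$-separation and $\alpha$-compression. I will partition the diamonds of the tiling into three kinds: (i) those not fully occupied; (ii) ``interface'' diamonds that are fully occupied but contain particles from both $R$ and $\overline{R}$; and (iii) fully-occupied diamonds whose particles all lie in $R$ (call this set $D_R$) or all lie in $\overline{R}$. The main idea is that $D_R$ collectively contains almost all of $R$, so an averaging argument forces at least one of its diamonds to have few $c_2$-particles.

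Next I bound the particles in the ``bad'' classes (i) and (ii). A type-(i) diamond that contains at least one particle must contain a boundary particle of $\sigma$ (one adjacent to an unoccupied vertex); by $\alpha$-compression and Lemma~\ref{lem:pmin} the number of such boundary particles is at most $|\mathcal{P}| \leq 2\sqrt{3}\alpha\sqrt{n}$, each lying in a unique diamond, so type-(i) diamonds collectively contain at most $O(\sqrt{n}) \cdot n^{2c} = O(n^{1/2+2c})$ particles. A type-(ii) diamond, being connected and containing particles of both colors of $R$-membership, must contain an edge of $bd_{int}(R)$ with both endpoints inside it; since $|bd_{int}(R)| \leq \beta\sqrt{n}$, the particles in type-(ii) diamonds number at most $O(n^{1/2+2c})$ as well. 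Because $c < 1/4$, both counts are $o(n)$, so the number of particles in $D_R$-diamonds is at least $|R| - O(n^{1/2+2c})$.

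Finally I close the contradiction. The contradiction hypothesis forces the $c_2$-count in $D_R$-diamonds to exceed $\delta' n^{2c}|D_R|$, which equals the number of particles in $D_R$-diamonds; but every such particle lies in $R$, where the density of color $c_2$ is at most $\delta$, so this count is at most $\delta|R|$. Using the upper bound $|R| \leq n/(2(1-\delta))$ and lower bound $|R| \geq (1-2\delta)n/(2(1-\delta))$ from Lemma~\ref{lem:Rsize}, the two bounds combine to
\begin{equation*}
\delta'\left(\frac{(1-2\delta)n}{2(1-\delta)} - O(n^{1/2+2c})\right) < \delta \cdot \frac{n}{2(1-\delta)},
\end{equation*}
which for $n$ sufficiently large simplifies to $\delta'(1-2\delta) < \delta$, contradicting $\delta' > \delta/(1-2\delta)$. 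The main obstacle is the bookkeeping in step two: one must carefully attribute each bad diamond to either a boundary particle of the configuration or an edge of $bd_{int}(R)$, then use $c < 1/4$ to absorb the resulting $O(n^{1/2+2c})$ error into the $\Theta(n)$-sized main terms so the averaging step actually gives the claimed threshold.
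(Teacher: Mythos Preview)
Your argument is correct and follows essentially the same route as the paper: both proofs lower-bound the number of diamonds entirely contained in $R$ by subtracting off diamonds that meet $bd(R)$, then derive a contradiction by comparing the forced $c_2$-count $\delta' n^{2c}|D_R|$ against the bound $\delta|R|$ coming from separation, using Lemma~\ref{lem:Rsize} on both sides and $c<1/4$ to kill the $O(n^{1/2+2c})$ error. The only difference is cosmetic---you split the boundary diamonds into types (i) and (ii) handled via $|\mathcal{P}|$ and $|bd_{int}(R)|$ separately, whereas the paper bounds them together via $|bd(R)|\le(2\sqrt{3}\alpha+\beta)\sqrt{n}$. One phrasing slip: you write that $\delta' n^{2c}|D_R|$ ``equals the number of particles in $D_R$-diamonds,'' but it equals $\delta'$ times that number; your displayed inequality is nevertheless correct.
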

\begin{proof}
	Let $R$ witness the $(\beta,\delta)$-separation of $\sigma$.
Because each diamond in our partition contains $n^{2c}$ vertices of $\Gtri$, using Lemma~\ref{lem:Rsize} the total number of diamonds intersecting $R$ is at least $|R| / n^{2c} \geq   \frac{n^{1-2c}(1-2\delta)}{1-\delta}.$
	Meanwhile, we see that at most $|bd(R)| \leq (\beta + 2\sqrt{3} \alpha) \sqrt{n}$ diamonds intersect the boundary of $R$.
The number of diamonds in our partition comprised entirely of particles in $R$ (and thus fully occupied by particles) must be at least  $ n^{1-2c}(1-2\delta) /(2-2\delta) - ( \beta + 2\sqrt{3}\alpha) \sqrt{n}$.
We suppose, for the sake of contradiction, that each of these has at least $\delta' > \delta /(1-2\delta)$ particles of color $c_2$. Then the total number $n_2$  of particles of color $c_2$ in $R$ would be at least
\begin{align*}
n_2 &\geq \delta' n^{2c} \left( \frac{(1-2\delta) n^{1-2c}}{2-2\delta} - ( \beta + 2\sqrt{3}\alpha) \sqrt{n}\right)
\\&=  \left( \delta' \frac{1-2\delta}{2-2\delta}n - \delta' (2\sqrt{3}\alpha + \beta) n^{1/2 + 2c}\right)
\end{align*}
Provided $c < 1/4$, asymptotically the first term above dominates and for $n$ sufficiency large we have that $$n_2 > \frac{\delta'(1-2\delta)}{2-2\delta} n =  \left(\frac{\delta}{1-2\delta} \right)  \frac{1-2\delta}{2-2\delta} n  = \delta \left(\frac{1}{2-2\delta} n \right) \geq \delta |R|.$$
This shows that greater than a $\delta$ fraction of the particles in $R$ have color $c_2$, contradiction that $\sigma$ is $(\beta, \delta)$-separated with cluster $R$.
We conclude that there must exist a diamond in the partition that is fully occupied by particles in $R$ with at most a $\delta'$ fraction of particles of color $c_2$.
\end{proof}

We are going to be interested in those values of $\delta$ which give $\delta` < 1/2$, that is, in $\delta < 1/4$.
We now show that the probability a diamond in the partition that is fully occupied by particles has fewer than a $\delta'$ fraction of its particles of color $c_2$ is very small.

Let $\P$ be an $\alpha$-compressed boundary of a particle configuration. We will look at particles configurations with this boundary, which we call $\Omega_\P$. Recall $\pi_\P$ is the stationary distribution conditioned on boundary $\P$. For $\sigma$ a coloring of the particles inside $\P$, we say that the weight of this configuration (once $\P$ is fixed) is  $w_\P(\sigma) = \gamma^{-h(\sigma)}$. We can then write \[\pi_\P(\sigma) = \frac{w_\P(\sigma)}{Z_\P},\]
where $Z_\P = \sum_{\sigma \in \Omega_\P} w_{\P(\sigma)}$.

For $\mathcal{D}$ any $ n^c \times n^c$ diamond with every vertex occupied by a particle on or inside $\P$, we want to look at the $\sigma\in \Omega_\P$ that have fewer than a $\delta'$ fraction of the particles in $\mathcal{D}$ that are color $c_2$. We want to show that the set of all such $\sigma$ has exponentially small weight.

\newcommand{\D}{\mathcal{D}}
\newcommand{\ovD}{\overline{\mathcal{D}}}

For such $\sigma$, we will break the term $w_\P(\sigma)$ up according to contributions within $\D$, contributions between $\D$ and $\ovD$, and contributions within $\ovD$, where $\ovD$ is all particles on or inside $\P$ not in $\D$.
There are at most $8n^c +6$ edges between $\D$ and $\ovD$, so these edges can contribute at most $\max\{1, \gamma^{-8n^c -6}\}$ and at least $\min\{1, \gamma^{-8n^c -6}\}$ to the weight of a configuration, where which values are achieved in this maximum and minimum depend on whether $\gamma > 1$ or $\gamma < 1$. Instead of looking at contributions from $\D$ or $\ovD$ for particular configurations, we look at the sum of contributions within these regions over many possible configurations.

For any set $\Lambda$ of vertices of $\Gtri$, let $\Omega^l_\Lambda$ be all colorings of vertices in $\Lambda$ with exactly $\ell$ particles assigned color $c_2$; we will consider $\Lambda = \D$ and $\Lambda = \ovD$. For $\sigma \in \Omega_\Lambda^\ell$, we say that $h_\Lambda(\sigma)$ is the number of edges of $\Gtri$ where both endpoints are in $\Lambda$ and are assigned different colors in $\sigma$.  We will consider the partition functions
\begin{align*}
Z_\Lambda^{\ell} = \sum_{\sigma \in \Omega_\Lambda^\ell} \gamma^{-h_\Lambda(\sigma)}.
\end{align*}

The following lemma will play an important role. Note that $\Lambda$ need not be connected or hole-free.
\begin{lem}
	For any $\Lambda \subseteq V(\Gtri)$,
	\[
	\frac{Z_{\Lambda}^{\ell}}{Z_\Lambda^{|\Lambda|/2}} \leq \max\{\gamma^{6(\ell - |\Lambda|/2)},\gamma^{ -6(|\Lambda|/2-\ell)} \}
	\]
	\end{lem}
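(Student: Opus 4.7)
My plan is a flip-and-double-count argument that uses only the fact that $\Gtri$ has maximum degree $6$. Without loss of generality I would assume $\ell \geq |\Lambda|/2$; the other case is symmetric by interchanging the roles of the two colors. Set $k = \ell - |\Lambda|/2 \geq 0$.

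The first step is to define a multi-valued map $\phi$ from $\Omega_\Lambda^\ell$ to $\Omega_\Lambda^{|\Lambda|/2}$: for each $\sigma \in \Omega_\Lambda^\ell$, let $\phi(\sigma)$ consist of all colorings obtained from $\sigma$ by choosing $k$ of its $\ell$ particles of color $c_2$ and flipping them to color $c_1$. Then $|\phi(\sigma)| = \binom{\ell}{k}$, and by a symmetric count each $\tau \in \Omega_\Lambda^{|\Lambda|/2}$ lies in $\phi(\sigma)$ for exactly $\binom{|\Lambda|/2}{k}$ choices of $\sigma$ (one for each $k$-subset of the $|\Lambda|/2$ color-$c_1$ particles of $\tau$ that could be ``un-flipped'' back to color $c_2$).

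The second step bounds the weight change per flip: complementing the color at a single vertex $v$ changes $h_\Lambda$ by at most $\deg_{\Gtri}(v) \leq 6$ in absolute value, since only edges from $v$ to its (at most $6$) neighbors in $\Lambda$ can toggle between homogeneous and heterogeneous. Iterating over $k$ flips gives $|h_\Lambda(\sigma) - h_\Lambda(\tau)| \leq 6k$ whenever $\tau \in \phi(\sigma)$, and hence $\gamma^{-h_\Lambda(\sigma)} \leq M \cdot \gamma^{-h_\Lambda(\tau)}$, where $M := \max\{\gamma^{6k}, \gamma^{-6k}\}$.

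Combining the two steps by double counting,
\begin{align*}
\binom{\ell}{k}\, Z_\Lambda^\ell \;=\; \sum_{\sigma \in \Omega_\Lambda^\ell}\, \sum_{\tau \in \phi(\sigma)} \gamma^{-h_\Lambda(\sigma)} \;\leq\; M \sum_{\tau \in \Omega_\Lambda^{|\Lambda|/2}} \binom{|\Lambda|/2}{k}\, \gamma^{-h_\Lambda(\tau)} \;=\; M \binom{|\Lambda|/2}{k}\, Z_\Lambda^{|\Lambda|/2}.
\end{align*}
Because $\ell \geq |\Lambda|/2$ forces $\binom{\ell}{k} \geq \binom{|\Lambda|/2}{k}$, the binomial factors cancel in the correct direction and I obtain $Z_\Lambda^\ell / Z_\Lambda^{|\Lambda|/2} \leq M$, which matches the stated bound (noting that the two exponents in the lemma's $\max$ are exactly $6(\ell-|\Lambda|/2) = 6k$ and its negative). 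There is no substantive obstacle in this argument; the only care needed is in keeping track of the binomial counts so that they cancel favorably, which is precisely why the WLOG reduction to $\ell \geq |\Lambda|/2$ is used.
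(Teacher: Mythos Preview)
Your proof is correct and is essentially identical to the paper's own argument: the same WLOG reduction to $\ell \ge |\Lambda|/2$, the same multi-valued flip map with binomial counts $\binom{\ell}{k}$ and $\binom{|\Lambda|/2}{k}$, the same degree-$6$ bound on the change in $h_\Lambda$, and the same cancellation of the binomials using $\ell \ge |\Lambda|/2$. Your write-up is in fact a bit cleaner than the paper's, but there is no substantive difference in approach.
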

\begin{proof}
We first note that, because of the symmetry between colors, that $Z^\ell_{\Lambda} = Z^{|\Lambda| - \ell}_{\Lambda}$. Without loss of generality, we assume $\ell \geq |\Lambda|/2$.
We define a multimap from configurations in $\Omega_\Lambda^\ell$ to configurations in $\Omega_\Lambda^{|\Lambda|/2}$. For any $\sigma \in\Omega_\Lambda^\ell$, we can map it to a configuration in  $\Omega_\Lambda^{|\Lambda|/2}$ by choosing $\ell - |\Lambda|/2$ of the $\ell$ particles of color $c_2$ and changing their color to $c_1$. There are $\binom{\ell}{\ell-|\Lambda|/2}$ way to do this, and doing so changes the number of heterogeneous edges within $\Lambda$ by at most $6 \cdot\left(\ell - |\Lambda|/2\right)$.  When doing this, each configuration $\tau \in \Omega_\Lambda^{|\Lambda|/2}$ is obtained from $\binom{|\Lambda|/2}{\ell-|\Lambda|/2}$ different $\sigma \in \Omega_\Lambda^{\ell}$. This implies
\begin{align*}
Z^{\ell}_{\Lambda} \cdot \binom{\ell}{\ell-|\Lambda|/2} \cdot  \min\{\gamma^{6 \cdot\left(\ell - |\Lambda|/2\right)}, \gamma^{-6 \cdot\left(\ell - |\Lambda|/2\right)}\} \leq Z^{|\Lambda|/2}_{\Lambda} \cdot \binom{|\Lambda|/2}{\ell-|\Lambda|/2}.
\end{align*}
Because $\ell \geq |\Lambda|/2$, $\binom{\ell}{\ell-|\Lambda|/2} \geq \binom{|\Lambda|/2}{\ell-|\Lambda|/2}$.  Using this and rearranging terms, we obtain the desired result.
	\end{proof}

\begin{lem}\label{lem:diamondbound}
	Let $\delta' < 1/2$ and $\gamma$ close enough to one such that there exists an $\varepsilon \in (\delta',1/2)$ where
	 \begin{align}\label{eqn:eps-gamma} \left(\frac{\varepsilon}{1-\varepsilon}\right)^{(\varepsilon-\delta')/11} < \gamma <  \left(\frac{1-\varepsilon}{\varepsilon}\right)^{(\varepsilon-\delta')/11} . \end{align}
	 Let $\P$ be a boundary of a particle configuration with $n$ particles and let $\D$ be an $n^c \times n^c$ diamond inside $\P$. The probability that a configuration drawn from $\pi_\P$ has at most $\delta'$ particles of color $c_2$ in $\D$ is at most $\zeta^{n^{2c}}$ for some $\zeta < 1$, provided $n$ is sufficiently large.
\end{lem}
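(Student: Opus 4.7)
The plan is to bound $\pi_\P(\text{at most } \delta' m \text{ color-}c_2 \text{ particles in }\D)$, writing $m = n^{2c}$, by comparing the weight of such configurations to that of configurations with an $\varepsilon$-fraction of color $c_2$ in $\D$. Letting $\ell(\sigma)$ denote the number of color-$c_2$ particles of $\sigma$ inside $\D$, I would define
\[ f(\ell) = \sum_{\sigma \in \Omega_\P : \ell(\sigma) = \ell} \gamma^{-h(\sigma)}, \]
so that $Z_\P = \sum_\ell f(\ell)$ and $\pi_\P(\ell(\sigma) \le \delta' m) \le \sum_{\ell \le \delta' m} f(\ell)/f(\varepsilon m)$ for the $\varepsilon \in (\delta', 1/2)$ supplied by the hypothesis.

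First I would construct a swap-based multimap that sends each configuration with $\ell(\sigma) = \ell$ to all configurations with $\ell(\sigma) = \ell + 1$ obtainable from it by simultaneously flipping one $c_1$ particle in $\D$ to $c_2$ and one $c_2$ particle in $\ovD$ to $c_1$; this preserves the global count of $n/2$ particles of each color. Each source produces $(m-\ell)(n/2-\ell)$ children, each target has $(\ell+1)(n/2-m+\ell+1)$ parents, and the total number of heterogeneous edges changes by at most $12$ (six per flipped particle). Setting $\eta = \max\{\gamma, \gamma^{-1}\}$, this yields
\[ \frac{f(\ell+1)}{f(\ell)} \;\ge\; \frac{(m-\ell)(n/2-\ell)}{(\ell+1)(n/2-m+\ell+1)} \cdot \eta^{-12}. \]
Telescoping from $\ell = \delta'm$ to $\varepsilon m - 1$, the first factor collapses to $\binom{m}{\varepsilon m}/\binom{m}{\delta' m}$, while the $\ovD$-factor $\prod (n/2-\ell)/(n/2-m+\ell+1)$ equals $1 + o(1)$ because $c < 1/4$ forces $m^2/n = n^{4c-1} \to 0$. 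Since the hypothesis forces $\eta$ so close to $1$ that $\eta^{12} < (1-\delta')/\delta'$, the sequence $\binom{m}{\ell}\eta^{-12\ell}$, and hence $f(\ell)$, is increasing on $\ell \le \delta' m$, so $\sum_{\ell \le \delta' m} f(\ell) \le (\delta' m + 1) f(\delta' m)$.

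Combining these pieces with the standard estimates $\sum_{\ell \le \delta' m} \binom{m}{\ell} \le 2^{m H(\delta')}$ and $\binom{m}{\varepsilon m} \ge 2^{m H(\varepsilon)}/(m+1)$, with $H$ the binary entropy, the bound becomes
\[ \pi_\P(\ell(\sigma) \le \delta' m) \;\le\; \mathrm{poly}(m) \cdot 2^{m(H(\delta') - H(\varepsilon))} \cdot \eta^{12(\varepsilon - \delta') m}. \]
Exponential decay then reduces to $12(\varepsilon - \delta')\ln\eta < (H(\varepsilon) - H(\delta'))\ln 2$. By the mean-value theorem applied to $H$, whose derivative $H'(\xi) = \log_2((1-\xi)/\xi)$ is decreasing on $(0, 1/2)$, we have $(H(\varepsilon) - H(\delta'))\ln 2 > (\varepsilon - \delta') \ln((1-\varepsilon)/\varepsilon)$, while the hypothesis gives $\ln\eta < \tfrac{\varepsilon - \delta'}{11} \ln((1-\varepsilon)/\varepsilon)$. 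Hence
\[ 12(\varepsilon - \delta') \ln \eta \;<\; \tfrac{12(\varepsilon - \delta')^2}{11} \ln\!\tfrac{1-\varepsilon}{\varepsilon} \;<\; (\varepsilon - \delta') \ln\!\tfrac{1-\varepsilon}{\varepsilon}, \]
using $\varepsilon - \delta' < 1/2 < 11/12$ in the final step; this leaves a positive margin and produces a bound of the desired form $\zeta^{n^{2c}}$ with $\zeta < 1$.

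The hardest part will be the combinatorial bookkeeping behind the swap step: carefully verifying that the $\ovD$-product really reduces to $1 + o(1)$ (which is precisely where $c < 1/4$ enters) and tracking the worst-case change of $12$ heterogeneous edges per swap. The constant $11$ in the hypothesis is evidently engineered so that $12(\varepsilon - \delta')/11$ stays below $1$ whenever $\varepsilon - \delta' < 1/2$, providing exactly the slack needed in the final mean-value comparison.
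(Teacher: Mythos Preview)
Your proof is correct and takes a genuinely different route from the paper's. The paper decouples $\D$ from $\ovD$: it allows the two regions to have \emph{independent} colour counts, writing $\pi(S_\D^k) \le \dfrac{Z_\D^k\,Z_{\ovD}^{n/2-k}\,\max\{\gamma^{\mp(8n^c+6)}\}}{Z_\D^{m/2}\,Z_{\ovD}^{(n-m)/2}\,\min\{\gamma^{\mp(8n^c+6)}\}}$, and then bounds $Z_\D^k/Z_\D^{m/2}$ crudely by $\max\{\gamma^{\pm 3m}\}\binom{m}{k}/\binom{m}{m/2}$. The constant $11$ in the hypothesis arises there as $8+3$ (cross edges plus internal edges of $\D$, after the loose replacement $8n^c\le 8n^{2c}$); the binomial ratio is then estimated by the elementary product bound $\prod_{i=\delta'm}^{\varepsilon m}\frac{i}{m-i+1}\le(\varepsilon/(1-\varepsilon))^{(\varepsilon-\delta')m}$, so the hypothesis matches the final inequality exactly.

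You instead stay inside $\Omega_\P$ throughout, moving from level $\ell$ to level $\ell+1$ by a colour-preserving swap; this gives the cleaner per-step penalty $\eta^{12}$ and avoids decoupling the regions. The price is a slight mismatch (your $12$ versus the hypothesis' $11$), which you recover via the mean-value estimate on $H(\varepsilon)-H(\delta')$ together with $\varepsilon-\delta'<\tfrac12<\tfrac{11}{12}$. Two small remarks: first, the $\ovD$-product is in fact $\ge 1$ on the whole range $\ell<m/2$, so you only need that lower bound, not the $1+o(1)$ estimate---your argument therefore does not actually require $c<1/4$ (the paper needs $c<1/4$ only in the preceding lemma). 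Second, your interpretation of the constant $11$ as ``engineered so that $12(\varepsilon-\delta')/11<1$'' is not how the paper arrives at it, but it is a perfectly valid observation for your route.
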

\begin{proof}
	Let $k \leq \delta' n$.
First, we note that $Z_\P$ satisfies
\[
Z_\P \geq Z_\D^{n^{2c}/2} Z_{\ovD}^{(n - n^{2c})/2} \min\{ \gamma^{-8n^c-6}, 1\}
\]
For $S_\D^k$ the set of all configurations in $\Omega_\P$ with exactly $k$ particles of color $c_2$ in $\D$,
\begin{align*}
\pi(S_\D^k) \leq \frac{ Z_\D^{k} Z_{\ovD}^{n/2 - k} \max\{ \gamma^{-8n^c-6}, 1\}}{Z_\D^{n^{2c}/2} Z_{\ovD}^{(n - n^{2c})/2} \min\{ \gamma^{-8n^c-6}, 1\}}
 \leq \max\{\gamma^{-8n^c-6}, \gamma^{8n^c+6} \} \frac{ Z_\D^{k} }{Z_\D^{n^{2c}/2}}
\end{align*}
We note that there are fewer than $3n^{2c}$ edges within $\D$, so
\[ \binom{n^{2c}}{k}  \min\{\gamma^{-3n^{2c}} , 1\} \leq Z_\D^k \leq \binom{n^{2c}}{k} \max\{\gamma^{-3n^{2c}}, 1\} . \]
Using this, we see that
\[
\frac{ Z_\D^{k} }{Z_\D^{n^{2c}/2}} \leq \frac{\binom{n^{2c}}{k} \max\{\gamma^{-3n^{2c}}, 1\}}{\binom{n^{2c}}{n^{2c}/2}  \min\{\gamma^{-3n^{2c}} , 1\}}
 = \max\{\gamma^{-3n^{2c}}, \gamma^{3n^{2c}} \} \prod_{i = k+1}^{n^{2c}/2} \frac{i}{n^{2c}-i+1}
\]
Let $\varepsilon$ be a constant satisfying $\delta' < \varepsilon < 1/2$ and Equation~\ref{eqn:eps-gamma}; by hypothesis we know some such $\varepsilon$ exists. Because $k < \delta' n^{2c}$ and each term in the product above is less than one, we see that
\begin{align*}
\frac{ Z_\D^{k} }{Z_\D^{n^{2c}/2}} &\leq \max\{\gamma^{-3n^{2c}}, \gamma^{3n^{2c}} \} \prod_{i = \delta'n^{2c}}^{\varepsilon n^{2c}} \frac{i}{n^{2c}-i+1} \\&\leq \max\{\gamma^{-3n^{2c}}, \gamma^{3n^{2c}} \} \left( \frac{\varepsilon}{1-\varepsilon}\right)^{(\varepsilon - \delta')n^{2c}}
%\\&\leq \left(\max\{\gamma^{-3}, \gamma^{3}\} \left( \frac{\varepsilon}{1-\varepsilon}\right)^{\varepsilon - \delta'} \right)^{n^{2c}}
\end{align*}
We see that
\begin{align*}
\pi(S^k_\D) &\leq \max\{\gamma^{-8n^c-6}, \gamma^{8n^c+6} \} \frac{ Z_\D^{k} }{Z_\D^{n^{2c}/2}} \\&\leq \max\{\gamma^{-8n^c-6}, \gamma^{8n^c+6} \} \max\{\gamma^{-3n^{2c}}, \gamma^{3n^{2c}} \} \left( \frac{\varepsilon}{1-\varepsilon}\right)^{(\varepsilon - \delta')n^{2c}}
\\& \leq \max\{\gamma^{-6}, \gamma^{6} \}   \left(\max\{\gamma^{-11}, \gamma^{11}\} \left( \frac{\varepsilon}{1-\varepsilon}\right)^{\varepsilon - \delta'} \right)^{n^{2c}}
\end{align*}
By our hypothesis, the term in parentheses above is strictly less than one, meaning that for sufficiently large $n$ we have that $\pi(S^k_\D) \leq (\zeta_k)^{n^{2c}}$ for some $\zeta_k < 1$.  We now see that the probability that $\mathcal{D}$ has at most $\delta'$ particles of color $c_2$ is
\[
\pi(\cup_{k = 0}^{\delta'n^{2c}} S_\D^k) \leq \sum_{k = 0}^{\delta'n^{2c}} (\zeta_k)^{n^{2c} } \leq \delta' n^{2c} \max_{k} (\zeta_k)^{n^{2c}}.
\]
For sufficiently large $n$, this is at most $\zeta^{n^{2c}}$ for some $\zeta < 1$.
This concludes our proof.
\end{proof}

In our proof we can choose any $\varepsilon$ between $\delta'$ and $1/2$ and obtain some range of $\gamma$ for which the same result holds.  Instead of finding an optimal value of $\varepsilon$ as a function of $\delta'$ to obtain the largest rang, we note that this optimum value is achieved near $\delta'/2 + 1/4$, halfway between $\delta'$ and $1/2$.  Making this assumption allows us to get some concrete bounds on $\gamma$ and $\delta'$, as we do in the corollaries below.  First, we show this result implies the absence of separation.

\begin{thm}\label{thm:int}
	Let $\P$ be any $\alpha$-compressed boundary.
	Let $\delta < 1/4$ and $\gamma$ close enough to one such that there exists an $\varepsilon \in (\delta/(1-2\delta),1/2)$ where
	\begin{align}\label{eqn:eps-gamma-delta} \left(\frac{\varepsilon}{1-\varepsilon}\right)^{(\varepsilon-\delta/(1-2\delta))/11} < \gamma <  \left(\frac{1-\varepsilon}{\varepsilon}\right)^{(\varepsilon-\delta/(1-2\delta))/11} . \end{align}
	For any $\beta$, the probability that a particle configuration drawn at random from $\pi_\P$ is $(\beta, \delta)$-separated is at most $\overline \zeta^{\sqrt{n}}$ for some constant $\overline\zeta < 1$.
\end{thm}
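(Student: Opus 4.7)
The plan is to prove the contrapositive probabilistically: separation forces a rare structural event, which I bound via union. I will combine the two technical lemmas that immediately precede the theorem, in essentially the way set up by the technical overview (a polynomial collection of events, each superpolynomially unlikely).

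First I would invoke Lemma~\ref{lem:diamondexists}. Because $\delta < 1/4$, the quantity $\delta/(1-2\delta)$ is strictly less than $1/2$, so I may pick $\delta'$ with $\delta/(1-2\delta) < \delta' < \varepsilon$, where $\varepsilon$ is the constant supplied by the theorem's hypothesis (Equation~\ref{eqn:eps-gamma-delta}). Fixing any $c \in (0,1/4)$ so that $n^c$ is an integer, Lemma~\ref{lem:diamondexists} then says: if $\sigma \in \Omega_\P$ is $(\beta,\delta)$-separated, there exists at least one fully occupied $n^c \times n^c$ diamond $\D$ in the partition of $\Gtri$ that contains at most $\delta' n^{2c}$ particles of color $c_2$.

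Next I would apply Lemma~\ref{lem:diamondbound} to each such candidate diamond. Since $\delta' < \varepsilon < 1/2$ and the assumed bounds on $\gamma$ in Equation~\ref{eqn:eps-gamma-delta} are precisely Equation~\ref{eqn:eps-gamma} with this choice of $\delta'$ and $\varepsilon$, the lemma yields a constant $\zeta < 1$ such that, for any fixed fully occupied diamond $\D$, the $\pi_\P$-probability that $\D$ has at most $\delta' n^{2c}$ particles of color $c_2$ is at most $\zeta^{n^{2c}}$. Because the configuration has $n$ particles, the total number of diamonds in the partition that can possibly be fully occupied is $O(n^{1-2c})$. Union bounding,
\begin{equation*}
  \pi_\P(\mS_{\beta,\delta}) \;\le\; O(n^{1-2c}) \cdot \zeta^{n^{2c}},
\end{equation*}
which for $n$ large is bounded by $\overline\zeta^{n^{2c}}$ for any $\overline\zeta \in (\zeta,1)$, since the polynomial prefactor is swallowed by a slight increase in the base of the exponential. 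Taking $c$ arbitrarily close to $1/4$ produces the claimed bound of the form $\overline\zeta^{\sqrt{n}}$ (up to an arbitrarily small loss in the exponent, as alluded to in Section~\ref{sec:techoverview}).

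The main obstacle is not a single hard calculation but rather the parameter bookkeeping: one must verify that the hypotheses of Lemma~\ref{lem:diamondexists} (which needs $\delta' > \delta/(1-2\delta)$) and Lemma~\ref{lem:diamondbound} (which needs an $\varepsilon \in (\delta', 1/2)$ satisfying a $\gamma$-constraint that becomes more restrictive as $\delta' \to 1/2$) can be satisfied simultaneously from the single hypothesis on $\gamma$ given in the theorem. The condition $\delta < 1/4$ is exactly what is needed to open up a nonempty interval $(\delta/(1-2\delta), 1/2)$ from which both $\delta'$ and $\varepsilon$ can be selected, and I would make this choice explicit at the start of the proof so that the two lemmas chain together cleanly.
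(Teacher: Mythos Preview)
Your proposal is correct and follows the same approach as the paper: invoke Lemma~\ref{lem:diamondexists} to find a biased diamond, bound its probability via Lemma~\ref{lem:diamondbound}, and union bound over the polynomially many diamonds. One small imprecision: the assumed bounds on $\gamma$ in Equation~\ref{eqn:eps-gamma-delta} are not \emph{precisely} Equation~\ref{eqn:eps-gamma} for your chosen $\delta'$, since $\delta' > \delta/(1-2\delta)$ strictly shrinks the allowed interval for $\gamma$; you must (as the paper does, and as you yourself note in the final paragraph) use the strictness of the inequalities in Equation~\ref{eqn:eps-gamma-delta} to pick $\delta'$ close enough to $\delta/(1-2\delta)$ that $\gamma$ still lies in the narrower interval.
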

\begin{proof}
	Pick a $\delta' < \delta/(1-2\delta)$ such that  $\delta' < 1/2$, possible because $\delta < 1/4$, and $\delta'$ satisfies
		 \begin{align} \left(\frac{\varepsilon}{1-\varepsilon}\right)^{(\varepsilon-\delta')/11} < \gamma <  \left(\frac{1-\varepsilon}{\varepsilon}\right)^{(\varepsilon-\delta')/11} . \end{align}
		 This second condition is possible because Equation~\ref{eqn:eps-gamma-delta} is satisfied with strict inequalities.

	If a configuration $\sigma \in \Omega_\P$ is $(\beta, \delta)$-separated, then by Lemma~\ref{lem:diamondexists} for the $ \delta'$ we have chosen there is an $n^{c} \times n^c$ diamond $\D$ that contains at most $\delta' n^{2c}$ particles of color $c_2$, where $c < 1/4$ such that $n^c$ is an integer (for larger and larger $n$ we can pick $c$ closer and closer to $1/4$).   The interior of $\P$ can be covered by at most $n$ diamonds, so by a union bound and Lemma~\ref{lem:diamondbound} the probability that $\sigma$ is $(\beta, \delta)$-clustered is less than $n \zeta^{n^{2c}}$.  There exists a constant $\overline\zeta$ such that for sufficiently large $n$, $n \zeta^{n^{2c}} < \overline\zeta^n$.  This proves the theorem.
\end{proof}

We wish to understand the range of $\gamma$ for which there exists an $\varepsilon$ satisfying~\ref{eqn:eps-gamma-delta}; we focus on the upper bound on $\gamma$, as the lower bound is its reciprocal.  If $\gamma < \left(\frac{1-\varepsilon}{\varepsilon}\right)^{\varepsilon/11}$, then we can always choose a $\delta$ small enough so that this equation is satisfied.
Maximizing this expression exactly with respect to $\epsilon$ is challenging to do exactly, so we note that numerically this is achieved when $\varepsilon \sim  0.217812$, corresponding to an upper bound on $\gamma$ of about $1.02564$, which for simplicity we round down to the more explicit bound of $\gamma < 81/79 \sim 1.02532$.

\begin{cor}\label{cor:comp+int-lambdagamma}
	For Markov chain $\M$ with parameters $\lambda$ and $\gamma$ satisfying
	$\lambda > 1$, $\gamma \in (\gintl,\gintu)$, and $\lambda (\gamma+1) > 2(2+\sqrt{2})e^{\threea} \sim \lgint$, there exist constants $\beta$ and $\delta$ such that for large enough $n$, $\M$ accomplishes $(\beta,\delta)$-separation at stationarity with probability at most $\zeta^{2c}$ where $\zeta < 1$ and $c < 1/4$ where $c$ be made arbitrarily close to $1/4$.
\end{cor}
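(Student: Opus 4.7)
The plan is to derive this corollary directly from Corollary~\ref{cor:comp-int-lambdagamma} (compression for $\gamma$ near one) and Theorem~\ref{thm:int} (non-separation on a fixed compressed boundary), by conditioning on the boundary and applying a union bound. First, the hypotheses $\lambda > 1$, $\gamma \in (\gintl, \gintu)$, and $\lambda(\gamma+1) > 2(2+\sqrt{2})e^{3a}$ with $a = \aint$ are exactly those of Corollary~\ref{cor:comp-int-lambdagamma}, which therefore supplies a constant $\alpha > 1$ and a $\zeta_1 \in (0,1)$ such that $\pi(S_\alpha) \leq \zeta_1^{\sqrt{n}}$, where $S_\alpha$ denotes the set of configurations that fail to be $\alpha$-compressed.

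Next I select $\beta$ and $\delta$ so that Theorem~\ref{thm:int} applies at this $\gamma$. Any $\beta > 0$ is permitted by the theorem, so the content lies in choosing $\delta$. The range $\gamma \in (\gintl, \gintu)$ was chosen in the discussion immediately preceding this corollary as a subset of the $\gamma$-interval for which some $\varepsilon \in (\delta/(1-2\delta), 1/2)$ satisfies Equation~\ref{eqn:eps-gamma-delta} when $\delta/(1-2\delta) = 0$. Since both sides of Equation~\ref{eqn:eps-gamma-delta} are continuous in $\delta$ and the inequality is strict at $\delta = 0$, it persists on some interval $\delta \in [0, \delta_\gamma)$ for a threshold $\delta_\gamma > 0$ depending only on $\gamma$ and our choice of $\varepsilon$. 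Fix any $\delta$ with $0 < \delta < \min(\delta_\gamma, 1/4)$; the hypotheses of Theorem~\ref{thm:int} are then satisfied.

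Finally, I decompose $\pi(\mS_{\beta,\delta})$ according to the boundary of the configuration and apply the two estimates uniformly:
\begin{align*}
\pi(\mS_{\beta,\delta})
&\leq \pi(S_\alpha) + \sum_{\mathcal{P}\,:\,|\mathcal{P}| \leq \alpha p_{min}} \pi(\Omega_\mathcal{P})\, \pi_\mathcal{P}(\mS_{\beta,\delta}) \\
&\leq \zeta_1^{\sqrt{n}} + \overline\zeta^{n^{2c}} \sum_{\mathcal{P}\,:\,|\mathcal{P}| \leq \alpha p_{min}} \pi(\Omega_\mathcal{P}) \\
&\leq \zeta_1^{\sqrt{n}} + \overline\zeta^{n^{2c}},
\end{align*}
where the middle step applies Theorem~\ref{thm:int} uniformly over every $\alpha$-compressed boundary (using that Theorem~\ref{thm:int} permits $c$ arbitrarily close to $1/4$). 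Since $2c < 1/2$, the term $\overline\zeta^{n^{2c}}$ decays strictly slower than $\zeta_1^{\sqrt{n}}$ and thus dominates the sum for all sufficiently large $n$; the two can then be absorbed into a single bound of the form $\zeta^{n^{2c}}$ for some $\zeta \in (0,1)$, matching the claimed conclusion. The only mildly delicate point is the continuity argument used to select $\delta$, and no new techniques beyond those already developed in Sections~\ref{sec:compression-small-gamma} and~\ref{sec:integration} are required.
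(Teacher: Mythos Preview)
Your proposal is correct and follows essentially the same approach as the paper: invoke Corollary~\ref{cor:comp-int-lambdagamma} to obtain $\alpha$-compression with high probability, choose $\delta$ small enough (the paper fixes $\varepsilon = 0.22$ explicitly, whereas you argue by continuity) so that Theorem~\ref{thm:int} applies, and combine the two exponentially small error terms into a single $\zeta^{n^{2c}}$ bound. Your explicit decomposition over boundaries $\mathcal{P}$ is a minor elaboration of the paper's more compressed phrasing, but the underlying argument is identical.
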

\begin{proof}
Given~$\lambda$ and $\gamma$ satisfying the conditions of the theorem, by Corollary~\ref{cor:comp-int-lambdagamma} there is a constant $\alpha > 1$ and a $\zeta_1 < 1$ such that the stationary probability that the particles are $\alpha$-compressed is at least $1 - \zeta_1^{\sqrt{n}}$.

Suppose the particles are $\alpha$-compressed.  Because $\gamma < \gintu$, then for $\varepsilon = 0.22$ it holds that $\gamma < \left(\frac{1-\varepsilon}{\varepsilon}\right)^{\varepsilon/11}$. Similarly, as $\gamma > \gintl$, for the same $\varepsilon$ it holds that $\gamma > \left(\frac{\varepsilon}{1-\varepsilon}\right)^{\varepsilon/11}$.  We can find a $\delta$ small enough so that Equation~\ref{eqn:eps-gamma-delta}, the hypothesis of Theorem~\ref{thm:int}, holds. Using this theorem, we conclude that for any $\beta$, the probability that the particles are not $(\beta,\delta)$-separated is at least $1-\zeta_2^{n^{2c}}$, where $c < 1/4$ can be made arbitrarily close to $1/4$ for large $n$ and $\zeta_2< 1$.

We conclude by combining these results that the probability the particles are not $(\beta,\delta)$-separated is at least $1 - \zeta_1^{\sqrt{n}} - \zeta_2^{n^{2c}}$, which for some $\zeta < 1$ is at least $1-\zeta^{n^{2c}}$. This implies the claimed result.
\end{proof}

\begin{cor}\label{cor:comp+int-betadelta}
	For any $\beta > 0 $ and any $\delta < 1/4$, there exist values of $\lambda$ and $\gamma$ such that $\M$ with parameters $\lambda$ and $\gamma$ accomplishes $(\beta, \delta)$-separation with probability at most $\zeta^{2c}$ where $\zeta < 1$ and $c < 1/4$ where $c$ be made arbitrarily close to $1/4$.
	\end{cor}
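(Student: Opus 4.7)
The plan is to combine the compression result of Section~\ref{sec:compression-small-gamma} with the integration result of Theorem~\ref{thm:int}, in direct analogy to the proof of Corollary~\ref{cor:comp+sep-betadelta} for the separation case. The two ingredients must be chosen in a specific order because the integration argument only applies to $\alpha$-compressed boundaries, and compression must therefore be guaranteed first by fixing $\lambda$ appropriately for whatever $\gamma$ we pick.

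First I would choose $\gamma$ close enough to $1$ that the hypothesis of Theorem~\ref{thm:int} is satisfied for the given $\delta < 1/4$. Because $\delta < 1/4$ the quantity $\delta/(1-2\delta)$ is strictly less than $1/2$, so the open interval $(\delta/(1-2\delta),\, 1/2)$ is nonempty and we may select some $\varepsilon$ in it. Both endpoints in Equation~\ref{eqn:eps-gamma-delta} then lie on opposite sides of $1$, so shrinking $\gamma$ toward $1$ places it in the admissible window. I would additionally require $\gamma \in (\gintl,\gintu)$ so that Theorem~\ref{thm:comp-int}/Corollary~\ref{cor:comp-int-lambdagamma} remain available.

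Next, for this fixed $\gamma$, I would pick $\lambda$ large enough that $\lambda(\gamma+1) > 2(2+\sqrt{2})e^{\threea}$. By Corollary~\ref{cor:comp-int-lambdagamma} this produces a constant $\alpha > 1$ (depending on $\lambda$ and $\gamma$) such that a configuration drawn from $\pi$ is $\alpha$-compressed except with probability at most $\zeta_1^{\sqrt{n}}$ for some $\zeta_1 < 1$. Conditioned on $\alpha$-compression, Theorem~\ref{thm:int} applied with the chosen $\gamma$, $\delta$, and the arbitrary $\beta$ of the corollary yields that $(\beta,\delta)$-separation occurs with probability at most $\overline\zeta_2^{\,n^{2c}}$, where $c$ can be taken arbitrarily close to $1/4$ for $n$ large and $\overline\zeta_2 < 1$. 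A union bound over the compression failure event and the separation event among compressed configurations gives a total bound of $\zeta_1^{\sqrt{n}} + \overline\zeta_2^{\,n^{2c}}$, which is at most $\zeta^{n^{2c}}$ for some $\zeta < 1$ and $n$ sufficiently large, as claimed.

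The argument is essentially a packaging of Corollary~\ref{cor:comp-int-lambdagamma} and Theorem~\ref{thm:int}, so the main thing to verify is that the parameter windows for the two hypotheses are mutually compatible. This is where I expect any subtlety to live, but it is mild: the integration condition constrains only how close $\gamma$ must be to $1$ (independently of $\lambda$ and $\beta$), while the compression condition is $\lambda(\gamma+1) > 2(2+\sqrt 2)e^{\threea}$, which can be met for any fixed $\gamma > \gintl$ simply by enlarging $\lambda$. Thus no tension arises between the two regimes, and the corollary follows without additional probabilistic machinery beyond the two theorems cited.
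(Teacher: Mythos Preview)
Your proposal is correct and follows essentially the same approach as the paper: fix $\gamma$ so that the hypothesis of Theorem~\ref{thm:int} holds for the given $\delta$, then choose $\lambda$ large enough to invoke Corollary~\ref{cor:comp-int-lambdagamma} for compression, and combine the two bounds via a union bound. The paper simplifies slightly by taking $\gamma = 1$ outright (which makes Equation~\ref{eqn:eps-gamma-delta} trivially satisfied for any $\varepsilon \in (\delta/(1-2\delta),1/2)$), whereas you argue more generally for $\gamma$ in a neighborhood of $1$; both are valid and the structure is the same.
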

\begin{proof}
For simplicity, we pick $\gamma = 1$ and $\lambda > (2+\sqrt{2}) e^{\threea}$. By Corollary~\ref{cor:comp-int-lambdagamma}, there exists an $\alpha > 1$ such that $\M$ with these parameters achieves $\alpha$-compression at stationarity with probability at least $1-\zeta_1^{\sqrt{n}}$ where $\zeta_1 < 1$. As $\delta < 1/4$, $\delta/(1-2\delta) < 1/2$, and Equation~\ref{eqn:eps-gamma-delta} is satisfied for any $\varepsilon$ between $\delta/(1-2\delta)$ and $1/2$. Thus Theorem~\ref{thm:int} applies, and we conclude that, if $\P$ is $\alpha$-compressed, the probability that the particles are not $(\beta,\delta)$-separated is at least $1-\zeta_2^{n^{2c}}$, where $c < 1/4$ can be made arbitrarily close to $1/4$ for large $n$ and $\zeta_2< 1$. Combining these results, we conclude the probability the particles are not $(\beta,\delta)$-separated is at least $1 - \zeta_1^{\sqrt{n}} - \zeta_2^{n^{2c}}$, which for some $\zeta < 1$ is at least $1-\zeta^{n^{2c}}$. This implies the claimed result.
\end{proof}

% bounds on how close to $1$ $\gamma$ must be depend on $\varepsilon$.  For example, if $\delta' = 1/10$, the numerically optimal choice of $\varepsilon$ is approximately $\varepsilon = 0.2865...$. Using $\varepsilon = 0.2865$, we see the theorem holds for $ \gamma \in (0.985, 1.015)$. For $\delta = 1/100$, using $\varepsilon = 0.2255$, the theorem holds for $\gamma \in (0.977,1.024)$.  We do not give an exact formula for the optimal $\varepsilon$ and range of $\gamma$ as a function of $\delta'$, but it can be easily calculated numerically for any given $\delta'$.

\section{Conclusion} \label{sec:conclude}

We have presented a distributed, stochastic algorithm for separation and integration and have rigorously shown that it achieves its goals for various ranges of the bias parameters $\lambda$ and $\gamma$.
Our proofs critically relied on an analysis of the cluster expansion to show that our heterogeneous systems achieve compression; this required techniques from statistical physics literature that are new to computer science.
We then used a nontrivial modification of the \emph{bridging} technique by Miracle, Pascoe, and Randall~\cite{Miracle2011} to show separation occurs among compressed systems when $\lambda$ and $\gamma$ are large; conversely, we used a careful probabilistic argument to show that integration occurs among compressed systems when $\lambda$ is large but $\gamma$ is close to one.
To conclude, we discuss $(i)$ the tightness of the constants that appear in our proofs, $(ii)$ generalizing our proof techniques for heterogeneous systems with more than two colors, $(iii)$ the difficulties in obtaining bounds on the mixing time of our Markov chain, $(iv)$ the generality of our stochastic approach to distributed algorithms, and $(v)$ the utility of the our proof techniques beyond the present work.

We first discuss the constants that appear in our proofs.
For example, Corollary~\ref{cor:comp+sep-lambdagamma} states that for Markov chain $\M$ with parameters $\lambda$ and $\gamma$ large enough, there are constants $\beta$ and $\delta$ such that $(\beta,\delta)$-separation occurs with high probability at stationarity.
However, when we choose explicit values of $\lambda$ and $\gamma$ and calculate for which values of $\beta$ and $\delta$ our proofs guarantee $(\beta,\delta)$-separation, we see $\beta$ and $\delta$ must be quite large. For example, choosing $\lambda = 4$ and $\gamma = 8$ induces compressed-separated behavior both in practice and in our proofs, but only guarantees $\alpha$-compression for $\alpha > 3.60$. Based on this lower bound on $\alpha$, there is no hope to show $(\beta,\delta)$-separation unless $\beta > 2\alpha\sqrt{3} \sim 12.44$ and $\delta$ satisfies $4^{(1+3\delta)/4\delta} < \gamma$, which for $\gamma = 8$ means $\delta > 1/3$. The relationship between $\beta$ and $\delta$ needs to satisfy further conditions as well; for example, when $\delta = 5/12$ we can only achieve $(\beta,\delta)$-separation for $\beta > 22.53$.
Our other corollaries of the the same flavor exhibit similarly bad (or worse) dependencies between these parameters.
We believe these bad dependencies are a result of our proofs and not inherent in the problem itself.
For example, we see clear separation when $\lambda = 4$ and $\gamma = 4$ in Figure~\ref{fig:progress}, but our proofs only guarantee any degree of separation when $\gamma > 4^{5/4} \sim 5.66$; to achieve the amount of separation seen in this figure, $\gamma$ would need to be yet even larger.
Improving these bounds is one direction for future work.
In some sense, it is not particularly surprising that the relationships we get between constants in our proofs are not as good as we observe them to be in practice.
This is because the cluster expansion is often only convergent in very low temperature regimes, i.e., for $\lambda$ and $\gamma$ far away from any critical points where the emergent behavior of $\M$ changes.

We expect our proofs to generalize in a straightforward way for heterogeneous systems with more than two colors, using the same insights that generalize cluster expansion polymers from the Ising model to the Potts model.
For example, when $\gamma$ and $\lambda$ are both large, recall that in Section~\ref{sec:compression-large-gamma} we defined a \emph{loop polymer} to be the set of heterogeneous edges leaving a face.
We proved that for $2$-heterogeneous systems, loop polymers coming from different faces of the same configuration share no edges.
When there are three colors, however, this is no longer true.
Instead, we can define a more general notion of a polymer in a given configuration to be the union of any loop polymers that share edges; now these polymers never share edges because of their maximality.
This is similar to the notion of a \emph{contour} in Pirogov-Sinai theory (see Chapter 7 of~\cite{Friedli2018}).
The rest of our analysis in Section~\ref{sec:compression-large-gamma} should follow with this new notion of a polymer.
Similar modifications in Sections~\ref{sec:separation}, \ref{sec:compression-small-gamma}, and~\ref{sec:integration} should enable our proofs to be adapted --- with little additional insight but a fairly large amount of technical detail --- to the setting with three or more colors.
We expect these generalized proofs for additional colors would yield significantly worse bounds on the constants in the relationships between $\lambda$, $\gamma$, $\beta$, and $\delta$.
However, simulations have shown that compressed-separated behavior occurs for similar values of $\lambda$ and $\gamma$ whether there are two or three colors.
In future work, we can investigate what values of $\lambda$ and $\gamma$ achieve compression and separation as a function of the number of colors.

As in previous papers using the stochastic approach to develop distributed algorithms for programmable matter, we are unable to give any nontrivial bounds on the mixing time of our Markov chain $\M$.
These difficulties in proving polynomial upper bounds on the mixing time are unsurprising, given similarities between $\M$ and a well-studied open problem in statistical physics about the mixing time of Glauber dynamics of the Ising model on $\mathbb{Z}^2$ with plus boundary conditions starting from the all minus state~\cite{Martinelli2010,Lubetzky2013}.
A detailed description of the similarities between this problem and our Markov chain for compression can be found in~\cite{Cannon2016}; introducing heterogeneous particles only further complicates things.
However, the mixing time may not be the right time bound for characterizing when compression and separation occur.
%In fact, compression and separation may occur even before $\M$ reaches its stationary distribution.
Simulations show that both compression and separation occur fairly quickly (Figure~\ref{fig:progress}), although the algorithm continues to gradually achieve more compression and separation.
This suggests that the stationary distribution isn't reached until well after we first see some degree of separation and compression.

We believe the stochastic approach to self-organizing particle systems, which we used here to develop a distributed algorithm for separation and integration in programmable matter, is in fact much more broadly applicable.
At a high level, this approach can potentially be applied to any objective that can be described by a global energy function (where the desirable configurations have low energy values), provided changes in energy due to particle movements can be calculated with only local information.
Choosing the correct global energy function is the key; translating the energy function into a Markov chain algorithm and then into a distributed algorithm is, by now, fairly routine (see~\cite{Cannon2016,AndresArroyo2018}).
However, proving that the stationary distribution has our desired properties with high probability remains challenging, requiring application-specific proof techniques.

For separation and integration, one such proof technique was the cluster expansion.
The cluster expansion has recently been used in computer science to develop low-temperature approximation and sampling algorithms, and the related Pirogov-Sinai theory has been used to show slow mixing of certain Markov chains.
However, we used a completely different aspect of the cluster expansion: it can be used to separate partition functions into surface and volume terms.
The cluster expansion and Pirogov-Sinai theory have been widely used in the statistical physics literature for a variety of purposes, and we believe there are many more ways in which a thorough understanding of these methods can provide insights for computer science problems.

\bibliographystyle{plainurl}
\bibliography{ref}

\newpage

\renewcommand \thesection{\Alph{section}}
\setcounter{section}{0}

\section{Appendix}

Here we include the proofs of some of our claims that were omitted from the main body of this paper for conciseness and clarity. These proofs were either deemed too elementary or too similar to previous literature to merit inclusion in the main body of the paper, but we include them here for the sake of completeness.

\subsection{Proof of Lemma~\ref{lem:pmin}}
\label{sec:pf-pmin}

Recall that Lemma~\ref{lem:pmin} states that
	for any $n \geq 1$, there is a connected, hole-free particle configuration of $n$ particles with perimeter at most $2\sqrt{3}\sqrt{n}$. That is, $p_{min}(n) \leq 2\sqrt{3}\sqrt{n}$.

\begin{proof}
	The lemma can easily be verified for $n \leq 6$.
	For $n \geq 7$, we begin with the case where $n = 3\ell^2 + 3\ell + 1$ for some integer $\ell \geq 1$. A regular hexagon with side length $\ell$ can be decomposed into six triangles, each with $\ell(\ell+1)/2$ particles, and a single center vertex, for $3\ell^2 + 3\ell + 1$ total particles; see Figure~\ref{fig:pmina}. Such a hexagon has perimeter $6\ell$. We see that
	\[
	p_{min}(3\ell^2 + 3\ell + 1) \leq 6\ell \leq 2\sqrt{3}\sqrt{3 \ell(\ell+1)} \leq 2 \sqrt{3} \sqrt{n-1} \leq 2 \sqrt{3} \sqrt{n}.
	\]
	Now we consider $n = 3\ell^2 + 3\ell + 1 + k$, for integers $\ell$ and $k$, where $k \in [1,6\ell + 6)$. As $(3\ell^2 + 3 \ell + 1) + 6\ell + 6 = 3(\ell+1)^2 + 3(\ell+1) + 1$, this covers all possible values of $n$.
	\begin{figure}
		\centering
		\begin{subfigure}{0.48\textwidth}
			\centering
			\includegraphics[scale = 0.4, page = 1]{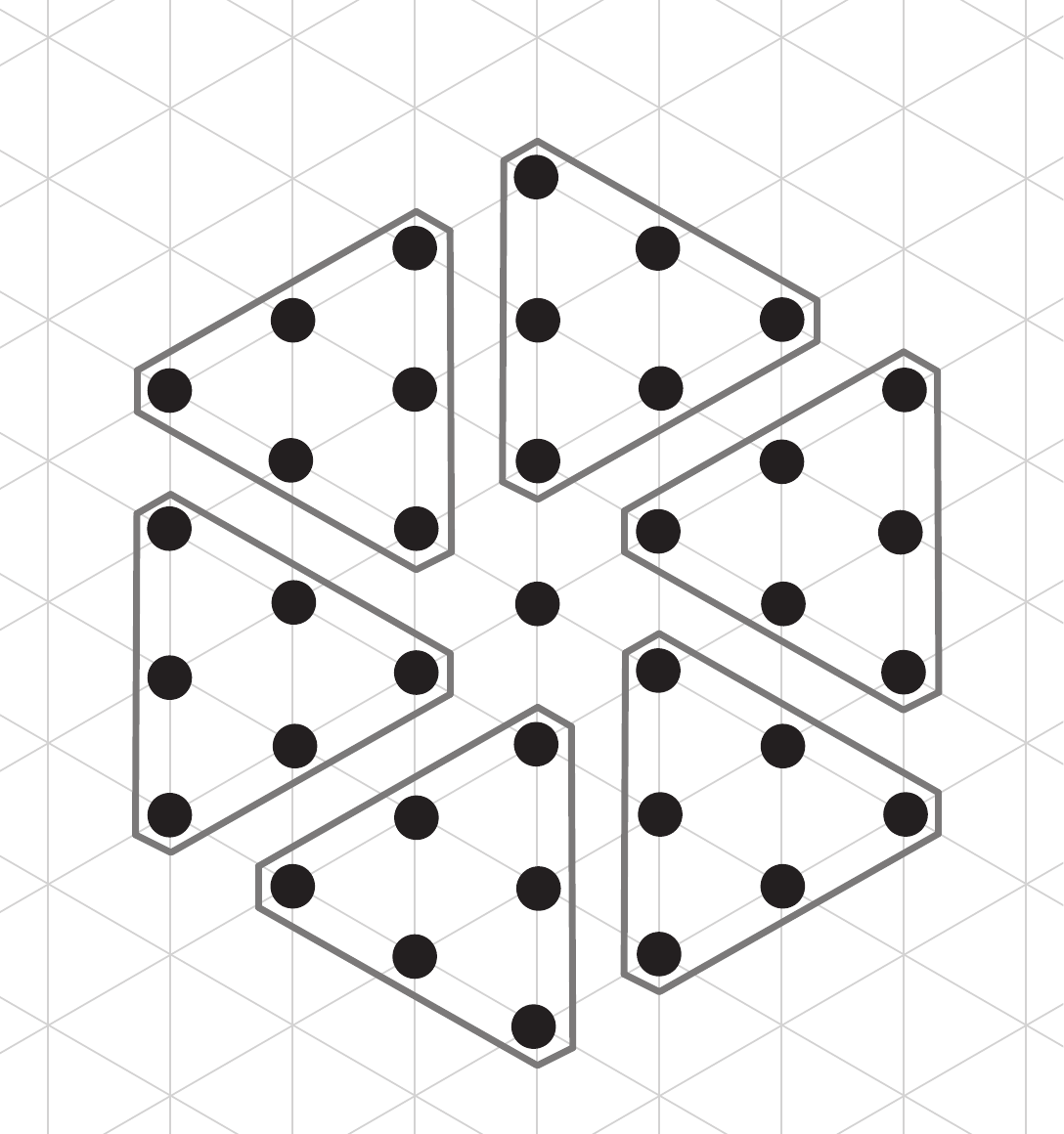}
			\caption{\centering}
			\label{fig:pmina}
		\end{subfigure}%
		\begin{subfigure}{0.48\textwidth}
			\centering
			\includegraphics[scale = 0.4, page = 2]{pmin.pdf}
			\caption{\centering}
			\label{fig:pminb}
		\end{subfigure}%
		\caption{(a) The regular hexagon with side length $\ell = 3$, which has $3 \ell^2 + 3\ell + 1$ total particles. (b) A particle configuration with $n = 3 \ell^2 + 3\ell + 1 + k$ particles for $\ell = 3$ and $k = 6$.  It has perimeter $20 < 2\sqrt{3}\sqrt{n}$.}
	\end{figure}
	We construct a particle configuration on $n = 3 \ell^2 + 3\ell + 1 + k$ particles by first constructing a regular hexagon of side length $\ell$ and then adding the remaining $k$ particles around the outside of this hexagon in a single layer, completing one side before beginning the next; see Figure~\ref{fig:pminb}, where $\ell = 3$ and $k = 6$. For $k \leq \ell$, the perimeter of this configuration is $6\ell + 1$. More generally, the perimeter increases by one when particles begin to be added to a new side of the hexagon, and so for $i = 2,3,4,5,6$, for $(i-1)\ell + (i-2) < k \leq i\ell + (i-1)$ the perimeter of this configuration is $6\ell + i$. We see that (using $i \leq 6$ and $\ell \geq 1$), for any $i = 1,2,3,4,5,6$,
	\begin{align*}
	p_{min}(3\ell^2 + 3\ell + 1 + k) &\leq
	6\ell+i \leq 2\sqrt{3} \sqrt{\left(\sqrt{3}\ell + \frac{i}{2\sqrt{3}}\right)^2}
	= 2\sqrt{3}\sqrt{3\ell^2 + \frac{i^2}{12}+ i}
	\\&\leq 2\sqrt{3}\sqrt{3\ell^2 + 3 + i}
	\\&\leq 2\sqrt{3}\sqrt{3\ell^2 + 3 \ell + 1 + i - 1}
	%\leq 2\sqrt{3}\sqrt{3 \ell^2 + 3\ell + 1 + \ell(i-1)}
	\\&\leq 2\sqrt{3}\sqrt{3 \ell^2 + 3\ell + 1 + k}
	= 2\sqrt{3}\sqrt{n}.
	\end{align*}
	This concludes our proof.
	% for $2\ell +1 < k \leq 3\ell+2$, the perimeter is $6\ell + 3$; for $3\ell+2 < k \leq 4\ell+3$, the perimeter is $6\ell + 4$; for $4\ell+3 < k \leq 5\ell+4$, the perimeter is $6\ell + 5$; for $5\ell+4 < k \leq 6\ell+5$, the perimeter is $6\ell + 5$.  Case analysis for all these different values of $k$ shows that $6\ell + i < $
\end{proof}

\subsection{\texorpdfstring{Detailed Balance Proof that $\pi$ is the Stationary Distribution of $\M$}{Detailed Balance Proof that pi is the Stationary Distribution of M}} \label{app:detailedbalance}

Recall that Lemma~\ref{lem:statdist} states that the stationary distribution of $\M$ is given by $\pi(\sigma) = 0$ if $\sigma$ is disconnected or has holes, and by $\pi(\sigma) = (\lambda\gamma)^{-p(\sigma)} \cdot \gamma^{-h(\sigma)} / Z$ otherwise, where $Z = \sum_{\sigma}(\lambda\gamma)^{-p(\sigma)} \cdot \gamma^{-h(\sigma)}$.
Here, we analyze the necessary cases to verify this with detailed balance.

\begin{proof}
We first verify that $\pi(\sigma) = \lambda^{e(\sigma)} \cdot \gamma^{a(\sigma)} / Z_e$ --- where $e(\sigma)$ is the number of edges of $\sigma$, $a(\sigma)$ is the number of homogeneous edges of $\sigma$, and $Z_e = \sum_{\sigma} \lambda^{e(\sigma)} \cdot \gamma^{a(\sigma)}$ --- is the stationary distribution by detailed balance.
We then show that this form of $\pi$ can be rewritten as in the lemma.

Consider any two connected, hole-free configurations $\sigma, \tau$ that differ by one move of some particle from location $\ell$ in $\sigma$ to a neighboring location $\ell'$ in $\tau$.
By examining $\M$, we see that the probability of transitioning from $\sigma$ to $\tau$ is:
\[M(\sigma,\tau) = \min\left\{1, \lambda^{|N(\ell')| - |N(\ell)|} \cdot \gamma^{|N_i(\ell')| - |N_i(\ell)|}\right\} / 6n.\]
A similar analysis shows:
\[M(\tau, \sigma) = \min\left\{1, \lambda^{|N(\ell)| - |N(\ell')|} \cdot \gamma^{|N_i(\ell)| - |N_i(\ell')|}\right\} / 6n.\]
Without loss of generality, suppose $\lambda^{|N(\ell')| - |N(\ell)|} \cdot \gamma^{|N_i(\ell')| - |N_i(\ell)|} < 1$, meaning $M(\sigma, \tau)$ is this value over $6n$ and $M(\tau, \sigma) = 1/6n$.
Because the only edges that differ in $\sigma$ and $\tau$ are incident to $\ell$ or $\ell'$,
\begin{align*}
\pi(\sigma)M(\sigma, \tau) &= \frac{\lambda^{e(\sigma)} \cdot \gamma^{a(\sigma)}}{Z_e} \cdot \frac{1}{n} \cdot \frac{1}{6} \cdot \lambda^{|N(\ell')| - |N(\ell)|} \cdot \gamma^{|N_i(\ell')| - |N_i(\ell)|}\\
&= \frac{\lambda^{e(\sigma)} \cdot \gamma^{a(\sigma)}}{Z_e} \cdot \frac{1}{n} \cdot \frac{1}{6} \cdot \lambda^{e(\tau) - e(\sigma)} \cdot \gamma^{a(\tau) - a(\sigma)}\\
&= \frac{\lambda^{e(\tau)} \cdot \gamma^{a(\tau)}}{Z_e} \cdot \frac{1}{n} \cdot \frac{1}{6} \cdot 1 = \pi(\tau)M(\tau, \sigma)
\end{align*}
Thus, detailed balance is satisfied for particle moves that are not swaps.

Suppose instead that $\sigma$ and $\tau$ differ by a swap move of particle $P$ with color $c_i$ at location $\ell$ in $\sigma$ and particle $Q$ with color $c_j$ at neighboring location $\ell'$ in $\sigma$.
This move could occur if $P$ or $Q$ is chosen in Step~\ref{alg:step:mstart} of $\M$, so:
\[M(\sigma, \tau) = \min\left\{1, \gamma^{|N_i(\ell') \setminus \{P\}| - |N_i(\ell)| + |N_j(\ell) \setminus \{Q\}| - |N_j(\ell')|}\right\} / 3n.\]
Similarly, because $\tau$ has $P$ at location $\ell'$ and $Q$ at location $\ell$, we have:
\[M(\tau, \sigma) = \min\left\{1, \gamma^{|N_i(\ell) \setminus \{P\}| - |N_i(\ell')| + |N_j(\ell') \setminus \{Q\}| - |N_j(\ell)|}\right\} / 3n.\]
Without loss of generality, suppose that $\gamma^{|N_i(\ell') \setminus \{P\}| - |N_i(\ell)| + |N_j(\ell) \setminus \{Q\}| - |N_j(\ell')|} < 1$, so $M(\sigma, \tau)$ is this value over $3n$ and $M(\tau, \sigma) = 1/3n$.
Then,
\begin{align*}
\pi(\sigma)M(\sigma, \tau) &= \frac{\lambda^{e(\sigma)} \cdot \gamma^{a(\sigma)}}{Z_e} \cdot \frac{2}{n} \cdot \frac{1}{6} \cdot \gamma^{|N_i(\ell') \setminus \{P\}| - |N_i(\ell)| + |N_j(\ell) \setminus \{Q\}| - |N_j(\ell')|}\\
&= \frac{\lambda^{e(\sigma)} \cdot \gamma^{a(\sigma)}}{Z_e} \cdot \frac{2}{n} \cdot \frac{1}{6} \cdot \gamma^{(|N_i(\ell') \setminus \{P\}| + |N_j(\ell) \setminus \{Q\}|) - (|N_i(\ell)| + |N_j(\ell')|)}\\
&= \frac{\lambda^{e(\sigma)} \cdot \gamma^{a(\sigma)}}{Z_e} \cdot \frac{2}{n} \cdot \frac{1}{6} \cdot \gamma^{a(\tau) - a(\sigma)}\\
&= \frac{\lambda^{e(\tau)} \cdot \gamma^{a(\tau)}}{Z_e} \cdot \frac{2}{n} \cdot \frac{1}{6} \cdot 1 = \pi(\tau)M(\tau, \sigma)
\end{align*}
In both cases, detailed balance is satisfied, so we conclude the stationary distribution $\pi$ (which is only non-zero over connected, hole-free configurations) is given by $\pi(\sigma) = \lambda^{e(\sigma)} \cdot \gamma^{a(\sigma)} / Z_e$.

Since every edge of $\sigma$ is either homogeneous or heterogeneous, we have $e(\sigma) = a(\sigma) + h(\sigma)$.
From~\cite{Cannon2016}, we have $e(\sigma) = 3n - p(\sigma) - 3$, where $n$ is the number of particles in the system.
Thus, we can rewrite this unique stationary distribution as follows:
\begin{align*}
\pi(\sigma)
&= \frac{\lambda^{e(\sigma)} \cdot \gamma^{a(\sigma)}}{Z_e}\\
&= \frac{\lambda^{e(\sigma)} \cdot \gamma^{a(\sigma)}}{\sum_{\sigma}\lambda^{e(\sigma)} \cdot \gamma^{a(\sigma)}} \\
&= \frac{(\lambda\gamma)^{-3n + 3} \cdot (\lambda\gamma)^{e(\sigma)} \cdot \gamma^{a(\sigma) - e(\sigma)}}{(\lambda\gamma)^{-3n + 3} \cdot \sum_{\sigma}(\lambda\gamma)^{e(\sigma)} \cdot \gamma^{a(\sigma) - e(\sigma)}} \\
&= \frac{(\lambda\gamma)^{e(\sigma) - 3n + 3} \cdot \gamma^{a(\sigma) - e(\sigma)}}{\sum_{\sigma}(\lambda\gamma)^{e(\sigma) - 3n + 3} \cdot \gamma^{a(\sigma) - e(\sigma)}} \\
&= \frac{(\lambda\gamma)^{-p(\sigma)} \cdot \gamma^{-h(\sigma)}}{\sum_{\sigma}(\lambda\gamma)^{-p(\sigma)} \cdot \gamma^{-h(\sigma)}}. \\
\end{align*}
This concludes our proof.
\end{proof}

\subsection{Lemmas needed for the proof of Theorem~\ref{thm:bdry-volume}} \label{app:bdry-volume}

We first present an intermediate result that follows from the hypothesis of Theorem~\ref{thm:bdry-volume}. This lemma is essentially Theorem 5.4 and Remark 5.5 of~\cite{Friedli2018}, with details included to ensure nothing goes wrong when considering our specific type of infinite $\Gamma$.

Recall our setting.
Our polymers are connected edges sets $\xi \subseteq E(\Gtri)$.  The structural/combinatorial properties a polymer has and the notion of compatibility we will use vary between our different applications of the cluster expansion. For a polymer $\xi$,  $[ \xi ]$ is the the minimal edge set such that if $\xi'$ is not compatible with $\xi$, then $\xi'$ must contain an edge of $[\xi]$; $|[\xi]|$ is the size of this set.
%Let $\Gamma_e\subseteq \Gamma$ be all polymers in $\Gamma$ containing edge $e \in E(\Gtri)$.
\begin{lem}\label{lem:KP-inf}
	Let $\Gamma$ be an infinite set of polymers $\xi \subseteq E(\Gtri)$ that is closed under translation and rotation. If there is a constant $c$ such that for any edge $e \in E(\Gtri)$,
	\begin{align}
	\sum_{\substack{\xi \in \Gamma : \\ e \in \xi}} w(\xi) e^{c| [\xi]|} \leq c,
	\end{align}
	then for any $\xi_1 \in \Gamma$,
	\begin{align*}
	1 + \sum_{k \geq 2} \sum_{\xi_2 \in \Gamma} .... \sum_{\xi_k \in \Gamma} \frac{1}{(k-1)!} \left|\sum_{\substack{G \subseteq H_{\{\xi_1,\xi_2,...\xi_k\}}:\\ connected, \\ spanning}} (-1)^{|E(G)|} \right| \prod_{i = 2}^k w(\xi_i)  \leq e^{c|[\xi_1]|}.
	\end{align*}
	\end{lem}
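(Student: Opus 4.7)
The plan is to establish this bound, which is essentially the standard Kotecky--Preiss convergence estimate adapted to our infinite, translation-invariant polymer family $\Gamma$; the argument closely follows the proof of Theorem~5.4 together with Remark~5.5 in~\cite{Friedli2018}. The strategy has three parts: first, use a Penrose tree-graph inequality to eliminate the alternating sign; second, evaluate the resulting sum over rooted labeled trees using the hypothesis leaf-by-leaf; and third, pass from finite subfamilies of $\Gamma$ to all of $\Gamma$ by monotone convergence.

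First, for any cluster $X = \{\xi_1,\ldots,\xi_k\}$ I would invoke the Penrose tree-graph inequality
\[
\left|\sum_{\substack{G \subseteq H_X:\\ \text{connected, spanning}}} (-1)^{|E(G)|}\right| \;\leq\; \#\{\text{spanning trees of } H_X\},
\]
which replaces the alternating sum by a nonnegative count of spanning trees of the incompatibility graph. After this reduction, the left-hand side of the lemma is bounded by a sum over rooted labeled trees on $\{1,2,\ldots,k\}$ (with root always vertex $1$, corresponding to $\xi_1$) carrying polymer assignments to vertices $2,\ldots,k$.

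Second, I would evaluate the resulting expression by iteratively pruning leaves. The factor $1/(k-1)!$ absorbs the labeling of the non-root vertices, reducing the count to a sum over unlabeled rooted tree shapes, each producing a product of local sums along the edges of the tree. The KP hypothesis supplies exactly the estimate needed for each local sum: whenever a vertex $\xi_j$ has a child $\xi_i$ in the tree, $\xi_i$ ranges over polymers incompatible with $\xi_j$, and any such $\xi_i$ must share an edge with $[\xi_j]$, so
\[
\sum_{\substack{\xi_i \in \Gamma:\\ \xi_i,\xi_j \text{ incompatible}}} |w(\xi_i)|\, e^{c|[\xi_i]|}
\;\leq\; \sum_{e \in [\xi_j]} \sum_{\substack{\xi_i \in \Gamma:\\ e \in \xi_i}} |w(\xi_i)|\, e^{c|[\xi_i]|}
\;\leq\; c\,|[\xi_j]|.
\]
Pushing these bounds inductively from leaves toward the root replaces each subtree below a vertex $\xi_j$ by a factor of $c|[\xi_j]|$, and summing over all rooted tree shapes on $k$ vertices generates the Taylor series of the exponential, giving the claimed bound $e^{c|[\xi_1]|}$.

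Third, since $\Gamma$ is infinite I would exhaust it by an increasing sequence $\Gamma_N \subseteq \Gamma$ of finite subfamilies, each of which inherits the hypothesis. Applying the finite version to each $\Gamma_N$ and taking $N \to \infty$ by monotone convergence (all terms are nonnegative after the Penrose reduction) extends the bound to all of $\Gamma$. The main obstacle is the combinatorial bookkeeping in the second step: one must correctly match the Penrose spanning-tree count against sums over rooted labeled trees, track the factorial, and verify that the leaf-pruning recursion produces precisely the exponential series rather than a weaker bound. The translation and rotation invariance of $\Gamma$ is not itself used in this lemma---it becomes essential only when Lemma~\ref{lem:KP-inf} is combined in the proof of Theorem~\ref{thm:bdry-volume} to obtain the per-edge bound on $\sum_{X \ni e} |\Psi(X)|$---so what remains here is a purely combinatorial polymer-model estimate.
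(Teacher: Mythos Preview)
Your approach via the Penrose tree-graph inequality is correct and is a standard alternative to the route the paper takes. The paper does \emph{not} invoke Penrose; instead it follows the direct inductive argument of Theorem~5.4 in~\cite{Friedli2018}. There one inducts on the truncation level $N$ of the outer sum, and in the inductive step decomposes each connected spanning subgraph of $K_k$ according to the components obtained after deleting vertex~$1$: this yields a product over subtrees hanging off the root, and the elementary inequality $\bigl|\prod_{a}(1+\zeta(\xi_1,\xi_a))-1\bigr|\leq\sum_a|\zeta(\xi_1,\xi_a)|$ plays the role that your Penrose bound plays. The resulting recursion is then closed by applying the induction hypothesis to each sub-cluster, giving exactly $\exp\bigl(\sum_m B_m\bigr)\leq e^{c|[\xi_1]|}$. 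Your tree-pruning recursion and the paper's ``remove vertex~$1$ and recurse on components'' are really the same recursion viewed from two angles; the Penrose route front-loads the sign cancellation into one clean inequality, while the paper's route handles signs step by step but avoids importing an extra named lemma. Your remark that translation and rotation invariance are not actually needed here is also correct---the hypothesis already gives the per-edge bound for \emph{every} edge, so the paper's appeal to invariance in this lemma is cosmetic; invariance becomes essential only in Theorem~\ref{thm:bdry-volume}.
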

\begin{proof}This proof is nearly identical to that of Theorem 5.4 in~\cite{Friedli2018}; we reiterate it here to show that it holds for the infinite sets of polymers $\Gamma$ that we consider as well as the finite sets of polymers considered in~\cite{Friedli2018} (considering infinite sets of polymers is necessary for the proof of the next lemma, which plays a key role in our results).

We will show that for any $\xi_1 \in \Gamma$ and for all $N \geq 2$,
\begin{align} \label{eqn:IH}
1+ \sum_{k=2}^N  \sum_{\xi_2 \in \Gamma} \ldots \sum_{\xi_k \in \Gamma} \frac{1}{(k-1)!} \left|\sum_{\substack{G \subseteq H_{\{\xi_1,\xi_2,...,\xi_k\}} \\ connected \\ spanning}} (-1)^{|E(G)|} \right| \prod_{i = 2}^k w(\xi_i)  \leq e^{c |[\xi_1]|}.
\end{align}

 When we fix a choice of $\xi_1$ and take the limit of the left hand side above as $N$ goes to $\infty$, we obtain the lemma.  We prove Equation~\ref{eqn:IH} by induction on $N$, where our induction hypothesis is that Equation~\ref{eqn:IH} holds for any $\xi_1$ and all smaller values of $N$.

When $N=2$,
$H_{\{\xi_1, \xi_2\} }$ only has connected spanning subgraphs if $H_{\{\xi_1, \xi_2\} }$ itself is connected, that is, if $\xi_2$ is incompatible with $\xi_1$, in which case $H_{\{\xi_1, \xi_2\}} = K_2$ is the complete graph on two vertices.  $K_2$ has exactly one cnnected spanning subgraph, and this spanning subgraph has one edge.  For any choice of $\xi_1 \in \Gamma$, the expression on the left hand side of \ref{eqn:IH} becomes
\begin{align*}
1 +\sum_{\substack{\xi_2 \in \Gamma\\ \xi_1, \xi_2\text{\ incompatible}}}|-1| \ w(\xi_2)
& \leq 1+ \sum_{e \in [\xi_1]} \sum_{\substack{\xi_2 \in \Gamma :\\  e\in \xi_2}} w(\xi_2)
\intertext{By the translation and rotational invariance of $\Gamma$, for any edge $e \in \Gtri$ we can bound this by}
&\leq 1+ |[\xi_1]| \sum_{\substack{\xi_2 \in \Gamma : \\ e \in \xi_2}} w(\xi_2)
\leq 1 + c |[\xi_1]| \leq e^{c|[\xi_1]|}.
\end{align*}
Thus the base case of the induction holds for $N = 2$.

We now suppose that Equation~\ref{eqn:IH} holds for $N$ for all $\xi_1$, and will show that it holds for $N+1$ and any choice of $\xi_1$.
We begin by rewriting the left hand side of \ref{eqn:IH} in a way that will be more useful for us.
\renewcommand{\zeta}{h}
 For $\xi_i, \xi_j \in \Gamma$, let $\zeta(\xi_i, \xi_j)$ be 0 if $\xi_i$ and $\xi_j$ are compatible, and $-1$ if they are incompatible.  Because $H_{\{\xi_1,...,\xi_k\}}$ is the incompatibility graph of $\xi_1$,..., $\xi_k$, we see that
\begin{align}
\label{eqn:zeta}
\sum_{\substack{G \subseteq H_{\{\xi_1,\xi_2,...,\xi_k\}} \\ connected \\ spanning}} (-1)^{|E(G)|} = \sum_{\substack{G \subseteq K_k \\ connected \\ spanning}} \prod_{\substack{(i,j) \in E(G)}} \zeta(\xi_i, \xi_j).
\end{align}
Using this, the statement that we will try to prove is, for $\xi_1 \in \Gamma$ and $N \geq 2$,
\begin{align}\label{eqn:IHz}
1+ \sum_{k=2}^{N+1}  \sum_{\xi_2 \in \Gamma} \ldots \sum_{\xi_k \in \Gamma} \frac{1}{(k-1)!} \left| \sum_{\substack{G \subseteq K_k \\ connected \\ spanning}} \prod_{(i,j) \in E(G)} \zeta(\xi_i, \xi_j) \right| \prod_{i = 2}^k w(\xi_i)  \leq e^{c |[\xi_1]|}.
\end{align}
We assume as our induction hypothesis that the same statement holds for $N$, that is, for any $\xi_1 \in \Gamma$,
\begin{align}\label{eqn:IHN}
1+ \sum_{k=2}^{N}  \sum_{\xi_2 \in \Gamma} \ldots \sum_{\xi_k \in \Gamma} \frac{1}{(k-1)!} \left| \sum_{\substack{G \subseteq K_k \\ connected \\ spanning}} \prod_{(i,j) \in E(G)} \zeta(\xi_i, \xi_j) \right| \prod_{i = 2}^k w(\xi_i)  \leq e^{c |[\xi_1]|}.
\end{align}

The left hand side of Equation~\ref{eqn:IHz} has a summand for each $2 \leq k \leq N+1$.
For each such $k$, consider any connected, spanning subgraph $G \subseteq K_k$ appearing in the sum over all such $G$.
Let $S$ be the set of edges of $G$ that have an endpoint at vertex 1; $S$ is nonempty because $G$ is connected and spans all vertices of $K_k$.
 The graph $G'$, obtained from $G$ by removing vertex 1 and all edges in $S$, splits into components $G_1$, $G_2$,$\ldots$, $G_l$. We can thus see $G$ as obtained by (i) Partitioning the set $\{ 2,3,..., k\}$ into subsets $V_1,...,V_l$ for $l \leq k-1$, (ii) associating with each $V_m$ a connected spanning graph, and (iii) connecting vertex 1 in all possible ways to at least one vertex in each $V_m$.

 Denoting the indices of vertices of subgraphs by $a$ and $b$ to avoid confusion, we see that:
 \begin{align*}
 &\left|\sum_{\substack{G \subseteq K_k \\ connected \\ spanning}} \prod_{(i,j)\in E(G)} \zeta(\xi_i, \xi_j) \right|
 \\& \leq \sum_{l = 1}^{k-1} \frac{1}{l!} \sum_{\substack{V_1,...,V_l:\\
 \text{disjoint},\\\cup_{m} V_m= \{2,...,k\}}} \prod_{m = 1}^l \left( \left| \sum_{\substack{G:V(G) = V_m \\ connected \\ spanning}} \prod_{\substack{(a,b) \in E(G)}} \hspace{-3mm} \zeta(\xi_a, \xi_b) \right| \left| \sum_{\substack{S_m \subseteq V_m \\ S_m \neq \emptyset}} \prod_{a \in S_m} \zeta(\xi_1, \xi_a)\right|\right).
 \end{align*}
 	We examine the last absolute value. First, we note that
 \begin{align}
 \sum_{\substack{S_m \subseteq V_m \\ S_m \neq \emptyset}} \prod_{a \in S_m} \zeta(\xi_1, \xi_a) = \left( \prod_{a \in V_m} \left(1 + \zeta(\xi_1, \xi_a)\right) \right) - 1.
 \end{align}
 Noting that $1+\zeta(\xi_1, \xi_a)$ is zero or one, we see (by looking at two cases: all $\zeta(\xi_1, \xi_a) = -1$ or there is at least one $a$ such that $\zeta(\xi_1, \xi_a) = 0$) that
 	\begin{align}
 \left|\sum_{\substack{S_m \subseteq V_m \\ S_m \neq \emptyset}} \prod_{a \in S_m} \zeta(\xi_1, \xi_a)\right| = \left|\left( \prod_{a \in V_m} \left(1 + \zeta(\xi_1, \xi_a)\right) \right) - 1\right| \leq \sum_{a \in V_m} |\zeta(\xi_1, \xi_a)|.
 \end{align}
	Altogether, this gives us
	\begin{align}
	&\left|\sum_{\substack{G \subseteq K_k \\ connected \\ spanning}} \prod_{(i,j)\in E(G)} \zeta(\xi_i, \xi_j) \right|\nonumber
	\\& \leq \sum_{l = 1}^{k-1} \frac{1}{l!} \sum_{\substack{V_1,...,V_l:\\
			\text{disjoint},\\\cup_{m} V_m= \{2,...,k\}}}  \prod_{m = 1}^l \left(\left| \sum_{\substack{G:V(G) = V_m \\ connected \\ spanning}} \prod_{\substack{(a,b) \in E(G)}} \hspace{-3mm} \zeta(\xi_a, \xi_b) \right| \left( \sum_{a \in V_m} |\zeta(\xi_1, \xi_a)|\right)\right). \label{eqn:abs_val_bound}
	\end{align}

	We can plug this expression into the statement we are trying to prove, and see that
	\begin{align*}
	&\sum_{k=2}^{N+1}  \sum_{\xi_2 \in \Gamma} \ldots \sum_{\xi_k \in \Gamma} \frac{1}{(k-1)!} \left|\sum_{\substack{G \subseteq K_k \\ connected \\ spanning}} \prod_{(i,j)\in E(G)} \zeta(\xi_i, \xi_j) \right| \prod_{i = 2}^k w(\xi_i)
	\\ &\leq
	\sum_{k=2}^{N+1} \sum_{\xi_2 \in \Gamma} \ldots \sum_{\xi_k \in \Gamma} \frac{1}{(k-1)!}
\left(	\sum_{l = 1}^{k-1} \frac{1}{l!} \sum_{\substack{V_1,...,V_l:\\
			\text{disjoint},\\\cup_{m} V_m= \{2,...,k\}}}  \prod_{m = 1}^l \left| \sum_{\substack{G:\\V(G) = V_m \\ connected \\ spanning}} \prod_{\substack{(a,b) \in E(G)}} \hspace{-3mm} \zeta(\xi_a, \xi_b) \right| \sum_{a \in V_m} |\zeta(\xi_1, \xi_a)|
\right)	\prod_{i = 2}^k w(\xi_i).
	\intertext{Rearranging the sums, possible because we keep all infinite sums in the same order and just interchange them with finite ones, we see this is equal to}
	&= \sum_{k=2}^{N+1} \frac{1}{(k-1)!}
	\sum_{l = 1}^{k-1} \frac{1}{l!} \\ &\hspace{5mm} \cdot \left(\sum_{\xi_2 \in \Gamma} \ldots \sum_{\xi_k \in \Gamma} \sum_{\substack{V_1,...,V_l:\\
			\text{disjoint},\\\cup_{m} V_m= \{2,...,k\}}}  \prod_{m = 1}^l \left(\left| \sum_{\substack{G:\\V(G) = V_m\\ connected \\ spanning}} \prod_{\substack{(a,b) \in E(G)}} \hspace{-3mm} \zeta(\xi_a, \xi_b) \right| \left(\sum_{a \in V_m} |\zeta(\xi_1, \xi_a)|\right) \right)
	\prod_{i = 2}^k w(\xi_i) \right).
	\end{align*}
		We now just examine the term in the outermost parentheses in this last expression, for which $k$ and~$l$ have already been fixed. Again because the sum over possible $V_m$ is finite for a fixed value of $k$, we can interchange it with the sums over the $\xi_i$.
We see that
	\newcommand{\m}{n}
	\begin{align*}
	&\sum_{\xi_2 \in \Gamma} \ldots \sum_{\xi_k \in \Gamma} \sum_{\substack{V_1,...,V_l:\\
			\text{disjoint},\\\cup_{m} V_m= \{2,...,k\}}}  \prod_{m = 1}^l \left(\left| \sum_{\substack{G:\\V(G) = V_m \\ connected \\ spanning}} \prod_{\substack{(a,b) \in E(G)}} \hspace{-3mm} \zeta(\xi_a, \xi_b) \right| \left(\sum_{a \in V_m} |\zeta(\xi_1, \xi_a)|\right) \right)
	\prod_{i = 2}^k w(\xi_i)
	\\ &= \sum_{\substack{V_1,...,V_l:\\
			\text{disjoint},\\\cup_{m} V_m= \{2,...,k\}}}  \sum_{\xi_2 \in \Gamma} \ldots \sum_{\xi_k \in \Gamma} \prod_{m = 1}^l \left(\left| \sum_{\substack{G:\\V(G) = V_m \\ connected \\ spanning}} \prod_{\substack{(a,b) \in E(G)}} \hspace{-3mm} \zeta(\xi_a, \xi_b) \right| \left(\sum_{a \in V_m} |\zeta(\xi_1, \xi_a)|\right) \right)
	\prod_{i = 2}^k w(\xi_i).
	\intertext{
		The next step is to specify the number of points in each $V_m$.  Let $|V_m| = \m_m$.
		Note the expression inside the product over $m = 1,...,l$ only depends on the $\xi_j$ such that $j \in V_m$; we can rewrite it to express this as }
	&= \sum_{\substack{V_1,...,V_l:\\
			\text{disjoint},\\\cup_{m} V_m= \{2,...,k\}}}  \prod_{m = 1}^\ell \left[ \sum_{\xi_{1}' \in \Gamma} \ldots \sum_{\xi_{\m_m}' \in \Gamma} \left| \sum_{\substack{G \subseteq K_{\m_m} \\ connected \\ spanning}} \prod_{\substack{(a,b) \in E(G)}} \hspace{-3mm} \zeta(\xi_a', \xi_b') \right|
	\left(  \sum_{a = 1}^{\m_m} |\zeta(\xi_1, \xi_a')|  \right)
	\left( \prod_{j = 1}^{\m_m} w(\xi_j) \right)\right].
	\end{align*}
		Now the term in square brackets has been written so that it does not depend on which vertices in $\{2,\ldots,k\}$ are in the sets $V_i'$, but rather only depends on the sizes  of these sets.  We see that we really only need to sum over the sizes of the $V_1'$,...,$V_l'$, taking into account the number of ways to make such a partition:
	\begin{align*}
	&= \hspace{-3mm} \sum_{\substack{\m_1,...,\m_l\\ \m_1 + ... + \m_l = k-1}}  \frac{(k-1)!}{\m_1! ... \m_l!} \\ & \hspace{10mm} \cdot
	\prod_{m = 1}^\ell \left[ \sum_{\xi_{1}' \in \Gamma} \ldots \sum_{\xi_{\m_m}' \in \Gamma} \left| \sum_{\substack{G \subseteq K_{\m_m} \\ connected \\ spanning}} \prod_{\substack{(a,b) \in E(G)}} \hspace{-3mm} \zeta(\xi_a', \xi_b') \right|
	\left(  \sum_{a = 1}^{\m_m} |\zeta(\xi_1, \xi_a')|  \right)
	\left( \prod_{j = 1}^{\m_m} w(\xi_j') \right)\right]
	\\&=\hspace{-3mm} \sum_{\substack{\m_1,...,\m_l\\ \m_1 + ... + \m_l = k-1}} (k-1)!
	\\ & \hspace{10mm} \cdot \prod_{i = 1}^\ell \left[ \sum_{\xi_{1}' \in \Gamma} \ldots \sum_{\xi_{\m_m}' \in \Gamma} \frac{1}{\m_m!}\left| \sum_{\substack{G \subseteq K_{\m_m} \\ connected \\ spanning}} \prod_{\substack{(a,b) \in E(G)}} \hspace{-3mm} \zeta(\xi_a', \xi_b') \right|
	\left(  \sum_{a = 1}^{\m_m} |\zeta(\xi_1, \xi_a')|  \right)
	\left( \prod_{j = 1}^{\m_m}w(\xi_j') \right)\right].
	\end{align*}

	Calling the term in square brackets above $B_{\m_m}$ for simplicity, noting that it only depends on $\m_m$,  we see that
	\begin{align*}
	\sum_{k=2}^{N+1} \sum_{\xi_2 \in \Gamma} &\ldots \sum_{\xi_k \in \Gamma} \frac{1}{(k-1)!} \left|\sum_{\substack{G \subseteq K_k \\ connected \\ spanning}} \prod_{(i,j)\in E(G)} \zeta(\xi_i, \xi_j)\right| \prod_{i = 2}^k w(\xi_i)
	\\ &\leq  \sum_{k=2}^{N+1} \frac{1}{(k-1)!}\sum_{l = 1}^{k-1} \frac{1}{l!}
	\sum_{\substack{\m_i,...,\m_l\\ \m_1 + ... + \m_l = k-1}} (k-1)!
	\prod_{m = 1}^\ell \left[  B_{\m_m} \right]
	\\&=  \sum_{k=2}^{N+1} \sum_{l = 1}^{k-1}
	\sum_{\substack{\m_1,...,\m_l\\ \m_1 + ... + \m_l = k-1}} \frac{1}{l!}
	\prod_{i = 1}^\ell \left[  B_{\m_m} \right]
	\\&\leq \sum_{l = 1}^N \sum_{\m_1 = 1}^N \sum_{\m_2 = 1}^N \ldots \sum_{\m_l = 1}^N \frac{1}{l!} \prod_{m = 1}^l \left[ B_{\m_m}\right]
	\\&\leq \sum_{l \geq 1} \frac{1}{l!} \prod_{m = 1}^l \left( \sum_{\m_m = 1}^N \left[B_{\m_m}\right] \right)
	\\& \leq \sum_{l \geq 1} \frac{1}{l!} \left(\sum_{m = 1}^N \left[B_{m}\right] \right)^l
	\\& = exp\left( \sum_{m = 1}^N \left[B_{m}\right] \right) - 1.
	\end{align*}
	We have shown that
	\begin{align}\label{eqn:boundbyexp}
	1 + \sum_{k=2}^{N+1} \sum_{\xi_2 \in \Gamma} \ldots \sum_{\xi_k \in \Gamma} \frac{1}{(k-1)!} \left|\sum_{\substack{G \subseteq K_k \\ connected \\ spanning}} \prod_{(i,j)\in E(G)} \zeta(\xi_i, \xi_j)\right| \prod_{i = 2}^k w(\xi_i) \leq exp\left( \sum_{m = 1}^N \left[B_{m}\right] \right).
	\end{align}

	It only remains to show that the term in parentheses above is at most $c|[\xi_i]|$, which we now do. Recall
	\begin{align}\label{eqn:dfnofB}
	\sum_{m = 1}^N \left[B_{m}\right] = \sum_{m=1}^N \sum_{\xi_{1}' \in \Gamma} \ldots \sum_{\xi_{m}' \in \Gamma} \frac{1}{m!}\left| \sum_{\substack{G \subseteq K_{m} \\ connected \\ spanning}} \prod_{\substack{(a,b) \in E(G)}} \hspace{-3mm} \zeta(\xi_a', \xi_b') \right|
	\left(  \sum_{a = 1}^{m} |\zeta(\xi_1, \xi_a')|  \right)
	\left( \prod_{j = 1}^{m} w(\xi_j')\right).
	\end{align}

	We note that $\xi_1$ appears in this expression only in the sum over index $a$.  We will do a change in variables here and replace $\xi_1$ here with $\xi^*$ (so we can avoid using $\xi_1'$, etc., in our proof). We now prove the following:

	\begin{claim}	\label{claim:boundbyc}
	For any $\xi^* \in \Gamma$,
	\begin{align}\label{eqn:boundbyc}
	\sum_{m=1}^N \sum_{\xi_{1} \in \Gamma} \ldots \sum_{\xi_{m} \in \Gamma} \frac{1}{m!}\left| \sum_{\substack{G \subseteq K_{m} \\ connected \\ spanning}} \prod_{\substack{(a,b) \in E(G)}} \hspace{-3mm} \zeta(\xi_a, \xi_b) \right|
	\left(  \sum_{i = 1}^{m} |\xi^*, \xi_i)|  \right)
	\left( \prod_{j = 1}^{m} w(\xi_j) \right) \leq c|[\xi^*]|.
	\end{align}
	\end{claim}
\noindent{\bf Proof of Claim. }
	To prove this, we will begin with the induction hypothesis (Equation~\ref{eqn:IHN}) and apply the same set of operations to both sides, ultimately yielding exactly this expression. We will multiply both sides of Equation~\ref{eqn:IHN} by $|\zeta (\xi^*, \xi_1)| \cdot w(\xi_1)$ and sum over all $\xi_1 \in \Gamma$.  When doing this, the left hand side of Equation~\ref{eqn:IHN} becomes
	\begin{align}
	& \sum_{\xi_1 \in \Gamma} \left[ 1+ \sum_{k=2}^N \sum_{\xi_2 \in \Gamma} \ldots \sum_{\xi_k \in \Gamma} \frac{1}{(k-1)!} \left|\sum_{\substack{G \subseteq K_k \\ connected \\ spanning}} \prod_{(i,j)\in E(G)} \zeta(\xi_i, \xi_j) \right| \prod_{j = 2}^k w(\xi_j) \right] \cdot |\zeta (\xi^*, \xi_1)| \cdot w(\xi_1)\nonumber
	\\ & = \left( \sum_{\xi_1 \in \Gamma} |\zeta (\xi^*, \xi_1)| \cdot w(\xi_1)\right) \nonumber
	\\ & \hspace{10mm} + \sum_{k = 2}^N \sum_{\xi_1 \in \Gamma} \sum_{\xi_2 \in \Gamma} \ldots \sum_{\xi_k \in \Gamma} \frac{1}{(k-1)!} \left|\sum_{\substack{G \subseteq K_k \\ connected \\ spanning}} \prod_{(i,j)\in E(G)} \zeta(\xi_i, \xi_j) \right| \prod_{j = 1}^k  w(\xi_j) \cdot |\zeta (\xi^*, \xi_1)|. \nonumber
	\intertext{Noting that the term in absolute value above is 1 when $k = 1$ because there is only one spanning subgraph of $K_1$ and it has no edges, we can rewrite this as}
	&= \sum_{k = 1}^N  \sum_{\xi_1 \in \Gamma} \sum_{\xi_2 \in \Gamma} \ldots \sum_{\xi_k \in \Gamma} \frac{1}{(k-1)!} \left|\sum_{\substack{G \subseteq K_k \\ connected \\ spanning}} \prod_{(i,j)\in E(G)} \zeta(\xi_i, \xi_j) \right| \prod_{j = 1}^k  w(\xi_j) \cdot |\zeta (\xi^*, \xi_1)|. \label{eqn:LHS-sum}
	\end{align}
	We note that $\xi_1$ does not play any special role in this expression, and by rearranging the summands we see that for any $a = 1,...,k$, it must hold that
	\begin{align*}
	\sum_{\xi_1 \in \Gamma} &\ldots \sum_{\xi_k \in \Gamma} \frac{1}{(k-1)!} \left|\sum_{\substack{G \subseteq K_k \\ connected \\ spanning}} \prod_{(i,j)\in E(G)} \zeta(\xi_i, \xi_j) \right| \prod_{j = 1}^k  w(\xi_j) |\zeta (\xi^*, \xi_1)|
	\\ &= \sum_{\xi_1 \in \Gamma} \ldots \sum_{\xi_k \in \Gamma} \frac{1}{(k-1)!} \left|\sum_{\substack{G \subseteq K_k \\ connected \\ spanning}} \prod_{(i,j)\in E(G)} \zeta(\xi_i, \xi_j) \right| \prod_{j = 1}^k  w(\xi_j) |\zeta (\xi^*, \xi_a)|.
\intertext{Summing over the possible values of $a$ and dividing by $k$, we obtain this is equal to}
&= \sum_{\xi_1 \in \Gamma} \ldots \sum_{\xi_k \in \Gamma} \frac{1}{(k-1)!} \left|\sum_{\substack{G \subseteq K_k \\ connected \\ spanning}} \prod_{(i,j)\in E(G)} \zeta(\xi_i, \xi_j) \right| \prod_{j = 1}^k  w(\xi_j) \left( \frac{1}{k} \sum_{a = 1}^k |\zeta (\xi^*, \xi_a)|\right).
	\end{align*}
	This implies the expression in Equation~\ref{eqn:LHS-sum} is equal to
	\begin{align*}
	\sum_{k = 1}^N \sum_{\xi_1 \in \Gamma} \sum_{\xi_2 \in \Gamma} \ldots \sum_{\xi_k \in \Gamma} \frac{1}{k!} \left|\sum_{\substack{G \subseteq K_k \\ connected \\ spanning}} \prod_{(i,j)\in E(G)} \zeta(\xi_i, \xi_j) \right| \prod_{j = 1}^k  w(\xi_j) \cdot \left( \sum_{a = 1}^k |\zeta (\xi^*, \xi_a)|\right).
	\end{align*}
	This is exactly the left hand side of Equation~\ref{eqn:boundbyc}, which we are trying to bound. When we  perform the same operations on the right hand side of the induction hypothesis (Equation~\ref{eqn:IHN}), multiplying by
	$|\zeta (\xi^*, \xi_1)| \cdot w(\xi_1)$ and summing over all $\xi_1 \in \Gamma$,  we obtain
	\begin{align*}
	\sum_{\xi_1 \in \Gamma} |\zeta (\xi^*, \xi_1)| \cdot w(\xi_1) \cdot e^{c |[\xi_1]|}.
	\end{align*}
	Recalling that $\zeta(\xi^*, \xi_1)$ is $0$ if $\xi^*$ and $\xi_1$ are compatible and -1 if they are incompatible, we see that, for any edge $e \in E(\Gtri)$,
	\begin{align*}
	\sum_{\xi_1 \in \Gamma} |\zeta (\xi^*, \xi_1)| \cdot w(\xi_1) \cdot e^{c |[\xi_1]|} &= \sum_{\substack{\xi_1 \in \Gamma \\ \xi_1, \xi^* \text{\ incompatible}}} w(\xi_1) \cdot e^{c |[\xi_1]|}
	\\&\leq \sum_{e \in |[\xi^*]|} \sum_{\substack{\xi_1 \in \Gamma \\ e \in \xi_1}} w(\xi_1) \cdot e^{c |[\xi_1]|}
	\\&= |[\xi^*]| \sum_{\substack{\xi_1 \in \Gamma \\ e \in \xi_1}} w(\xi_1) \cdot e^{c |[\xi_1]|}
	\end{align*}
	By the hypothesis of the theorem, this is at most $ c |[\xi^*]|$. This concludes our proof of Claim~\ref{claim:boundbyc}.  \qed

	Putting it all together, by Equation~\ref{eqn:zeta}, Equation~\ref{eqn:boundbyexp}, Equation~\ref{eqn:dfnofB}, and Claim~\ref{claim:boundbyc}, we see that
	\begin{align*}
	1+& \sum_{k=2}^{N+1}  \sum_{\xi_2 \in \Gamma} \ldots \sum_{\xi_k \in \Gamma} \frac{1}{(k-1)!} \left|\sum_{\substack{G \subseteq H_{\{\xi_1,\xi_2,...,\xi_k\}} \\ connected \\ spanning}} (-1)^{|E(G)|} \right| \prod_{i = 2}^k w(\xi_i) \\&=
	1+ \sum_{k=2}^{N+1}  \sum_{\xi_2 \in \Gamma} \ldots \sum_{\xi_k \in \Gamma} \frac{1}{(k-1)!} \left| \sum_{\substack{G \subseteq K_k \\ connected \\ spanning}} \prod_{(i,j) \in E(G)} \zeta(\xi_i, \xi_j) \right| \prod_{i = 2}^k w(\xi_i)
	\\&\leq  exp\left( \sum_{m = 1}^N \left[B_{m}\right] \right)
\\	&= exp\left(\sum_{k = 1}^N \sum_{\xi_1 \in \Gamma} \sum_{\xi_2 \in \Gamma} \ldots \sum_{\xi_k \in \Gamma} \frac{1}{k!} \left|\sum_{\substack{G \subseteq K_k \\ connected \\ spanning}} \prod_{(i,j)\in E(G)} \zeta(\xi_i, \xi_j) \right| \prod_{j = 1}^k  w(\xi_j) \cdot \left( \sum_{a = 1}^k |\zeta (\xi^*, \xi_a)|\right)\right)
	\\&\leq exp\left( c |[\xi^*]|\right).
	\end{align*}

	This concludes our  proof by induction. Taking the limit as $N$ goes to $\infty$ completes the proof of the theorem.

\end{proof}

We now prove Lemma~\ref{lem:less-than-c}, which was the key fact we needed for the proof of Theorem~\ref{thm:bdry-volume}.
Recall $\mathcal{X}$ is the set of all clusters comprised of polymers in $\Gamma$.
For a cluster $X \in \mathcal{X}$, recall that $|X|$ is the number of polymers in $X$, the support of $X$ is
\[\overline{X} = \cup_{\xi \in X} \xi , \]
and we defined
\[\Psi(X) = \frac{1}{|X|!} \left( \sum_{\substack{G \subseteq H_X: \\ connected, \\ spanning}} (-1)^{|E(G)|} \right) \left( \prod_{\xi \in X} w(\xi) \right).\]

\begin{lem}\label{lem:less-than-c}
Let $\Gamma$ be an infinite set of polymers $\xi \subseteq E(\Gtri)$ that is closed under translation and rotation.
If there is a constant $c$ such that for any edge $e \in E(\Gtri)$,
\begin{align}\label{eqn:suff-cond-edge-app}
\sum_{\substack{\xi \in \Gamma : \\ e \in \xi}} w(\xi) e^{c| [\xi]|} \leq c,
\end{align}
then for any edge $e\in E(\Gtri)$,
\begin{align*}
\sum_{\substack{X \in \mathcal{X} \\ e \in \overline{X}}} |\Psi(X)| \leq c.
\end{align*}
\end{lem}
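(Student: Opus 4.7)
The plan is to deduce this bound directly from Lemma~\ref{lem:KP-inf} by a standard symmetrization argument (see \cite{Friedli2018}, Equation (5.29)). I would first stratify the sum on the left-hand side by the cluster size $|X|=k$, writing each cluster explicitly as an ordered tuple $(\xi_1,\ldots,\xi_k)$ whose incompatibility graph $H_X$ is connected. The factor $\tfrac{1}{|X|!}$ in the definition of $\Psi(X)$ exactly compensates for the fact that a given unordered cluster corresponds to multiple ordered tuples, so I can safely replace the sum over clusters by a sum over ordered tuples weighted by $1/k!$.

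The key observation is that $e\in\overline{X}$ is equivalent to saying $e\in\xi_i$ for at least one index $i$, so $\mathbf{1}_{e\in\overline{X}}\le\sum_{i=1}^{k}\mathbf{1}_{e\in\xi_i}$. Inserting this into the sum and noting that all terms are symmetric under permutation of the indices $\xi_1,\ldots,\xi_k$, the $k$ resulting summands are equal, which contributes a factor of $k$ that cancels one factorial and leaves a sum where the role of ``marked'' polymer can be taken to be $\xi_1$. After pulling the $\xi_1$ sum outside, one is left with
\[
\sum_{\substack{X\in\mathcal{X}\\ e\in\overline{X}}} |\Psi(X)|
\;\le\;
\sum_{\substack{\xi_1\in\Gamma\\ e\in\xi_1}} w(\xi_1)\Biggl[\,1+\sum_{k\ge 2}\sum_{\xi_2,\ldots,\xi_k\in\Gamma}\frac{1}{(k-1)!}\Biggl|\sum_{\substack{G\subseteq H_{\{\xi_1,\ldots,\xi_k\}}\\ \text{conn., span.}}}(-1)^{|E(G)|}\Biggr|\prod_{i=2}^{k}w(\xi_i)\Biggr].
\]
The bracketed quantity is exactly what Lemma~\ref{lem:KP-inf} bounds by $e^{c\,|[\xi_1]|}$, so combining this with the hypothesis~\eqref{eqn:suff-cond-edge-app} yields $\sum_{\xi_1\ni e}w(\xi_1)e^{c|[\xi_1]|}\le c$, which is the desired conclusion.

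The main subtlety (rather than a genuine obstacle) is being careful with the manipulation of infinite sums: because $\Gamma$ is infinite, I must justify that all rearrangements are legal. This follows from the fact that Lemma~\ref{lem:KP-inf} was established as a limit of inequalities for finite partial sums, so the manipulations above can be carried out first for truncated sums $k\le N$ and then passed to the limit as $N\to\infty$, all terms being nonnegative. If $w$ can take negative values, the argument should be applied with $|w(\xi)|$ in place of $w(\xi)$ throughout (as is implicit in the statements of Theorem~\ref{thm:cluster-converge} and Theorem~\ref{thm:bdry-volume}), which does not affect the proof since only absolute values of $\Psi(X)$ are being bounded.
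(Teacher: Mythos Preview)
Your proposal is correct and takes essentially the same approach as the paper. The paper phrases the overcounting step slightly differently---it swaps the polymer containing $e$ into the first position and then notes that at most $|X|$ clusters map to the same reordered cluster---but this is just another way of arriving at the same inequality $\sum_{X:\,e\in\overline{X}}|\Psi(X)|\le\sum_{X:\,e\in\xi_1}|X|\,|\Psi(X)|$, after which both arguments apply Lemma~\ref{lem:KP-inf} and the hypothesis identically.
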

\begin{proof}
	We first enumerate all clusters of size $k$ containing edge $e$. Consider cluster $X = (\xi_1,\xi_2,...,\xi_k)$ that contains edge $e$, and recall we defined clusters to be {\it ordered} multisets. Let $i$ be the smallest index such that $e \in \xi_i$. Let $X^i = (\xi_i, \xi_2,..., \xi_{i-1}, \xi_1, \xi_{i+1},...,\xi_k)$ be obtained by swapping $\xi_1$ and $\xi_i$ in $X$.  Note that $X^i$ now has $e$ in its first polymer.  Furthermore, the definition of $\Psi$ is independent of the order of cluster $X$, so $\Psi(X) = \Psi(X^i)$. We can use this fact to rewrite the sum of interest as a sum over clusters $X = (\xi_1,...,\xi_k)$ whose first polymer $\xi_1$ contains $e$.
	\[ \sum_{\substack{X \in \mathcal{X} \\ e \in \overline{X}}} |\Psi(X)| = \sum_{\substack{X \in \mathcal{X} \\ e \in \overline{X}}} |\Psi(X^i)| \leq  \sum_{\substack{X \in \mathcal{X}:\\ e \in \xi_1}} |X| |\Psi(X)| .
	\]
	We do not have equality in the above expression because we are overcounting clusters for which $e$ appears in multiple polymers.
	We can now expand this sum by explicitly summing over the size of the cluster and the polymers in it. Noting that $\Psi(X) = 0$ if $X$ is not a cluster,
	\[\sum_{\substack{X \in \mathcal{X} \\ e \in \overline{X}}} |\Psi(X)| \leq \sum_{k \geq 1} \sum_{\substack{\xi_1 \in \Gamma \\ e \in \xi_1}} \sum_{\xi_2 \in \Gamma} ... \sum_{\xi_k \in \Gamma} k |\Psi(\xi_1,\xi_2,\dots,\xi_k)|. \]
	Recalling the definition of $\Psi$, and noting that if $|X| = 1$ then $H_{X}$ has exactly one connected subgraph - the isolated vertex - and this subgraph has zero edges, we can factor this expression as
	\begin{align*}
	\sum_{k \geq 1} &\sum_{\substack{\xi_1 \in \Gamma \\ e \in \xi_1}} \sum_{\xi_2 \in \Gamma} ... \sum_{\xi_k \in \Gamma} k |\Psi(\xi_1,\xi_2,\dots,\xi_k)|
	\\&= \sum_{\substack{\xi_1 \in \Gamma \\ e \in \xi_1}} w(\xi_1) \left( 1 + \sum_{k \geq 2} \sum_{\xi_2 \in \Gamma}
	... \sum_{\xi_k \in \Gamma} \frac{1}{(k-1)!} \left|\sum_{\substack{G \subseteq H_{\{\xi_1,\xi_2,...,\xi_k\}}\\ connected \\ spanning}} (-1)^{|E(G)|}  \right| \prod_{i = 2}^k w(\xi_i) \right)
	\end{align*}
	By Lemma~\ref{lem:KP-inf}, the term in parentheses above is at most $e^{c|[\xi_1]|}$. We conclude, using our hypothesis of  Equation~\ref{eqn:suff-cond-edge-app}, that
	\begin{align*}
	\sum_{\substack{X \in \mathcal{X} \\ e \in \overline{X}}} |\Psi(X)| \leq \sum_{\substack{\xi_1 \in \Gamma \\ e \in \xi_1}} w(\xi_1) e^{c|[\xi_1]|} \leq c.
	\end{align*}
	This concludes the proof.
\end{proof}

\subsection{High-temperature Expansion}
\label{app:ht}

Recall that for a particle configuration $\sigma$ with boundary $\P$ and a vertex $i$ of $\Gtri$ on or inside $\P$,  we say that $\sigma_i = 1$ if the particle at $i$ in $\sigma$ has color $c_2$ and $\sigma_i = -1$ if this particle has color $c_2$.  For a fixed boundary $\P$, recall $\Omega_\P$ is all particle configurations with boundary $\P$ and the correct number of particles of each color, and $\overline{\Omega_\P}$ is all configurations with boundary $\P$ and any number of particles with each color.
Recall $E_\P$ is all edges with both endpoints on or inside $\P$ (all edges with both endpoints occupied by particles in $\sigma \in \overline{\Omega_\P}$).
For $\beta = \ln \sqrt{\gamma}$, we use the following fact in Section~\ref{sec:compression-small-gamma}:
\[
\sum_{\sigma \in \overline{\Omega_\P}} \prod_{(i,j) \in E_\P} e^{\beta \sigma_i\sigma_j} = \cosh(\beta)^{|E_\P|} 2^n \sum_{\substack{E \subseteq E_\mathcal{P}:\\even}} \left(\tanh \beta\right)^{|E|}.
\]
Here we prove this via the high-temperature expansion for the Ising model (the high-temperature expansion is standard in the statistical physics literature but less well-known outside this field, which is why we include the derivation here for the convenience of the interested reader).

\begin{proof}
	Recall the hyperbolic trigonometric functions are given by
	\begin{align*}
	\cosh \beta = \frac{e^{2\beta} + 1}{2e^{\beta}} ,\hspace{1cm}
	\sinh \beta = \frac{e^{2\beta} - 1}{2e^{\beta}},
	\hspace{1cm}\tanh\beta = \frac{\sinh \beta}{\cosh\beta} =  \frac{e^{2\beta} - 1}{e^{2\beta} +1}.
	\end{align*}
	We note that $\sigma_i \sigma_j$ only takes on two values, $+1$ and $-1$. For $\sigma_i \sigma_j \in \{-1,1\}$, it is true (verifiable by case analysis) that
	\[e^{\beta \sigma_i \sigma_j} = \cosh \beta + \sigma_i \sigma_j \sinh \beta = \cosh \beta \left(1+  \sigma_i \sigma_j \tanh \beta  \right).\]
	Using this, we see that
	\begin{align*}
	\prod_{ (i,j) \in E_\P} e^{\beta \sigma_i \sigma_j}
	&= \prod_{ (i,j) \in E_\P} \cosh \beta \left(1+ \sigma_i \sigma_j \tanh \beta  \right)
	\\&= \left( \cosh \beta\right)^{|E_\P|}  \prod_{ (i,j) \in E_\P} \left(1+ \sigma_i \sigma_j \tanh \beta  \right).
	\end{align*}
	Using the fact that $\prod_{i \in S}  (1+x_i) = \sum_{\overline{S} \subseteq S} \prod_{i \in \overline{S}} x_i$, we rewrite this as
	\begin{align*}
	\prod_{ (i,j) \in E_\P} e^{\beta \sigma_i \sigma_j}
	= \left( \cosh \beta\right)^{|E_\P|}  \sum_{E \subseteq E_\P} \prod_{(i,j)\in E}\sigma_i \sigma_j \tanh \beta.
	\end{align*}
	We now consider the sum over all $\sigma \in \overline{\Omega_\P}$ and rearrange terms.
	\begin{align*}
	\sum_{\sigma \in \overline{\Omega_\P}} \prod_{(i,j) \in E_\P} e^{\beta \sigma_i\sigma_j}
	& = \sum_{\sigma \in \overline{\Omega_\P}} \left( \cosh \beta\right)^{|E_\P|}  \sum_{E \subseteq E_\P} \prod_{(i,j)\in E}\sigma_i \sigma_j \tanh \beta
	\\&= \left( \cosh \beta\right)^{|E_\P|} \sum_{E \subseteq E_\P} (\tanh\beta)^{|E|} \sum_{\sigma \in \overline{\Omega_\P}}\prod_{(i,j)\in E}\sigma_i \sigma_j .
	\end{align*}
	For a given set $E \subseteq E_\P$ and a given vertex $i$ on or inside $\P$, let $I(i,E)$ be the number of edges in $E$ that are incident on $i$; note $0 \leq I(i,E) \leq 6$. For a  given $E$, we see that if $V_\P$ is all vertices of $\Gtri$ on or inside $\P$,
	\newcommand{\VP}{V_\P}
	\begin{align*}
	\prod_{(i,j)\in E} \sigma_i \sigma_j = \prod_{i \in \VP} \sigma_i^{I(i,E)},
	\end{align*}
	and it follows that
	\begin{align*}
	\sum_{\sigma \in \overline{\Omega_\P}} \prod_{(i,j) \in E_\P} e^{\beta \sigma_i\sigma_j} = \left( \cosh \beta\right)^{|E_\P|} \sum_{E \subseteq E_\P} (\tanh\beta)^{|E|} \sum_{\sigma \in \overline{\Omega_\P}} \prod_{i \in \VP} \sigma_i^{I(i,E)}
	\end{align*}
	Consider the summand corresponding to a given set $E \subseteq E_\P$.
	Suppose there  is a vertex $j \in \VP$ such that $I(j,E)$ is odd.  The sum over configurations $\sigma \in \overline{\Omega_\P}$ with $\sigma_j = +1$ is the opposite of the sum over configurations $\sigma$ with $\sigma_j = -1$, and thus such configurations contribute zero to this sum. That is, for such an $E$ with an odd number of edges incident on vertex $j \in \VP$,
	\begin{align*}
	(\tanh\beta)^{|E|} \sum_{\sigma \in \overline{\Omega_\P}} \prod_{i \in \VP} \sigma_i^{I(i,E)}
	& = (\tanh\beta)^{|E|}\left(
	 \sum_{\substack{\sigma \in \overline{\Omega_\P}: \\ \sigma_j = 1}} \prod_{i \in \VP} \sigma_i^{I(i,E)} +
	 	 \sum_{\substack{\sigma \in \overline{\Omega_\P}: \\ \sigma_j = -1}} \prod_{i \in \VP} \sigma_i^{I(i,E)}
	 \right)
	 \\&=  (\tanh\beta)^{|E|}\left(
	 (1)^{I(j,E)}\sum_{\substack{\sigma \in \overline{\Omega_\P}: \\ \sigma_j = 1}} \prod_{\substack{i \in \VP:\\i \neq j}} \sigma_i^{I(i,E)} + (-1)^{I(j,E)}
	 \sum_{\substack{\sigma \in \overline{\Omega_\P}: \\ \sigma_j = -1}} \prod_{i \in \VP} \sigma_i^{I(i,E)}
	 \right)
	 \\& = (\tanh\beta)^{|E|}\left(
	 \sum_{\substack{\sigma \in \overline{\Omega_\P}: \\ \sigma_j = 1}} \prod_{\substack{i \in \VP:\\i \neq j}} \sigma_i^{I(i,E)} -
	 \sum_{\substack{\sigma \in \overline{\Omega_\P}: \\ \sigma_j = -1}} \prod_{\substack{i \in \VP:\\i \neq j}} \sigma_i^{I(i,E)}
	 \right)
	 \\&= 0.
	\end{align*}
	  Recall we say a subset $E \subseteq E_\P$ is {\it even} if it has an even number of edges incident on every vertex of $\Gtri$, that is, if  $I(i,E)$ is even for all $i \in \VP$ ($I(i,E) = 0$ whenever $i \notin \VP$). These are the only subsets $E$ with nonzero contributions to the above sum.
	 We see that
	 \begin{align*}
	 	\sum_{\sigma \in \overline{\Omega_\P}} \prod_{(i,j) \in E_\P} e^{\beta \sigma_i\sigma_j} = \left( \cosh \beta\right)^{|E_\P|} \sum_{\substack{E \subseteq E_\P: \\ E \text{ even}}} (\tanh\beta)^{|E|} \sum_{\sigma \in \overline{\Omega_\P}} \prod_{i \in \VP} \sigma_i^{I(i,E)}
	 \end{align*}
	 When $E$ is even, for any $\sigma\in \overline{\Omega_\P}$ we have $\prod_{i \in \VP} \sigma_i^{I(i,E)} = 1$. As the number of configurations in $\overline{\Omega_\P}$ is $2^n$ where $n$ is the number of vertices of $\Gtri$ on or inside $\P$, we see that
	 \begin{align*}
	\sum_{\sigma \in \overline{\Omega_\P}} \prod_{(i,j) \in E_\P} e^{\beta \sigma_i\sigma_j}
	&=  \left( \cosh \beta\right)^{|E_\P|} \sum_{\substack{E \subseteq E_\P: \\ E \text{ even}}} (\tanh\beta)^{|E|} \sum_{\sigma \in \overline{\Omega_\P}} 1
	\\&= \left( \cosh \beta\right)^{|E_\P|} \sum_{\substack{E \subseteq E_\P: \\ E \text{ even}}} (\tanh\beta)^{|E|} 2^n.
	\end{align*}
	Rearranging terms give the result we are trying to prove.
\end{proof}

\end{document}